\def\align@preamble{%
     &\hfil
      \strut@
      \setboxz@h{\@lign$\m@th\displaystyle{####}$}%
      \ifmeasuring@\savefieldlength@\fi
      \set@field
      \hfil
      \tabskip2\tabcolsep
     &\setboxz@h{\@lign$\m@th\displaystyle{{}####}$}%
      \ifmeasuring@\savefieldlength@\fi
      \set@field
      \hfil
      \tabskip\alignsep@
  }%
\BODY\end{align}}
\newcommand{\bra}[1]{\left\langle #1 \right|}
\newcommand{\ket}[1]{\left| #1 \right\rangle}
\newcommand{\braket}[2]{\left\langle #1 \middle| #2 \right\rangle}
\DeclareMathOperator{\Tr}{Tr}
\newtheorem{theorem}{Theorem}
\newtheorem{problem}{Problem}
\newtheorem{conjecture}{Conjecture}
\newtheorem{lemma}{Lemma}
\newtheorem{definition}{Definition}
\newtheorem{corollary}{Corollary}
\newtheorem{proposition}{Proposition}
\newtheorem{question}{Question}
\newtheorem{counterexample}{Counterexample}
\newcommand{\ba}{\begin{eqnarray}}
\newcommand{\ea}{\end{eqnarray}}
\newcommand{\rvline}{\hspace*{-\arraycolsep}\vline\hspace*{-\arraycolsep}}
\DeclareMathOperator*{\argmax}{arg\,max} 
\author{Fabien \textsc{Clivaz}} 
\keywords{} 
\begin{document}

\frontmatter 

\pagestyle{plain} 


\begin{titlepage}
\begin{center}

\begin{tabular}{ll}
{\scshape Université de Genève} \hfill & {\scshape Faculté des Sciences}\\
{\scshape Département de Physique Appliquée} \hfill &Professeur Nicolas Brunner\\
{\scshape Département de Physique Appliquée} \\
{\scshape Institue for Quantum Optics and Quantum}\\
{\scshape Information Vienna (IQOQI Vienna)} \hfill &Docteur Marcus Huber\\
\end{tabular}

\HRule \\[0.4cm] 

\centering
\includegraphics{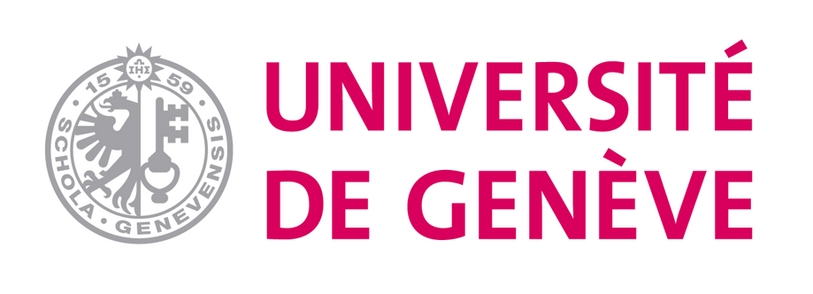} 

\HRule \\[1.4cm] 
{\huge \bfseries \ttitle\par}\vspace{0.4cm} 
\hfill \\[2.5cm] 

{\LARGE THÈSE}\\[0.4cm]
Présentée à la Faculté des sciences de l’Université de Genève

Pour obtenir le grade de Docteur ès sciences, mention Physique\\[1.5cm]

Par\\[1.5cm]
{\LARGE Fabien {\scshape Clivaz}}\\
de\\
Chermignon (VS)\\[1.5cm]

Thèse N°5476\\[0.4cm]

GENÈVE\\
Atelier Repromail, Université de Genève\\
2020


%
%
%
%

 
\vfill
\end{center}
\end{titlepage}

\begin{abstract}

\addchaptertocentry{\abstractname} 


This thesis is devoted to the study of two tasks: refrigeration and the creation of correlations. Both of these tasks are investigated within the realm of quantum thermodynamics. Our approach is influenced by that of quantum information but connections with other approaches are desired and emphasized whenever possible. \\

Cooling is one of the tasks of paramount importance of quantum thermodynamics and thermodynamics itself. Reaching cold temperatures is of undisputed technological relevance and ubiquitous among the many sub-fields of physics, allowing, on the one hand, to reach fascinating states of matter such as superconductivity or Bose-Einstein condensation, while, on the other hand, being a fundamentally imposed background constraint in areas such as astrophysics. In thermodynamics, temperature plays a special role in that it is one of the premises of the theory. While its value is unbounded from above, its statistical physical interpretation sets a hard lower bound that is by now thought of as inherent to the theory. Investigating how this lower bound can be attained within the framework of quantum theory is thus intimately related to the understanding of the fundamental laws that dictate quantum thermodynamics. It is therefore of no surprise that many different cooling schemes emerged from different approaches of quantum thermodynamics in the recent literature on the subject. While each of these approaches work with their own set of assumptions, they all have in common that
\begin{enumerate}
\item they assume a background temperature and with it an available thermal state,
\item the system to be cooled is in some sort open/connected to the environment. \label{enum:opensyst}
\end{enumerate}

The first assumption is actually not obvious at all and understanding how and under which circumstances one can assume a given closed system to be thermal is on its own an exciting and active research field. However, once the existence of a thermal state is accepted, in order for anything to happen to the state, \ref{enum:opensyst}. has to hold. The diversity of the approaches is then formed in how exactly the system is open to its environment. Our take on it consists in dividing the environment in two parts: a \enquote{machine} and the rest of the environment. While the rest of the environment is thought of as being weakly or not controlled in that it is only used to rethermalize the machine, we assume an increased control on the part of the environment we call \enquote{machine}, hence the name. To prevent any hidden energy supply, we furthermore explicitly take the source of energy into account. This gives rise to two variations of our paradigm. In the incoherent version the machine benefits from a fully entropic source of energy and is controlled in an energy conserving manner. In the coherent one, the source of energy is provided in an entropy-less manner and we allow for full unitary control. Both variations of this paradigm are designed to be related to other existing approaches and as such provide a common platform to on the one hand unify these approaches and on the other hand compare their performances on an equal footing.\\

Investigating each of our paradigms we furthermore find a bound that holds for both of our paradigms and that is attainable under minimal assumptions for all finite dimensional systems and machines. The bound is in particular a single letter bound that does not depend on the particular intricacies of the system and machine, and as such identifies the relevant parameter of interest, fulfilling one of the central goal of statistical physics. The bound is furthermore already attainable for minimal machines. Investigating these minimal machines in greater details, we find that the energy spent for achieving this bound is dependent on the level of control and that for non-maximal cooling there is no universally better paradigm that achieves a given temperature in terms of energy expenditure. \\

Our second task of interest, the creation of correlations, is motivated by a fundamental property of our current understanding of nature: correlations lie at the heart of every scientific prediction. It is therefore natural to wonder how much correlations can be created at a finite amount of invested energy. Assuming an initially uncorrelated system in a thermal background, it can be shown that the creation of correlations comes at an energy cost. In other words, no correlation is for free in that context. Viewing correlations as the resource of information theory and energy as that of thermodynamics, the above stipulates that the acquisition of information in a physical system necessarily comes at a thermodynamic cost. While lower bounds on the amount of energy that has to be invested in the process can be formulated, their reachability remains an open question. We here derive a framework that allows to investigate this question more closely for a pair of initially uncorrelated identical systems in a thermal background. This framework is based on decomposing the Hilbert space in a Latin square manner and as such allows us to harness the tools of majorization theory. Doing so, we are able to provide protocols that achieve the lower bound for any 3 dimensional and 4 dimensional systems. We furthermore provide a set of conditions to be fulfilled in order for the bound to be achievable in any dimension.

\end{abstract}

\chapter{R\'esum\'e}

Cette thèse est consacrée à l'étude de deux tâches: la réfrigération et la création de corrélations. Ces deux tâches sont étudiées du point de vue de la thermodynamique quantique. Notre approche est influencée par celle de l'information quantique, mais des connexions avec d'autres approches sont souhaitées et soulignées dans la mesure du possible. \\

Le refroidissement est l'une des tâches de la plus haute importance de la thermodynamique quantique et de la thermodynamique elle-même. Atteindre des températures froides est d'une pertinence technologique incontestée et omniprésente parmi les nombreux sous-domaines de la physique, nous permettant d'une part d'atteindre des états fascinants de la matière tels que la supraconductivité ou la condensation de Bose-Einstein, tout en étant d'autre part tout simplement une contrainte de fond imposée fondamentalement dans des domaines tels que l'astrophysique. En thermodynamique, la température joue un rôle particulier en ce qu'elle est l'une des prémisses de la théorie. Bien que sa valeur ne soit pas bornée en haut, son interprétation physique statistique établit une borne inférieure dure qui est désormais considérée comme inhérente à la théorie. Étudier comment cette limite inférieure peut être atteinte dans le cadre de la théorie quantique est donc intimement lié à la compréhension des lois fondamentales qui dictent la thermodynamique quantique. Il n'est donc pas surprenant que de nombreux schémas de refroidissement différents aient émergé de différentes approches de la thermodynamique quantique dans la littérature récente sur le sujet. Bien que chacune de ces approches fonctionne avec son propre ensemble d'hypothèses, elles ont toutes en commun que
\begin{enumerate}
\item elles supposent une température de fond et avec elle un état thermique disponible,
\item le système à refroidir est d'une manière ou d'une autre ouvert sur l'environnement. \label{enum:opensystfr}
\end{enumerate}

La première hypothèse n'est en fait pas évidente du tout et comprendre comment et dans quelles circonstances on peut supposer qu'un système donné en contact avec un environnement soit thermique est en soi un domaine de recherche passionnant et actif. Cependant, une fois que l'existence d'un état thermique est acceptée, pour que quelque chose arrive à l'état, \ref{enum:opensystfr}. est forcé d'être vrai. La diversité des approches se forme alors dans la façon dont le système est ouvert à son environnement. Notre approche consiste à diviser l'environnement en deux parties: une \enquote{machine} et le reste de l'environnement. Alors que le reste de l'environnement est considéré comme faiblement ou non contrôlé en ce qu'il n'est utilisé que pour rethermaliser la machine, nous supposons un contrôle accru de la part de l'environnement que nous appelons \enquote{machine}, d'où le nom. Pour éviter tout approvisionnement énergétique caché, nous tenons en outre explicitement compte de la source d'énergie. Cela donne lieu à deux variantes de notre paradigme. Dans la version incohérente, la machine bénéficie d'une source d'énergie entièrement entropique et est contrôlée d'une façon conservant l'énergie. Dans la version cohérente, la source d'énergie est fournie sans entropie et nous permettons un contrôle unitaire complet. Les deux variantes de ce paradigme sont conçues pour être liées à d'autres approches existantes et, en tant que telles, fournissent une plate-forme commune qui d'un côté unifie ces approches et, d'autre part, compare leurs performances sur un même pied d'égalité. \\

En étudiant chacun de nos paradigmes en tant que tel, nous trouvons en outre une borne qui s'applique à nos deux paradigmes et qui est réalisable sous des hypothèses minimales pour tous les systèmes et machines de dimension finie. La borne est en particulier de type lettre unique et ne dépend pas des subtilités particulières du système et de la machine, et en tant que telle identifie le paramètre d'intérêt pertinent, remplissant l'un des piliers centraux de la physique statistique. La borne est en outre déjà atteignable pour des machines minimales. En étudiant ces machines minimales de façon plus détaillée, nous constatons que l'énergie dépensée pour atteindre cette borne dépend du niveau de contrôle et que pour un refroidissement non maximal, il n'y a pas de paradigme universellement meilleur qui atteint une température donnée en termes de dépense énergétique. \\

Notre deuxième tâche d'intérêt, la création de corrélations, est motivée par une propriété fondamentale de notre compréhension actuelle de la nature: les corrélations sont au cœur de toute prédiction scientifique. Il est donc naturel de se demander combien de corrélations peuvent être créées pour une quantité finie d'énergie investie. En supposant un système initialement non corrélé dans un contexte thermique, il peut être démontré que la création de corrélations a un coût énergétique. En d'autres termes, aucune corrélation n'est gratuite dans ce contexte. Considérant les corrélations comme la ressource de la théorie de l'information et l'énergie comme celle de la thermodynamique, ce qui précède stipule que l'acquisition d'informations dans un système physique a nécessairement un coût thermodynamique. Alors que des bornes inférieures sur la quantité d'énergie qui doit être investie dans le processus peuvent être formulées, leur atteignabilité reste une question ouverte. Nous dérivons ici un cadre qui permet d'étudier cette question de plus près pour une paire de systèmes identiques initialement non corrélés dans un contexte thermique. Ce cadre est basé sur la décomposition de l'espace de Hilbert de type carré latin et à ce titre nous permet d'exploiter les outils de la théorie de la majorisation. Ce faisant, nous sommes en mesure de fournir des protocoles qui atteignent la borne inférieure pour tous les systèmes de dimension 3 et 4. Nous fournissons en outre un ensemble de conditions à remplir pour que la borne soit atteignable dans n'importe quelle dimension.



\begin{acknowledgements}
\addchaptertocentry{\acknowledgementname} 

I would like to first of all thank both of my supervisors, Nicolas Brunner and Marcus Huber, for their invaluable support throughout my PhD. In particular, thank you to both of them for having given me the freedom and flexibility I needed while at the same time providing guidance, clarity, a whole lot of precious ideas, as well as a fruitful research context.\\

On the Geneva side, my gratitude extends to the whole quantum theory group as well as the GAP. Thank you in particular to Ralph Silva for the tight and fruitful collaboration that developed throughout the years. I equally thank Géraldine Haack and Jonatan Bohr Brask in that regard. Thank you also to Flavien Hirsch who has by now become a precious research companion.\\

On the Vienna side, I am grateful to the entire Huber group. Thank you to Nicolai Friis for his helpful scientific support. Thank you also to Faraj Bakhshinezhad for his incredible perseverance. And thank you to Jessica Bavaresco for her wise and beneficial advice.\\

I would also like to thank Philip Taranto, Falvien Hirsch, Raya Polishchuk, Marcus Huber, Nicolas Brunner, Géraldine Haack, Mart\'{i} Perarnau-Llobet, and Andreas Winter for valuable feedback on and around this manuscript.\\

Thank you also to Kristina Eisfeld, Raya Polishchuk, and Tatjana Boczy for creating an online working space that enabled me to keep being productive in writing this thesis while being confined at home due to the Covid-19 measures.\\

Last but not least, I would like to wholeheartedly thank my family, friends, colleagues, and others who have contributed in what one may describe as a more indirect but at least as important way to the academic path I have taken so far.
A very special thanks in that regard goes to my wife, Raya, who besides having contributed in the specific above mentioned instances, has been an incredible source of continuous love and support throughout the years. I feel very privileged to be lucky enough to have you in my life.
\end{acknowledgements}


\tableofcontents 

%

\chapter{List of Publications}

This thesis is based on the following articles: 
\medbreak

\begin{itemize}
\item F. Bakhshinezhad, \underline{F. Clivaz}, G. Vitagliano, P. Erker, A. Rezakhani, M. Huber, and N. Friis, "Thermodynamically optimal creation of correlations," \href{https://dx.doi.org/10.1088/1751-8121/ab3932}{Journal of Physics A: Mathematical and Theoretical {\bf 52}, 465303 (2019)}, \href{https://arxiv.org/abs/1904.07942}{arXiv:1904.07942}.

\item \underline{F. Clivaz}, R. Silva, G. Haack, J. Bohr Brask, N. Brunner, and M. Huber, "Unifying paradigms of quantum refrigeration: A universal and attainable bound on cooling," \href{https://dx.doi.org/10.1103/PhysRevLett.123.170605}{Phys. Rev. Lett. {\bf 123}, 170605 (2019)}, \href{https://arxiv.org/abs/1903.04970}{arXiv:1903.04970}.

\item \underline{F. Clivaz}, R. Silva, G. Haack, J. Bohr Brask, N. Brunner, and M. Huber, "Unifying paradigms of quantum refrigeration: fundamental limits of cooling and associated work costs," \href{https://dx.doi.org/10.1103/PhysRevE.100.042130}{Phys. Rev. E {\bf 100}, 042130 (2019)}, \href{https://arxiv.org/abs/1710.11624}{arXiv:1710.11624}.

\end{itemize}

\mainmatter 

\pagestyle{thesis} 



\part{Setting the Stage} 

\label{intro} 

\chapter{Research Context}

It is a delicate task to define a research field, especially when it is still very much active. On the one hand, one would like to include all what is done in the direction that the supposed field investigates. On the other hand, delimiting it in too broad terms might inadvertently include research that definitely considers itself as being outside of the said field. That being said, the research that formed the basis of this thesis can be classified as being part of the field of quantum thermodynamics. Quantum thermodynamics is a hybrid field that studies the interplay between two major theories of physics: quantum mechanics and thermodynamics. The motivations for doing so are quite diverse. From a theoretical perspective both theories are strikingly complementary. Quantum mechanics is the widely accepted theory of microscopic objects, while thermodynamics is very effective at describing macroscopic phenomena. It is therefore appealing to:

\begin{enumerate}
	\item Investigate how the thermodynamic behavior of macroscopic systems emerges from their microscopic quantum mechanical description.
	\item Extend useful thermodynamic concepts such as temperature, heat, work and entropy to the quantum mechanical realm.
\end{enumerate}  

From a practical viewpoint, the recent miniaturization of technologies naturally pushes for the development of a theory that allows to best engineer them. This is timely, as it is complemented with our increased ability to control larger and larger quantum systems, giving us the opportunity to experimentally test theoretical ideas.\\

The dialogue between thermodynamics and quantum mechanics started right at the dawn of quantum mechanics: in 1905 Einstein introduced the concept of a photon by, using a thermodynamic argument, studying the photoelectric effect \cite{Einstein-1905}. Despite this intimate relation, quantum mechanics and thermodynamics branched off in the following decades to further develop independently of each other. Only in 1959, when the first solid state lasers were developed, did the two fields meet again as Scovil and Schulz DuBois \cite{Scovil-1959} discovered the equivalence between the three level maser and the Carnot heat engine \cite{Carnot-1872}. The former was developed thanks to our understanding of stimulated emission, a truly quantum mechanical phenomena, while the latter is a pillar of thermodynamics. Their seminal work set the stage for the rich field of quantum thermodynamics that would blossom in the decades to come.

Due to its broad scope, the field has attracted researchers from many backgrounds, ranging from statistical physics, many-body theory, and quantum optics to mesoscopic physics and quantum information theory \cite{Binder-2018}. This diversity is reflected both in the number of different sub fields and the different approaches found to address a given question, where consensus is sometimes yet to be reached \cite{Vinjanampathy-2016}.\\

One of the most fundamental questions that has been asked since the beginning of thermodynamics and statistical physics regards the problem of equilibration and thermalization \cite{Binder-2018,Gogolin-2016}: why do systems tend to equilibrate? Early contributions to the topic date back to the works of Schrödinger in 1927 \cite{Schroedinger-1927} and von Neumann in 1929 \cite{Neumann-1929}. See \cite{Neumann-2010} for an English translation of the work of von Neumann. It is, however, only recently that we were able to gain more understanding onto this challenging topic. Current research in this direction is contributing to the understanding of why many-body systems appear to equilibrate and the time-scales under which they do so. A large body of work also investigates the properties of many-body systems at thermal equilibrium, such as their correlations. Notions such as (dynamical) typicality \cite{Neumann-1929,Schroedinger-1927,Neumann-2010,Goldstein-2010,Popescu-2006,Gemmer-2010} and the eigenstates thermalization hypothesis (ETH) \cite{Srednicki-1994,Deutsch-1991}  are central to this endeavor.

While studying the dynamics of quantum systems to see if, and under which conditions, they tend to equilibrate is central in understanding how the microscopic and macroscopic worlds are related, the dynamics of small quantum systems out of equilibrium is \emph{per se} also of interest \cite{Campbell-2019,Binder-2018,Vinjanampathy-2016}. Investigating the behavior of these small systems in regimes far from that of their environment, fluctuation theorems, which can be seen as generalizations of the second law of thermodynamics, can be formulated. As their name suggests, their most basic insight is that dynamics in this regime are governed by fluctuations. Two major results of this sub-field are the Jarzynski equality \cite{Jarzynski-1997} and the Crooks fluctuation theorem  \cite{Crooks-1999}.

More recently, a quantum information perspective of the field has blossomed \cite{Goold-2016,Binder-2018}. In a broad sense, this approach investigates how information and thermodynamics are related. Doing so, researchers have, on the one hand, looked at how quantum thermodynamics can be reconstructed from an information theoretic perspective and, on the other hand, at how quantum thermodynamics provides a platform to implement quantum information tasks. A significant contribution regarding that approach has been to formulate thermodynamics as a resource theory \cite{Lostaglio-2019,Gour-2015,Brandao-2013}. This enables one to answer questions such as: given a certain state and a limited set of operations, what final state can be achieved?

Last, but not least, a central pillar of quantum thermodynamics is the thermodynamics of open quantum systems \cite{Kosloff-2013,Potts-2019,Vinjanampathy-2016,Binder-2018}. Of major importance here is the so called Gorini, Kossakowski, Lindblad, Sudarshan (GKLS) equation \cite{Gorini-1975,Lindblad-1976}, which is a type of master equation valid in the markovian (i.e., memory-less) regime, where the system of interest is weakly-coupled to its environment, whose temporal correlations decay much faster than those of the system. This approach has enabled many researchers to extensively investigate quantum heat devices \cite{Mitchison-2019}. Going beyond the weak-coupling regime in order to investigate strongly-coupled open systems has recently gained substantial interest within this community \cite{Binder-2018}.

Experimentally, the field is still in its infancy; nevertheless, there has recently been a number of advances regarding various platforms \cite{Pekola-2015,Binder-2018}. Among them nitrogen-vacancy (NV) centers provide a good overall platform \cite{Klatzow-2019, Bar-Gill-2018}. One dimensional atomic particles have been used to test equilibration concepts \cite{Langen-2015, Schmiedmayer-2018}. And trapped ions have been demonstrated to be a good avenue to probe the dynamics of fluctuations \cite{An-2014, Lu-2018}, as well as to realize quantum heat engines \cite{Rossnagel-2016, Dawkins-2018}.\\

Despite the tremendous progress of the past years, there remain many open questions to be addressed in the field. One of the many challenges is to understand how the distinct approaches, fit to address a same problem from different angles, relate to one another. This is what Part \ref{part:refri} of this thesis is addressing for the ubiquitous thermodynamic task of refrigeration. 

Another intriguing and still poorly understood question is that of (quantitative) resource inter-convertibility. Indeed, while the resource for cooling can be identified in terms of available energy, it is often more natural to think in terms of other resources when faced with different tasks. Part \ref{part:corre} of this thesis will be concerned with how, and how well, one can transform a given resource into another. More precisely, we will be interested in how well energy, the paradigmatic thermodynamics resource, can be transformed into correlations, the resource for information processing tasks.

\chapter{Encompassing Idea and Main Results}

The core of this thesis is structured in two distinct Parts. In Part~\ref{part:refri}, the task of cooling a quantum system is investigated. Part~\ref{part:corre} is concerned with how much correlations can be created in a bipartite quantum system for a given amount of energy. While both tasks inscribe themselves within the same research field, namely that of quantum thermodynamics, and while their inner workings are based upon the same mathematical theory, namely that of majorization, their physical relevance is quite independent. We therefore chose to make each Part stand alone and equipped them with their own introduction and conclusion, enabling the reader interested in one Part only to solely concentrate on it without having to worry about the other Part.\\

In Part~\ref{part:refri}, we develop a paradigm for quantum refrigeration. The idea behind the paradigm stems from the observation that a variety of different approaches to refrigeration exist in the literature, but the set of assumptions under which they operate being quite different, comparison among them has proven challenging. Our contribution intends to fill this gap by providing a platform to compare the different existing approaches. In general terms, the different existing approaches have two features in common.

\begin{enumerate}
\item They assume an environment at some background temperature that allows accessing/preparing thermal state at that temperature. \label{assump:bath}
\item The system to be cooled is initially from that environment and remains, in some sense, open to that environment during the cooling process. \label{assump:openstate}
\end{enumerate}

As von Neumann and Schrödinger already realized in the late 1920s~\cite{Neumann-1929, Schroedinger-1927}, while assumption \ref{assump:bath}. might seem obvious from an empirical thermodynamics perspective, it is far from obviously derivable from first quantum mechanical principles. Removing it essentially amounts to diving deep into the field of equilibration and thermalization and, while being a fascinating topic to investigate, effects in diverting the attention from the original intent. Even more so, one could argue that the concept of refrigeration only makes sense once it is possible to speak of a well-defined background temperature. In the end, one always cools a system with respect to its environment, and when we speak of a cold/warm system, what we really mean is that it is colder/warmer than its environment.

The first part of assumption~\ref{assump:openstate} makes sure that no additional resource is hidden in the initial state. To illustrate this idea with one extreme example, if the initial state of the system to be cooled were allowed being a pure state, the task of refrigeration would become trivial. This assumption also goes back to the basic idea of refrigeration. The very meaning of the fact that one cools a system with respect to its environment entails that the initial state of that system is from that environment. Most of the approaches, actually all the approaches we consider as well as our paradigm, even go a step further in assuming that the state of the system is initially uncorrelated with that of the environment. This is a debatable simplification. It is, however, beyond the scope of this thesis to challenge it.

Given that the initial state of the system is thermal, the second part of assumption~\ref{assump:openstate} is necessary for the system to be cooled at all. Indeed, letting the system evolve according to its own Hamiltonian will not induce any cooling since this evolution commutes with the state of the system. What is more, allowing the system to interact with itself in any possible way, meaning allowing turning on any local Hamiltonian, also does not lead to any cooling of the system. This is a consequence of the fact that the state of the system is passive~\cite{Pusz-1978, Lenard-1978}. So, in order for the system to be cooled, one really needs to open it to the environment. The diversity of the approaches results in how one decides to open the system to the environment.\\

Our take on it is to divide the environment into two parts: a \enquote{machine} and the rest of the environment. While the rest of the environment is thought of as being weakly or not controlled in that it is only used to rethermalize the machine, we assume an increased control on the part of the environment we call \enquote{machine}, hence the name. To prevent any hidden energy supply, we furthermore explicitly take the source of energy --- referred to as the resource --- into account. This gives rise to two variations of our paradigm that we dub coherent and incoherent. In the incoherent scenario, the resource is embodied by a thermal bath at a hotter temperature. The hot thermal bath invests energy by rethermalizing (part of) the machine and the system is cooled via an energy conserving unitary applied on the joint system of machine and target system. This ensures that the entire energy comes from the hot thermal bath, making the energy used be extracted at maximal entropy. In the coherent scenario the entropy of the resource is, in contrast, left unchanged upon energy extraction. In order to cool the system, a (possibly energy non-conserving) unitary is in turn applied to the joint system of machine and target system. In both variations, repetitions are enabled by rethermalizing the machine to the respective baths and repeating the above.\\

Both scenarios are per design related to other refrigeration paradigms studied in the literature. In particular, the coherent scenario is a generalization of heat bath algorithmic cooling (HBAC)~\cite{Schulman-1999, Boykin-2002, Raeisi-2015, Rodriguez-Briones-2016, Rodriguez-Briones-2017} without compression qubit. In fact, the coherent scenario fully generalizes HBAC if one extends it as in~\cite{Taranto-2020}. The coherent scenario furthermore includes any quantum Otto engine implementation~\cite{Abah-2012, Rossnagel-2016, Niedenzu-2016, Niedenzu-2018}. In the limit of infinite machine size, a single application of the coherent scenario is able to implement any completely positive and trace preserving (CPTP) map to the system~\cite[Chapter~8]{Nielsen-2010}. In the same limit of infinite machines, a single application of the incoherent scenario investigates whether state transitions of interest are possible within the framework of thermal operations (TO)~\cite{Lostaglio-2019,Gour-2015,Brandao-2013}. Last but not least, the incoherent scenario is intimately related to autonomous cooling~\cite{Linden-2010, Skrzypczyk-2011}.\\

Investigating each of our scenario on its own, we find a bound on cooling valid in both scenarios for arbitrary machines and target systems. The bound is in particular a single letter bound that, for a given target system dimension, only depends on the relevant parameter of the machine: its maximal energy gap. As such, it achieves one of the central pillar of statistical physics: distilling the pertinent parameters from the intricacies of complex systems. The bound is furthermore attainable for any machine in the coherent scenario and achievable under minimal modifications of the machine in the incoherent scenario. In particular, minimal machines already suffice to attain the bound. Investigating these minimal machines in greater details, we identify the operations that spend a minimal amount of resource to reach a given temperature. Doing so, we find that the amount of resource spent to reach the bound depends on the level of control, i.e., on the scenario, but that for non-maximal cooling there is no universally better scenario. Finally, we define a generalized notion of temperature, called sumtemperature, that is based on the notion of majorization and uniquely captures our cooling bound for both scenarios.\\

Part~\ref{part:corre} of this thesis treats of the creation of correlations within a thermodynamic setting. More precisely, we are interested in knowing how much correlations can be created in a bipartite system for a given amount of energy. The general question motivating this research line is that of resource inter-convertibility. Indeed, at the time of the development of thermodynamics, the resources at hand were fairly unanimous and determined in concrete terms. However, in quantum thermodynamics, this is not the case anymore and identifying the relevant resources in that regime is arguably one of the major open endeavor of the field. Whether one resource can be converted into another is therefore of major importance. This is especially true when speaking of the ubiquitous resource of thermodynamics, energy, and that of information theory, correlations. Indeed, according to our current understanding of nature, correlations lie at the heart of every scientific prediction. It is therefore only natural to ask how much correlations can be created for a given amount of energy.

Assuming an initially uncorrelated bipartite system in a thermal background, it can be shown that, as long as both local systems are not persistently interacting, establishing correlations necessarily comes at an energy cost~\cite{Friis-2016}. The acquisition of information in a physical system therefore automatically comes at a thermodynamic cost. Conversely, energy can be extracted from any kind of correlations~\cite{PerarnauLlobet-2015}. This qualitatively settles the question of resource inter-convertibility and sets the stage for the next fundamental question, namely that of how much resource can in principle be inter-converted.\\

Our contribution is to provide a framework to investigate how much correlations can be created in an initially uncorrelated bipartite system for a given amount of energy. The precise question we investigate was first formulated in~\cite{Huber-2015}, see also~\cite{Vitagliano-2018}. There, the question could already be answered for the case of two equally gaped identical systems. Our framework is based on decomposing the Hilbert space in a Latin square manner and as such allows us to harness the tools of majorization theory. In doing so, we are able to fully answer the question for the case of two identical qutrits and ququarts of arbitrary energy gaps. We furthermore conjecture a bound on the amount of correlations for arbitrary identical systems to be reachable in any dimension and provide a set of conditions under which the bound it achievable. For non-identical systems the bound can be violated, and we provide evidence for it, see also~\cite{Vitagliano-2018}. Lastly, we solve the problem in its full generality, i.e., lifting the identical and equally gap constraints, for when the systems are in a vanishing background temperature.

\chapter{General Notation}

We would here like to state some general notation used throughout the thesis. Some part or chapter specific notation will be stated later to avoid confusion. This chapter as well as the subsequent notation chapters or sections, i.e., Chapter~\ref{chap:notationref}, Section~\ref{sec:notationqubit}, and Chapter~\ref{chap:corrnotation}, are intended for reference in order to facilitate a cherry-picked reading. They might therefore simply be skipped upon a linear reading.

We work in units of $k_B=\hbar=1$, where $k_B$ is the Boltzmann constant and $\hbar$ the reduced Planck constant. The background temperature is fixed throughout the analysis and denoted by $T_R$. The subscript $R$ highlights the fact that this is the room temperature. We emphasize, however, that $T_R$ might be of any numerical value, as long as fixed throughout the analysis. We denote the inverse background temperature by $\beta_R$, i.e.
\begin{equation}
\beta_R=\frac{1}{T_R}.
\end{equation}
In Part~\ref{part:refri}, we consider a hot bath and denote its temperature by $T_H$, and inverse temperature by $\beta_H=\frac{1}{T_H}$. In Part~\ref{part:corre}, we encounter temperatures $T' \geq T_R$ and denote their inverse temperature by $\beta'=\frac{1}{T'}$. While $T_H$ is thought of as fixed, $T'$ is thought of as varying, hence the difference in notation.\\

For system $i$, where typically $i=S,M,SM$ in Part~\ref{part:refri} and $i=A,B,AB$ in Part~\ref{part:corre}, we always associate a Hilbert space $\mathcal{H}_i$ as well as a Hamiltonian $H_i$ to it. We denote by $\tau_i^R$, $\tau_i^H$ and $\tau_i(\beta')$ the thermal state of the Hamiltonian $H_i$ at $T_R$, $T_H$ and $T'$ respectively, i.e.
\begin{align}
\tau_i^R&= \frac{e^{-\beta_R H_i}}{\Tr(e^{-\beta_R H_i})},\\
\tau_i^H&= \frac{e^{-\beta_H H_i}}{\Tr(e^{-\beta_H H_i})},\\
\tau_i(\beta')&= \frac{e^{-\beta' H_i}}{\Tr(e^{-\beta' H_i})}.
\end{align}
We will often drop the superscript $R$. i.e.
\begin{equation}
\tau_i=\tau_i^R.
\end{equation}

Furthermore, the initial state of the system $i$ is denoted by $\rho_i$ and is thermal at $T_R$ unless otherwise stated, that is
\begin{align}
\rho_i &= \tau_i.
\end{align}

A general density matrix on system $i$ is denoted by $\sigma_i$ and the set of all density matrices on $i$ by $\mathcal{S}(\mathcal{H}_i)$, i.e.

\begin{equation}
\mathcal{S}(\mathcal{H}_i)=\{\sigma_i:\mathcal{H}_i \rightarrow \mathcal{H}_i \mid \Tr(\sigma_i)=1, \sigma_i \geq 0 \}.
\end{equation}

The dimension of system $i$ is always finite and denoted by $d_i$. We start to count from 0. The computational basis notation, $\left(\ket{k}_i \right)_{k=0}^{d_i-1}$, is used to denote the energy eigenbasis of system $i$. Given two systems $i_1$ and $i_2$, we denote $\ket{k}_{i_1} \otimes \ket{l}_{i_2}$ by

\begin{equation}
\ket{k,l}_{i_1 i_2}.
\end{equation}
We will drop the comma and the subscripts, i.e., use $\ket{kl}$, whenever no confusion arises. Components of the density operator $\sigma$ in the energy eigenbasis are denoted by $[\sigma]_{kl}$, i.e.,
\begin{equation}
[\sigma]_{kl}=\bra{k} \sigma \ket{l}.
\end{equation}
For composed systems,
\begin{equation}
[\sigma]_{ik,jl}=\bra{ik} \sigma \ket{jl}.
\end{equation}
We denote by $\mathfrak{D}(\sigma)$ the vector of diagonal elements of $\sigma$ in the energy eigenbasis, i.e.
\begin{equation}
\mathfrak{D}(\sigma)= (\bra{0} \sigma \ket{0}, \dots, \bra{d-1} \sigma \ket{d-1}).
\end{equation} 

$\lambda(\sigma)$ denotes the vector of eigenvalues of $\sigma$ and $\text{Eig}_A(\lambda)$ the eigenspace of the matrix $A$ with eigenvalue $\lambda$, i.e.
\begin{equation}
\text{Eig}_A(\lambda)=\{\ket{v} \mid A \ket{v} = \lambda \ket{v}\}.
\end{equation}
Given a reference set $X$, sometimes called a universe, and a set $B \subset X$, we denote by $B^c$ the complement of the set $B$ in $X$, i.e.
\begin{equation}
B^c= X \setminus B = \{x \in X \mid x \notin B\}.
\end{equation}
Given two compatible matrices or operators $A$ and $B$, we denote by $[A,B]$ the commutator between $A$ and $B$, i.e.
\begin{equation}
[A,B]=AB-BA.
\end{equation}
Given an integer $d \in \mathds{N}$ and a vector $v=(v_0, \dots, v_{d-1})$, we denote by 
\begin{equation}
v_0^{\downarrow} \geq v_1^{\downarrow} \geq \dots \geq v_{d-1}^{\downarrow}
\end{equation}
 the components of v arranged in decreasing order and 
 \begin{equation}
 v^{\downarrow}=(v_0^{\downarrow},\dots,v_{d-1}^{\downarrow}).
 \end{equation}
 
 Analogously
 
 \begin{equation}
v_0^{\uparrow} \leq v_1^{\uparrow} \leq \dots \leq v_{d-1}^{\uparrow}
\end{equation}
 denote the components of v arranged in increasing order and 
 \begin{equation}
 v^{\uparrow}=(v_0^{\uparrow},\dots,v_{d-1}^{\uparrow}).
 \end{equation}
 
 Note that we use increasing and decreasing in place of the sometimes preferred non-decreasing and non-increasing. If we want to rule out equality from the relation, we will use strictly increasing and strictly decreasing. 
 
 We denote by $M$ a general doubly stochastic matrix, by  $P$ a permutation matrix, by $Q$ a permutation matrix exchanging only 2 indices and by $\Pi$ the permutation matrix that \enquote{pushes down} every element of a vector once, i.e.
\begin{equation}
\Pi = (\Pi_{ij}), \quad \Pi_{ij} = \delta_{i j+1 \, \text{mod}\, d}.
\end{equation}
We denote by $T$ a special kind of doubly stochastic matrix called a T-transform. With the above notation, a T-transform can be written as
\begin{equation}
T= (1-t) \mathds{1} + t Q,
\end{equation}
for some $t\in [0,1]$ and some $Q$. We will write $T(t)$ when emphasizing the parametric dependance is desired.


\part{Refrigeration} 

\label{part:refri} 


\emph{This part is based on the following papers:} 

\begin{itemize}

\item \underline{F. Clivaz}, R. Silva, G. Haack, J. Bohr Brask, N. Brunner, and M. Huber, ``Unifying paradigms of quantum refrigeration: A universal and attainable bound on cooling,'' \href{https://doi.org/10.1103/PhysRevLett.123.170605}{Phys. Rev. Lett. {\bf 123}, 170605 (2019)}, \href{https://arxiv.org/abs/1903.04970}{arXiv:1903.04970}.

\item \underline{F. Clivaz}, R. Silva, G. Haack, J. Bohr Brask, N. Brunner, and M. Huber,``Unifying paradigms of quantum refrigeration: fundamental limits of cooling and associated work costs,'' \href{https://doi.org/10.1103/PhysRevE.100.042130}{Phys. Rev. E {\bf 100}, 042130 (2019)}, \href{https://arxiv.org/abs/1710.11624}{arXiv:1710.11624}.

\end{itemize}

\chapter{Introduction} \label{chap:intro}

Understanding the performance of thermal machines is intimately related to the fundamental laws of thermodynamics. Indeed, the formulation of fundamental laws in the early days of thermodynamics were a tremendous help to our understanding of how to better design machines. More generically, pioneering ideas in the design of relevant machines paves the way for the derivation of fundamental laws governing their behavior. Those laws in turn challenge for the design of better machines as well as for out-of-the-box designs that force us to rethink the applicability of the laws and ultimately for the derivations of new ones. \\

While the resources as well as operations at hand during the development of thermodynamics were clearly dictated by first-hand experiences, it is a main challenge to identify the natural and reasonable resources as well as operations in a quantum thermodynamic setting. This ambiguity gave rise to a plethora of conceptually different approaches.

Taking a stroke type, or timeless, perspective, the resource theory of thermodynamics has been very successful at deriving fundamental laws by characterizing possible state transitions within a single application of a physically motivated completely positive and trace preserving (CPTP) map~\cite[Chapter~8]{Nielsen-2010} called \enquote{thermal operation} \cite{Horodecki-2013, Brandao-2013, Gour-2015, Lostaglio-2015, Lostaglio-2015b, Cwiklinski-2015, Brandao-2015, Skrzypczyk-2014, Guryanova-2016, Masanes-2017, Wilming-2017, Scharlau-2018}. While the interaction of the system with the thermal bath is here explicitly imposed to be energy conserving, the implicit assumptions are 
\begin{enumerate}[i)]
\item\label{enum:perfect_time} a perfect timing device (or clock), 
\item arbitrary spectra in the bath,
\item interaction Hamiltonian of arbitrary complexity.
\end{enumerate}
A similar timeless  perspective with no access to a thermal bath but increased unitary control lead to the well-known concept of passivity~\cite{Pusz-1978, Lenard-1978, Alicki-2013, Hovhannisyan-2013, Skrzypczyk-2015, PerarnauLlobet-2015b}. Here the implicit assumptions are \ref{enum:perfect_time}) and
\begin{enumerate}
\item[iv)] the ability to implement any cyclic change in the Hamiltonian of a quantum system.
\end{enumerate}

Finally, combining both aspects, access to a bath and arbitrary unitary control, as well as allowing for repeated applications of the CPTP map at hand leads to the idea of heat bath algorithmic cooling (HBAC)~\cite{Schulman-1999, Boykin-2002, Raeisi-2015, Rodriguez-Briones-2016, Rodriguez-Briones-2017}. In HBAC, however, the bath one has access to is generally limited to being a string of uncorrelated qubits.

Taking a dynamical perspective on the other hand, one can model the interaction of the systems of interest with their environment as open quantum systems coupled to external baths. Of particular interest in this regime is usually the asymptotic non-equilibrium steady state obtained. Autonomous machines have been designed within this paradigm~\cite{Linden-2010, Skrzypczyk-2011, Levy-2012, Brunner-2012, Venturelli-2013, Mitchison-2016, Hofer-2016, Hofer-2016b, Maslennikov-2019}. Here the interactions with the various thermal baths are modeled as time-independent Hamiltonian. Taking a similar stance to the thermal operations, the time-independent interactions give rise to energy conserving dynamics and make sure that no external source of work is implicitly needed. The interactions being always turned on and time-independent, this paradigm also gets rid of the hidden assumption~\ref{enum:perfect_time}) otherwise ubiquitous in all other paradigms. At the other end of the spectrum, by allowing for the implementation of complex unitary cycles, the design of quantum Otto engines has been explored in an open quantum system setting~\cite{Abah-2012, Rossnagel-2016, Niedenzu-2016, Niedenzu-2018}.\\

While all the above discussed paradigms are perfectly valid within their own set of assumptions, the fact that their approach stems from so drastically different point of views makes it hard to draw parallels between them, to compare them, to carry one result from one approach to the other, as well as to get a unified view of how well a given quantum thermodynamic task can be performed, see~\cite{Mitchison-2015, Brask-2015, Pozas-Kerstjens-2018, Torrontegui-2017, Chubb-2018, Brown-2016, Perry-2018, Sparaciari-2017, Erker-2017} for preliminary established connections.

Focusing on the task of refrigeration, it is with this idea of unification in mind that we have designed the two cooling scenarios called coherent and incoherent, that we will discuss in the rest of this Part. While each of this scenario works with its own set of assumptions, they are both constructed in a way that allows comparing them naturally. Furthermore, we will see that each scenario is closely related to the previously discussed paradigms of thermal operations, HBAC, autonomous cooling, and quantum Otto engines. While comparing both scenarios with one another and with other existing paradigms motivates how we define them, exploring them in their own sake will nevertheless be the focus here. This will allow us to develop a better understanding of the workings of each scenario as well as derive new results that can easily be carried on to other approaches. This being said, we do not claim that our take on refrigeration settles the question once and for all. First of all, as we will see, both scenarios still need to be further studied to be completely understood. The parallel they draw with the other existing paradigms also need some more investigation. But more crucially, both scenarios have their own limitations in the way they are defined. They are for example inherently markovian and can as such never include non-markovian cooling strategies. They can nevertheless be extended to the non-markovian regime as done for example in~\cite{Taranto-2020}.\\

The rest of this Part is organized as follows. After having set some notation specifically used for this Part in Chap.~\ref{chap:notationref}, we will define each scenario in Chap.~\ref{chap:coh} and Chap.~\ref{chap:inc} respectively. We will then motivate our definition of work cost and temperature in Chap.~\ref{chap:worktemp} before seeing how both scenarios relate to other existing paradigms in Chap.~\ref{chap:other}. In Chap.~\ref{chap:remarks} we will make some general remarks about each scenario that will give us some intuition and set the interesting problems to investigate as well as hint at the useful tools to utilize. Following a bottom-up approach, we will then focus in Chap.~\ref{chap:qubitsyst} on a target qubit system and investigate in detail the case of the one and two qubit machines to cool this target. This will hep us build some crucial intuition about each scenario and will motivate the formulation of two open problems in Sec.~\ref{sec:twoopenproblems}. In Chap.~\ref{chap:quditsystem} we will then move our attention to more general instances of our scenarios and will consider the cooling of an arbitrary qudit system by an arbitrary finite dimensional machine. There we will in particular derive a bound on cooling valid for both paradigms and define a new notion of temperature that encompasses other temperature notions within our scenarios.

\chapter{Notation} \label{chap:notationref}

We would here like to state some further notation that will be used throughout this Part. We remind the reader that since this chapter is intended for reference, it might simply be skipped upon a linear reading.

In the following we will be interested in cooling a system that we will denote by $S$. We will refer to this system as \emph{the system of interest}, \emph{the target system}, \emph{the target}, or sometimes as \emph{the system} simply. We will also be considering a machine that we will denote by $M$ and refer to as simply \emph{the machine}. The target and the machine together will be denoted by $SM$ and often referred to as \emph{the joint system}.\\

The Hamiltonian of the $d_S < \infty$ dimensional system S will be denoted by
\begin{equation}
H_S=\sum_{i=0}^{d_S-1} E_i \ket{i} \bra{i}_S, \quad \text{with } E_i \leq E_{i+1} \forall i=0,\dots,d_S-1.
\end{equation}

Throughout this chapter $E_0=0$. In the case of a system being a qubit we will denote its energy gap by $E_S$, i.e.,
\begin{equation}
E_S=E_{d_S-1}, \quad \text{if } d_S=2.
\end{equation}
We will denote the Hamiltonian of the $d_M < \infty$ dimensional machine M by
\begin{equation}
H_M= \sum_{i=0}^{d_M-1} \mathcal{E}_i \ket{i} \bra{i}_M, \quad \text{with } \mathcal{E}_i \leq \mathcal{E}_{i+1} \forall i=0,\dots,d_M-1.
\end{equation}
We will sometimes write $\mathcal{E}_{\text{max}}$ for $\mathcal{E}_{d_M-1}$, i.e.,
\begin{equation}
\mathcal{E}_{\text{max}} = \mathcal{E}_{d_M-1}.
\end{equation}
Furthermore, $\mathcal{E}_0=0$ throughout this chapter. We will encounter 2 specific machines. In the case where the machine consists of a single qubit, we will denote its energy gap by $\mathcal{E}_M$. If the machine consists of 2 qubits, we will denote the first qubit by $M_1$ and its associated energy gap by $\mathcal{E}_{M_1}$. The second qubit will be denoted by $M_2$ and its associated energy gap by $\mathcal{E}_{M_2}$. \\

We will denote the Hamiltonian of the (non-interacting) joint system $SM$ by
\begin{equation}
H_{SM}= \sum_{i=0}^{d_{SM}-1} \epsilon_i \ket{i} \bra{i}_{SM}, \quad \text{with } \epsilon_i \leq \epsilon_{i+1} \forall i=0,\dots,d_{SM}-1.
\end{equation}
Note that as $H_{SM} = H_S \otimes \mathds{1} + \mathds{1} \otimes H_M$, $d_{SM}=d_S+d_M$ and 
\begin{equation}
H_{SM}= \sum_{i=0}^{d_S-1} \sum_{j=0}^{d_M-1} (E_i+\mathcal{E}_j) \ket{ij} \bra{ij}_{SM}.
\end{equation}
So that for all $i=0,\dots,d_{SM}-1$, $\epsilon_i=E_k+\mathcal{E}_l$ and $\ket{i}_{SM}=\ket{kl}_{SM}$ for some $k \in \{0,\dots,d_S-1\}$, $l \in \{0,\dots,d_M-1\}$.\\

It will sometimes be useful to refer to the $k_i$ distinct energy eigenvalues of system $i=S,M,SM$. We will denote them by $\tilde{E}_0,\dots \tilde{E}_{k_S-1}$, $\tilde{\mathcal{E}}_0,\dots \tilde{\mathcal{E}}_{k_M-1}$ and $\tilde{\epsilon}_0,\dots \tilde{\epsilon}_{k_{SM}-1}$ respectively.\\

A single application of the coherent scenario is denoted by $\Lambda_{\text{coh}}$ and $n$ applications by $\Lambda_{\text{coh}}^n$, i.e.,
\begin{align}
\Lambda_{\text{coh}}(\rho_S)&= \Tr_M (U \rho_S \otimes \rho_M U^{\dagger}), \quad U: \text{ arbitrary unitary}\\
\Lambda_{\text{coh}}^n(\rho_S)&= \underbrace{\Lambda_{\text{coh}}(\dots \Lambda_{\text{coh}}(\Lambda_{\text{coh}}(}_{\text{n times}}\rho_S))\dots).
\end{align}
Note that the chosen unitary can vary from step to step. If we want to specifically emphasize that $\Lambda_{\text{coh}}$ at step $i$ might be different from $\Lambda_{\text{coh}}$ at step $j \neq i$, we use the notation
\begin{equation}
\Lambda_{\text{coh}}^n = \Lambda_{\text{coh}}^{(n)} \circ \Lambda_{\text{coh}}^{(n-1)} \circ \dots \circ \Lambda_{\text{coh}}^{(1)}.
\end{equation}

Similarly, for the incoherent scenario we have
\begin{align}
\Lambda_{\text{inc}}(\rho_S)&= \Tr_M (U_{\text{inc}} \rho_S \otimes \rho_M^{R,H} U^{\dagger}_{\text{inc}}), \quad U_{\text{inc}}: [U_{\text{inc}}, H_{SM}]=0,\\
\Lambda_{\text{inc}}^n(\rho_S)&= \underbrace{\Lambda_{\text{inc}}(\dots \Lambda_{\text{inc}}(\Lambda_{\text{inc}}(}_{\text{n times}}\rho_S))\dots)
=\Lambda_{\text{inc}}^{(n)} \circ \Lambda_{\text{inc}}^{(n-1)} \circ \dots \circ \Lambda_{\text{inc}}^{(1)},
\end{align}
where
\begin{equation}
\rho_{M}^{R,H}=\tau_{M_R} \otimes \tau_{M_H}^H,
\end{equation}
and $M_H$ is the part of $M$ thermalized to $T_H$ and $M_R$ and the part of $M$ thermalized to $T_R$. Note that in the incoherent scenario it is the choice of $M_H$ and $M_R$ that may vary at each step.\\

$\Delta F$ denotes the free energy difference of the resource, i.e.,

\begin{equation}
\Delta F = \Delta \mathcal{U} - T_R \Delta S,
\end{equation}

with $\Delta \mathcal{U}$ the internal energy change of the resource and $\Delta S$ the entropy change of the resource. The letter $r$ is reserved to denote the ground state population, i.e.
\begin{equation}
r= \bra{0} \sigma \ket{0}.
\end{equation}
For a qubit system, i.e., $d_{SM}=2 d_M$, and $v \in \mathds{R}^{d_{SM}}$,
\begin{align}
a_v&=(v_0, \dots, v_{d_M-1}),\\\
b_v&=(v_{d_M}, \dots, v_{d_{SM}-1}).
\end{align}
In the case of a qubit target and the one and two qubit machines
\begin{equation}
\mathcal{L}_{SM_k}= i \ket{01} \bra{10}_{SM_k} - i \ket{10} \bra{01}_{SM_k}, \quad k=1,2,
\end{equation}
denotes the Hamiltonian generating the swapping between the energy levels $\ket{0 1}_{SM_k}$ and $\ket{1 0}_{SM_k}$.
In the case of a qubit target and two qubit machine, for $i,j = 0,\dots 7$,
\begin{equation}
U_{ij} = \ket{i_2} \bra{j_2} + \ket{j_2} \bra{i_2} + \mathds{1}_{\text{span}^c\{\ket{i_2}, \ket{j_2}\}},
\end{equation}
where $i_2$ denotes the 3 digit display of $i= 0,\dots,7$ in base 2, e.g., $0_2=000$, and $\text{span}^c\{\ket{i_2}, \ket{j_2}\}$ denotes the complement of the set $\text{span}\{\ket{i_2}, \ket{j_2}\}$.
\chapter{Coherent Scenario} \label{chap:coh}

In this section we would like to define what we will refer to as the coherent scenario in the rest of this thesis. The term \emph{coherent scenario} was coined to highlight the fact that one can create coherence between the different energy levels of the machine and target joint system. This is in contrast with the incoherent scenario, that will be defined in Chapter~\ref{chap:inc}, where coherence can only be built within degenerate energy levels of the joint system.\\

The scenario itself consists of three elements that allow cooling the system of interest, see Figure~\ref{fig:elements}. These elements are comprised of a \emph{machine}, that will draw what it needs to operate from some \emph{resource}. This will in turn allow applying some \emph{operation} on the target system. These three elements: machine, resource, and operation, are the basis of the analysis that will follow. There are also what both the incoherent and the coherent scenario have in common and as such build a platform from which other scenarios can be defined.\\

\begin{figure}
\centering
\includegraphics[width=0.7 \linewidth]{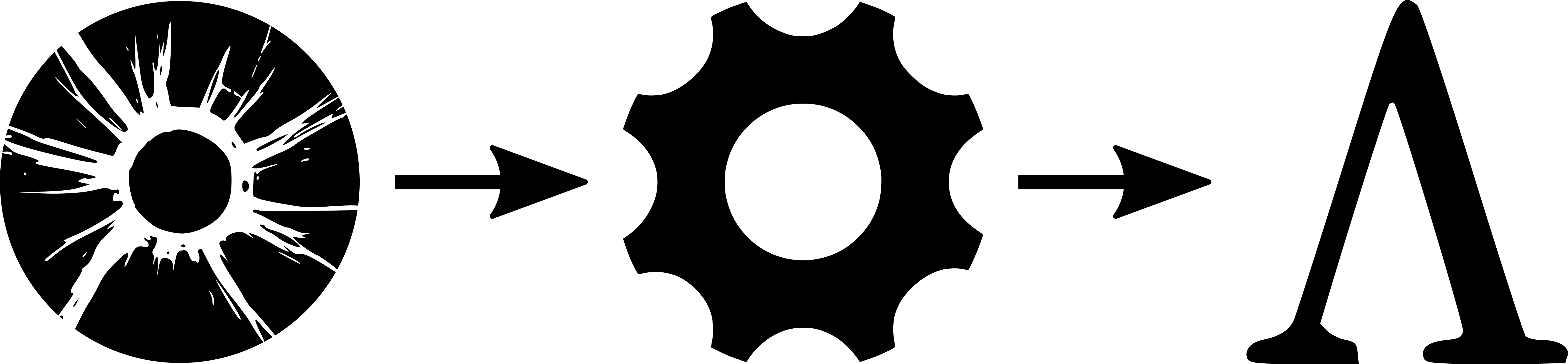}
\caption{\label{fig:elements}The three building blocks of the coherent as well as the incoherent scenario. From left to right, the \emph{resource} fueling the \emph{machine} applying an \emph{operation} on the target system.}
\end{figure}

In the coherent scenario the machine is embodied by a finite dimensional quantum system given in a thermal state at the fixed background temperature $T_R$; or, as we will later in this thesis prefer to state, at the fixed inverse background temperature

\begin{equation}
\beta_{R}= \frac{1}{T_R}.
\end{equation}
Note that we work in units of $k_B=\hbar=1$, where $k_B$ is the Boltzmann constant and $\hbar$ the reduced Planck constant. The machine is given as resource the ability to implement any unitary on the machine and target joint system. Finally, the operation this scenario allows implementing on the target system is given by the following completely positive and trace preserving (CPTP) map

\begin{equation}
\Lambda_{\text{coh}}(\rho_S)= \Tr_{M} \left(U \, \rho_S \otimes \rho_M \, U^{\dagger} \right),
\end{equation}
where $\rho_{S}$ denotes the (target) system state, $\rho_{M}$ the machine state and $U$ the joint unitary operation. We remind the reader that $\rho_{M}$ is a thermal state at inverse temperature $\beta_R$, i.e.,

\begin{equation}
\rho_{\text{M}}= \frac{e^{-\beta_R H_{M}}}{\Tr\left( e^{-\beta_R H_{M}} \right)}.
\end{equation}
See Figure~\ref{fig:cohdiagram} for an illustration of the scenario.
\begin{figure}
\centering
\def\svgwidth{0.7 \columnwidth}
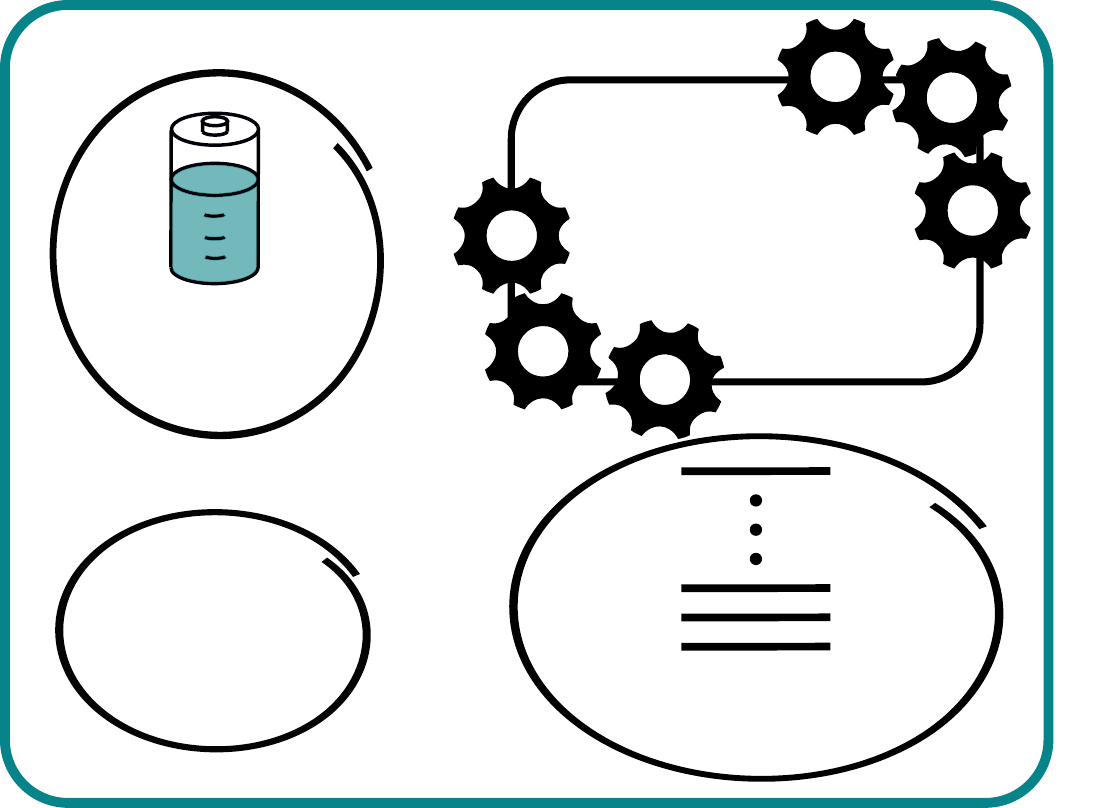
\caption{\label{fig:cohdiagram}A depiction of the coherent scenario. The target system, denoted by S, is cooled by a finite dimensional machine of Hilbert space dimension N by means of a unitary operation acting on the system and the machine jointly. This unitary invests some energy, depicted here as a battery, in the joint system. Upon repetition of the cooling procedure the machine is thermalized back to its original state thanks to a thermal bath at room temperature $T_R$.}
\end{figure}

We would like to stress that the resource of this scenario, i.e., the ability to implement the joint unitary operation, is quite abstract. Indeed, this resource can be concretely implemented in various ways. In the spirit of~\cite{aberg_2014}, one could, for example, imagine it to materialize as an external quantum system upon which a suitable operation, such as a joint energy conserving unitary, is implemented. Alternatively, taking an agent or experimentalist perspective, this resource translates as the ability to engineer any interaction Hamiltonian within the machine and target joint system. The scenario itself is, however, independent on how the resource is implemented in concrete terms.

One should also note that the unitary may be arbitrary and is in particular allowed to be energy non-conserving. It may therefore supply the joint target and machine system with energy. This energy change notably comes at no entropy change of the joint system. In that sense it is a maximally ordered source of energy. This motivates our use of the battery pictogram to symbolize the entropy-less source of energy in this scenario. Following the intuition of classical thermodynamics that work being an ordered source of energy is found to be useful, we intuitively expect this source of energy to perform efficiently.\\

We have up to now described one single application of the scenario. Repeated applications of it are however also of interest. To allow it, we rethermalize the machine to the background temperature $T_R$ and reapply the map $\Lambda_{\text{coh}}$ to the target system. After n repetitions of the scenario we therefore get the state

\begin{equation}
\Lambda_{\text{coh}}^n(\rho_S)= \underbrace{\Lambda_{\text{coh}}(\dots \Lambda_{\text{coh}}(\Lambda_{\text{coh}}(}_{\text{n times}}\rho_S))\dots).
\end{equation}
We will also consider the case $n\rightarrow \infty$ in this thesis.

\chapter{Incoherent Scenario } \label{chap:inc}

This section is devoted to defining the incoherent scenario. The name of this scenario stems from the fact that it does not allow creating coherence between the different energy levels of the machine and target joint system. It is therefore incoherent in the energy eigenbasis. This scheme is built upon the same three elements that the coherent scenario uses, namely a machine, a resource, and an operation, see Figure~\ref{fig:elements}.\\

The machine is given, as in the coherent scenario, by a finite dimensional quantum system in a thermal state at the fixed background temperature $T_R$. The resource is here given by a thermal bath at temperature $T_H > T_R$. This thermal bath allows thermalizing, depending on the structure of the machine, the entire machine or parts of it to the hotter temperature $T_H$. If the machine has no tensor product structure then either the whole machine is thermalized to $T_H$ or the hot thermal bath leaves the machine unaltered. If however the machine exhibits a tensor product structure, that is if $\rho_M=  \otimes_{i=1}^k \rho_{M_i}$ for some $k \in \mathbb{N}$, then the hot bath is able to access the different parts of the machine, i.e., the different $M_i$'s, individually and can either thermalize them to $T_H$ or leave them unchanged at $T_R$ as desired. Note that this tensor product structure effectively assumes that the Hamiltonian of the machine is comprised of k non-interacting parts, that is \begin{equation}
H_M=\sum_{i=1}^k H_{M_i} \otimes \mathds{1}_{M_i^c}.
\end{equation}

Upon (partial) rethermalization of the machine, an energy conserving unitary is applied to the machine and target joint system effecting in implementing the following CPTP map to the target system

\begin{figure}
\centering
\def\svgwidth{0.7 \columnwidth}
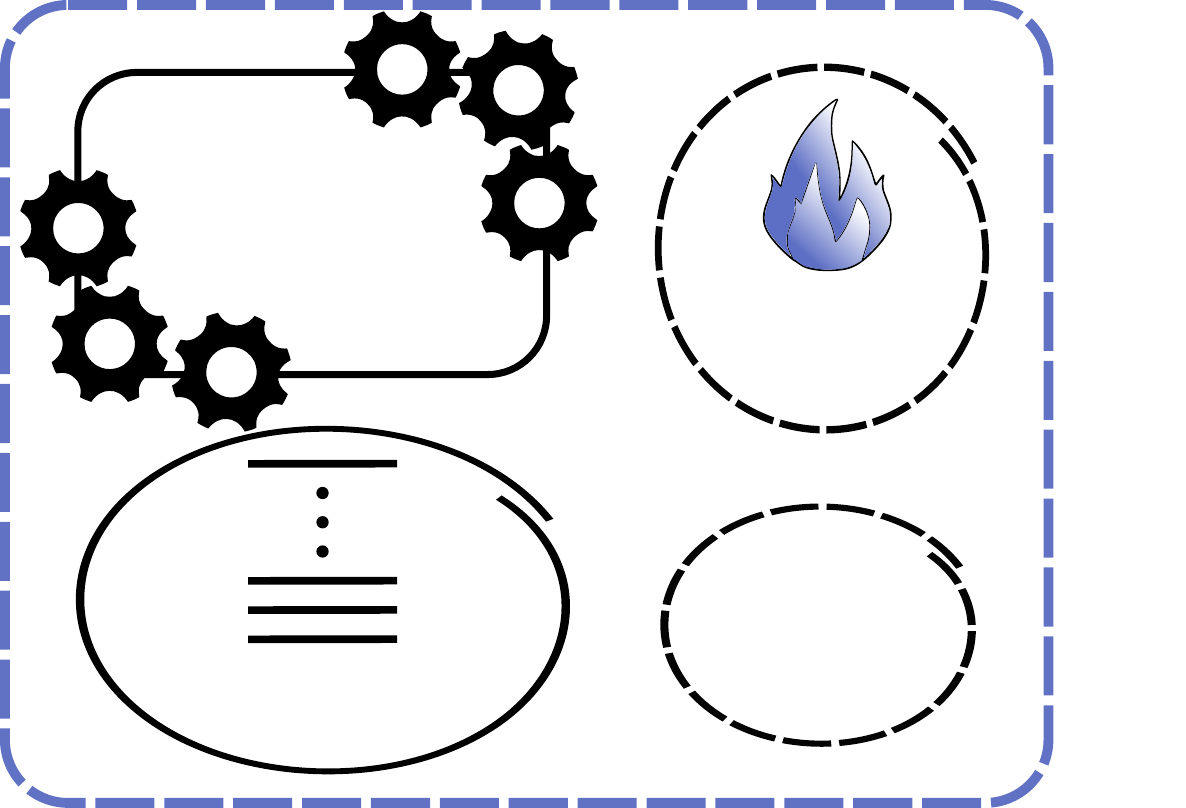
\caption{\label{fig:incodiagram}The incoherent scenario. A hot bath is allowed to get in contact with the finite dimensional machine to thermalize part of it to the hot temperature $T_H$, leaving the rest of the machine thermal at $T_R$. The target system, denoted by S, is then cooled upon applying an energy conserving unitary acting on the system and the machine jointly. The unitary being energy conserving ensures that the invested energy solely originates from the hot thermal bath. Upon repetition of the cooling procedure the room temperature part of the machine, $M_R$, is thermalized back to its original state thanks to a thermal bath at room temperature $T_R$.}
\end{figure}

\begin{equation}
\Lambda_{\text{inc}}(\rho_S)= \Tr_M(U_{\text{inc}} \, \rho_S \otimes \rho_M^{R,H} \, U_{\text{inc}}^{\dagger}),
\end{equation}
where $\rho_M^{R,H}$ denotes the state of the machine in parts thermal at $T_H$ and in parts thermal at $T_R$. Furthermore, $U_{\text{inc}}$ denotes the energy conserving unitary, meaning in this context that

\begin{equation}
[U_{\text{inc}}, H_{SM}]= 0,
\end{equation}
where $H_{SM}=H_S \otimes \mathds{1} + \mathds{1} \otimes H_M$ is the Hamiltonian of the machine and target joint system and $[A,B]$ denotes the commutator between $A$ and $B$, i.e.
\begin{equation}
[A,B]=AB-BA.
\end{equation}
 Figure~\ref{fig:incodiagram} provides a depiction of the incoherent scenario. Note that given a machine M and a choice of which parts of the machine are left unchanged and which are thermalized to $T_H$, one can always label the untouched parts as a whole as $M_R$ and the rethermalized parts as $M_R$. We then have

\begin{equation}
\rho_M^{R,H}= \tau_{M_R} \otimes \tau_{M_H}^H.
\end{equation}

We stress that $M_R$ and $M_H$ are not per se structurally given for a machine and that we might have some choice in selecting them for a given machine.\\

One may wonder why the unitary is restricted to be energy conserving in this scenario. This is simply demanded to ensure that the entire energy that is invested in the joint machine and target system solely comes from the hot thermal bath. In stark contrast to the resource of the coherent scenario, the hot thermal bath is a fully entropic source of energy. It corresponds in this sense to a maximally disordered source of energy and according to the classical thermodynamic intuition that heat being a disordered source of energy is rather useless, we intuitively expect this source of energy to perform less efficiently than an ordered one.

As for the coherent scenario, one may repeatedly apply this scenario by making explicit use of the room temperature bath to rethermalize the $M_R$ part of the machine to $T_R$. One then proceeds to thermalize the $M_H$ part of the machine to $T_H$, which enables to reapply the map $\Lambda_{\text{inc}}$ to the target system. After n repetitions of the scenario, we get the state 

\begin{equation}
\Lambda_{\text{inc}}^n(\rho_S)=\underbrace{\Lambda_{\text{inc}}(\dots \Lambda_{\text{inc}}(\Lambda_{\text{inc}}(}_{\text{n times}}\rho_S))\dots).
\end{equation}
We will also consider the case $n \rightarrow \infty$.
Note that in terms of state attainability, one can equivalently make use of the room temperature bath to rethermalize the entire machine to $T_R$ and then thermalize $M_H$ to $T_H$.  This is however generically less effective in terms of energy expenditure, as will be exemplified in Section~\ref{subsec:rep}.

\chapter{Work Cost and Temperature Quantifiers} \label{chap:worktemp}

\section{Work Cost}

We characterize the work cost in each scenario by the free energy change of the resource
\begin{equation}
\Delta F = \Delta \mathcal{U} - T_R \Delta S,
\end{equation}
with $\Delta \mathcal{U}$ the internal energy change of the resource and $\Delta S$ the entropy change of the resource. We will denote by $\Delta F_{\text{inc}}$ the change of free energy in the incoherent scenario and by $\Delta F_{\text{coh}}$ that in the coherent scenario. This choice of work quantifier is motivated as follows. For one, the free energy is a well-established monotone in thermodynamics, meaning that its value decreases the closer a state is to the thermal state at the background temperature $T_R$ \cite{Brandao-2015, Gallego-2016, Muller-2018}. Moreover, the free energy quantifies the maximum amount of work that can be extracted on average from a resource when access to a background temperature $T_R$ is granted \cite{Gallego-2016}. It therefore quantifies the useful energy that one draws from a given resource and as such makes perfect sense as a work cost quantifier. Last but not least, it is a notion that is well-defined in both of our scenarios and is therefore well suited for comparing their performance on an equal footing. We also point out that we are interested in the free energy change of the resource, not of the state of the joint system. This is because we are interested in knowing how much resource we consume to perform the desired state transformation.\\

In the incoherent scenario the resource is the hot bath and so $\Delta \mathcal{U}$ corresponds to the heat drawn from the hot bath. We denote this heat by $Q^H$. The superscript H stresses the fact that this quantity is dependent on the temperature of the hot bath $T_H$. By the first law, or more precisely by basic energy conservation, this equals the change in energy induced on $M_H$, the part of the machine thermalized to $T_H$, upon bringing it in contact with the hot bath. That is

\begin{equation}
Q^H= \Tr \left( (\tau_{M_H}^H-\tau_{M_H}) H_{M_H} \right).
\end{equation}

The change of entropy $\Delta S$ also takes a simple form in the incoherent scenario. Indeed, the entropy change of the hot bath is given by
\begin{equation}
\Delta S= \frac{Q^H} {T_H}.
\end{equation}

Here it should be noted that by assuming equality, as opposed to the generic inequality that the second law would dictate, we assume the heat bath to be infinite as well as the interaction between our system and the heat bath to be markovian and in the weak coupling regime.

We therefore have for the free energy change in the incoherent scenario
\begin{equation} \label{eq:generaldeltaFinc}
\Delta F_{\text{inc}}( T_H)= Q^H \left(1-\frac{T_R}{T_H} \right).
\end{equation}

In the coherent scenario the entropy is unchanged and $\Delta F_{\text{coh}}$ therefore corresponds to the change of energy induced on SM upon applying the unitary operation $U$ to the joint system, i.e.,

\begin{equation}
\Delta F_{\text{coh}}= \Tr \left( (U \rho_{SM} U^{\dagger} - \rho_{SM}) H_{SM} \right).
\end{equation}

Note that as desired this change of energy is, again by invoking the first law, equal to the change of energy that the unitary operation induces in the resource of the coherent scenario, whatever its concrete implementation might be.
\section{Temperature} \label{sec:temperature}

There are many notions of temperature that make sense to consider. Given a system S in some state $\sigma_S$, one can for example consider its

\begin{itemize}
\item ground state population: $\bra{0} \sigma_S \ket{0}_S$,
\item average energy: $\Tr(\sigma_S H_S)$,
\item von Neumann entropy: $- \Tr(\sigma_S \log \sigma_S)$,
\item purity: $1-\Tr (\sigma_S^2)$.
\end{itemize}

That these notions may be used as a way to determine the temperature of the state relies on the fact that they all are strictly monotonic as a function of $T$ when evaluated on the thermal states, i.e., when
\begin{equation}
\sigma_S(T)=\frac{e^{-\frac{1}{T} H_S}}{\Tr(e^{-\frac{1}{T} H_S})}.
\end{equation}

This means that given a thermal state, one can uniquely extract its temperature $T$ by knowing its ground state population, average energy, entropy, or purity. This uniquely defines functions

\begin{align}
f_{gs}:& [0,1] \rightarrow \mathbb{R} \cup \{-\infty, +\infty\}\\
f_{ae}:& [0,E_{d_S-1}] \rightarrow \mathbb{R} \cup \{-\infty, +\infty\}\\
f_{S}:& [0,\log(d_S)] \rightarrow \mathbb{R} \cup \{-\infty, +\infty\}\\
f_{p}:& [\frac{1}{d_S},1] \rightarrow \mathbb{R} \cup \{-\infty, +\infty\},
\end{align}
such that for all $T \in \mathbb{R} \cup \{-\infty, +\infty\}$,

\begin{align}
f_{gs}(\bra{0} \tau_S(T) \ket{0}_S)&= f_{ae}(\Tr(\tau_S(T) H_S))\\
&=f_S (-\Tr(\tau_S(T) \log(\tau_S(T)))\\ 
&= f_p (\Tr(\tau_S(T)^2))=T.
\end{align}

While all these notions coincide for thermal states, they are incomparable in general. That is, for a given $\sigma_S$, generically

\begin{align}
f_{gs}(\bra{0} \sigma_S \ket{0}_S)&\neq f_{ae}(\Tr(\sigma_S H_S))\\
&\neq f_S (-\Tr(\sigma_S \log(\tau_S(T)))\\ 
&\neq f_p (\Tr(\sigma_S^2)).
\end{align}

One is therefore forced to make a choice that is to some extent a matter of taste. Here we will choose to work with the ground state population. This choice implies that we implicitly have a preferred basis, the energy eigenbasis, and that we will mainly be interested in the change of the diagonal elements of the target system state in that basis. We are, in fact, even a bit more ambitious than that. We are indeed interested in maximizing the ground state population of the target system. However, once this is done, we also aim at maximizing the sum of the ground state population and the population of the first energy excited state of the target.  With that done, we then focus our attention on maximizing the sum of the three lowest excited energy eigenstates. We subsequently move our way up to the most excited state. In short, we are interested in maximizing

\begin{equation}\label{equ:ambitiouscool}
\sum_{k=0}^l \bra{k} \sigma_S \ket{k}_S, \quad \forall l=0,\dots, d_S-1.
\end{equation}

For qubit target systems, this more ambitious view on cooling will not be any different from the traditional ground state population one. Indeed, since the trace of the state is fixed, there is only one partial sum to maximize, namely that consisting of the ground state population only. In fact, all the mentioned notions of cooling actually coincide for (diagonal) qubit target systems. Our results for qubit target systems will therefore not only also apply to the ground state notion of cooling but also to the entropy, average energy and purity notions. We will come back to this in Chapter~\ref{chap:qubitsyst}. For general qudit systems, we will also find that most of our results will carry on to the other notions of cooling discussed above. There are, however, a couple of subtleties that one should be careful about in that regime. We will come back to them in Section~\ref{sec:sumtemperature}.

\chapter{Other Existing Scenarios} \label{chap:other}

Refrigeration being of high interest to the physics community, there naturally exists an array of proposed paradigms for cooling. We would like here to study how both of our scenarios relate to some of them. This will in particular help us to make conceptual links between the different paradigms as well as contextualize and compare the results of the subsequent sections with other existing scenarios.

We will start by studying the limiting case of the coherent scenario. This will result in Lemma~\ref{lemma:unitary} and Proposition~\ref{prop:cohlimit}. We will then see what the limiting case of the incoherent scenario relates to. Finally we will investigate the parallels with other paradigms for each scenario (in their non-limiting case). \\

We would like to emphasize the finite dimensionality of our machine in terms of Hilbert space dimension. Indeed, granting access to an arbitrary infinite dimensional machine allows implementing operations that are impossible with finite dimensional ones. This fact is stringent in the coherent scenario. Having access to infinite dimensional machines indeed allows implementing any CPTP map on the target system. In particular, it enables to implement the ground state cooling map

\begin{equation}
\Lambda_{gc}(\rho_S)= \Tr(\rho_S) \ket{0} \bra{0}_S.
\end{equation}

This map is positive since $\Tr(\rho_S) \geq 0$ for positive $\rho_S$. It is furthermore completely positive. Indeed, for any ancillary system A and some orthonormal basis on A that we will denote by $(\ket{i}_A)$, we have with $\rho_{SA}=\sum_{ijkl} a_{ijkl} \ket{ij} \bra{kl}_{SA}$, that

\begin{align}
\Lambda_{gc} \otimes \mathds{1}_A (\rho_{SA})&= \Lambda_{gc} \otimes \mathds{1}_A (\sum_{ijkl} a_{ijkl} \ket{ij} \bra{kl}_{SA})\\
&= \sum_{ijkl} a_{ijkl} \delta_{ik} \ket{0j} \bra{0l}_{SA}\\
&= \ket{0} \bra{0}_S \otimes (\sum_{ijl} a_{ijil} \ket{j} \bra{l}_A)\\
&= \ket{0} \bra{0}_S \otimes \Tr_S(\rho_{SA}) \geq 0.
\end{align}

As we will see in Section~\ref{sec:unibound}, due to the finite dimensionality constaint of the machine, this map cannot be implemented within the coherent scenario.\\

The proof that any CPTP map can be implemented when one allows infinite dimensional machines in the coherent scenario relies on the Stinespring's dilation theorem. To be able to apply the theorem we essentially need to have access to a big enough pure subspace of the machine. This is where the infinite dimensionality of the machine is crucial. The thermal state of any finite dimensional machine with finite energy gaps is of full rank and therefore does not have any pure subspace. As we will show now, this limitation can be circumvented with infinite dimensional machines. Of course, not all infinite dimensional machines allow cooling to the ground state. Nevertheless, choosing the machine wisely, it is possible to cool the system to the ground state with machines for which each constituent has a maximal energy gap no greater than a given fixed value. For the formal proof we first need the result of a Lemma.

Let \(\sigma_S = \sum_{i,j=0}^{d-1} []\sigma_S]_{i,j} \ket{i}\bra{j}_S\) be an arbitrary state of our system and let the machine be given in its usual thermal state \(\rho_M= \sum_{k=0}^{n-1} (\rho_M)_k \ket{k}\bra{k}_M\), i.e., $(\rho_M)_k= \frac{e^{-\beta_R \mathcal{E}_k}}{\Tr (e^{-\beta_R H_M})}$. Let \(K=\text{span}(\ket{i_0}_M, \dots, \ket{i_{c-1}}_M)\), with \(0 \leq i_0, \dots, i_{c-1} \leq n-1\) pairwise different, be a $c \leq n$ dimensional subspace of our machine. The (unnormalized) state of our machine on \(K\) is then given by \(\rho_K = \sum_{k=0}^{c-1} (\rho_M)_{i_k} \ket{i_k} \bra{i_k}_M\) and has trace \(N= \sum_{k=0}^{c-1} (\rho_M)_{i_k}\). Then we have that,

\begin{lemma}\label{lemma:unitary}
By acting unitarily on \(\sigma_S \otimes \rho_M \in \mathcal{S}(\mathcal{H}_S \otimes \mathcal{H}_M)\), one can induce with accuracy N any state on the system that one can induce by acting unitarily on \(\sigma_S \otimes \frac{\rho_K}{N} \in \mathcal{S}(\mathcal{H}_S \otimes K)\). That is
\(\forall \, U : \mathcal{H}_S \otimes K \rightarrow \mathcal{H}_S \otimes K  \; \exists \, \tilde{U}: \mathcal{H}_S \otimes \mathcal{H}_M \rightarrow \mathcal{H}_S \otimes \mathcal{H}_M \text{ such that }\)
\begin{equation} \label{eq:approxapply}
\Tr_M ( \tilde{U} \sigma_S \otimes \rho_M \tilde{U}^{\dagger})= N \left[ \Tr_K \left( U (\sigma_S \otimes \frac{\rho_K}{N}) U^{\dagger}\right) \right] + (1-N) \sigma_S.
\end{equation}

\end{lemma}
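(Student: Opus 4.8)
The plan is to build $\tilde{U}$ explicitly as the orthogonal direct sum of $U$ with the identity, and then exploit the fact that the thermal state $\rho_M$ is block-diagonal across the decomposition induced by $K$. First I would record the elementary but crucial observation that, because $K$ is spanned by energy eigenstates of the machine and $\rho_M$ is diagonal in the energy eigenbasis, $\rho_M$ splits as $\rho_M = \rho_K + \rho_{K^\perp}$, where $\rho_{K^\perp} := \sum_{k \notin \{i_0,\dots,i_{c-1}\}} (\rho_M)_k \ket{k}\bra{k}_M$ is supported on the complementary subspace $K^\perp = \text{span}(\ket{i}_M : i \notin \{i_0,\dots,i_{c-1}\})$ and satisfies $\Tr(\rho_{K^\perp}) = 1 - N$. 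Correspondingly $\mathcal{H}_S \otimes \mathcal{H}_M = (\mathcal{H}_S \otimes K) \oplus (\mathcal{H}_S \otimes K^\perp)$, and I define
\begin{equation}
\tilde{U} := U \oplus \mathds{1}_{\mathcal{H}_S \otimes K^\perp},
\end{equation}
which is a well-defined unitary on $\mathcal{H}_S \otimes \mathcal{H}_M$ since $U$ is unitary on the first summand and the identity is unitary on the second, and $\tilde{U}$ preserves this decomposition.

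Next I would simply compute. By linearity of conjugation, $\tilde{U}(\sigma_S \otimes \rho_M)\tilde{U}^{\dagger} = \tilde{U}(\sigma_S \otimes \rho_K)\tilde{U}^{\dagger} + \tilde{U}(\sigma_S \otimes \rho_{K^\perp})\tilde{U}^{\dagger}$. The first operator $\sigma_S \otimes \rho_K$ is supported on $\mathcal{H}_S \otimes K$, where $\tilde{U}$ acts as $U$, so it equals $U(\sigma_S \otimes \rho_K)U^{\dagger} = N\, U\!\left(\sigma_S \otimes \frac{\rho_K}{N}\right)\!U^{\dagger}$. The second operator $\sigma_S \otimes \rho_{K^\perp}$ is supported on $\mathcal{H}_S \otimes K^\perp$, where $\tilde{U}$ is the identity, so it is unchanged. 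Taking the partial trace over the machine, using that tracing over $\mathcal{H}_M$ an operator supported on $\mathcal{H}_S \otimes K$ coincides with tracing over $K$, and that $\Tr_M(\sigma_S \otimes \rho_{K^\perp}) = \sigma_S\,\Tr(\rho_{K^\perp}) = (1-N)\sigma_S$, gives exactly \eqref{eq:approxapply}.

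I do not expect a genuine obstacle here; the proof is essentially bookkeeping. The two points that deserve a moment of care are (i) the block-diagonality of $\rho_M$ with respect to $K \oplus K^\perp$, which must be invoked to extend $U$ to a global unitary without disturbing the complementary sector, and (ii) keeping track of which Hilbert space each partial trace runs over so that the weights $N$ and $1-N$ come out correctly. Finally I would remark that the phrase ``with accuracy $N$'' in the statement refers precisely to the structure of \eqref{eq:approxapply}: the desired target state $\Tr_K\!\left(U(\sigma_S \otimes \frac{\rho_K}{N})U^{\dagger}\right)$ appears with weight $N$ while the residual weight $1-N$ simply leaves $\sigma_S$ untouched, so that in the limit $N \to 1$ the extended unitary reproduces the ideal operation exactly.
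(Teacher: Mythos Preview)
Your proof is correct and follows essentially the same approach as the paper: both define $\tilde{U}$ as the direct sum $U \oplus \mathds{1}$ with respect to the decomposition $\mathcal{H}_S \otimes \mathcal{H}_M = (\mathcal{H}_S \otimes K) \oplus (\mathcal{H}_S \otimes K^\perp)$, use the block-diagonality of $\rho_M$ to split $\rho_M = \rho_K + \rho_{K^\perp}$, and then trace out the machine to obtain the two terms with weights $N$ and $1-N$. Your notation $\rho_{K^\perp}$ is just the paper's $\rho_M - \rho_K$, and the computations are otherwise identical.
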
 

\begin{proof}
Let \(U\) be a unitary on \(\mathcal{H}_S \otimes K\). We define the unitary \(\tilde{U}\) acting on \(\mathcal{H}_S \otimes \mathcal{H}_M\) as the trivial extension of \(U\) on \(\mathcal{H}_S \otimes \mathcal{H}_M\), i.e.,
\begin{equation}
\tilde{U} = U \oplus \left. \mathds{1}\right|_{\mathcal{H}_S \otimes (\mathcal{H}_M \ominus K)}.
\end{equation}

One then simply calculates

\begin{equation}
\begin{aligned}
\Tr_M ( \tilde{U} \sigma_S \otimes \rho_M \tilde{U}^{\dagger}) &= \Tr_M \left[ \tilde{U} \sigma_S \otimes \left( \rho_K \oplus (\rho_M -\rho_K) \right) \tilde{U}^{\dagger} \right] \\
&= \Tr_M \left[ \tilde{U} \left( \sigma_S \otimes \rho_K \right) \oplus \left( \sigma_S \otimes (\rho_M -\rho_K) \right) \tilde{U}^{\dagger} \right] \\
&= \Tr_M \left[ \left( U \sigma_S \otimes \rho_K U^{\dagger} \right) \oplus \left( \sigma_S \otimes (\rho_M -\rho_K) \right) \right] \\
&= N \left[\Tr_K \left( U \sigma_S \otimes \frac{\rho_K}{N} U^{\dagger} \right) \right] + \underbrace{\Tr \left(\rho_M -\rho_K \right)}_{=1-N} \sigma_S.
\end{aligned}
\end{equation}

\end{proof}

With this result we can now prove our assertion

\begin{proposition} \label{prop:cohlimit}
Given the ability to pick an arbitrary machine of possibly (countably) infinite Hilbert space dimension (but of local constituents restricted to having bounded energy gaps), one is able to implement any CPTP map on the system within a single application of the coherent scenario.
\end{proposition}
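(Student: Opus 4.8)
The plan is to reduce the statement to a finite-dimensional Stinespring-type argument via Lemma~\ref{lemma:unitary}. Recall that any CPTP map $\mathcal{N}$ on a $d$-dimensional system $S$ admits a Stinespring dilation: there is an ancilla $\mathcal{H}_E$ of dimension at most $d^2$, a fixed pure state $\ket{0}_E$, and a unitary $V$ on $\mathcal{H}_S \otimes \mathcal{H}_E$ such that $\mathcal{N}(\sigma_S) = \Tr_E(V\, \sigma_S \otimes \ketbra{0}{0}_E\, V^{\dagger})$ for all $\sigma_S$. So if we had a machine that were simply a pure state, we would be done. The obstacle, as the text already flags, is that a finite-dimensional thermal state with finite gaps is full-rank, hence has no pure subspace; and we are further constrained so that every local constituent of the (possibly infinite) machine has energy gaps bounded by some fixed constant.

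First I would fix the machine: take $M$ to consist of $m$ non-interacting qubits, each with the \emph{same} energy gap $\mathcal{E}$, so $H_M = \sum_{j=1}^m \mathcal{E}\, \ketbra{1}{1}_{M_j} \otimes \mathds{1}$, and by letting $m \to \infty$ we get a countably-infinite-dimensional machine whose local constituents all have the bounded gap $\mathcal{E}$. Its thermal state at $\beta_R$ is $\rho_M = \tau^{\otimes m}$ with $\tau = \mathrm{diag}(p_0, p_1)$, $p_0 = 1/(1+e^{-\beta_R \mathcal{E}})$. Now consider the one-dimensional subspace $K = \mathrm{span}(\ket{0\cdots 0}_M)$, i.e. the all-ground-state vector, which is a \emph{pure} subspace of the machine. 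Its weight is $N = p_0^{\,m}$. More generally, picking $K$ to be the span of the first $d^2$ elements of the "all but finitely many qubits in the ground state" sector gives a $c = d^2$-dimensional subspace carrying the Gibbs weights $(\rho_M)_{i_k}$; but the cleanest route is: enlarge the ancilla of the dilation to a $d^2$-dimensional \emph{pure} ancilla sitting inside the span of low-lying levels, which we can arrange because within a degenerate level, or by grouping qubits, we can find a $d^2$-dimensional subspace $K$ on which we prescribe the state $\ketbra{0}{0}_E$ up to normalization. The key point is that Lemma~\ref{lemma:unitary} lets us run \emph{any} unitary $U$ on $\mathcal{H}_S \otimes K$ and realize, on the full machine, the channel
\begin{equation}
\sigma_S \mapsto N \big[\Tr_K(U\, \sigma_S \otimes \tfrac{\rho_K}{N}\, U^{\dagger})\big] + (1-N)\sigma_S .
\end{equation}

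The remaining work is to handle the $(1-N)\sigma_S$ "leakage" term and the fact that $\rho_K/N$ is not the pure state $\ketbra{0}{0}_E$ but a full-rank Gibbs state on $K$. For the first issue: by taking $m$ large we make $N = p_0^{\,m} \to \cdots$ — wait, $p_0 < 1$ so $p_0^m \to 0$, which is the \emph{wrong} direction. So instead one should \emph{not} use the all-qubits sector; the correct choice is to take $K$ spanned by computational basis vectors of the machine that all lie in the \emph{same} energy eigenspace (e.g. all qubits with exactly $k$ excitations, for suitable $k$), on which $\rho_K/N$ is \emph{maximally mixed}; this removes the Gibbs-weighting obstruction. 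Then, to purify: embed the $d^2$-dimensional ancilla of the Stinespring dilation into a larger degenerate subspace $K$ of dimension $c$, and on that subspace $\rho_K/N = \mathds{1}_c/c$ is maximally mixed — a maximally mixed state on a large space, restricted appropriately, approximates a pure ancilla only in a convex-combination sense, so one instead uses the standard trick that a unitary on $\mathcal{H}_S\otimes K$ can first \emph{coherently prepare} a pure state from part of the maximally mixed register (this is where the degeneracy/multiplicity is spent) and then apply $V$. Concretely, if $c = d^2 \cdot \ell$ for large $\ell$, write $K \cong \mathbb{C}^{d^2}\otimes \mathbb{C}^\ell$ with $\rho_K/N = \mathds{1}/d^2 \otimes \mathds{1}/\ell$; tracing out the $\mathbb{C}^\ell$ factor after a suitable $U$ that does not touch it leaves $\mathds{1}/d^2$ on the dilation ancilla, which is \emph{not} pure — so the honest statement is that exact implementation of an arbitrary $\mathcal{N}$ this way still fails, and one must instead invoke infinite dimension directly: let $K$ be an infinite-dimensional subspace spanned by basis vectors of a growing number of qubits arranged so that the induced state on $K$, suitably reorganized, \emph{does} contain a genuine pure sub-block (e.g. via an isometry of $\ell^2$ into $\ell^2\otimes\mathbb{C}^{d^2}$), and then $N\to 1$. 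I expect this purification step — extracting a bona fide pure $d^2$-dimensional ancilla, with unit weight in the limit, out of a tensor power of full-rank thermal qubits with a common bounded gap — to be the main obstacle, and the author's proof presumably resolves it by a careful choice of which infinite family of machine levels to take as $K$ (so that $N = \sum_k (\rho_M)_{i_k} \to 1$) together with Lemma~\ref{lemma:unitary} and the Stinespring dilation; the final statement then follows by letting the approximation error $1-N$ go to zero and noting $\Lambda_{\mathrm{coh}}$ is exactly the map produced.
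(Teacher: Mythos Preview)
Your proposal correctly identifies the skeleton---Stinespring dilation plus Lemma~\ref{lemma:unitary}---but does not close the argument. You try two choices of the subspace $K$ (the all-ground singleton, then a degenerate energy sector) and correctly diagnose why each fails: the first sends $N\to 0$ without producing a useful ancilla, the second yields a maximally mixed rather than pure state on $K$. You then defer the missing step to the paper, guessing that the author arranges $N\to 1$ by a clever choice of levels. That guess is off, and the gap is genuine: you have not exhibited any subspace on which $\rho_K/N$ is (asymptotically) pure.

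The paper's construction supplies precisely the two pieces you are missing, and it does \emph{not} aim for $N\to 1$. With $n$ qubits of common gap $E$, the author takes $K$ spanned by $\ket{i_0}=\ket{0\cdots 0}$ together with $d^2-1$ vectors each having \emph{all but one} qubit excited. The Gibbs weights on these are $p_0^{\,n}$ versus $p_0 p_1^{\,n-1}$, so the \emph{normalized} state $\rho_K/N$ collapses onto the pure state $\ket{i_0}\bra{i_0}$ as $n\to\infty$ (the ratio $(p_0/p_1)^{n-1}\to\infty$); this is the asymmetry you never tried---mixing one low-energy vector with many high-energy ones. Lemma~\ref{lemma:unitary} then yields the Stinespring channel, but only with weight $N$, and indeed $N\to 0$ here. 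The second missing ingredient is how to recover from small $N$: the paper takes $m$ independent copies of this $n$-qubit block and applies the corresponding $\tilde{U}$ to each in turn, arguing that the residual term shrinks as $(1-N)^m\to 0$. Since a countable union of countable qubit families is still countable and the composition of the $\tilde{U}_i$ is a single joint unitary, the whole procedure remains one application of the coherent scenario with bounded local gaps.
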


\begin{proof}
To prove the assertion we only need to find a specific machine that allows implementing any CPTP map. In the following we construct such a machine. Using Stinespring's dilation theorem \cite[Chapter~8.2.3]{Nielsen-2010}, Lemma~\ref{lemma:unitary} implies that as soon as one has a pure subspace of dimension $d^2$ in the machine, one can implement (approximately) any CPTP map on the system. For example, consider $n \geq d^2$ qubits each of energy gap \(E < \infty\) and let \(\ket{i_0}_M = \ket{0 \dots 0}_{M_1 \cdots M_n}, \ket{i_1}_M = \ket{0 1 \dots 1}_{M_1 \cdots M_n} , \ket{i_2}_M= \ket{101\dots 1}_{M_1 \cdots M_n} , \dots , \ket{i_{d^2-1}}_M = \ket{1\dots 101\dots1}_{M_1 \cdots M_n}\) (for \(j=1,\dots,d^2-1\), \(\ket{i_j}_M \) has the \((j-1)^{\text{th}}\) qubit in the ground state and all other qubits in the excited state). In this case, the state of our subspace \(K=\text{span}(\ket{i_0}_M, \dots, \ket{i_{c-1}}_M)\) of Lemma~\ref{lemma:unitary}, if each qubit is at room temperature, is pure in the limit of n going to infinity. By the above, one can therefore implement any CPTP map with accuracy \(N=\sum_{k=0}^{d^2-1}(\rho_M)_{i_k}\) in this limit. Taking m copies of this machine and performing with each copy the same unitary, one gets, by repeatedly using Eq.~\ref{eq:approxapply},

\begin{align}
\sigma_S^{(m)} - \Tr_K \left(U \sigma_S \otimes \frac{\rho_K}{N} U^{\dagger}\right) &= (1-N) \left[ \sigma_S^{(m-1)}- \Tr_K \left(U \sigma_S \otimes \frac{\rho_K}{N} U^{\dagger}\right)\right]\\
&= (1-N)^2 \left[ \sigma_S^{(m-2)}- \Tr_K \left(U \sigma_S \otimes \frac{\rho_K}{N} U^{\dagger}\right)\right]\\
&= \dots \\
&=(1-N)^m \left[ \sigma_S- \Tr_K \left(U \sigma_S \otimes \frac{\rho_K}{N} U^{\dagger}\right)\right],
\end{align}
where for $i=1,\dots,m$, $\sigma_S^{(i)}= \Tr_M( \tilde{U}_i \dots \tilde{U}_1 \sigma_S \otimes \rho_M \tilde{U}_1^{\dagger} \dots \tilde{U}_i^{\dagger})$, with $\tilde{U}_i$ being the unitary between the $i^{\text{th}}$ copy of the machine and the system. With the number of copies m going to infinity one gets
\begin{equation}
\sigma_S^{(\infty)}=\Tr_K \left(U \sigma_S \otimes \frac{\rho_K}{N} U^{\dagger}\right).
\end{equation}
Note that since countable unions of countable sets are countable, one only needs countably many qubits of energy gap \(E < \infty\) to implement any CPTP map perfectly, i.e., one copy of the original machine viewed differently. Also since concatenations of unitaries are unitary, performing all these unitaries on the separate copies amounts to performing one big unitary on the joint machine, i.e., the one copy machine viewed differently. And so, one can, with a room temperature machine of n qubits each of energy gap \(E < \infty\) with n going to infinity, implement any CPTP map on our system in a single application of the coherent scenario.
\end{proof}

Now that we have talked about the limiting case of the coherent scenario, we can investigate the limiting case of the incoherent scenario. Of special interest here is the case where $M_H$, the part of the machine connected to the hot bath, is left unchanged and $M_R$, the part of the machine connected to the room temperature bath, is allowed to be arbitrary. We remind ourselves that the state of the machine after having been in contact with the hot bath is given by $\rho_M^{R,H}= \tau_{M_H}^H \otimes \tau_{M_R}^R$, where $\tau_{M_H}^H$ denotes the thermal state of $M_H$ at $T_H$ and $\tau_{M_R}^R$ denotes the thermal state of $M_R$ at $T_R$. This limiting case of the incoherent scenario amounts to looking at the state transformations of the system $\tilde{S}=SM_H$ with initial state $\rho_{\tilde{S}}= \tau_S \otimes \tau_{M_H}^H$ within the framework of thermal operations \cite{Lostaglio-2019,Gour-2015,Brandao-2013}. One is then interested in the state transformations such that $\Tr_{M_H}[\text{TO}(\rho_{\tilde{S}})]$ is colder than $\tau_S$. Table~\ref{tab:infinitemachines} gives an overview of the limiting cases of the coherent and incoherent scenarios. 

\begin{table} [h!]
\begin{center} 
\begin{tabular}{ |c|c| } 
 \hline
 \textbf{Finite Machine} & \textbf{Infinite Machine}  \\
 \hline 
 Coherent $\Lambda_{\text{coh}}$ & CPTP map  \\ 
 \hline
  Incoherent $\Lambda_{\text{inc}}$ & Thermal Operations  \\ 
 \hline
\end{tabular}
\caption{The fact that the machine in both our scenarios is finite is emphasized. Infinite machine sizes allow implementing either any CPTP on our system in the coherent scenario or corresponds to a specific state transition within the framework of thermal operations in the incoherent scenario.}
\label{tab:infinitemachines}
\end{center}
\end{table}

Not only the limiting cases of the corresponding scenarios relate to studied paradigms. The coherent scenario is very closely related to heat bath algorithmic cooling (HBAC) \cite{Schulman-1999, Boykin-2002, Rodriguez-Briones-2017, Raeisi-2015, Rodriguez-Briones-2017, Rodriguez-Briones-2016}. In this scheme, one is usually interested in cooling a qubit system aided by N qubits. The N qubits that allow cooling are divided in two categories: reset qubits and compression qubits. Reset qubits are sometimes also denoted scratch qubits. All the N qubits as well as the system qubit start at room temperature $T_R$. A global unitary is then applied to cool the system S. The reset qubits are then rethermalized to $T_R$ while the compression qubits are left untouched. The procedure of global unitary cooling and rethermalization of the reset qubits is then allowed to be repeated as many times as wished. The limit of infinite repetitions is typically of interest. Our coherent scenario has no compression qubit and is in that sense a special case of HBAC. The system of interest S as well as the machine M, embodying the reset system, is, however, not limited to being qubit systems. The coherent scenario is therefore a generalization of HBAC without compression qubits. It is also worth mentioning that there exists generalizations of HBAC. The compression qubits can for example be promoted to arbitrary finite dimensional quantum systems \cite{Park-2016}. One can also generalize the thermalization step \cite{Alhambra-2019}. In particular, the authors in \cite{Alhambra-2019} take a similar stance than us in generalizing HBAC in that they develop a generalization of HBAC in the special case of HBAC with no compression qubit. Finally, by extending the coherent scenario, one can fully generalize HBAC including compression systems, see~\cite{Taranto-2020}.\\

The coherent scenario furthermore includes every implementation of a quantum Otto engine. This is due to the fact that a given quantum Otto engine implements a specific Hamiltonian, which results in a specific unitary, in the strokes, where the system is disconnected to the thermal baths. Furthermore, as we will see in detail in Section~\ref{sec:unibound}, we can absorb the contact with the hot thermal bath into the unitary action step --- see Lemma~\ref{lemma:singlecohpower} and Lemma~\ref{lemma:kcohpower}.\\

Finally, the incoherent scenario is linked to autonomous cooling in the sense that in the limit of infinite repetition of incoherent cooling, we recover the steady state achieved in autonomous cooling, when the corresponding interacting Hamiltonian is turned on in autonomous cooling.

\chapter{General Remarks} \label{chap:remarks}

\section{Remarks on the Coherent Scenario} \label{sec:remarkscoh}

The fact that we have a preferred basis in mind, namely the energy eigenbasis, has some major consequences for the coherent scenario. Indeed, it implies that we can harness the tools laid by the theory of majorization to help us derive most of our results. We will here not adventure ourselves in the delicate task of giving a survey of majorization theory and will assume that the reader is familiar with it. We nevertheless state the definition of majorization for completeness before highlighting how the theory relates to our problem. For the rest we refer to the excellent book of Marshal and Olkin \cite{Marshall-2011}.\\

We remind the reader that given an integer $d \in \mathds{N}$ and a vector $v=\{v_0,\dots,v_{d-1}\}$ we denote by 
\begin{equation}
v_0^{\downarrow} \geq v_1^{\downarrow} \geq \dots \geq v_{d-1}^{\downarrow}
\end{equation}
 the components of v arranged in decreasing order and 
 \begin{equation}
 v^{\downarrow}=(v_0^{\downarrow},\dots,v_{d-1}^{\downarrow}).
 \end{equation}
 
 Analogously
 
 \begin{equation}
v_0^{\uparrow} \leq v_1^{\uparrow} \leq \dots \leq v_{d-1}^{\uparrow}
\end{equation}
 denote the components of v arranged in increasing order and 
 \begin{equation}
 v^{\uparrow}=(v_0^{\uparrow},\dots,v_{d-1}^{\uparrow}).
 \end{equation}
 
 Then we have
 
 \begin{definition}[Majorization]
 Let $d \in \mathds{N}$ and $v,q \in \mathds{R}^d$. We say that $v$ {\bf majorizes} $w$, and write $v \succ w$, if
 
 \begin{align}
 \sum_{i=0}^k v_i^{\downarrow} &\geq \sum_{i=0}^k w_i^{\downarrow}, \quad \forall i=0,\dots, d-2\\
 \sum_{i=0}^{d-1} v_i^{\downarrow} &= \sum_{i=0}^{d-1} w_i^{\downarrow}.
 \end{align}
 We say that $v$ is {\bf  majorized } by $w$, and write $v \prec w$, if $w$ majorizes $v$.
 \end{definition}
 Two results of majorization theory combined set the link between our problem and this theory. The first result was obtained by Schur in 1923 \cite[Chapter~9.B]{Marshall-2011} and reads

\begin{theorem}[Schur]\label{thm:Schur}
Let A be a $d \times d$ hermitian matrix. Let $\mathfrak{D}(A)$ be the vector of diagonal entries of A and $\lambda(A)$ the vector of eigenvalues of A. Then
\begin{equation}
\mathfrak{D}(A) \prec \lambda(A).
\end{equation}
\end{theorem}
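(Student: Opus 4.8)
Schur's theorem: for a $d\times d$ Hermitian matrix $A$, the vector $\mathfrak{D}(A)$ of diagonal entries is majorized by the vector $\lambda(A)$ of eigenvalues.

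The plan is to use the spectral theorem to write $A = U \Lambda U^{\dagger}$, where $U$ is unitary and $\Lambda = \operatorname{diag}(\lambda_0, \dots, \lambda_{d-1})$ collects the eigenvalues. Reading off a diagonal entry of $A$ in terms of the entries $u_{ij}$ of $U$ gives
\begin{equation}
[A]_{ii} = \sum_{j=0}^{d-1} |u_{ij}|^2 \, \lambda_j.
\end{equation}
Introducing the matrix $S = (S_{ij})$ with $S_{ij} = |u_{ij}|^2$, the key observation is that $S$ is \emph{doubly stochastic}: its entries are nonnegative, and since $U$ is unitary, each row and each column of $S$ sums to $1$ (the row sum is $\sum_j |u_{ij}|^2 = [UU^{\dagger}]_{ii} = 1$, and similarly for columns using $U^{\dagger}U = \mathds{1}$). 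Thus $\mathfrak{D}(A) = S\,\lambda(A)$ with $S$ doubly stochastic.

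The second and final step is to invoke the standard characterization (Hardy--Littlewood--Pólya / Birkhoff) that $v \prec w$ if and only if $v = Mw$ for some doubly stochastic matrix $M$ — this is exactly the kind of result from Marshall and Olkin's book that the paper says it will freely assume. Applying this with $v = \mathfrak{D}(A)$, $w = \lambda(A)$, $M = S$ immediately yields $\mathfrak{D}(A) \prec \lambda(A)$, which is the claim. One should check the trace condition separately if a self-contained argument is wanted: $\sum_i [A]_{ii} = \operatorname{Tr}(A) = \sum_j \lambda_j$, which is consistent with (and implied by) the doubly stochastic relation.

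The only mild subtlety — not really an obstacle — is making sure the direction of majorization and the definitions match the paper's conventions: the paper writes $v \prec w$ when $w$ majorizes $v$, so I must present the conclusion as $\mathfrak{D}(A) \prec \lambda(A)$, i.e. the eigenvalue vector is the ``more spread out'' one, which is indeed what $S\,\lambda(A)$ produces since averaging (multiplication by a doubly stochastic matrix) can only decrease spread. If one prefers not to cite the Birkhoff/HLP equivalence as a black box, the fallback is to prove $\sum_{i=0}^{k} [A]_{ii}^{\downarrow} \le \sum_{i=0}^{k} \lambda_i^{\downarrow}$ directly: fix the indices achieving the largest $k+1$ diagonal entries, bound the corresponding partial sum of $S\lambda$ by pushing as much column-mass as possible onto the largest eigenvalues subject to the row/column sum constraints, and optimize — but citing Marshall--Olkin is cleaner and fully in keeping with the excerpt's stated approach.
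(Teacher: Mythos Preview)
Your proof is correct and is in fact the standard argument. The paper itself does not give a proof of this theorem at all: it merely states the result and attributes it to Schur with a citation to Marshall and Olkin's book, so there is nothing to compare against beyond noting that your argument is precisely the one found in that reference.
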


There is a small Corollary of Schur's Theorem that is of particular interest to us.

\begin{corollary} \label{cor:Uincluded}
Let $\sigma_{SM}$ be a state diagonal in $(\ket{i}_{SM})_{i=0}^{d_{SM}-1}$. Let $\lambda(\sigma_{SM})$ be the vector of eigenvalues of $\sigma_{SM}$. Let U be a unitary. Then
\begin{equation}
\mathfrak{D}(U \sigma_{SM} U^{\dagger}) \prec \lambda(\sigma_{SM}),
\end{equation} 
where $\mathfrak{D}( U \sigma_{SM} U^{\dagger})$ is the diagonal of $U \sigma_{SM} U^{\dagger}$ in $(\ket{i}_{SM})_{i=0}^{d_{SM}-1}$.
\end{corollary}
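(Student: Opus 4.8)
The plan is to derive this directly from Schur's Theorem (Theorem~\ref{thm:Schur}). The key observation is that the eigenvalues are unitarily invariant: since $\sigma_{SM}$ is diagonal in the energy eigenbasis, it is in particular Hermitian, and $U \sigma_{SM} U^{\dagger}$ is then also Hermitian with $\lambda(U \sigma_{SM} U^{\dagger}) = \lambda(\sigma_{SM})$. Indeed, if $\sigma_{SM} \ket{v} = \lambda \ket{v}$, then $(U \sigma_{SM} U^{\dagger})(U\ket{v}) = \lambda (U \ket{v})$, so $U$ maps eigenvectors to eigenvectors preserving eigenvalues, hence the full eigenvalue multiset is unchanged.

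First I would note that $A := U \sigma_{SM} U^{\dagger}$ is a $d_{SM} \times d_{SM}$ Hermitian matrix, since $\sigma_{SM}$ is Hermitian (being diagonal with real entries) and conjugation by a unitary preserves Hermiticity. Then I would apply Theorem~\ref{thm:Schur} to $A$ to obtain
\begin{equation}
\mathfrak{D}(U \sigma_{SM} U^{\dagger}) \prec \lambda(U \sigma_{SM} U^{\dagger}).
\end{equation}
Finally I would invoke the unitary invariance of the spectrum, $\lambda(U \sigma_{SM} U^{\dagger}) = \lambda(\sigma_{SM})$, to rewrite the right-hand side and conclude
\begin{equation}
\mathfrak{D}(U \sigma_{SM} U^{\dagger}) \prec \lambda(\sigma_{SM}),
\end{equation}
which is exactly the claim.

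There is essentially no obstacle here: the corollary is an immediate consequence of Schur's Theorem together with the elementary fact that similar matrices (and in particular unitarily equivalent ones) share the same spectrum. The only point worth stating carefully is that $\mathfrak{D}$ is computed in the fixed basis $(\ket{i}_{SM})_{i=0}^{d_{SM}-1}$, which is consistent with how $\mathfrak{D}$ was defined for a general Hermitian matrix in Theorem~\ref{thm:Schur}; no hypothesis on $U$ beyond unitarity is needed, and the diagonality of $\sigma_{SM}$ is used only to guarantee Hermiticity (in fact any Hermitian $\sigma_{SM}$ would do, but diagonality is the case of interest). I would keep the proof to these three lines.
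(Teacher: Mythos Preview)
Your proposal is correct and matches the paper's approach: the paper presents this as an immediate corollary of Schur's Theorem (Theorem~\ref{thm:Schur}) without writing out a separate proof, and your argument---apply Schur to the Hermitian matrix $U\sigma_{SM}U^{\dagger}$ and then use unitary invariance of the spectrum---is exactly the intended reasoning.
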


What Corollary~\ref{cor:Uincluded} proves is that given a state $\sigma_{SM}$ diagonal in $(\ket{i}_{SM})_{i=0}^{d_{SM}-1}$
\begin{equation}
\left\{ x \in \mathds{R}^{d_{SM}} \mid x=\mathfrak{D}(U \sigma_{SM} U^{\dagger}) \text{ for some unitary } U \right\} \subset \left\{ x \in \mathds{R}^{d_{SM}} \mid x \prec \lambda(\sigma_{SM}) \right\}.
\end{equation}

The natural next question to ask is if the reverse is also true, that is if

\begin{equation}
\left\{ x \in \mathds{R}^{d_{SM}} \mid x \prec \lambda(\sigma_{SM}) \right\} \subset \left\{ x \in \mathds{R}^{d_{SM}} \mid x=\mathfrak{D}(U \sigma_{SM} U^{\dagger}) \text{ for some unitary } U \right\} .
\end{equation}

This is what Horn in 1954 showed by proving a result stronger than the converse of Corollary~\ref{cor:Uincluded} \cite[Chapter~9.B]{Marshall-2011}.

\begin{theorem}[Horn] \label{thm:Horn}
Let $x,y \in \mathds{R}^d$. Suppose $x \prec y$. Then
\begin{equation}
x_j = \sum_k \lvert u_{jk} \rvert^2 y_k,
\end{equation}
for some real unitary $U=(u_{ij})$.
\end{theorem}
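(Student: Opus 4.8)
The plan is to prove Horn's theorem by induction on the dimension $d$. The base case $d=1$ is trivial since $x \prec y$ with $x,y \in \mathds{R}^1$ forces $x_0 = y_0$, and the $1\times 1$ unitary $U=(1)$ works. For the inductive step, I would assume $x \prec y$ in $\mathds{R}^d$ and, without loss of generality (by relabeling, which corresponds to composing with permutation matrices, themselves real unitaries with entries $0$ and $1$), arrange $x$ and $y$ in decreasing order, so $x_0 \geq x_1 \geq \dots \geq x_{d-1}$ and similarly for $y$. The majorization condition then gives $y_{d-1} \leq x_0 \leq y_0$ (and more generally $y_{d-1} \le x_j \le y_0$), so there is an index $k$ with $y_k \leq x_0 \leq y_{k-1}$ (taking $k$ minimal such that $y_k \le x_0$; if $x_0 = y_0$ we can peel off the first coordinate directly and recurse).

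The key construction is to use a single T-transform (a rotation in the $2$-plane spanned by coordinates $0$ and $k$, in the notation of the excerpt $T = (1-t)\mathds{1} + tQ$ with $Q$ the transposition of indices $0$ and $k$) applied to $y$ to produce an intermediate vector $y'$ with $y'_0 = x_0$ exactly: choose $t \in [0,1]$ so that $(1-t) y_0 + t y_k = x_0$, which is solvable precisely because $x_0$ lies between $y_k$ and $y_0$. One checks that $y' = T y$ still majorizes $x$ (this uses the standard fact that $y' \prec y$ and that a T-transform only moves mass between two coordinates toward their average, keeping all partial sums of $x$ dominated; one verifies the partial-sum inequalities $\sum_{i=0}^j x_i^\downarrow \le \sum_{i=0}^j (y')_i^\downarrow$ case by case). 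Now strip off the first coordinate: the reduced vectors $\tilde x = (x_1,\dots,x_{d-1})$ and $\tilde y = (y'_1,\dots,y'_{d-1})$ satisfy $\tilde x \prec \tilde y$ (the total sums match because $x_0 = y'_0$, and the partial sums inherit from those of $x \prec y'$). By the induction hypothesis there is a real unitary (orthogonal) $\tilde U = (\tilde u_{ij})$ on $\mathds{R}^{d-1}$ with $\tilde x_j = \sum_k |\tilde u_{jk}|^2 \tilde y_k$.

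It remains to assemble the full orthogonal matrix: the map $y \mapsto y'$ is itself of the form $(y')_l = \sum_m c_{lm} y_m$ with $c_{lm} = |o_{lm}|^2$ for an orthogonal $2\times 2$ rotation $O$ embedded in $d$ dimensions (since $T$-transforms have exactly this doubly-stochastic-from-orthogonal structure: the rotation by angle $\theta$ with $\cos^2\theta = 1-t$), and $\tilde U \oplus (1)$-type blocking gives $x_j = \sum_k |u_{jk}|^2 y_k$ for $U$ the product of the embedded $\tilde U$-block with the embedded rotation. Products of real orthogonal matrices are real orthogonal, and the matrix of squared moduli of a product of orthogonals composed of a permutation, a planar rotation, and a lower-dimensional orthogonal is exactly the composition of the corresponding doubly stochastic maps, which by construction sends $y$ to $x$. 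I expect the main obstacle to be the careful verification that $y' = Ty$ still majorizes $x$ — the partial-sum bookkeeping splits into the cases $j < k$ and $j \geq k$ and requires using both the majorization hypothesis for $x \prec y$ and the specific choice of $t$ — and, secondarily, making sure the bookkeeping that passes from the $(d-1)$-dimensional orthogonal back up to a genuine $d\times d$ orthogonal with the right squared-modulus matrix is done without sign or indexing errors.
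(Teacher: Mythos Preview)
The paper does not supply its own proof of Horn's theorem; it simply cites it as a known result from \cite[Chapter~9.B]{Marshall-2011}. So there is nothing in the paper to compare your argument against line by line.

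Your proposal is the standard T-transform induction proof (essentially Horn's original argument, as presented in Marshall--Olkin), and it is correct in outline. One point deserves sharpening. You write that ``the matrix of squared moduli of a product of orthogonals composed of a permutation, a planar rotation, and a lower-dimensional orthogonal is exactly the composition of the corresponding doubly stochastic maps.'' As a general statement this is false: the map $U \mapsto (|u_{ij}|^2)$ is not a homomorphism. What makes it work here is the specific block structure. With $O$ the rotation in the $(0,k)$-plane and $V = 1 \oplus \tilde U$, every entry $(VO)_{ij} = \sum_l V_{il} O_{lj}$ has at most one nonzero summand (because $V_{i0}=0$ for $i\ge 1$ and $O_{lj}=\delta_{lj}$ for $l,j\notin\{0,k\}$), so $|(VO)_{ij}|^2 = \sum_l |V_{il}|^2 |O_{lj}|^2$ entrywise. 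That is the computation you should actually display; once done, your induction closes cleanly. You correctly flagged this bookkeeping as the second obstacle, so you are aware it needs care --- just make sure the final write-up proves it rather than asserts it.
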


The result is stronger since a real unitary suffices, i.e., a unitary with real entries. To see that Horn's Theorem provides the desired converse, let $\sigma_{SM}$ be diagonal in the energy eigenbasis. Let $y=\lambda(\sigma_{SM})$ and let $x \prec y$. What Horn's Theorem tells us is that there exists a (real) unitary $U$ such that $x = \mathfrak{D}(U \sigma_{SM} U^{\dagger})$. Indeed, for all $j=0,\dots,d_{SM}-1$

\begin{align}
\mathfrak{D}(U \sigma_{SM} U^{\dagger})_j &= \sum_{kl} u_{jk} [\sigma_{SM}]_{kl} \bar{u}_{jl}\\
&= \sum_k \lvert u_{jk} \rvert^2 [\sigma_{SM}]_{kk}\\
&=\sum_k \lvert u_{jk} \rvert^2 y_k = x_j.
\end{align}

This therefore proves the following result

\begin{theorem}[Schur-Horn]\label{thm:Schur-Horn}
Given a state $\sigma_{SM}$ diagonal in $(\ket{i}_{SM})_{i=0}^{d_{SM}-1}$
\begin{equation}
\left\{ x \in \mathds{R}^{d_{SM}} \mid x=\mathfrak{D}(U \sigma_{SM} U^{\dagger}) \text{ for some unitary } U \right\} = \left\{ x \in \mathds{R}^{d_{SM}} \mid x \prec \lambda(\sigma_{SM}) \right\}.
\end{equation}
\end{theorem}

We will refer to Theorem~\ref{thm:Schur-Horn} as the Schur-Horn Theorem. It should be noted, however, that some people prefer to refer to it as Horn's Theorem. We will reserve that label to refer to Theorem \ref{thm:Horn} to avoid confusion.\\

Now that we have laid out the connection that the coherent scenario has with majorization theory, we can exploit this to see how it can help us to formally formulate our problem of interest. We will then work out an interesting instance of the problem in order to give more intuition  as well as to prepare us to the bottom-up analysis that will follow in the subsequent sections.

Given a $d_S$ dimensional system S and a $d_M$ dimensional machine M, what we are in general interested in is to find the unitary that cools to target system at the minimum work cost $\Delta F_{\text{coh}}$ to a given temperature $T_{\text{coh}} \in [T_{\text{coh}}^*, T_R]$. From our definition of temperature, this means that we would like to cool the target to a given ground state population $ r_{\text{coh}} \in [r_S,r_{\text{coh}}^*]$. Given some $c \in [r_S,r_{\text{coh}}^*]$, we are therefore interested in solving

\begin{equation}
\min_{U} \Delta F_{\text{coh}}, \quad \text{s.t. } r_{\text{coh}}=c.
\end{equation}

Now we have that 

\begin{align}
r_{\text{coh}} &= \sum_{i=0}^{d_M-1} \bra{0i} U \rho_{SM} U^{\dagger} \ket{0i}_{SM}\\
&= \sum_{i=0}^{d_M-1} [\mathfrak{D}(U \rho_{SM} U^{\dagger}) ]_i.
\end{align}

Also, given any state $\sigma_{SM}$,

\begin{align}
\Tr(\sigma_{SM} H_{SM}) &= \sum_{i,j=0}^{d_{SM}-1} [\sigma_{SM}]_{ij} [H_{SM}]_{ji}\\
&= \sum_{i=0}^{d_{SM}-1} [\sigma_{SM}]_{ii} [H_{SM}]_{ii}\\
&= \mathfrak{D}(\sigma_{SM}) \cdot \mathfrak{D}(H_{SM}),
\end{align}

so that

\begin{align}
\Delta F_{\text{coh}}&=\Tr \left( U \rho_{SM} U^{\dagger} H_{SM}\right)-\Tr \left(\rho_{SM} H_{SM}\right)\\
&= \mathfrak{D}(U \rho_{SM} U^{\dagger}) \cdot \mathfrak{D}(H_{SM})-\underbrace{\mathfrak{D}(\rho_{SM} ) \cdot \mathfrak{D}(H_{SM})}_{=\text{const.}.}.
\end{align}

We can therefore reformulate our problem as

\begin{align}
& \min_{U} \left( \mathfrak{D}(U \rho_{SM} U^{\dagger}) \cdot \mathfrak{D}(H_{SM}) \right), \quad \text{s.t. } \sum_{i=0}^{d_{M}-1} \left[ \mathfrak{D}(U \rho_{SM} U^{\dagger}) \right]_i\\
=&\min_{v \in A} v \cdot \mathfrak{D}(H_{SM}), \quad \text{s.t. } \sum_{i=0}^{d_{M}-1} v_i=c,
\end{align}

with
\begin{equation}
A= \left\{v \in \mathds{R}^{d_{SM}} \mid v=\mathfrak{D}(U \rho_{SM} U^{\dagger}) \text{ for some unitary } U \right\}.
\end{equation}

What the Schur-Horn Theorem, Theorem \ref{thm:Schur-Horn}, tells us is that

\begin{equation}
A= \left\{v \in \mathds{R}^{d_{SM}} \mid v \prec \mathfrak{D}(\rho_{SM}) \right\}.
\end{equation}

The problem of finding the unitary that cools the target system to a given ground state population $c$ at a minimal work cost therefore reduces to

\begin{equation} \label{eq:deltaFoptimization}
\min_{v \prec \mathfrak{D}(\rho_{SM})} v \cdot \mathfrak{D}(H_{SM}), \quad \text{s.t. } \sum_{i=0}^{d_{M}-1} v_i=c.
\end{equation}

With that reformulation of the problem established it is now easy to explicitly calculate $r_{\text{coh}}^*$ from the entries of $ \mathfrak{D}(\rho_{SM})$.

\begin{lemma}
$r_{\text{coh}}^*=\sum_{i=0}^{d_M-1} \left[\mathfrak{D}^{\downarrow}(\rho_{SM})\right]_i$.
\end{lemma}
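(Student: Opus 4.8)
The goal is to compute $r_{\text{coh}}^*$, the maximal achievable ground state population of the target system $S$ after a single application of the coherent scenario. By definition $r_{\text{coh}}^* = \max_U r_{\text{coh}}$ where $r_{\text{coh}} = \sum_{i=0}^{d_M-1} [\mathfrak{D}(U\rho_{SM}U^{\dagger})]_i$, the sum of the first $d_M$ diagonal entries of $U\rho_{SM}U^{\dagger}$ in the energy eigenbasis (these are exactly the entries corresponding to the subspace $\ket{0}_S \otimes \mathcal{H}_M$). So the plan is to maximize the partial sum $\sum_{i=0}^{d_M-1} v_i$ over all $v \in A$, where by the Schur–Horn Theorem (Theorem~\ref{thm:Schur-Horn}) $A = \{v \in \mathds{R}^{d_{SM}} \mid v \prec \mathfrak{D}(\rho_{SM})\}$.

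First I would reduce to a purely combinatorial statement about majorization: for any $v \prec w$ (here $w = \mathfrak{D}(\rho_{SM})$) and any index set of size $k = d_M$, the sum of $k$ entries of $v$ is at most $\sum_{i=0}^{k-1} w_i^{\downarrow}$. This is immediate from the definition of majorization: $\sum_{i \in I} v_i \le \sum_{i=0}^{k-1} v_i^{\downarrow} \le \sum_{i=0}^{k-1} w_i^{\downarrow}$ for any $I$ with $|I| = k$, using the defining inequality of $\prec$ for the partial sum of length $k$ (valid since $k = d_M \le d_{SM} - 1$, because $d_S \ge 2$ — and if $d_S = 1$ the statement is trivial). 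This gives the upper bound $r_{\text{coh}}^* \le \sum_{i=0}^{d_M-1} [\mathfrak{D}^{\downarrow}(\rho_{SM})]_i$.

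Second, I would show the bound is attained, i.e.\ exhibit a unitary $U$ achieving it. It suffices to find $v \in A$ whose first $d_M$ entries are precisely the $d_M$ largest entries of $\mathfrak{D}(\rho_{SM})$. Take $v$ to be any rearrangement of $\mathfrak{D}(\rho_{SM})$ itself that places the $d_M$ largest values in positions $0,\dots,d_M-1$; then $v$ is a permutation of $\mathfrak{D}(\rho_{SM})$, hence $v \prec \mathfrak{D}(\rho_{SM})$, so $v \in A$, and by Schur–Horn there is a (permutation) unitary $U$ with $\mathfrak{D}(U\rho_{SM}U^{\dagger}) = v$. For this $U$, $r_{\text{coh}} = \sum_{i=0}^{d_M-1} v_i = \sum_{i=0}^{d_M-1} [\mathfrak{D}^{\downarrow}(\rho_{SM})]_i$. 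Combining the two directions yields $r_{\text{coh}}^* = \sum_{i=0}^{d_M-1} [\mathfrak{D}^{\downarrow}(\rho_{SM})]_i$.

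There is essentially no hard obstacle here — the statement is a direct corollary of Schur–Horn plus the definition of majorization. The only point requiring a moment of care is the observation that maximizing $r_{\text{coh}}$ over $A$ is the same optimization as in Eq.~\eqref{eq:deltaFoptimization} but with the roles reversed (we want the partial sum large, not the energy small), together with noting that a permutation of $\mathfrak{D}(\rho_{SM})$ already realizes the extremal value, so one does not even need the full strength of Horn's Theorem (real unitaries), just permutation matrices. If one wants to be fully careful, one should also remark that the first $d_M$ diagonal positions of $H_{SM}$ in the energy eigenbasis are indeed those labelled $\ket{0}_S \otimes \ket{l}_M$, which follows from $E_0 = 0 \le E_i$ and $\mathcal{E}_j \ge 0$ only if the ordering happens to group them first — but since the definition of $r_{\text{coh}}$ sums over exactly the indices $\ket{0i}_{SM}$, $i = 0,\dots,d_M-1$, this labelling subtlety does not actually affect the computation, which is about an unconstrained partial sum of a majorized vector.
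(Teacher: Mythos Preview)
Your proof is correct and follows essentially the same route as the paper's: reduce via Schur--Horn to maximizing $\sum_{i=0}^{d_M-1} v_i$ over $v \prec \mathfrak{D}(\rho_{SM})$, bound this above by $\sum_{i=0}^{d_M-1} [\mathfrak{D}^{\downarrow}(\rho_{SM})]_i$ using the definition of majorization, and attain the bound with a permutation unitary placing the $d_M$ largest diagonal entries into the $\ket{0}_S \otimes \mathcal{H}_M$ block. Your extra remarks on the $d_S=1$ edge case and the labelling of indices are harmless side comments not needed for the argument.
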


\begin{proof}
The proof is a straightforward application of the Schur-Horn theorem and the basic definition of majorization. Indeed, as we saw,

\begin{align}
\max_U r_{\text{coh}} &= \max_U \sum_{i=0}^{d_M-1} \left[ \mathfrak{D}(U \rho_{SM} U^{\dagger}) \right]_i\\
&= \max_{v \in A} \sum_{i=0}^{d_M-1} v_i\\
&= \max_{v \prec \mathfrak{D}(\rho_{SM})} \sum_{i=0}^{d_M-1} v_i.
\end{align}

However, for all $v \prec \mathfrak{D}(\rho_{SM})$, by definition

\begin{equation}
\sum_{i=0}^{d_M-1} v_i \leq \sum_{i=0}^{d_M-1} v_i^{\downarrow} \leq \sum_{i=0}^{d_M-1} \left[ \mathfrak{D}^{\downarrow} (\rho_{SM}) \right]_i.
\end{equation}

Furthermore, $\sum_{i=0}^{d_M-1} v_i=\sum_{i=0}^{d_M-1} \left[ \mathfrak{D}^{\downarrow} (\rho_{SM}) \right]_i$ by choosing the unitary $U$ acting on SM that reorders the $d_M$ biggest entries of $\mathfrak{D}(\rho_{SM})$ in the $\ket{00}_{SM}, \dots, \ket{0, d_M-1}_{SM}$ eigenstates.
\end{proof}

We next want to investigate the case $c=r_{\text{coh}}^*$ more closely. That is, we are interested in

\begin{equation}
\min_{v \prec \mathfrak{D}(\rho_{SM})} v \cdot \mathfrak{D}(H_{SM}), \quad \text{s.t. } \sum_{i=0}^{d_{M}-1} v_i=r_{\text{coh}}^*.
\end{equation}

First of all, one may wonder if the example of the unitary achieving $r_{\text{coh}}^*$ is just an example or if it is in some sense unique. To this end we formulate the following Lemma.

\begin{lemma} \label{lemma:biggest}
For any vector $v \prec \mathfrak{D}(\rho_{SM})$ such that $ \sum_{i=0}^{d_M-1} v_i = r_{\text{coh}}^*$, its first $d_M$ entries are its biggest entries.
\end{lemma}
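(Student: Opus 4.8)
The statement is that any $v \prec \mathfrak{D}(\rho_{SM})$ with $\sum_{i=0}^{d_M-1} v_i = r_{\text{coh}}^*$ must have its $d_M$ largest entries sitting in the first $d_M$ positions. The plan is to argue by contradiction: suppose some entry $v_j$ with $j \geq d_M$ were strictly larger than some entry $v_i$ with $i < d_M$. I would then use the defining inequality of $r_{\text{coh}}^*$ from the preceding Lemma, namely $r_{\text{coh}}^* = \sum_{i=0}^{d_M-1} [\mathfrak{D}^{\downarrow}(\rho_{SM})]_i$, which says that $r_{\text{coh}}^*$ equals the sum of the $d_M$ \emph{largest} components of $\mathfrak{D}(\rho_{SM})$.

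The key observation is this: for \emph{any} vector $v \prec \mathfrak{D}(\rho_{SM})$, the partial sum of any $d_M$ of its components is bounded above by $\sum_{i=0}^{d_M-1}[v^{\downarrow}]_i$, which by majorization is in turn bounded by $\sum_{i=0}^{d_M-1}[\mathfrak{D}^{\downarrow}(\rho_{SM})]_i = r_{\text{coh}}^*$. In particular, $\sum_{i=0}^{d_M-1} v_i \leq \sum_{i=0}^{d_M-1}[v^{\downarrow}]_i \leq r_{\text{coh}}^*$, and we are given equality holds throughout. Equality in the first inequality $\sum_{i=0}^{d_M-1} v_i = \sum_{i=0}^{d_M-1}[v^{\downarrow}]_i$ forces the first $d_M$ entries of $v$ to already \emph{be} a choice of $d_M$ largest entries: more precisely, $\min_{i<d_M} v_i \geq \max_{j \geq d_M} v_j$. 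If this failed, i.e.\ if there were $i<d_M$, $j\geq d_M$ with $v_i < v_j$, then swapping them would strictly increase the first partial sum $\sum_{i=0}^{d_M-1} v_i$, contradicting that it already attains the maximum possible value $r_{\text{coh}}^*$ over all vectors majorized by $\mathfrak{D}(\rho_{SM})$ (the swapped vector is a permutation of $v$, hence still majorized by $\mathfrak{D}(\rho_{SM})$, since majorization is permutation-invariant). This contradiction establishes the claim.

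The main step requiring a little care is the clean statement of the chain of inequalities and the observation that a permutation of $v$ is still majorized by $\mathfrak{D}(\rho_{SM})$ — both are immediate from the definition of majorization, so no real obstacle arises. One subtlety worth spelling out is that ``its first $d_M$ entries are its biggest entries'' should be read in the weak sense (every one of the first $d_M$ entries is $\geq$ every one of the remaining entries), since ties in $\mathfrak{D}(\rho_{SM})$ are possible; the contradiction argument via swapping handles exactly this weak statement. I expect the whole argument to be short: set up the inequality chain using the previous Lemma, note equality is assumed, and deduce the ordering by the swapping/permutation-invariance observation.
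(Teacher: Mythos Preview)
Your proposal is correct and follows essentially the same route as the paper: argue by contradiction, assume some $v_i$ with $i<d_M$ is strictly smaller than some $v_j$ with $j\geq d_M$, swap them to obtain a permutation of $v$ (hence still majorized by $\mathfrak{D}(\rho_{SM})$), and observe that the swapped vector has first-$d_M$ partial sum strictly exceeding $r_{\text{coh}}^*$, contradicting the maximality of $r_{\text{coh}}^*$ established in the preceding lemma. Your write-up is if anything slightly more explicit about the inequality chain and the handling of ties than the paper's version.
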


\begin{proof}
Let $v \in \mathds{R}^{d_{SM}-1}$ with $\sum_{i=0}^{d_M-1} v_i = r_{\text{coh}}^*$. Suppose the statement is wrong. That is, suppose there exists an index $i \leq d_M$ and an index $j > d_M$ such that $v_i < v_j$. Then, denoting by $P_{ij}$ the permutation matrix permuting index $i$ and $j$ only, we have
\begin{equation}
v'= P_{ij} v \prec v \prec \mathfrak{D}(\rho_{SM}).
\end{equation} 
And so $\sum_{k=0}^{d_M-1} [v']_k \leq r_{\text{coh}}^*$. But

\begin{equation}
\sum_{k=0}^{d_M-1} [v']_k=\sum_{\substack{k=0 \\ k \neq i}}^{d_M-1} [v']_k + v_j > \sum_{\substack{k=0 \\ k \neq i}}^{d_M-1} v_k + v_i = r_{\text{coh}}^* \quad \lightning. 
\end{equation}
\end{proof}

The result of Lemma \ref{lemma:biggest} is by itself quite uninteresting. It is appealing to us, however, since it allows us to prove the next, much more interesting result. For this next result we need a bit of notation. Given a vector $w \in \mathds{R}^{d_{SM}}$ we define
\begin{align}
a_w &= (w_0, \dots, w_{d_M-1})\\
b_w &= (w_{d_M}, \dots, w_{d_{SM}-1}).
\end{align}

Note that $w = a_w \oplus b_w$, i.e., $a_w$ and $b_w$ split $w$ in two. With this we can state our next result.

\begin{lemma}\label{lemma:split}
Let $v \in \mathds{R}^{d_{SM}}$ such that $\sum_{i=0}^{d_M-1} v_i = r_{\text{coh}}^*$. Then
\begin{equation}
v \prec \mathfrak{D}(\rho_{SM}) \Leftrightarrow a_v \prec a_{\mathfrak{D}^{\downarrow}(\rho_{SM})} \text{ and } b_v \prec b_{\mathfrak{D}^{\downarrow}(\rho_{SM})}.
\end{equation}
\end{lemma}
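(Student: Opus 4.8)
The plan is to prove the two implications separately, using throughout the notation $a_v$, $b_v$ for the first $d_M$ and last $d_S$ components of a vector $v \in \mathds{R}^{d_{SM}}$, and writing $p = \mathfrak{D}^{\downarrow}(\rho_{SM})$ for brevity, so that $r_{\text{coh}}^* = \sum_{i=0}^{d_M-1} p_i = \sum_{i=0}^{d_M-1} [a_p]_i$ by the Lemma just proved. The key structural fact I will lean on is Lemma~\ref{lemma:biggest}: any $v \prec \mathfrak{D}(\rho_{SM})$ with $\sum_{i=0}^{d_M-1} v_i = r_{\text{coh}}^*$ has its $d_M$ largest entries sitting exactly in the block $a_v$. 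This immediately gives, for such a $v$, that the sorted vector $v^{\downarrow}$ is the concatenation $a_v^{\downarrow} \oplus b_v^{\downarrow}$, and likewise $p = a_p \oplus b_p$ with $a_p = a_{\mathfrak{D}^{\downarrow}(\rho_{SM})}$ already sorted and $\geq$ every entry of $b_p$.

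For the forward direction ($\Rightarrow$), assume $v \prec \mathfrak{D}(\rho_{SM})$. For the partial sums of $a_v$: for $k \leq d_M-1$ we have $\sum_{i=0}^{k} [a_v^{\downarrow}]_i = \sum_{i=0}^{k} v_i^{\downarrow} \leq \sum_{i=0}^{k} p_i = \sum_{i=0}^{k} [a_p]_i$, using that $v^{\downarrow}$ starts with $a_v^{\downarrow}$ and that $p$ starts with $a_p$; and the total sums agree by hypothesis ($\sum a_v = r_{\text{coh}}^* = \sum a_p$). Hence $a_v \prec a_{\mathfrak{D}^{\downarrow}(\rho_{SM})}$. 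For $b_v$: the total sum of $b_v$ is $\big(\sum_i v_i\big) - r_{\text{coh}}^* = \big(\sum_i p_i\big) - r_{\text{coh}}^* = \sum b_p$; and for a partial sum up to $k \leq d_S - 1$, write $\sum_{i=0}^{k}[b_v^{\downarrow}]_i = \Big(\sum_{i=0}^{d_M+k} v_i^{\downarrow}\Big) - r_{\text{coh}}^* \leq \Big(\sum_{i=0}^{d_M+k} p_i\Big) - r_{\text{coh}}^* = \sum_{i=0}^{k}[b_p]_i$, where again I used that $v^{\downarrow}$ splits as $a_v^{\downarrow}\oplus b_v^{\downarrow}$ and $p$ splits as $a_p \oplus b_p$. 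Thus $b_v \prec b_{\mathfrak{D}^{\downarrow}(\rho_{SM})}$.

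For the converse ($\Leftarrow$), assume $a_v \prec a_{\mathfrak{D}^{\downarrow}(\rho_{SM})}$ and $b_v \prec b_{\mathfrak{D}^{\downarrow}(\rho_{SM})}$; I must show $\sum_{i=0}^{k} v_i^{\downarrow} \leq \sum_{i=0}^{k} p_i$ for every $k$, with equality at $k = d_{SM}-1$ (the latter being immediate from adding the two equality-of-total-sums conditions). The subtlety is that $v^{\downarrow}$ need not respect the $a/b$ split, so I cannot directly concatenate. The clean way: for any $k$, $\sum_{i=0}^{k} v_i^{\downarrow}$ is the maximum over all index sets $I$ with $|I| = k+1$ of $\sum_{i \in I} v_i$; split such an $I$ as $I_a \sqcup I_b$ across the two blocks, so $\sum_{i\in I}v_i = \sum_{i \in I_a}[a_v]_i + \sum_{i\in I_b}[b_v]_i \leq \sum_{i=0}^{|I_a|-1}[a_v^{\downarrow}]_i + \sum_{i=0}^{|I_b|-1}[b_v^{\downarrow}]_i \leq \sum_{i=0}^{|I_a|-1}[a_p]_i + \sum_{i=0}^{|I_b|-1}[b_p]_i$ by the two assumed majorizations. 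Now $a_p$ is sorted descending, $b_p$ is sorted descending, and every entry of $a_p$ dominates every entry of $b_p$, so $p = a_p \oplus b_p$ is \emph{already} in descending order; therefore $\sum_{i=0}^{|I_a|-1}[a_p]_i + \sum_{i=0}^{|I_b|-1}[b_p]_i$ is a sum of $|I_a| + |I_b| = k+1$ entries of the sorted vector $p$, hence $\leq \sum_{i=0}^{k}p_i$. Taking the max over $I$ gives $\sum_{i=0}^{k} v_i^{\downarrow} \leq \sum_{i=0}^{k} p_i$, completing the proof.

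The main obstacle is exactly this last step: unlike in the forward direction, the sorted order of $v$ is not controlled by the block structure, so one needs the "partial sum = max over index sets" characterization of majorization together with the crucial observation that $\mathfrak{D}^{\downarrow}(\rho_{SM})$ splits into two \emph{already-sorted} blocks with the $a$-block dominating the $b$-block — this is what lets the two separate majorization bounds recombine into a single one. Everything else is bookkeeping with partial sums, and the equality of total sums is handled separately and trivially.
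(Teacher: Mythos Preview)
Your forward direction ($\Rightarrow$) is essentially identical to the paper's: both invoke Lemma~\ref{lemma:biggest} to conclude that $v^{\downarrow} = a_v^{\downarrow} \oplus b_v^{\downarrow}$, then read off the two block-majorizations from the global partial-sum inequalities, using $\sum a_v = r_{\text{coh}}^* = \sum a_p$ for the equality conditions. (One bookkeeping slip: you describe $b_v$ as having ``$d_S$'' components, but it has $d_{SM}-d_M$; this does not affect the argument.)

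Your backward direction ($\Leftarrow$) is correct but genuinely different from the paper's. The paper simply invokes the known fact that majorization is stable under direct sums, $v_1 \prec v_2$ and $w_1 \prec w_2 \Rightarrow v_1 \oplus w_1 \prec v_2 \oplus w_2$ (citing \cite{Bondar-2003}), and then notes $v = a_v \oplus b_v \prec a_{\mathfrak{D}^{\downarrow}(\rho_{SM})} \oplus b_{\mathfrak{D}^{\downarrow}(\rho_{SM})} = \mathfrak{D}^{\downarrow}(\rho_{SM})$. You instead give a self-contained elementary proof of that implication in this specific setting: using the characterization of $\sum_{i=0}^{k} v_i^{\downarrow}$ as a maximum over index sets of size $k+1$, splitting any such set across the two blocks, bounding each piece via the assumed block-majorizations, and then exploiting that $p = \mathfrak{D}^{\downarrow}(\rho_{SM})$ is \emph{globally} sorted so that the resulting sum of $k+1$ entries of $p$ is at most the top-$(k+1)$ sum. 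This is exactly a direct proof of the direct-sum stability of majorization, specialized to the case where the right-hand side is already sorted; it buys you independence from the external reference at the cost of a few more lines.
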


\begin{proof}
We first prove the forward implication \enquote{$\Rightarrow$}. What Lemma~\ref{lemma:biggest} tells us is that given a $v \in \mathbb{R}^{d_{SM}}$
\begin{align}
a_v^{\downarrow}&= \left( [v^{\downarrow}]_0, \dots, [v^{\downarrow} ]_{d_M-1} \right),\\
b_v^{\downarrow}&= \left( [v^{\downarrow} ]_{d_M}, \dots, [v^{\downarrow} ]_{d_{SM}-1} \right).
\end{align}

Using $v \prec \mathfrak{D}(\rho_{SM})$, we have that for all $l=0, \dots, d_M-1$,
\begin{equation}
\sum_{i=0}^l [a_v^{\downarrow}]_i = \sum_{i=0}^l [v^{\downarrow}]_i \leq \sum_{i=0}^l [ \mathfrak{D}^{\downarrow} (\rho_{SM})]_i
\end{equation}
with equality for $l=d_M-1$ as
\begin{equation}
\sum_{i=0}^{d_M-1} [a_v^{\downarrow}]_i= \sum_{i=0}^{d_M-1} v_i = r_{\text{coh}}^*= \sum_{i=0}^{d_M-1} [\mathfrak{D}^{\downarrow}(\rho_{SM})]_i.
\end{equation}

This means that $a_v \prec a_{\mathfrak{D}^{\downarrow} (\rho_{SM})}$. Analogously, for all $l=0,\dots, d_{SM}-d_M-1$
\begin{align}
\sum_{i=0}^l [b_v^{\downarrow}]_i= \sum_{i=0}^l [v^{\downarrow}]_{d_M+i} &= \sum_{i=0}^{l+d_M} [v^{\downarrow}]_i-r_{\text{coh}}^*\\
&\leq \sum_{i=0}^{l+d_M} [\mathfrak{D}^{\downarrow}(\rho_{SM})]_i-r_{\text{coh}}^*\\
&= \sum_{i=0}^{l} [\mathfrak{D}^{\downarrow}(\rho_{SM})]_{d_M+i}
\end{align}
with equality for $l=d_{SM}-d_M-1$ as
\begin{equation}
\sum_{i=0}^{d_{SM}-d_M-1} [b_v^{\downarrow}]_i= \sum_{i=0}^{d_{SM}-d_M-1} v_{d_M+i}=1-r_{\text{coh}}^* = \sum_{i=0}^{d_{SM}-d_M-1} [\mathfrak{D}^{\downarrow} (\rho_{SM})]_{d_M+i}.
\end{equation}

So $b_v \prec b_{\mathfrak{D}^{\downarrow} (\rho_{SM})}$. This proves the forward implication. The backward implication \enquote{$\Leftarrow$} follows from
\begin{equation}
v_1 \prec v_2 \text{ and } w_1 \prec w_2 \Rightarrow v_1 \oplus w_1 \prec v_2 \oplus w_2,
\end{equation}
see \cite{Bondar-2003}. With this we get
\begin{equation}
v=a_v \oplus b_v \prec a_{\mathfrak{D}^{\downarrow} (\rho_{SM})} \oplus b_{\mathfrak{D}^{\downarrow} (\rho_{SM})} = \mathfrak{D}^{\downarrow} (\rho_{SM}).
\end{equation}
And so, as desired $v \prec \mathfrak{D}(\rho_{SM})$.
\end{proof}

Lemma \ref{lemma:split} allows us to divide our constrained problem into two unconstrained ones which can each individually easily be solved. Indeed

\begin{align}
& \min_{v \prec \mathfrak{D} (\rho_{SM})} v \cdot \mathfrak{D}(\rho_{SM}), \quad \text{s.t. } \sum_{i=0}^{d_M-1} v_i = r_{\text{coh}}^*\\
=& \min_{a_v \prec a_{\mathfrak{D}^{\downarrow} (\rho_{SM})}} \min_{b_v \prec b_{\mathfrak{D}^{\downarrow} (\rho_{SM})}} (a_v \oplus b_v) \cdot (a_{\mathfrak{D}(\rho_{SM})} \oplus b_{\mathfrak{D} (\rho_{SM})})\\
=& \min_{a_v \prec a_{\mathfrak{D}^{\downarrow} (\rho_{SM})}}  a_v \cdot a_{\mathfrak{D} (\rho_{SM})} \oplus \min_{b_v \prec b_{\mathfrak{D}^{\downarrow} (\rho_{SM})}}  b_v \cdot  b_{\mathfrak{D} (\rho_{SM})}.
\end{align}

The constraint in the second line disappears because for any $a_v \prec a_{\mathfrak{D}^{\downarrow} (\rho_{SM})}$, automatically
\begin{equation}
\sum_{i=0}^{d_M-1} v_i=\sum_{i=0}^{d_M-1} [a_v]_i= \sum_{i=0}^{d_M-1} [a_{\mathfrak{D}^{\downarrow} (\rho_{SM})}]_i= r_{\text{coh}}^*.
\end{equation}

We have therefore reformulated our problem in two instances of the well-known passivity problem~\cite{Pusz-1978, Lenard-1978}. The solution of the latter is given by $\sum_{i=0}^{d_M-1} [a_v]_i^{\uparrow} [a_{\mathfrak{D} (\rho_{SM})}]_i^{\downarrow}$ and $\sum_{i=0}^{d_M-1} [b_v]_i^{\uparrow} [b_{\mathfrak{D} (\rho_{SM})}]_i^{\downarrow}$ respectively. That corresponds to inversely ordering the entries of $a_v$ and $b_v$ with respect to those of $a_{\mathfrak{D} (\rho_{SM})}$ and $b_{\mathfrak{D} (\rho_{SM})}$.

We have therefore solved the endpoint of our optimization problem for any machine M and system S. We will see, however, that solving the problem for instances where one is not interested in maximal cooling, is much more involved. This is of particular interest if one has a restricted amount of free energy, or in other words resource, at hand. Solving the optimization problem in that case is in fact already surprisingly challenging for the simplest instances of the problem, namely the one and two qubit machine case for a single qubit system. We will explore this in Section \ref{sec:onequbitmachine} and Section \ref{sec:twoqubitmachine} respectively.

\section{Remarks on the Incoherent Scenario} \label{sec:remarksincoh}

We would like here to analyze more closely what the energy conserving constraint on the unitary operation implies in this scenario. This is essentially a straightforward application of basic linear algebra and is probably immediately clear to the reader familiar with it. Nevertheless, we state and prove the results to remind ourselves of these facts as well as for precise and concise reference in the subsequent sections.\\

\begin{lemma}\label{lemma:directsum}
$U_{\text{inc}} = \oplus_{i=0}^{k_{SM}-1} U_{\tilde{\epsilon}_i}$, where $U_{\tilde{\epsilon}_i} : \text{Eig}_{H_{SM}} (\tilde{\epsilon}_i) \rightarrow \text{Eig}_{H_{SM}} (\tilde{\epsilon}_i)$.
\end{lemma}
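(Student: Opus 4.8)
The plan is to unpack the definition of energy conservation, $[U_{\text{inc}}, H_{SM}] = 0$, and combine it with the spectral decomposition of $H_{SM}$ into its distinct eigenvalues. Recall that $H_{SM} = \sum_{i=0}^{k_{SM}-1} \tilde{\epsilon}_i P_i$, where $P_i$ is the orthogonal projector onto $\text{Eig}_{H_{SM}}(\tilde{\epsilon}_i)$, and the $P_i$ are mutually orthogonal and sum to the identity. The claim is equivalent to saying that each eigenspace $\text{Eig}_{H_{SM}}(\tilde{\epsilon}_i)$ is left invariant by $U_{\text{inc}}$, so that $U_{\text{inc}}$ acts block-diagonally with respect to the decomposition $\mathcal{H}_{SM} = \bigoplus_{i=0}^{k_{SM}-1} \text{Eig}_{H_{SM}}(\tilde{\epsilon}_i)$.

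First I would show that commuting with $H_{SM}$ forces $U_{\text{inc}}$ to commute with each spectral projector $P_i$. The standard argument: if $\ket{v} \in \text{Eig}_{H_{SM}}(\tilde{\epsilon}_i)$, then $H_{SM} U_{\text{inc}} \ket{v} = U_{\text{inc}} H_{SM} \ket{v} = \tilde{\epsilon}_i U_{\text{inc}} \ket{v}$, so $U_{\text{inc}} \ket{v}$ is again an eigenvector of $H_{SM}$ with eigenvalue $\tilde{\epsilon}_i$, i.e., $U_{\text{inc}} \ket{v} \in \text{Eig}_{H_{SM}}(\tilde{\epsilon}_i)$. Hence $U_{\text{inc}} \big(\text{Eig}_{H_{SM}}(\tilde{\epsilon}_i)\big) \subseteq \text{Eig}_{H_{SM}}(\tilde{\epsilon}_i)$. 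Since $U_{\text{inc}}$ is invertible (unitary) and the eigenspaces are finite-dimensional, this inclusion is actually an equality, so $U_{\text{inc}}$ restricts to a unitary $U_{\tilde{\epsilon}_i} := U_{\text{inc}}|_{\text{Eig}_{H_{SM}}(\tilde{\epsilon}_i)}$ on each eigenspace.

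Then I would assemble the blocks: since $\mathcal{H}_{SM}$ is the orthogonal direct sum of the eigenspaces $\text{Eig}_{H_{SM}}(\tilde{\epsilon}_i)$, and each is invariant under $U_{\text{inc}}$, the operator $U_{\text{inc}}$ is completely determined by its restrictions, giving $U_{\text{inc}} = \bigoplus_{i=0}^{k_{SM}-1} U_{\tilde{\epsilon}_i}$ with $U_{\tilde{\epsilon}_i} : \text{Eig}_{H_{SM}}(\tilde{\epsilon}_i) \rightarrow \text{Eig}_{H_{SM}}(\tilde{\epsilon}_i)$. Each $U_{\tilde{\epsilon}_i}$ inherits unitarity from $U_{\text{inc}}$ restricted to an invariant subspace, which is worth noting even though the statement only asks for the direct sum structure.

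Honestly, there is no real obstacle here — the result is elementary linear algebra (essentially: an operator commuting with a Hermitian operator is block-diagonal in the latter's eigenbasis). The only mild subtlety to be careful about is the invertibility step — one should explicitly invoke that $U_{\text{inc}}$ maps the eigenspace onto itself (not merely into itself) so that the block $U_{\tilde{\epsilon}_i}$ is genuinely an endomorphism of $\text{Eig}_{H_{SM}}(\tilde{\epsilon}_i)$; in finite dimensions this follows from injectivity plus a dimension count, but it is exactly the point where the argument would fail for a non-injective $H_{SM}$-commuting map. I expect the author's proof to be a couple of lines along precisely these lines.
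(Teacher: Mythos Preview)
Your proposal is correct and follows essentially the same route as the paper: pick an eigenvector, use $[U_{\text{inc}},H_{SM}]=0$ to show $U_{\text{inc}}$ maps it into the same eigenspace, and then invoke the direct-sum decomposition of $\mathcal{H}_{SM}$ into eigenspaces. Your extra remark about invertibility/surjectivity is a nice touch that the paper leaves implicit, but otherwise the arguments coincide.
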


\begin{proof}
This is a direct consequence of $[U_{\text{in}}, H_{SM}]=0$. Indeed, let $\text{Eig}_{H_{SM}}(\tilde{\epsilon})$ be the eigenspace of $H_{SM}$ with eigenvalue $\tilde{\epsilon}$. Let $ \ket{i}_{SM} \in \text{Eig}_{H_{SM}}(\tilde{\epsilon})$. Per definition $H_{SM} \ket{i}_{SM} = \tilde{\epsilon} \ket{i}_{SM}$.
Furthermore, as $[U_{\text{inc}}, H_{SM}] = 0$, we have
\begin{align}
U_{\text{inc}} H_{SM} \ket{i}_{SM} &= H_{SM} U_{\text{inc}} \ket{i}_{SM}\\
\Leftrightarrow \tilde{\epsilon} (U_{\text{inc}} \ket{i}_{SM}) &= H_{SM} (U_{\text{inc}} \ket{i}_{SM}),
\end{align} 
which shows that $ U_{\text{inc}} \ket{i}_{SM} \in \text{Eig}_{H_{SM}}(\tilde{\epsilon})$. This means that every energy eigenspace is invariant under $U_{\text{inc}}$ and so as the whole vector space can be decomposed as a direct sum of $\text{Eig}_{H_{SM}}(\tilde{\epsilon})$, $U_{\text{inc}} = \oplus_{\tilde{\epsilon}} U_{\tilde{\epsilon}}$ as desired.
\end{proof}

\begin{lemma} \label{lemma:staydiag}
If $\sigma_S$ is diagonal in $(\ket{i}_S)_{i=0}^{d_S-1}$ up to local energy conserving unitaries, then so is $\Lambda_{\text{inc}}(\sigma_S)$.
\end{lemma}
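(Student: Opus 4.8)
The plan is to unpack the hypothesis and use the structure of $\Lambda_{\text{inc}}$ together with Lemma~\ref{lemma:directsum}. First I would make the statement precise: saying that $\sigma_S$ is diagonal in $(\ket{i}_S)_{i=0}^{d_S-1}$ up to local energy conserving unitaries means there exists a unitary $V_S$ with $[V_S,H_S]=0$ such that $V_S \sigma_S V_S^{\dagger}$ is diagonal in the energy eigenbasis. The goal is then to exhibit a unitary $W_S$ with $[W_S,H_S]=0$ such that $W_S \Lambda_{\text{inc}}(\sigma_S) W_S^{\dagger}$ is diagonal. Recall that $\Lambda_{\text{inc}}(\sigma_S)=\Tr_M(U_{\text{inc}}\, \sigma_S\otimes \rho_M^{R,H}\, U_{\text{inc}}^{\dagger})$, where $\rho_M^{R,H}=\tau_{M_R}\otimes\tau_{M_H}^H$ is diagonal in the energy eigenbasis of $M$ (being a product of thermal states), and $[U_{\text{inc}},H_{SM}]=0$.

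The key steps, in order. First, conjugate the whole input by $V_S\otimes\mathds{1}_M$: since $[V_S,H_S]=0$ and $\rho_M^{R,H}$ is diagonal, $(V_S\otimes\mathds{1}_M)(\sigma_S\otimes\rho_M^{R,H})(V_S^{\dagger}\otimes\mathds{1}_M) = (V_S\sigma_S V_S^{\dagger})\otimes\rho_M^{R,H}$ is diagonal in the energy eigenbasis of $SM$. Second, note that $V_S\otimes\mathds{1}_M$ commutes with $H_{SM}=H_S\otimes\mathds{1}+\mathds{1}\otimes H_M$, so $U_{\text{inc}}' := U_{\text{inc}}(V_S^{\dagger}\otimes\mathds{1}_M)$ is again energy conserving, and $\Lambda_{\text{inc}}(\sigma_S) = \Tr_M\big(U_{\text{inc}}'\, (V_S\sigma_S V_S^{\dagger})\otimes\rho_M^{R,H}\, U_{\text{inc}}'^{\dagger}\big)$. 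Hence without loss of generality we may assume the input state $\sigma_S\otimes\rho_M^{R,H}$ is already diagonal in the energy eigenbasis of $SM$. Third, invoke Lemma~\ref{lemma:directsum}: $U_{\text{inc}}=\bigoplus_i U_{\tilde\epsilon_i}$ with each block acting within a single eigenspace $\text{Eig}_{H_{SM}}(\tilde\epsilon_i)$. Because the input is diagonal and each $U_{\tilde\epsilon_i}$ preserves its eigenspace, the off-diagonal elements of $U_{\text{inc}}\,\rho\,U_{\text{inc}}^{\dagger}$ can only occur between basis vectors $\ket{kl}_{SM}$, $\ket{k'l'}_{SM}$ lying in the \emph{same} degenerate eigenspace, i.e. with $E_k+\mathcal{E}_l = E_{k'}+\mathcal{E}_{l'}$. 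Fourth, trace out $M$: the surviving off-diagonal terms of $\Tr_M(U_{\text{inc}}\,\rho\,U_{\text{inc}}^{\dagger})$ are of the form $\sum_l [\,\cdot\,]_{kl,k'l}$, contributing $\ket{k}\bra{k'}_S$ only when there exists $l$ with $E_k+\mathcal{E}_l=E_{k'}+\mathcal{E}_l$, that is $E_k=E_{k'}$. So $\Lambda_{\text{inc}}(\sigma_S)$ can only have coherences between degenerate energy levels of $S$, which is exactly the statement that it is block-diagonal across distinct energy eigenvalues of $H_S$ — equivalently, diagonal up to a local energy conserving unitary that diagonalizes each degenerate block.

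The main obstacle is the bookkeeping in the last step: one has to carefully track which matrix elements of $U_{\text{inc}}\,\rho\,U_{\text{inc}}^{\dagger}$ survive the partial trace over $M$ and argue that a nonzero $\ket{k}\bra{k'}_S$ component forces $E_k=E_{k'}$. Concretely, $[\Tr_M(U_{\text{inc}}\rho U_{\text{inc}}^{\dagger})]_{k,k'} = \sum_l [U_{\text{inc}}\rho U_{\text{inc}}^{\dagger}]_{kl,k'l}$, and by Lemma~\ref{lemma:directsum} each term vanishes unless $\ket{kl}_{SM}$ and $\ket{k'l}_{SM}$ share the same total energy, i.e. $E_k+\mathcal{E}_l=E_{k'}+\mathcal{E}_l$, hence $E_k=E_{k'}$. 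Thus if $E_k\neq E_{k'}$ the whole sum is zero. Finally, choosing a local unitary $W_S$ that is block-diagonal with respect to the eigenspaces of $H_S$ (hence $[W_S,H_S]=0$) and diagonalizes each such block yields a state $W_S\Lambda_{\text{inc}}(\sigma_S)W_S^{\dagger}$ diagonal in $(\ket{i}_S)_{i=0}^{d_S-1}$, which is what we wanted. One should also remark that the same block-structure argument shows $\Lambda_{\text{inc}}$ never creates coherence between non-degenerate levels, consistent with the name of the scenario.
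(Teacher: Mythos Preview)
Your proof is correct and follows essentially the same approach as the paper: both use Lemma~\ref{lemma:directsum} to conclude that $U_{\text{inc}}$ preserves the energy eigenspaces of $H_{SM}$, then track which matrix elements survive the partial trace over $M$ to show that the output can only have coherences between degenerate levels of $H_S$, hence is diagonalizable by a local energy-conserving unitary.

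The one noteworthy difference is your initial WLOG reduction: you absorb the diagonalizing unitary $V_S$ into $U_{\text{inc}}$ (using that $V_S\otimes\mathds{1}_M$ commutes with $H_{SM}$) so that you may assume the input $\sigma_S\otimes\rho_M^{R,H}$ is already diagonal. The paper instead keeps $\sigma_S$ in its block-diagonal form and carries out an explicit index computation with the degenerate eigenvectors $\ket{v_{\tilde E_i}^{j_i}}_S$ and expansion coefficients of $U_{\text{inc}}$. Your reduction buys a cleaner and more conceptual argument with less index bookkeeping; the paper's route is more explicit but heavier. Both arrive at the same conclusion by the same mechanism.
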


\begin{proof}
This is a straightforward calculation. For it, it is convenient to introduce a notation for the energy eigenbasis that distinguishes if an energy eigenvector has the same eigenvalue as another one or not. For $ i \in \{0, \dots, k_S-1\}$, let $n_i = \text{dim}\left( \text{Eig}_{H_S} (\tilde{E}_i) \right)$. For $i=0,\dots, k_S-1$ and $j_i=0, \dots, n_i-1$ let $\ket{v_{\tilde{E}_i}^{j_i}}_S$ be such that
\begin{equation}
H_S \ket{v_{\tilde{E}_i}^{j_i}}_S = E_i \ket{v_{\tilde{E}_i}^{j_i}}_S.
\end{equation}

That $\sigma_S$ is diagonal in the energy eigenbasis up to local energy conserving unitaries means that there exists a $U_S$, with $[U_S,H_S]=0$, such that $U_S \sigma_S U_S^{\dagger}$ is diagonal in the energy eigenbasis. Lemma \ref{lemma:directsum} implies that $\sigma_S$ can only have off diagonal entries within each eigenspace of $H_S$, i.e.,

\begin{equation}
\sigma_S= \sum_{i=0}^{k_S-1} \sum_{j_i,m_i=0}^{n_i-1}  (\sigma_S)_{i}^{j_i, m_i} \ket{v_{\tilde{E}_i}^{j_i}} \bra{v_{\tilde{E}_i}^{m_i}}_S,
\end{equation}
with $(\sigma_S)_{i}^{j_i, m_i} \in \mathbb{C}$ for  $i=0,\dots, k_S-1$ and $j_i, m_i=0, \dots, n_i-1$. Then

\begin{align}
\Lambda_{\text{inc}}(\sigma_S)&= \Tr_M ( U_{\text{inc}} \sigma_S \otimes \tau_M U_{\text{inc}}^{\dagger})\\
&= \Tr_M ( U_{\text{inc}} \sum_{i=0}^{k_S-1} \sum_{j_i,m_i=0}^{n_i-1}(\sigma_S)_{i}^{j_i, m_i} \ket{v_{\tilde{E}_i}^{j_i}} \bra{v_{\tilde{E}_i}^{m_i}}_S \otimes \sum_{k=0}^{d_M-1} (\tau_M)_k \ket{k} \bra{k}_M U_{\text{inc}}^{\dagger})\\
&=\Tr_M ( \sum_{i=0}^{k_S-1} \sum_{j_i,m_i=0}^{n_i-1} \sum_{k=0}^{d_M-1} (\sigma_S)_{i}^{j_i, m_i}  (\tau_M)_k U_{\text{inc}} \ket{v_{\tilde{E}_i}^{j_i} k} \bra{v_{\tilde{E}_i}^{m_i} k}_{SM} U_{\text{inc}}^{\dagger})\\
\end{align}

From Lemma \ref{lemma:directsum} we have that 

\begin{align}
U_{\text{inc}} \ket{v_{\tilde{E}_i}^{j_i} k}_{SM}&= \sum_{\tilde{E}_p+\mathcal{E}_q=\tilde{E}_i+\mathcal{E}_k} \sum_{l_p=0}^{n_p-1} a_{pq}^{l_p}(i,j_i,k) \ket{v_{\tilde{E}_p}^{l_p} q}_{SM}\\
 \bra{v_{\tilde{E}_i}^{m_i} k}_{SM} U_{\text{inc}}^{\dagger}&= \sum_{\tilde{E}_r+\mathcal{E}_s=\tilde{E}_i+\mathcal{E}_k} \sum_{t_r=0}^{n_r-1} \bar{a}_{rs}^{t_r}(i,m_i,k) \bra{v_{\tilde{E}_r}^{t_r} s}_{SM},
\end{align}
with $a_{pq}^{l_p}(i,j_i,k) \in \mathbb{C}$ for $i,p=0,\dots, k_S-1$, $j_i, l_p=0, \dots, n_i-1$, and $ k,q=0, \dots, d_M-1 $. The sum $ \tilde{E}_p+\mathcal{E}_q=\tilde{E}_i+\mathcal{E}_k$ is to be understood as summing over all p and q such that $ \tilde{E}_p+\mathcal{E}_q=\tilde{E}_i+\mathcal{E}_k$. With this we have

\begin{align}
\Lambda_{\text{inc}}(\sigma_S)&= \Tr_M \left( \sum_{i=0}^{k_S-1} \sum_{j_i,m_i=0}^{n_i-1}  \sum_{k=0}^{d_M-1} \sum_{\tilde{E}_p+\mathcal{E}_q=\tilde{E}_i+\mathcal{E}_k} \sum_{l_p=0}^{n_p-1} \sum_{\tilde{E}_r+\mathcal{E}_s=\tilde{E}_i+\mathcal{E}_k} \sum_{t_r=0}^{n_r-1} \right.\\
& \quad \left. (\sigma_S)_{i}^{j_i, m_i}  (\tau_M)_k  a_{pq}^{l_p}(i,j_i,k) \bar{a}_{rs}^{t_r}(i,m_i,k) \ket{v_{\tilde{E}_p}^{l_p} q} \bra{v_{\tilde{E}_r}^{t_r} s}_{SM}\right)\\
&=  \sum_{i=0}^{k_S-1} \sum_{j_i,m_i=0}^{n_i-1}  \sum_{k=0}^{d_M-1} \sum_{\tilde{E}_p+\mathcal{E}_q=\tilde{E}_i+\mathcal{E}_k} \sum_{l_p=0}^{n_p-1} \sum_{\tilde{E}_r+\mathcal{E}_s=\tilde{E}_i+\mathcal{E}_k} \sum_{t_r=0}^{n_r-1}\\
& \quad (\sigma_S)_{i}^{j_i, m_i}  (\tau_M)_k  a_{pq}^{l_p}(i,j_i,k) \bar{a}_{rs}^{t_r}(i,m_i,k) \ket{v_{\tilde{E}_p}^{l_p}} \bra{v_{\tilde{E}_r}^{t_r}}_S \delta_{qs}\\
&=  \sum_{i=0}^{k_S-1} \sum_{j_i,m_i=0}^{n_i-1}  \sum_{k=0}^{d_M-1} \sum_{\tilde{E}_p=\tilde{E}_i+\mathcal{E}_k-\mathcal{E}_q} \sum_{l_p=0}^{n_p-1} \sum_{t_p=0}^{n_p-1}\\
& \quad (\sigma_S)_{i}^{j_i, m_i}  (\tau_M)_k  a_{pq}^{l_p}(i,j_i,k) \bar{a}_{pq}^{t_p}(i,m_i,k) \ket{v_{\tilde{E}_p}^{l_p}} \bra{v_{\tilde{E}_p}^{t_p}}_S,
\end{align}
where in the last step we used that $q=s$ implies
\begin{equation}
\tilde{E}_p=\tilde{E}_i+\mathcal{E}_k-\mathcal{E}_q=\tilde{E}_i+\mathcal{E}_k-\mathcal{E}_s=\tilde{E}_r,
\end{equation}
meaning $p=r$.
So $\Lambda_{\text{inc}}(\sigma_S)$ is block diagonal with each block in $ \text{Eig}_{H_S}(\tilde{E}_i)$ for some $i=0,\dots, k_S$. As $\Lambda_{\text{inc}}(\sigma_S)$ is positive semi-definite, so is each one of its block. This implies that each of its block is unitarily diagonalizable. Each of this unitary is of course local. They also are energy conserving since within each block $H_S \propto \mathds{1}$. This ends the proof.
\end{proof}

The result of Lemma \ref{lemma:staydiag} greatly simplifies our view of the reachable states in the incoherent paradigms. Indeed, our initial state of the system, $\rho_S$, is diagonal in the energy eigenbasis. Lemma \ref{lemma:staydiag} implies that $\Lambda_{\text{inc}}(\rho_S)$ is diagonal up to local energy conserving unitaries. Applying Lemma \ref{lemma:staydiag} to $\Lambda_{\text{inc}}(\rho_S)$, we find that the same holds for $\Lambda^{(2)}_{\text{inc}}(\rho_S)$, and therefore for any $\Lambda^{(k)}_{\text{inc}}(\rho_S)$, $k \in \mathbb{N} \cup \{\infty\}$. Furthermore, since local energy conserving unitaries are allowed within the incoherent paradigm and assumed to be for free, we can w.l.o.g. assume $\Lambda^{(k)}_{\text{inc}}(\rho_S)$ to be diagonal in the energy eigenbasis for any $k \in \mathbb{N} \cup \{\infty\}$. \\

This result finally also formally elucidates the name of the scenario. The scenario is indeed called incoherent since given that its initial state is diagonal, i.e., incoherent, in the energy eigenbasis, it will stay so after any application of the scenario. We are now ready to move to our next remark.

\begin{lemma} \label{lemma:degenerate}
$\Lambda_{\text{inc}}$ can only cool the target if $U_{\text{inc}}$ acts in the degenerate subspaces of $H_{SM}$.
\end{lemma}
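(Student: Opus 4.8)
The plan is to prove the contrapositive: if $U_{\text{inc}}$ does not mix any two degenerate energy eigenstates of $H_{SM}$ --- equivalently, if $U_{\text{inc}}$ is diagonal in the energy eigenbasis $(\ket{i}_{SM})_{i=0}^{d_{SM}-1}$ --- then $\Lambda_{\text{inc}}(\rho_S)$ has exactly the same diagonal as $\rho_S$, so no cooling in any of the senses of Section~\ref{sec:temperature} (in particular none of the partial sums of Eq.~\ref{equ:ambitiouscool}) can occur. Throughout I would use Lemma~\ref{lemma:staydiag} and the remark following it to assume without loss of generality that $\rho_S$ is diagonal in $(\ket{i}_S)_{i=0}^{d_S-1}$; since $\rho_M^{R,H}=\tau_{M_R}\otimes\tau_{M_H}^H$ is a product of thermal states, it is diagonal in the energy eigenbasis of $H_M$, so the joint input $\rho_{SM}:=\rho_S\otimes\rho_M^{R,H}$ is diagonal in $(\ket{i}_{SM})$.

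First I would invoke Lemma~\ref{lemma:directsum}: $U_{\text{inc}}=\oplus_{i=0}^{k_{SM}-1}U_{\tilde{\epsilon}_i}$ with $U_{\tilde{\epsilon}_i}$ a unitary on $\text{Eig}_{H_{SM}}(\tilde{\epsilon}_i)$. Whenever $\dim\text{Eig}_{H_{SM}}(\tilde{\epsilon}_i)=1$, i.e. the energy level $\tilde{\epsilon}_i$ is non-degenerate, $U_{\tilde{\epsilon}_i}$ is just multiplication by a phase and therefore acts trivially under conjugation. Hence the only blocks through which $U_{\text{inc}}$ can act nontrivially by conjugation are the degenerate ones. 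Suppose now that on each degenerate block $U_{\tilde{\epsilon}_i}$ is diagonal as well --- in particular it does not mix the eigenvectors $\ket{kl}_{SM}$ with $E_k+\mathcal{E}_l=\tilde{\epsilon}_i$. Then $U_{\text{inc}}$ is diagonal in the full energy eigenbasis, say $U_{\text{inc}}=\sum_{i=0}^{d_{SM}-1}u_i\ket{i}\bra{i}_{SM}$ with $\abs{u_i}=1$.

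A short computation then finishes the argument. One has $[U_{\text{inc}}\rho_{SM}U_{\text{inc}}^{\dagger}]_{ij}=u_i\bar{u}_j[\rho_{SM}]_{ij}$, so $[U_{\text{inc}}\rho_{SM}U_{\text{inc}}^{\dagger}]_{ii}=\abs{u_i}^2[\rho_{SM}]_{ii}=[\rho_{SM}]_{ii}$ for every $i$, i.e. $\mathfrak{D}(U_{\text{inc}}\rho_{SM}U_{\text{inc}}^{\dagger})=\mathfrak{D}(\rho_{SM})$. Tracing out the machine, $[\Lambda_{\text{inc}}(\rho_S)]_{kk}=\sum_{l=0}^{d_M-1}[U_{\text{inc}}\rho_{SM}U_{\text{inc}}^{\dagger}]_{kl,kl}=\sum_{l=0}^{d_M-1}[\rho_{SM}]_{kl,kl}=[\rho_S]_{kk}$ for all $k$, so $\mathfrak{D}(\Lambda_{\text{inc}}(\rho_S))=\mathfrak{D}(\rho_S)$ and in particular no partial sum $\sum_{k=0}^{l}[\,\cdot\,]_{kk}$ changes. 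Taking the contrapositive, cooling forces $U_{\text{inc}}$ to mix at least two degenerate eigenstates of $H_{SM}$, i.e. to act nontrivially within a degenerate subspace, which is the claim.

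I do not expect any real obstacle: the statement is bookkeeping layered on top of Lemmas~\ref{lemma:directsum} and~\ref{lemma:staydiag}. The one point that warrants care is the reduction at the outset --- one genuinely needs the joint input $\rho_{SM}$ to be diagonal in the energy eigenbasis for the implication ``$U_{\text{inc}}$ diagonal $\Rightarrow$ diagonal of the state unchanged'' to be valid, and this is precisely what Lemma~\ref{lemma:staydiag} supplies (local energy-conserving unitaries being free). One should also spell out, as above, that preserving the diagonal of the \emph{joint} state $\rho_{SM}$ forces the diagonal of the \emph{target} marginal to be preserved, since $[\Lambda_{\text{inc}}(\rho_S)]_{kk}$ is just a block sum of entries of $\mathfrak{D}(U_{\text{inc}}\rho_{SM}U_{\text{inc}}^{\dagger})$.
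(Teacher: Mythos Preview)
Your proposal is correct and follows essentially the same route as the paper's proof. Both arguments invoke Lemma~\ref{lemma:directsum} to decompose $U_{\text{inc}}$ blockwise on the eigenspaces of $H_{SM}$, observe that one-dimensional blocks act as mere phases, use Lemma~\ref{lemma:staydiag} to take the joint input diagonal in $(\ket{i}_{SM})$, and conclude that the diagonal of the state---hence the marginal on $S$---is unchanged; the paper treats each non-degenerate block $U_{\text{inc}}^{\epsilon}$ separately whereas you package the whole diagonal $U_{\text{inc}}$ at once, but this is purely presentational.
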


\begin{proof}
This proof is again straightforward. From Lemma \ref{lemma:directsum} we know that $U_{\text{inc}}$ only acts within the energy eigenspaces of $H_{SM}$, i.e., $U_{\text{inc}} = \Pi_{\epsilon} U_{\text{inc}}^{\epsilon}$, with

\begin{equation} \label{eq:oplusunitary}
U_{\text{inc}}^{\epsilon} = U_{\text{inc}}\big|_{\text{Eig}_{H_{SM}}(\epsilon)} \oplus \mathds{1} \big|_{ \text{Eig}^c_{H_{SM}}(\epsilon)}.
\end{equation}

We therefore have only left to show that if $\text{dim}(\text{Eig}_{H_{SM}}(\epsilon))=1$, then $U_{\text{inc}}^{\epsilon}$ does not affect the temperature of the target system. Let $\epsilon$ be an eigenvalue of $H_{SM}$ with $\text{dim}(\text{Eig}_{H_{SM}}(\epsilon))=1$. Let $\ket{v}_{SM} \in \text{Eig}_{H_{SM}}(\epsilon)$. As $\text{dim}(\text{Eig}_{H_{SM}}(\epsilon))=1$, there exists a $\lambda \in \mathbb{C}$ such that 
\begin{equation}
U_{\text{inc}}^{\epsilon} \ket{v}_{SM} = \lambda \ket{v}_{SM},
\end{equation}
meaning that $\ket{v}_{SM}$ is an eigenvector of $U_{\text{inc}}^{\epsilon}$ with eigenvalue $\lambda$. Since $U_{\text{inc}}^{\epsilon}$ is unitary, $\lambda = e^{i \theta}$ for some $ \theta \in [0,2 \pi)$ and so
\begin{equation}
U_{\text{inc}}^{\epsilon} \ket{v} \bra{v}_{SM} (U_{\text{inc}}^{\epsilon})^{\dagger}= \ket{v} \bra{v}_{SM}.
\end{equation}
Trivially
\begin{equation}
U_{\text{inc}}^{\epsilon} \ket{w} \bra{w}_{SM} (U_{\text{inc}}^{\epsilon})^{\dagger}= \ket{w} \bra{w}_{SM}, \quad \forall \ket{w}_{SM} \notin \text{Eig}_{H_{SM}}(\epsilon).
\end{equation}

 From Lemma \ref{lemma:staydiag} we can w.l.o.g. assume $ \Lambda_{\text{inc}}^{(k)}(\rho_S)$ to be diagonal in $(\ket{i}_S)_{i=0}^{d_S-1}$ for any $k \in \mathbb{N}$. Therefore $\Lambda_{\text{inc}}^{(k)}(\rho_S) \otimes \tau_M $ is diagonal in $ (\ket{i}_{SM})_{i=0}^{d_{SM}}$ and
 \begin{equation}
 U_{\text{inc}}^{\epsilon} \Lambda_{\text{inc}}^{(k)}(\rho_S) \otimes \tau_M (U_{\text{inc}}^{\epsilon})^{\dagger} = \Lambda_{\text{inc}}^{(k)}(\rho_S) \otimes \tau_M,
 \end{equation}
 showing that $U_{\text{inc}}^{\epsilon}$ cannot cool the target system, as desired.
\end{proof}

It is tempting to claim that: 

\say{Lemma \ref{lemma:degenerate} conceptually solves the incoherent scenario, or at least reduces it to its coherent counterpart. Indeed, all what one needs to do is identify the degenerate subspaces of $H_{SM}$ and then apply the coherent scenario within each of these subspaces.}

 While this is true in essence, it oversees two subtleties. The first one is that one does not exactly apply the coherent scenario within each subspace. The subspaces in which the unitaries are applied are indeed in general not of the form of a tensor product of the target system with some part of the machine. Moreover, the state of the system and machine within these subspaces is not necessarily thermal at $T_R$, part of the machine is indeed heated up to $T_H$ before the unitary is applied. This heating up of part of the machine is actually crucial. Without it the state of SM is proportional to the identity on those subspaces and hence no cooling can be performed. Indeed
 
 \begin{lemma} \label{lemma:roommachine}
 Let $\rho_{SM}^{R,H}=\tau_S \otimes \tau_M$. Then for every energy conserving unitary $U_{\text{inc}}$,
 \begin{equation}
 U_{\text{inc}} \; \rho_{SM}^{R,H} \; U_{\text{inc}}^{\dagger} = \rho_{SM}^{R,H}.
 \end{equation}
 \end{lemma}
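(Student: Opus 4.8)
The plan is to observe that $\rho_{SM}^{R,H}=\tau_S\otimes\tau_M$ is a thermal state of the full Hamiltonian $H_{SM}=H_S\otimes\mathds{1}+\mathds{1}\otimes H_M$ at the background temperature $T_R$, hence commutes with $H_{SM}$, and then combine this with the energy-conserving property $[U_{\text{inc}},H_{SM}]=0$ to conclude that $U_{\text{inc}}$ commutes with $\rho_{SM}^{R,H}$.

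First I would make explicit that $\tau_S\otimes\tau_M=\dfrac{e^{-\beta_R H_S}}{\Tr(e^{-\beta_R H_S})}\otimes\dfrac{e^{-\beta_R H_M}}{\Tr(e^{-\beta_R H_M})}=\dfrac{e^{-\beta_R(H_S\otimes\mathds{1}+\mathds{1}\otimes H_M)}}{\Tr(e^{-\beta_R H_{SM}})}=\dfrac{e^{-\beta_R H_{SM}}}{\Tr(e^{-\beta_R H_{SM}})}$, using that $H_S\otimes\mathds{1}$ and $\mathds{1}\otimes H_M$ commute, so that $e^{-\beta_R H_S}\otimes e^{-\beta_R H_M}=e^{-\beta_R H_{SM}}$, and that $\Tr(e^{-\beta_R H_{SM}})=\Tr(e^{-\beta_R H_S})\Tr(e^{-\beta_R H_M})$. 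In particular $\rho_{SM}^{R,H}$ is a function of $H_{SM}$, hence $[\rho_{SM}^{R,H},H_{SM}]=0$.

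Next I would use Lemma~\ref{lemma:directsum}, which gives $U_{\text{inc}}=\oplus_{i=0}^{k_{SM}-1}U_{\tilde{\epsilon}_i}$ with each $U_{\tilde{\epsilon}_i}$ acting within the eigenspace $\text{Eig}_{H_{SM}}(\tilde{\epsilon}_i)$. Since $\rho_{SM}^{R,H}=g(H_{SM})$ for $g(x)=e^{-\beta_R x}/\Tr(e^{-\beta_R H_{SM}})$, on each eigenspace $\text{Eig}_{H_{SM}}(\tilde{\epsilon}_i)$ the operator $\rho_{SM}^{R,H}$ acts as the scalar $g(\tilde{\epsilon}_i)\mathds{1}$, which commutes with $U_{\tilde{\epsilon}_i}$. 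Taking the direct sum over $i$ gives $U_{\text{inc}}\rho_{SM}^{R,H}=\rho_{SM}^{R,H}U_{\text{inc}}$, and therefore $U_{\text{inc}}\,\rho_{SM}^{R,H}\,U_{\text{inc}}^{\dagger}=\rho_{SM}^{R,H}U_{\text{inc}}U_{\text{inc}}^{\dagger}=\rho_{SM}^{R,H}$, as claimed. (Alternatively, one can skip Lemma~\ref{lemma:directsum} entirely and argue directly that any two commuting operators $A=\rho_{SM}^{R,H}$ and $B=H_{SM}$, together with $[U_{\text{inc}},B]=0$, need not imply $[U_{\text{inc}},A]=0$ in general — so the spectral/direct-sum argument really is needed, exploiting that $\rho_{SM}^{R,H}$ is not merely commuting with $H_{SM}$ but an actual function of it.)

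There is essentially no hard part here; the only thing to be careful about is the logical gap that $[U_{\text{inc}},H_{SM}]=0$ alone does not force $[U_{\text{inc}},\rho_{SM}^{R,H}]=0$ for an arbitrary operator commuting with $H_{SM}$ — the argument genuinely uses that $\rho_{SM}^{R,H}$ is constant on each energy eigenspace of $H_{SM}$, i.e.\ that it is a function of $H_{SM}$ with no finer structure. Making that point cleanly, via either the spectral decomposition of $H_{SM}$ or Lemma~\ref{lemma:directsum}, is the whole content of the proof.
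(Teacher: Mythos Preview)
Your proof is correct and follows the same approach as the paper: both recognize that $\tau_S\otimes\tau_M=e^{-\beta_R H_{SM}}/\Tr(e^{-\beta_R H_{SM}})$ is a function of $H_{SM}$, and then use $[U_{\text{inc}},H_{SM}]=0$ to conclude invariance. The paper is simply terser, writing $U_{\text{inc}}\,e^{-\beta_R H_{SM}}\,U_{\text{inc}}^{\dagger}=e^{-\beta_R H_{SM}}$ directly from the commutation relation without unpacking the eigenspace decomposition; your detour through Lemma~\ref{lemma:directsum} just makes that functional-calculus step explicit.
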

 
 \begin{proof}
 The proof is a direct consequence of $[U_{\text{inc}},H_{SM}]=0$. Indeed as
 \begin{equation}
 \tau_S \otimes \tau_M = \frac{e^{-\beta_R H_{SM}}}{\Tr(e^{-\beta_R H_{SM}})},
 \end{equation}
 we immediately see that
 
\begin{align}
U_{\text{inc}} \tau_S \otimes \tau_M U_{\text{inc}}^{\dagger} &= U_{\text{inc}} \frac{e^{-\beta_R H_{SM}}}{\Tr(e^{-\beta_R H_{SM}})} U_{\text{inc}}^{\dagger}\\
&=\frac{e^{-\beta_R H_{SM}}}{\Tr(e^{-\beta_R H_{SM}})}\\
&= \tau_S \otimes \tau_M.
\end{align}
 \end{proof}
 
 Since our initial state is thermal, there is another use of the hot thermal bath that is generically useless for cooling, namely using it to thermalize the entire machine at $T_H$.
 
  \begin{lemma} \label{lemma:hotmachine}
 Let $\rho_{SM}^{R,H}=\tau_S \otimes \tau_M^H$. Then for every energy conserving unitary $U_{\text{inc}}$,
 \begin{equation}
 \Tr_M \left( U_{\text{inc}} \; \rho_{SM}^{R,H} \; U_{\text{inc}}^{\dagger}\right)
 \end{equation}
  cannot be colder than $ \tau_S $ according to our notion of temperature, i.e. according to our more ambitious view on cooling of Eq.~\ref{equ:ambitiouscool}.
 \end{lemma}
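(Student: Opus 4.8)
The plan is to prove the stronger statement that $\Tr_M(U_{\text{inc}}\,\rho_{SM}^{R,H}\,U_{\text{inc}}^\dagger)$ with $\rho_{SM}^{R,H}=\tau_S\otimes\tau_M^H$ is \emph{hotter than or equal to} $\tau_S$ in the sense of Eq.~\ref{equ:ambitiouscool}, i.e. that for every $l\in\{0,\dots,d_S-1\}$,
\[
\sum_{k=0}^{l}\bra{k}\Tr_M\!\big(U_{\text{inc}}(\tau_S\otimes\tau_M^H)U_{\text{inc}}^\dagger\big)\ket{k}_S \;\leq\; \sum_{k=0}^{l}\bra{k}\tau_S\ket{k}_S .
\]
Writing $F_l=\sum_{k=0}^{l}\ket{k}\bra{k}_S$ and using $\Tr_M(\tau_M^H)=1$, the left-hand side equals $\Tr\big(U_{\text{inc}}(\tau_S\otimes\tau_M^H)U_{\text{inc}}^\dagger\,(F_l\otimes\mathds{1}_M)\big)$ and the right-hand side equals $\Tr\big((\tau_S\otimes\tau_M^H)(F_l\otimes\mathds{1}_M)\big)$, so for each fixed $l$ the statement is a single operator inequality. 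Since the inequality holds for all $l$ only as ``$\le$'', $\Tr_M(\cdots)$ can never be colder than $\tau_S$ (being colder would force all these partial sums to be ``$\ge$'' with at least one strict).

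First I would decompose everything along the energy eigenspaces of $H_{SM}$. By Lemma~\ref{lemma:directsum}, $U_{\text{inc}}=\oplus_\epsilon U_\epsilon$ with $U_\epsilon$ acting on $\text{Eig}_{H_{SM}}(\epsilon)$; moreover $\tau_S\otimes\tau_M^H$ and $F_l\otimes\mathds{1}_M$ are both diagonal in the $SM$ energy eigenbasis, hence block-diagonal with respect to the same decomposition. Both traces therefore split into sums over the blocks $\epsilon$, and it suffices to prove the inequality block by block. On a fixed block, $H_S\otimes\mathds{1}+\mathds{1}\otimes H_M=\epsilon\,\mathds{1}$, so
\[
(\tau_S\otimes\tau_M^H)\big|_{\epsilon}=\frac{e^{-\beta_R H_S\otimes\mathds{1}-\beta_H\mathds{1}\otimes H_M}}{Z_S\,Z_M^H}\bigg|_{\epsilon}=c_\epsilon\,D_\epsilon,\qquad c_\epsilon=\frac{e^{-\beta_H\epsilon}}{Z_S\,Z_M^H}>0,
\]
where $D_\epsilon$ is the diagonal operator on the block carrying the entry $e^{-(\beta_R-\beta_H)E_k}$ on the eigenvector $\ket{kl}_{SM}$ (the pairs $(k,l)$ running over those with $E_k+\mathcal{E}_l=\epsilon$). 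In words, on each degenerate eigenspace the joint state is, up to a positive constant, thermal with respect to $H_S$ alone at the \emph{positive} inverse temperature $\beta_R-\beta_H\ge 0$ (positive precisely because $T_H>T_R$).

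The core estimate is then the following. On the block $\epsilon$, set $P=(F_l\otimes\mathds{1}_M)|_\epsilon$, a projector of some rank $r_\epsilon$, and $\tilde P=U_\epsilon^\dagger P\,U_\epsilon$, a projector of the same rank. The block's contribution to the left-hand side is $c_\epsilon\Tr(D_\epsilon\tilde P)$ and to the right-hand side is $c_\epsilon\Tr(D_\epsilon P)$. Since $\beta_R-\beta_H\ge 0$ and $E_0\le E_1\le\cdots$, the eigenvalues $e^{-(\beta_R-\beta_H)E_k}$ of $D_\epsilon$ are non-increasing in $k$, and $P$ projects exactly onto the vectors $\ket{kl}_{SM}$ in the block with $k\le l$, i.e. onto the eigenvectors of $D_\epsilon$ carrying its $r_\epsilon$ largest eigenvalues. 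Hence $\Tr(D_\epsilon P)$ is the maximum of $\Tr(D_\epsilon Q)$ over all rank-$r_\epsilon$ projectors $Q$ --- this is elementary (the diagonal of $Q$ in the eigenbasis of $D_\epsilon$ has entries in $[0,1]$ summing to $r_\epsilon$), and it is also a one-line consequence of the Schur--Horn theorem (Theorem~\ref{thm:Schur-Horn}), since $\mathfrak{D}(U_\epsilon D_\epsilon U_\epsilon^\dagger)\prec\mathfrak{D}(D_\epsilon)$ and $\{k\le l\}$ indexes the largest components of $\mathfrak{D}(D_\epsilon)$. In particular $\Tr(D_\epsilon\tilde P)\le\Tr(D_\epsilon P)$.

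Summing these block inequalities against the positive weights $c_\epsilon$ yields $\sum_{k=0}^{l}\bra{k}\Tr_M(U_{\text{inc}}(\tau_S\otimes\tau_M^H)U_{\text{inc}}^\dagger)\ket{k}_S\le\sum_{k=0}^{l}\bra{k}\tau_S\ket{k}_S$ for every $l$, which gives the lemma by the remark above. (For $T_H=T_R$ one has $D_\epsilon=\mathds{1}$ and recovers Lemma~\ref{lemma:roommachine}.) The one genuine point --- and the step I expect to be the main obstacle to state cleanly --- is the observation that, block by block, $\tau_S\otimes\tau_M^H$ is already in the population arrangement that an energy-conserving unitary would, at best, leave unchanged, namely the one decreasing in the target energy; once this is recognised, that no partial sum can increase follows by a simple ``top-$r_\epsilon$ eigenvalue'' argument, and the remainder is bookkeeping over the energy eigenspaces of $H_{SM}$.
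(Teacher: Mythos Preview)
Your proof is correct and follows essentially the same route as the paper: block-decompose along the eigenspaces of $H_{SM}$, observe that on each block the populations of $\tau_S\otimes\tau_M^H$ are already arranged decreasingly in the system energy (your factorisation $c_\epsilon\,D_\epsilon$ with $D_\epsilon$ carrying entries $e^{-(\beta_R-\beta_H)E_k}$ is just a cleaner way of stating the paper's equivalence $\bra{p_rq_r}\rho_{SM}^{R,H}\ket{p_rq_r}>\bra{p_sq_s}\rho_{SM}^{R,H}\ket{p_sq_s}\Leftrightarrow E_{p_r}<E_{p_s}$), and then invoke Schur/Ky Fan to conclude that no partial sum can increase. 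One cosmetic issue: you overload the symbol $l$ as both the partial-sum cutoff in $F_l$ and the machine index in $\ket{kl}_{SM}$, which makes the sentence ``$P$ projects exactly onto the vectors $\ket{kl}_{SM}$ in the block with $k\le l$'' read ambiguously; renaming the machine index would fix this.
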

 
 \begin{proof}
 From Lemma \ref{lemma:directsum}, we only need to look at the action of $U_{\text{inc}}$ on $\text{Eig}_{H_{SM}}(\tilde{\epsilon}_i)$, $i=0,\dots,k_{SM}$. For $\text{dim} \left(\text{Eig}_{H_{SM}} (\tilde{\epsilon}_i) \right)=1$ we already know the statement to be true. Let therefore $l \in \{0, \dots, k_{SM}-1\}$ such that 
 
 \begin{equation}
 \text{dim} \left( \text{Eig}_{H_{SM}}(\tilde{\epsilon}_l) \right)=n_l>1.
 \end{equation}
 
 That is
 
 \begin{equation}
 \text{Eig}_{H_{SM}} (\tilde{\epsilon}_l) = \text{span} \left\{ \ket{p_0 q_0}_{SM}, \dots, \ket{p_{n_l-1} q_{n_l-1}}_{SM}  \right\}
 \end{equation}
 for some $p_0,p_1,\dots,p_{n_l-1} \in \{0, \dots, d_S-1\}$ and $q_0,q_1,\dots,q_{n_l-1} \in \{0, \dots, d_M-1\}$. Note that 
 
 \begin{equation} \label{eq:degen}
 \tilde{\epsilon}_l=E_{p_0}+\mathcal{E}_{q_0}=E_{p_1}+\mathcal{E}_{q_1}= \dots= E_{p_{n_l-1}}+\mathcal{E}_{q_{n_l-1}}.
 \end{equation}
 
 So with
 
 \begin{equation}
 \rho_{SM}^{R,H} \big|_{\text{Eig}_{H_{SM}} (\tilde{\epsilon}_l)}= \sum_{r=0}^{n_l-1} \frac{e^{- \beta_R E_{p_r}}}{Z_S} \frac{e^{- \beta_H \mathcal{E}_{q_r}}}{Z_M^H} \ket{p_r q_r} \bra{p_r q_r}_{SM},
 \end{equation}
 
 where $Z_S = \Tr(e^{-\beta_R H_S}), \, Z_M^H = \Tr(e^{-\beta_H H_M})$, we have that
 \begin{specialalign}
 &\bra{p_r q_r} \rho_{SM}^{R,H} \ket{p_r q_r}_{SM}  > \bra{p_s q_s} \rho_{SM}^{R,H} \ket{p_s q_s}_{SM}\\
 \Leftrightarrow  &\; e^{- \beta_R E_{p_r}} e^{- \beta_H \mathcal{E}_{q_r}} > e^{- \beta_R E_{p_s}} e^{- \beta_H \mathcal{E}_{q_s}}\\
\Leftrightarrow  &\; e^{- \beta_R (E_{p_r}-E_{p_s})} >  e^{- \beta_H (\mathcal{E}_{q_s}-\mathcal{E}_{q_r})}\\
\stackrel{\text{Eq.}~\ref{eq:degen}}{\Leftrightarrow}  &\; e^{- \beta_R (E_{p_r}-E_{p_s})} >  e^{- \beta_H (E_{p_r}-E_{p_s})}\\
 \Leftrightarrow & \; \underbrace{(\beta_H - \beta_R)}_{<0} (E_{p_r}-E_{p_s}) >0\\
 \Leftrightarrow  &\; E_{p_r} < E_{p_s}.
 \end{specialalign}
 
 This implies that
 \begin{equation} \label{eq:restrictlyordered}
 \mathfrak{D}^{\downarrow} \left( \rho_{SM}^{R,H} \big|_{\text{Eig}_{H_{SM}} (\tilde{\epsilon}_l)} \right) =  \mathfrak{D} \left( \rho_{SM}^{R,H} \big|_{\text{Eig}_{H_{SM}} (\tilde{\epsilon}_l)} \right).
 \end{equation}
 As $\rho_{SM}^{R,H} \big|_{\text{Eig}_{H_{SM}} (\tilde{\epsilon}_l)}$ is diagonal in the energy eigenbasis, also
 
 \begin{equation}
 \mathfrak{D} \left( \rho_{SM}^{R,H} \big|_{\text{Eig}_{H_{SM}} (\tilde{\epsilon}_l)} \right) = \lambda \left( \rho_{SM}^{R,H} \big|_{\text{Eig}_{H_{SM}} (\tilde{\epsilon}_l)}\right).
 \end{equation}
 
 Furthermore, from Corollary \ref{cor:Uincluded} of Schur's Theorem, for every unitary $U^{\tilde{\epsilon}_l}_{\text{inc}}$ acting on $\text{Eig}_{H_{SM}} (\tilde{\epsilon}_l)$ we have that
 
 \begin{equation} \label{eq:diagmaj}
 \mathfrak{D} \left( U^{\tilde{\epsilon}_l}_{\text{inc}} \, \rho_{SM}^{R,H} \big|_{\text{Eig}_{H_{SM}} (\tilde{\epsilon}_l)} \left(U^{\tilde{\epsilon}_l}_{\text{inc}}\right)^{\dagger} \right) \prec  \mathfrak{D} \left( \rho_{SM}^{R,H} \big|_{\text{Eig}_{H_{SM}} (\tilde{\epsilon}_l)} \right).
 \end{equation}
 
 So for every $k \in \{0, \dots, d_S-1\}$,
 
 \begin{align}
 \sum_{i=0}^{k} \left[ \Tr_M \left( U^{\tilde{\epsilon}_l}_{\text{inc}} \, \rho_{SM}^{R,H} \left(U^{\tilde{\epsilon}_l}_{\text{inc}}\right)^{\dagger} \right) \right]_{ii} &= \sum_{i=0}^{k} \left[ \Tr_M \left( U^{\tilde{\epsilon}_l}_{\text{inc}} \, \rho_{SM}^{R,H}\big|_{\text{Eig}_{H_{SM}} (\tilde{\epsilon}_l)} \left(U^{\tilde{\epsilon}_l}_{\text{inc}}\right)^{\dagger} \right. \right. \\
 & \hspace{2cm} \left. \left. \oplus \rho_{SM}^{R,H}\big|_{\text{Eig}^c_{H_{SM}} (\tilde{\epsilon}_l)} \right) \right]_{ii} \\
 &= \sum_{i=0}^{k} \underbrace{\left[ \Tr_M \left( U^{\tilde{\epsilon}_l}_{\text{inc}} \, \rho_{SM}^{R,H}\big|_{\text{Eig}_{H_{SM}} (\tilde{\epsilon}_l)} \left(U^{\tilde{\epsilon}_l}_{\text{inc}} \right)^{\dagger} \right) \right]_{ii}}_{(*)} \\
 & \hspace{2cm} + \left[ \Tr_M \left( \rho_{SM}^{R,H}\big|_{\text{Eig}^c_{H_{SM}} (\tilde{\epsilon}_l)} \right) \right]_{ii}.
 \end{align}
 
 For the first equality we used that only the diagonal blocks of $U_{\text{inc}}^{\tilde{\epsilon}_l} \rho_{SM}^{R,H} \left( U_{\text{inc}}^{\tilde{\epsilon}_l} \right)^{\dagger}$, according to the $\text{Eig}_{H_{SM}}(\tilde{\epsilon}_l) \oplus \text{Eig}^c_{H_{SM}}(\tilde{\epsilon}_l)$ division, contribute to the diagonal elements of the reduced state on $S$. Note that $(*)$ is different from zero only if $i=p_k$ for some $k=0, \dots, n_l-1$. Also
 
 \begin{align}
(*) &= \sum_{l=0}^{n_l-1} \left[  U^{\tilde{\epsilon}_l}_{\text{inc}} \, \rho_{SM}^{R,H}\big|_{\text{Eig}_{H_{SM}} (\tilde{\epsilon}_l)} \left(U^{\tilde{\epsilon}_l}_{\text{inc}} \right)^{\dagger} \right]_{iq_l,iq_l} \\
 &= \sum_{l=0}^{n_l-1} \left[ \mathfrak{D} \left( U^{\tilde{\epsilon}_l}_{\text{inc}} \, \rho_{SM}^{R,H}\big|_{\text{Eig}_{H_{SM}} (\tilde{\epsilon}_l)} \left(U^{\tilde{\epsilon}_l}_{\text{inc}} \right)^{\dagger} \right) \right]_{iq_l} \\
 &\stackrel{Eq.~\ref{eq:diagmaj}}{\leq} \sum_{l=0}^{n_l-1} \left[ \mathfrak{D}^{\downarrow} \left( \, \rho_{SM}^{R,H}\big|_{\text{Eig}_{H_{SM}} (\tilde{\epsilon}_l)} \right) \right]_{i q_l} \\
  &\stackrel{Eq.~\ref{eq:restrictlyordered}}{=}  \left[ \Tr_M \left( \, \rho_{SM}^{R,H}\big|_{\text{Eig}_{H_{SM}} (\tilde{\epsilon}_l)} \right) \right]_{ii}.
 \end{align}
 
 So we have that for every $k \in \{0, \dots, d_S-1\}$,
 
\begin{equation}
\sum_{i=0}^{k} \left[ \Tr_M \left( U^{\tilde{\epsilon}_l}_{\text{inc}} \, \rho_{SM}^{R,H} \left(U^{\tilde{\epsilon}_l}_{\text{inc}} \right)^{\dagger} \right) \right]_{ii} \leq \sum_{i=0}^{k} \left[ \Tr_M \left( \rho_{SM}^{R,H} \right) \right]_{ii},
\end{equation}

proving that the unitary $U^{\tilde{\epsilon}_l}_{\text{inc}}$ acting on  $\text{Eig}_{H_{SM}} (\tilde{\epsilon}_l)$ cannot cool the target according to our notion of cooling introduced in Section~\ref{sec:temperature}.
 \end{proof}
 
%
%
%

Given some degenerate subspaces, Lemma~\ref{lemma:roommachine} and Lemma~\ref{lemma:hotmachine} reduces the number of ways one can usefully make use of the hot thermal bath. For a machine with a lot of components this nevertheless leaves a lot or room for thermalizing different parts of it. The next natural question that comes in mind is what kind of degenerate subspaces are possible and interesting. This brings us to the second subtlety that Lemma~\ref{lemma:degenerate} leaves open. Indeed, what one is really interested in, is to figure out the interesting machines, that is the machines able to perform some cooling on the target system, for a given machine dimension $d_M$. Unfortunately the amount of possible degeneracies for $H_{SM}$ scales very unfavorably with $d_M$ such that the problem of identifying the relevant degeneracies for a given $d_M$ quickly becomes intractable. Rather than having to scan through all the possible degeneracies, it would therefore be very handy to have a way of identifying useful degeneracies. The following two results will allow us to systematically discard the two kinds of intuitively trivially useless degeneracies, namely those arising solely from $H_M$ or solely from $H_S$.

\begin{lemma} \label{lemma:machinedeg}
$\Lambda_{\text{inc}}$ cannot cool the target if $U_{\text{inc}}$ acts in degenerate subspaces consisting of degeneracies of the machine only.
\end{lemma}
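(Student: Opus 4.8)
The plan is to show that, under the stated hypothesis, $U_{\text{inc}}$ cannot alter the diagonal of the target state, so that no partial sum $\sum_{k=0}^{l}[\Lambda_{\text{inc}}(\sigma_S)]_{kk}$ of Eq.~\ref{equ:ambitiouscool} can increase. First I would make the hypothesis precise using Lemma~\ref{lemma:directsum} and Lemma~\ref{lemma:degenerate}: $U_{\text{inc}}=\bigoplus_i U_{\tilde{\epsilon}_i}$ with $U_{\tilde{\epsilon}_i}$ acting on $\text{Eig}_{H_{SM}}(\tilde{\epsilon}_i)$, and only the action on degenerate eigenspaces can matter. The assumption that these degeneracies are ``of the machine only'' then reads: $U_{\text{inc}}$ acts nontrivially only inside subspaces of the form $\ket{p}_S\otimes W$, where $p\in\{0,\dots,d_S-1\}$ is a fixed target level and $W\subseteq\mathcal{H}_M$ is a degenerate eigenspace of $H_M$, so that $H_M|_W=\mathcal{E}\,\mathds{1}_W$ and $\ket{p}_S\otimes W\subseteq\text{Eig}_{H_{SM}}(E_p+\mathcal{E})$.

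The key elementary observation would be that on a subspace $\ket{p}_S\otimes W$ the only available target vector is $\ket{p}_S$, so any unitary on it has the product form $\ket{p}\bra{p}_S\otimes V$ with $V$ unitary on $W$; in particular it preserves the target level $p$. The same is trivially true of the phases $U_{\text{inc}}$ may place on one-dimensional $H_{SM}$-eigenspaces, and of its identity action everywhere else. Assembling these pieces over all target levels then yields the global block form $U_{\text{inc}}=\sum_{p=0}^{d_S-1}\ket{p}\bra{p}_S\otimes U_M^{(p)}$ for suitable unitaries $U_M^{(p)}$ on $\mathcal{H}_M$ (each a direct sum of the relevant $V$'s with an identity).

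From here the conclusion is one line. Writing the input state as $\sigma_S\otimes\rho_M^{R,H}$, one computes
\begin{equation}
\Tr_M\!\left(U_{\text{inc}}\,\sigma_S\otimes\rho_M^{R,H}\,U_{\text{inc}}^{\dagger}\right)=\sum_{p,p'}[\sigma_S]_{pp'}\,\ket{p}\bra{p'}_S\,\Tr\!\left(U_M^{(p)}\,\rho_M^{R,H}\,(U_M^{(p')})^{\dagger}\right),
\end{equation}
whose diagonal entries are $[\sigma_S]_{pp}\,\Tr(\rho_M^{R,H})=[\sigma_S]_{pp}$ by cyclicity of the trace. Hence $\mathfrak{D}(\Lambda_{\text{inc}}(\sigma_S))=\mathfrak{D}(\sigma_S)$, no partial sum in Eq.~\ref{equ:ambitiouscool} increases, and iterating over the repetitions shows $\Lambda_{\text{inc}}^n$ built from such unitaries never cools $S$. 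Using Lemma~\ref{lemma:staydiag} to take $\sigma_S$ diagonal, one in fact obtains that the entire target state is left invariant.

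The main obstacle I anticipate is purely the bookkeeping in the first two paragraphs: one must verify that the machine-only degeneracy subspaces sitting over different target levels $p$, together with the complement on which $U_{\text{inc}}$ acts as the identity, are mutually orthogonal and span $\mathcal{H}_S\otimes\mathcal{H}_M$, so that the partial unitaries genuinely assemble into the claimed global block-diagonal operator. This is precisely the step where ``machine only'' must be invoked to exclude the possibility that an accidental coincidence of system and machine energies enlarges some $\text{Eig}_{H_{SM}}(\tilde{\epsilon}_i)$ and would otherwise let $U_{\text{inc}}$ couple different target levels; everything past that point is a routine trace manipulation.
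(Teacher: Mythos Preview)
Your proposal is correct and takes essentially the same approach as the paper's proof: both recognize that a machine-only degenerate subspace has the form $\ket{m}_S\otimes W$, so any unitary on it factors as $\ket{m}\bra{m}_S\otimes V$, and then use unitarity of $V$ (via cyclicity of the trace in your version, via $\sum_p u_{pr}\bar{u}_{ps}=\delta_{rs}$ in the paper's) to conclude that the reduced target state is unchanged. The only organizational difference is that the paper works one degenerate eigenspace at a time and shows $\Tr_M(U_{\text{inc}}^{\tilde{\epsilon}_k}\sigma_{SM}(U_{\text{inc}}^{\tilde{\epsilon}_k})^\dagger)=\Tr_M(\sigma_{SM})$ directly, whereas you first assemble the global block form $U_{\text{inc}}=\sum_p\ket{p}\bra{p}_S\otimes U_M^{(p)}$; the bookkeeping obstacle you flag is handled in the paper simply by taking the subspace-by-subspace route, which sidesteps the need to verify the global direct-sum structure.
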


\begin{proof}
 Let $k \in \{0,\dots, k_{SM}-1\}$ be such that $\text{Eig}_{H_{SM}} (\tilde{\epsilon}_k)$ is a degenerate subspace consisting of degeneracies of the machine only. That is

\begin{equation}
\text{Eig}_{H_{SM}} (\tilde{\epsilon}_k)= \text{span} \left\{ \ket{m n_0}_{SM}, \cdots, \ket{m n_{l-1}}_{SM} \right\}
\end{equation}
for some $m \in \{0, \dots,d_S-1\}$, $l \in \mathds{N}, l \leq d_M$, and some pairwise different $n_0, \cdots, n_{l-1} \in \{0, \dots, d_M-1\}$. Let $U_{\text{inc}}^{\tilde{\epsilon}_k}$ be the part of $U_{\text{inc}}$ acting solely on $\text{Eig}_{H_{SM}} (\tilde{\epsilon}_k)$, i.e., 

\begin{equation}
U_{\text{inc}}^{\tilde{\epsilon}_k}= U \oplus \mathds{1} \big|_{\text{Eig}^c_{H_{SM}} (\tilde{\epsilon}_k)},
\end{equation}

with $U=\sum_{p,q=0}^{l-1} u_{pq} \ket{m n_p} \bra{m n_q}_{SM}$ being some unitary on $\text{Eig}_{H_{SM}} (\tilde{\epsilon}_k)$. We will keep the state of our machine and system general, i.e.,

\begin{equation}
\sigma_{SM} = \sum_{j,l'=0}^{d_M-1} \sum_{i,k=0}^{d_S-1} [\sigma]_{ij, kl'} \ket{i j} \bra{k l'}_{SM}.
\end{equation}

Then

\begin{align}
\Tr_M \left( U_{\text{inc}}^{\tilde{\epsilon}_k} \sigma_{SM} \left(U_{\text{inc}}^{\tilde{\epsilon}_k}\right)^{\dagger} \right) &= \Tr_M \left( U \sigma_{SM} \big|_{\text{Eig}_{H_{SM}} (\tilde{\epsilon}_k)} U^{\dagger} \oplus \sigma_{SM} \big|_{\text{Eig}^c_{H_{SM}} (\tilde{\epsilon}_k)}\right)\\
&= \Tr_M \left( U \sigma_{SM} \big|_{\text{Eig}_{H_{SM}} (\tilde{\epsilon}_k)} U^{\dagger} \right) + \Tr_M \left( \sigma_{SM} \big|_{\text{Eig}^c_{H_{SM}} (\tilde{\epsilon}_k)}\right)
\end{align}

and 

\begin{align}
\Tr_M \left( U \sigma_{SM} \big|_{\text{Eig}_{H_{SM}} (\tilde{\epsilon}_k)} U^{\dagger} \right) &= \Tr_M \left( \sum_{p,q}^{l-1} \sum_{r,s=0}^{l-1} u_{pr} [\sigma_{SM}]_{m n_r, m n_s} \bar{u}_{qs} \ket{m n_p} \bra{m n_q}_{SM}\right)\\
&=   \sum_{r,s=0}^{l-1} \underbrace{\left( \sum_{p}^{l-1} u_{pr} \bar{u}_{ps} \right)}_{=\delta_{rs}}  [\sigma_{SM}]_{m n_r, m n_s}  \ket{m} \bra{m}_S\\\
&= \Tr_M \left( \sigma_{SM} \big|_{\text{Eig}_{H_{SM}} (\tilde{\epsilon}_k)}\right).
\end{align}

So, as desired, 

\begin{equation}
\Tr_M \left( U_{\text{inc}}^{\tilde{\epsilon}_k} \sigma_{SM} \left(U_{\text{inc}}^{\tilde{\epsilon}_k}\right)^{\dagger} \right)= \Tr_M \left( \sigma_{SM}\right).
\end{equation}

\end{proof}

\begin{lemma} \label{lemma:systdeg}
$\Lambda_{\text{inc}}$ cannot cool the target if $U_{\text{inc}}$ acts in degenerate subspaces consisting of degeneracies of the target system only.
\end{lemma}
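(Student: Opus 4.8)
The plan is to show that, when every block of $U_{\text{inc}}$ lives on a degenerate eigenspace of $H_{SM}$ of the ``target only'' type, the channel $\Lambda_{\text{inc}}$ reduces to a random mixture of \emph{energy-conserving} unitaries on $S$, and then to read off from majorization that such a map can never improve the partial sums of Eq.~\ref{equ:ambitiouscool}. First I would invoke Lemma~\ref{lemma:directsum} to write $U_{\text{inc}}=\bigoplus_i U_{\text{inc}}^{\tilde{\epsilon}_i}$, and note that an eigenspace consisting of degeneracies of the target only has the form $\text{Eig}_{H_{SM}}(\tilde{\epsilon})=\text{Eig}_{H_S}(E)\otimes\ket{n}_M$ for a single machine level $n$ and a (degenerate) system energy $E=\tilde{\epsilon}-\mathcal{E}_n$; hence the corresponding block is $V\otimes\ket{n}\bra{n}_M$ with $V$ a unitary supported on $\text{Eig}_{H_S}(E)$ (extended by the identity), in particular commuting with $H_S$. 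Collecting, for each machine level $n$, all such blocks sitting over level $n$ into a single unitary $W_n$ — they commute, acting on disjoint $H_S$-eigenspaces, and $W_n=\mathds{1}$ if level $n$ carries no block — one obtains $U_{\text{inc}}=\sum_n W_n\otimes\ket{n}\bra{n}_M$ with $[W_n,H_S]=0$ for all $n$.

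Now comes the step where the difference with Lemma~\ref{lemma:machinedeg} shows up: the partial trace does not make the unitary disappear, so one must actually use that the input to $\Lambda_{\text{inc}}$ is the product $\sigma_S\otimes\rho_M^{R,H}$ with $\rho_M^{R,H}$ diagonal in the energy eigenbasis (and, by Lemma~\ref{lemma:staydiag}, $\sigma_S$ diagonal as well). A one-line computation then gives
\begin{equation}
\Lambda_{\text{inc}}(\sigma_S)=\Tr_M\!\left(U_{\text{inc}}\,\sigma_S\otimes\rho_M^{R,H}\,U_{\text{inc}}^{\dagger}\right)=\sum_n (\rho_M^{R,H})_n\, W_n\,\sigma_S\,W_n^{\dagger},
\end{equation}
a random-unitary channel whose unitaries all commute with $H_S$. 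Since each $W_n$ is block-diagonal over the eigenspaces of $H_S$, Schur's Theorem~\ref{thm:Schur} (in the form of Corollary~\ref{cor:Uincluded}, using that $\sigma_S$ is diagonal) gives $\mathfrak{D}\!\left(W_n\sigma_S W_n^{\dagger}\right)\big|_{\text{Eig}_{H_S}(\tilde{E}_i)}\prec \mathfrak{D}(\sigma_S)\big|_{\text{Eig}_{H_S}(\tilde{E}_i)}$ for every $i$ and every $n$, and averaging over $n$ (the set of vectors majorized by a fixed vector is convex) yields
\begin{equation}
\mathfrak{D}\big(\Lambda_{\text{inc}}(\sigma_S)\big)\big|_{\text{Eig}_{H_S}(\tilde{E}_i)}\prec \mathfrak{D}(\sigma_S)\big|_{\text{Eig}_{H_S}(\tilde{E}_i)},
\end{equation}
while the total population of each $\text{Eig}_{H_S}(\tilde{E}_i)$ is left unchanged (the $W_n$ do not move weight between $H_S$-eigenspaces).

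It remains to convert this into the cooling criterion of Eq.~\ref{equ:ambitiouscool}. Fixing the energy eigenbasis inside each degenerate $H_S$-block so that the diagonal entries are non-increasing — a free, local, energy-conserving choice, as in Eq.~\ref{eq:restrictlyordered} — one checks: for $l$ lying at a boundary between $H_S$-eigenspaces the partial sum $\sum_{k=0}^{l}[\Lambda_{\text{inc}}(\sigma_S)]_{kk}$ equals $\sum_{k=0}^{l}[\sigma_S]_{kk}$, since whole blocks contribute their unchanged totals; and for $l$ interior to a block the inequality $\sum_{k=0}^{l}[\Lambda_{\text{inc}}(\sigma_S)]_{kk}\le\sum_{k=0}^{l}[\sigma_S]_{kk}$ is exactly a prefix sum of the within-block majorization just established (a prefix sum of the output is bounded by the sum of its largest entries, which the majorization bounds by the corresponding — and, by our ordering, ordinary — prefix sum of $\sigma_S$). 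Hence $\Lambda_{\text{inc}}$ cannot cool $S$. The only real subtlety is the one flagged above: unlike for machine degeneracies, the reduced state genuinely moves here, so the argument must exploit the product form $\sigma_S\otimes\rho_M^{R,H}$ of the input (a perfectly correlated diagonal $\sigma_{SM}$ would in fact permit cooling through a target-only degeneracy, so the product structure is genuinely needed); the remaining care is purely bookkeeping — fusing the several blocks that may lie over a common machine level into one $W_n$, and keeping the degenerate-block ordering consistent between input and output.
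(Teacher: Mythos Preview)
Your proof is correct and takes a more systematic route than the paper's. The paper treats one degenerate block $\text{Eig}_{H_{SM}}(\tilde{\epsilon}_k)=\text{span}\{\ket{n_0 m},\ldots,\ket{n_{l-1}m}\}$ at a time, observes that the $\ket{n_i}_S$ all share the same system energy, and shows by trace preservation that $\sum_i[\sigma_{SM}]_{n_im,n_im}$ is unchanged by the block unitary; it then simply asserts that ``what impacts the temperature of our target system is the sum of the populations of these eigenstates'', with a closing remark that reordering within the block can at best produce scare-quoted ``cooling''. You instead reorganize $U_{\text{inc}}$ globally as a machine-controlled unitary $\sum_n W_n\otimes\ket{n}\bra{n}_M$ with $[W_n,H_S]=0$, compute $\Lambda_{\text{inc}}(\sigma_S)=\sum_n(\rho_M^{R,H})_n\, W_n\sigma_S W_n^{\dagger}$, and use Schur's theorem blockwise to obtain within-block majorization --- thereby rigorously handling the interior-to-block partial sums of Eq.~\ref{equ:ambitiouscool} that the paper's argument passes over informally. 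Your remark that the product form $\sigma_S\otimes\rho_M^{R,H}$ is genuinely needed is also sharper than anything the paper makes explicit: the paper's trace-preservation step works for an arbitrary diagonal $\sigma_{SM}$, yet the conclusion ``cannot cool'' does not (your correlated-$\sigma_{SM}$ counterexample is valid), and the product structure enters precisely through your random-unitary formula. So your approach buys rigor and transparency about where each hypothesis is used; the paper's buys brevity by relegating the interior partial sums to the intuition ``same energy $\Rightarrow$ only the total matters''.
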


\begin{proof}
This is almost a tautology. Indeed, let $k \in \{0,\dots, k_{SM}-1\}$ be such that $\text{Eig}_{H_{SM}} (\tilde{\epsilon}_k)$ is a degenerate subspace consisting of degeneracies of the system only. That is,

\begin{equation}
\text{Eig}_{H_{SM}} (\tilde{\epsilon}_k)= \text{span} \left\{ \ket{n_0 m}_{SM}, \cdots, \ket{n_{l-1} m}_{SM} \right\}
\end{equation}
for some $m \in \{0, \dots,d_M-1\}$, $l \in \mathds{N}, l \leq d_S$, and some pairwise different $n_0, \cdots, n_{l-1} \in \{0, \dots, d_S-1\}$. Per definition all the eigenstates $ \ket{n_0}_S, \dots, \ket{n_{l-1}}_S$ have the same energy and so what impacts the temperature of our target system is the sum of the populations of these eigenstates. However, all what our unitary does by acting on $\text{Eig}_{H_{SM}} (\tilde{\epsilon}_k)$ is move populations around these eigenstates, thereby conserving their sum. Indeed, let $U_{\text{inc}}^{\tilde{\epsilon}_k}$ be the part of $U_{\text{inc}}$ acting solely on $\text{Eig}_{H_{SM}} (\tilde{\epsilon}_k)$, i.e., 

\begin{equation}
U_{\text{inc}}^{\tilde{\epsilon}_k}= U \oplus \mathds{1} \big|_{\text{Eig}^c_{H_{SM}} (\tilde{\epsilon}_k)},
\end{equation}

with $U=\sum_{p,q=0}^{l-1} u_{pq} \ket{n_p m} \bra{n_q m}_{SM}$ being some unitary on $\text{Eig}_{H_{SM}} (\tilde{\epsilon}_k)$. Then

\begin{align}
\sum_{k=0}^{l-1} \left[ U_{\text{inc}} \sigma_{SM} U_{\text{inc}}^{\dagger} \right]_{n_k m, n_k m} &= \Tr \left( U \sigma_{SM} \big|_{\text{Eig}_{H_{SM}} (\tilde{\epsilon}_k)} U^{\dagger}\right)\\
&=\sum_{k=0}^{l-1} [\sigma_{SM}]_{n_k m, n_k m}.
\end{align}

One can nevertheless order $ \ket{n_0}_S, \dots, \ket{n_{l-1}}_S$ (according to some criteria of our choice) and diagonalize $ \sigma_{SM} \big|_{\text{Eig}_{H_{SM}} (\tilde{\epsilon}_k)} $ such that its greatest eigenvalue contributes to the first eigenstate, its next greatest eigenvalue to the second eigenstate and so on. This would result in ``cooling'' the system the most.
\end{proof}

Lemma~\ref{lemma:machinedeg} and Lemma~\ref{lemma:systdeg} confirm the intuition that in order to cool, the degenerate subspace should be somehow spanned across the machine and the system. They restrict the number of possible useful machines of a given dimension $d_M$ and are as such great tools to systematically discard some machines. They however still leave room for a lot of potential interesting machines.

\chapter{Qubit System } \label{chap:qubitsyst}

We would like to start our analysis by studying the one qubit target system case in detail. This has two main motivations. We would first of all like to follow a bottom-up approach. This will enable us to gain some valuable insights into the problem by studying simpler cases. It will also provide some practical examples that we will later be able to refer to. Second of all, the definition of temperature for qubit systems is much easier and less debated. Indeed, every qubit state, as soon as diagonal in the energy eigenbasis, can be assigned a unique temperature T. This is due to the simple but crucial fact that qubits only have two real eigenvalues summing up to one, and therefore only one free (real) parameter. From Lemma~\ref{lemma:staydiag} we know that every target state is diagonal in the energy eigenbasis in the incoherent scenario. In the coherent scenario, although one can in principle generate any coherence in the target state, we will see that all the operations that we will encounter will leave the target system diagonal in the energy eigenbasis. For the sake of making a choice and for consistency throughout the thesis, we will work with the ground state population as a notion of temperature in the target qubit case.

\section{Notation} \label{sec:notationqubit}

We would here like to state some specific notation that we will use in this chapter. We would like to remind the reader that since this chapter is intended for reference, it might simply be skipped upon a linear reading.

 As already  mentioned, we will be working a lot with the ground state population. We will generically reserve the letter r to denote it. To extract the temperature of the qubit target system from its grounds state population we will use the inverse of the following formula:

\begin{equation}
r= \frac{1}{1+e^{-\frac{1}{T} E_S}};
\end{equation}

that is,

\begin{equation}
T(r)= \frac{E_S}{\text{ln} \left(\frac{r}{1-r} \right)}.
\end{equation}

The initial ground state population on the target system will be denoted by $r_S$ and that of part $M_i$ of the machine by $r_{M_i}$.  $r_{M_i}^H$ will denote the ground state population of $M_i$ heated up to $T_H$. We will denote the ground state population of the target after an application of the coherent scenario by $r_{\text{coh}}$ and its associated temperature by $T_{\text{coh}}$, i.e.,

\begin{align}
r_{\text{coh}}&= \bra{0} \Lambda_{\text{coh}} (\rho_S) \ket{0}_S,\\
T_{\text{coh}}&= \frac{E_S}{\text{ln} \left(\frac{r_{\text{coh}}}{1-r_{\text{coh}}} \right)}. \label{eq:Tcoh}
\end{align}

In the incoherent scenario, the ground state of the target after a single cycle depends on the temperature of the hot bath $T_H$. To stress this we will denote it by $r_{\text{inc}} (T_H)$, that is

\begin{align}
r_{\text{inc}} (T_H) &= \bra{0} \Lambda_{\text{inc}}(\rho_S) \ket{0}_S\\
&= \bra{0} \Tr_M (U_{\text{inc}} \rho_S \otimes \tau_M^H U_{\text{inc}}^{\dagger}) \ket{0}_S.
\end{align}

Its associated temperature will be denoted by $T_{\text{inc}}(T_H)$, i.e.,
\begin{equation}
T_{\text{inc}}(T_H)= \frac{E_S}{\text{ln} \left(\frac{r_{\text{inc}}(T_H)}{1-r_{\text{inc}}(T_H)} \right)}.
\end{equation}
 We remind the reader that we denote the free energy change associated to a single application of the incoherent scenario by $\Delta F_{\text{inc}} (T_H)$ and that associated to a single application of the coherent scenario by $\Delta F_{\text{coh}}$. Furthermore,

\begin{align}
\Delta F_{\text{inc}} (T_H)& = Q^H (1-\frac{T_R}{T_H}),\\
\Delta F_{\text{coh}} &= \Tr \left( ( U \rho_{SM} U^{\dagger} - \rho_{SM}) H_{SM} \right),
\end{align}

where $Q^H$ denotes the heat drawn from the hot bath and

\begin{equation}
Q^H= \Tr \left( (\tau_{M_H}^H-\tau_{M_H}) H_{M_H}\right).
\end{equation}

In each scenario we can tune some parameters upon application of the corresponding maps that will alter the cooling of the target system. In the incoherent scenario, the tunable parameter is $T_H$, whereas in the coherent scenario it is the unitary $U$ that one chooses to apply. The quantities associated to these optimal application of each scenario set bounds on achievable performances and are as such of special interest. To distinguish them, we will tag them with a * subscript. An optimal application of the incoherent scenario corresponds to $T_H \rightarrow \infty$, and so,

\begin{align}
r_{\text{inc}}^* &= \lim_{T_H \rightarrow + \infty} r_{\text{inc}}(T_H),\\
T_{\text{inc}}^* &= \lim_{T_H \rightarrow + \infty} T_{\text{inc}}(T_H),\\
\Delta F_{\text{inc}}^* &= \lim_{T_H \rightarrow + \infty} \Delta F_{\text{inc}}(T_H).
\end{align}

Note in particular that since $U_{\text{inc}}$ is implemented at no cost, we will always w.l.o.g. choose a $U_{\text{inc}}$ that cools the target system maximally.

An optimal application of the coherent scenario corresponds to applying an optimal unitary $U^*$. We stress here that $U^*$ solves two optimization problems at once. It cools the system  maximally and does so at a minimal work cost. We therefore have

\begin{align}
r_{\text{coh}}^* &= \bra{0} \Tr_M \left( U^* \rho_S \otimes \tau_M (U^*)^{\dagger} \right) \ket{0}_S,\\
T_{\text{coh}}^* &= \frac{E_0}{\text{ln}\left( \frac{r_{\text{coh}}^*}{1-r_{\text{coht}}^*}\right)},\\
\Delta F_{\text{coh}}^* &= \Tr \left( (U^* \rho_{SM} (U^*)^{\dagger}-\rho_{SM}) H_{SM} \right).
\end{align}

  We will also consider repeated applications of each scenario. For the incoherent scenario we have after $n \in \mathds{N} \cup \{\infty\}$ cycles
  
  \begin{align}
  r_{\text{inc},n}(T_H) &= \bra{0} \Lambda_{\text{inc}}^n(\rho_S) \ket{0}_S,\\
  T_{\text{inc},n}(T_H) &= \frac{E_S}{\text{ln} \left( \frac{r_{\text{inc},n}(T_H)}{1-r_{\text{inc},n}} \right)},
  \Delta F_{\text{inc},n}(T_H) = Q_n^H \left( 1-\frac{T_R}{T_H} \right),
  \end{align}

where $Q_n^H$ is the total heat drawn from the hot bath after n steps, i.e.,
\begin{equation}
Q_n^H= Q^H+ \sum_{i=2}^n \Tr \left( (\tau_{M_H}^H-\sigma_{M_H}^i) H_{M_H} \right),
\end{equation}

with
\begin{equation}
\sigma_{M_H}^i= \Tr_{S M_H} \left( U_{\text{inc}} \Lambda_{\text{inc}}^{i-2}(\rho_S) \otimes \rho_{M}^{R,H} (U_{\text{inc}})^{\dagger} \right).
\end{equation}

$\sigma_{M_H}^i$ is the state of the part of the machine $M_H$ at the end of step $i-1$, i.e., at the beginning of step $i$ before it is brought in contact with the hot bath to be rethermalized to $T_H$. Note that if we had made use of the room temperature bath to thermalize the entire machine and then brought $M_H$ back to $\tau_{M_H}^H$, assuming that $M_H$ stays the same at each step, we would have gotten $Q_n^{H} = n Q^H$, which diverges for $n \rightarrow + \infty$. 

For the coherent scenario we have

\begin{align}
r_{\text{coh},n} &=\bra{0} \Lambda_{\text{coh}}^n(\rho_S) \ket{0}_S,\\
T_{\text{coh},n} &= \frac{E_S}{\text{ln} \left( \frac{r_{\text{coh},n}}{1-r_{\text{coh},n}}\right)},\\
\Delta F_{\text{coh},n}&= \Delta F_{\text{coh}} + \sum_{i=2}^n \Tr \left( (U \Lambda_{\text{coh}}^{i-1}(\rho_S)\otimes \rho_{SM} U^{\dagger} -\Lambda_{\text{coh}}^{i-1}(\rho_S)\otimes \rho_{SM}) H_{SM}\right).
\end{align}

As for the single application of each scenario, a * superscript mean $T_H \rightarrow \infty$ in the incoherent scenario and a choice of maximally cooling unitary $U^*$ at each step in the coherent scenario.

\section{One Qubit Machine} \label{sec:onequbitmachine}

Following our bottom-up approach, we naturally start by investigating the one qubit machine in both scenarios. We start by proving an impossibility-to-cool result in the incoherent scenario with this type of machine before moving on to analyzing what the coherent scenario allows for.

\subsection{Incoherent one Qubit Machine}

In the case of the single qubit machine, one has two options of making use of the hot thermal bath at $T_H$. One can either not make use of the hot bath at all and leave the machine untouched or thermalize the entire machine to $T_H$. From Lemma~\ref{lemma:roommachine} and Lemma~\ref{lemma:hotmachine} we know that both options do not allow cooling and directly have the desired result. We can here, however, also quite simply directly prove the desired result. Indeed, if one does not make use of the hot thermal bath, the SM state commutes with the unitary and nothing happens. Thermalizing the entire machine, we have the state 

\begin{equation}
\rho_{SM}^H=\rho_S \otimes \tau_M^H
\end{equation}
at hand. The question is then, which machine can cool the target system. Answering this amounts to choosing $\mathcal{E}_M$, the gap of the machine qubit, adequately. For 

\begin{equation}
H_{SM} = E_S \ket{1} \bra{1}_S \otimes \mathds{1}_M + \mathds{1}_S \otimes \mathcal{E}_M \ket{1} \bra{1}_M
\end{equation}

to have some degeneracy, one of the two following conditions must be satisfied:

\begin{enumerate}[i)]
\item $E_S=0$ or $\mathcal{E}_M=0$,
\item $\mathcal{E}_M=E_S$.
\end{enumerate}

In i) the state $\rho_{SM}^{R,H}$ is proportional to the identity within the degenerate subspace and so no cooling is possible. In ii) the degenerate subspace is $\text{span}\{\ket{0 1}_{SM}, \ket{10}_{SM}\}$ but as

\begin{equation}
\bra{0 1} \rho_{SM}^{R,H} \ket{0 1}_{SM} > \bra{1 0} \rho_{SM}^H \ket{1 0}_{SM},
\end{equation}
the unitaries acting on this subspace can only heat up the target system. We have therefore just proven the desired result.

\begin{lemma}
It is impossible to cool a target qubit with a single qubit machine in the incoherent scenario.
\end{lemma}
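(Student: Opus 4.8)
The plan is to convert the claim into a short, exhaustive case analysis resting on the structural results already established. A single-qubit machine has no nontrivial tensor-product decomposition, so in one cycle the hot bath can do exactly one of two things: leave the machine untouched (it stays thermal at $T_R$), or thermalize the whole machine to $T_H$. I would show in each case that $\Lambda_{\text{inc}}$ does not raise the target's ground-state population, and then observe that, since the optimal single cycle leaves the target unchanged, iterating cannot help either.

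\emph{Untouched machine.} The input to the energy-conserving unitary is $\tau_S \otimes \tau_M \propto e^{-\beta_R H_{SM}}$, which is fixed by every energy-conserving $U_{\text{inc}}$ (Lemma~\ref{lemma:roommachine}, or simply because it commutes with $U_{\text{inc}}$); hence the reduced target state stays $\tau_S$.

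\emph{Fully thermalized machine.} Now the input is $\rho_S \otimes \tau_M^H$ with $\rho_S = \tau_S$, and by Lemma~\ref{lemma:degenerate} the map can only act nontrivially in a degenerate eigenspace of $H_{SM} = E_S \ket{1}\bra{1}_S \otimes \mathds{1}_M + \mathds{1}_S \otimes \mathcal{E}_M \ket{1}\bra{1}_M$. If $H_{SM}$ is non-degenerate there is nothing to do. A degeneracy forces either (i) $E_S \mathcal{E}_M = 0$, in which case one of $\tau_S$, $\tau_M^H$ is maximally mixed, so $\rho_S \otimes \tau_M^H$ is proportional to the identity on each degenerate eigenspace and shuffling populations does nothing; or (ii) $\mathcal{E}_M = E_S$, in which case the only degenerate eigenspace is $\mathrm{span}\{\ket{01}_{SM}, \ket{10}_{SM}\}$ and a one-line computation gives $\bra{01}(\rho_S \otimes \tau_M^H)\ket{01}_{SM} \big/ \bra{10}(\rho_S \otimes \tau_M^H)\ket{10}_{SM} = e^{(\beta_R-\beta_H)E_S} > 1$ since $T_H > T_R$. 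As $\ket{00}_{SM}$ sits alone in the ground eigenspace, $\ket{01}_{SM}$ is the only basis vector contributing to the target ground state that lies in a degenerate subspace; and by Corollary~\ref{cor:Uincluded} the diagonal of the transformed $2\times 2$ block is majorized by $\big(\bra{01}(\rho_S\otimes\tau_M^H)\ket{01}_{SM}, \bra{10}(\rho_S\otimes\tau_M^H)\ket{10}_{SM}\big)$, whose larger entry is the first one. Hence the post-unitary population of $\ket{01}_{SM}$, and with it the target ground-state population, cannot exceed its initial value. This exhausts all cases.

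The only mildly delicate step is case (ii): one needs both the population ordering of $\ket{01}_{SM}$ versus $\ket{10}_{SM}$ (immediate from $\beta_R > \beta_H$) and the elementary fact that a unitary acting on a two-dimensional space cannot increase the larger of the two diagonal entries (Schur--Horn, i.e.\ Corollary~\ref{cor:Uincluded}). Everything else is immediate from Lemmas~\ref{lemma:roommachine}, \ref{lemma:hotmachine} and \ref{lemma:degenerate} --- indeed the whole second case could be dispatched by citing Lemma~\ref{lemma:hotmachine} --- but writing out the two-level argument keeps the lemma self-contained. Finally, since in every case the best single cycle is to leave the target alone, the same bound propagates to $\Lambda_{\text{inc}}^n$ for all $n \in \mathds{N} \cup \{\infty\}$, so no number of cycles can cool the qubit below $\tau_S$.
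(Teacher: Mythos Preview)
Your proof is correct and follows essentially the same approach as the paper: both split into the two bath-usage options, dispatch the untouched case via commutativity (Lemma~\ref{lemma:roommachine}), and in the fully-heated case enumerate the possible degeneracies of $H_{SM}$, finding either a trivial degeneracy ($E_S\mathcal{E}_M=0$, state proportional to the identity on each block) or $\mathcal{E}_M=E_S$, where the population ordering $\bra{01}\rho_{SM}^{R,H}\ket{01}>\bra{10}\rho_{SM}^{R,H}\ket{10}$ shows the only available move heats. Your invocation of Corollary~\ref{cor:Uincluded} to justify the last step and your remark that iterating cannot help are small additions, but the argument is the paper's.
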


\subsection{Coherent one Qubit Machine} \label{subsec:coherent1qubitmachine}

We are here interested in finding the state that cools to a given temperature at a minimal free energy change for the case of the one qubit machine $ H_M = \mathcal{E}_M \ket{1} \bra{1}_M$, and one qubit target system, $ H_S= E_S \ket{1} \bra{1}_S$. We are hence interested in solving the following instance of the general optimization problem of Eq.~\ref{eq:deltaFoptimization} presented in Section~\ref{sec:remarkscoh}.

\begin{equation} \label{eq:deltaFoptimization1-1}
\min_{v \prec \mathfrak{D}(\rho_{SM})} v \cdot \mathfrak{D}(H_{SM}), \quad \text{s.t. } v_0+v_1=c,
\end{equation}

with

\begin{align}
v&=(v_0,v_1,v_2,v_3) \in \mathds{R}^4,\\
\mathfrak{D}(H_{SM}) &=(0, \mathcal{E}_M, E_S, E_S + \mathcal{E}_M),\\
\mathfrak{D}(\rho_{SM}) &= \frac{1}{\Tr(e^{-\beta_R H_{SM}}} \left(1, e^{- \beta_R \mathcal{E}_M}, e^{- \beta_R E_S}, e^{- \beta_R (E_S+\mathcal{E}_M)} \right).
\end{align}

Furthermore, $c \in [r, r_{\text{coh}}^*]$, with
\begin{equation}
r= \left[ \mathfrak{D} (\rho_{SM}) \right]_0 + \left[ \mathfrak{D} (\rho_{SM})\right]_1 = \frac{1}{1+ e^{- \beta_R E_S}}
\end{equation}

and
\begin{equation}
r_{\text{coh}}^*= \left[ \mathfrak{D}^{\downarrow} (\rho_{SM}) \right]_0 + \left[ \mathfrak{D}^{\downarrow} (\rho_{SM})\right]_1.
\end{equation}

$r_{\text{coh}}^*$ is therefore the sum of the two largest entries of $\mathfrak{D}(\rho_{SM})$. One can hence cool the target if and only if $\left[ \mathfrak{D} (\rho_{SM}) \right]_0$ and $\left[ \mathfrak{D} (\rho_{SM}) \right]_1$ are not already the biggest entries of $\mathfrak{D}(\rho_{SM})$. One readily sees that $\left[ \mathfrak{D} (\rho_{SM}) \right]_0= \frac{1}{\Tr(e^{-\beta_R H_{SM}})}$ is the biggest entry of $\mathfrak{D}(\rho_{SM})$ and that $\left[ \mathfrak{D} (\rho_{SM}) \right]_3= \frac{ e^{- \beta_R (E_S+\mathcal{E}_M)}}{\Tr(e^{-\beta_R H_{SM}}}$ is the smallest. We furthermore have

\begin{equation}
\frac{\left[\mathfrak{D} (\rho_{SM}) \right]_1}{\left[\mathfrak{D} (\rho_{SM}) \right]_2}= e^{\beta_R (E_S- \mathcal{E}_M)} <1
\end{equation}
if and only if $E_S < \mathcal{E}_M$. This means that we can cool the target exactly when $E_S < \mathcal{E}_M$. In that regime

\begin{equation}
r_{\text{coh}}^* = \left[ \mathfrak{D}^{\downarrow} (\rho_{SM}) \right]_0 + \left[ \mathfrak{D}^{\downarrow} (\rho_{SM})\right]_1= r_M
\end{equation}
and the temperature associated to $r_{\text{coh}}^*$ is

\begin{equation} \label{eq:Tstarcoh1-1}
T_{\text{coh}}^*= \frac{E_S}{\ln \left(\frac{r_{\text{coh}}^*}{1-r_{\text{coh}}^*} \right)} = \frac{E_S}{E_M} T_R.
\end{equation}

A transformation achieving this cooling is the unitary $U^*$ swapping the energy eigenstates $\ket{0 1}_{SM}$ and $\ket{1 0}_{SM}$. This unitary operates at the minimal work cost since 

\begin{equation}
a_{\mathfrak{D}(U^* \rho_{SM} (U^*)^{\dagger})} = \left( [\mathfrak{D} (\rho_{SM})]_0, [\mathfrak{D} (\rho_{SM})]_2 \right)
\end{equation}

is inversely ordered, i.e., passive~\cite{Pusz-1978, Lenard-1978}, with respect to $ a_{\mathfrak{D}(H_{SM})} = \left( 0, \mathcal{E}_M \right)$ and 

\begin{equation}
b_{\mathfrak{D}(U^* \rho_{SM} (U^*)^{\dagger})} = \left( [\mathfrak{D} (\rho_{SM})]_1, [\mathfrak{D} (\rho_{SM})]_3 \right)
\end{equation}

is inversely ordered with respect to $ b_{\mathfrak{D}(H_{SM})} = \left( E_S, E_S+\mathcal{E}_M \right)$. The work cost associated to $U^*$ is 

\begin{equation}
\Delta F_{\text{coh}}^* = (r_{\text{coh}}^*-r_S) (\mathcal{E}_M - E_S).
\end{equation}

This fully solves the endpoint problem and is a straightforward exemplification of the general solution we exposed in Section~\ref{sec:remarkscoh}. We next turn our attention to the case $r_{\text{coh}}=c < r_{\text{coh}}^*$ of the optimization problem of Eq.~\ref{eq:deltaFoptimization1-1}. That is, we are interested in finding out how much one can cool the target with finite resources. In that case we find the following.

\begin{theorem} \label{thm:optsolution1-1}
Let $r_{\text{coh}} \in [r_S, r_{\text{coh}}^*]$. Let $\mu = \frac{r_{\text{coh}}-r_S}{r_M-r_S}$. Let $t = \arcsin(\sqrt{\mu})$. Let
\begin{equation}
\mathcal{L}= i \ket{0 1} \bra{1 0}_{SM} - i \ket{10} \bra{0 1}_{SM}.
\end{equation}
Then
\begin{equation}
v= \mathfrak{D} (U \rho_{SM} U^{\dagger}), \quad \text{with } U=e^{- i t \mathcal{L}}
\end{equation}
minimizes the optimization problem of Eq.~\ref{eq:deltaFoptimization1-1} for $c=r_{\text{coh}}$ and has an associated work cost of
\begin{equation}
\Delta F_{\text{coh}} = (r_{\text{coh}}-r_S) (\mathcal{E}_M-E_S).
\end{equation}
\end{theorem}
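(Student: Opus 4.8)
The plan is to solve the linear program in Eq.~\ref{eq:deltaFoptimization1-1} explicitly and then exhibit a unitary realizing its optimum. Abbreviate $d:=\mathfrak{D}^{\downarrow}(\rho_{SM})=\frac{1}{Z}\bigl(1,e^{-\beta_R E_S},e^{-\beta_R\mathcal{E}_M},e^{-\beta_R(E_S+\mathcal{E}_M)}\bigr)$ with $Z=\Tr(e^{-\beta_R H_{SM}})$; we may assume $E_S<\mathcal{E}_M$, since otherwise $r_{\text{coh}}^\ast=r_S$ and the statement is trivial. Under $E_S<\mathcal{E}_M$ this $d$ really is the decreasing rearrangement, in $\mathfrak{D}(\rho_{SM})$ the level $\ket{01}_{SM}$ carries $e^{-\beta_R\mathcal{E}_M}/Z=d_2$ and $\ket{10}_{SM}$ carries $e^{-\beta_R E_S}/Z=d_1$, and one has $r_S=d_0+d_2$, $r_{\text{coh}}^\ast=r_M=d_0+d_1$, hence $r_M-r_S=d_1-d_2$ and $\mu=(r_{\text{coh}}-d_0-d_2)/(d_1-d_2)\in[0,1]$, so $t=\arcsin\sqrt{\mu}$ is well defined.

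First I would reduce the cost function. Writing $c:=r_{\text{coh}}$, the constraint $v_0+v_1=c$ forces $v_2+v_3=1-c$, and since $\mathfrak{D}(H_{SM})=(0,\mathcal{E}_M,E_S,E_S+\mathcal{E}_M)$ and $\sum_i v_i=1$ one gets $v\cdot\mathfrak{D}(H_{SM})=\mathcal{E}_M(v_1+v_3)+E_S(v_2+v_3)=\mathcal{E}_M\bigl(1-(v_0+v_2)\bigr)+E_S(1-c)$. As $\mathcal{E}_M>0$, minimizing Eq.~\ref{eq:deltaFoptimization1-1} is \emph{exactly} the same as maximizing $v_0+v_2$ over $v\prec\mathfrak{D}(\rho_{SM})$ with $v_0+v_1=c$.

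The crux is an upper bound on $v_0+v_2$ that makes both constraints interact. For any feasible $v$ write $2v_0+v_1+v_2=v_0+(v_0+v_1+v_2)$; a sum of a fixed number of components never exceeds the sum of that many largest ones, so $v_0\le v_0^{\downarrow}$ and $v_0+v_1+v_2\le v_0^{\downarrow}+v_1^{\downarrow}+v_2^{\downarrow}$, while $v\prec\mathfrak{D}(\rho_{SM})$ gives $v_0^{\downarrow}\le d_0$ and $v_0^{\downarrow}+v_1^{\downarrow}+v_2^{\downarrow}\le d_0+d_1+d_2$. Hence $c+(v_0+v_2)=2v_0+v_1+v_2\le 2d_0+d_1+d_2$, i.e.\ $v_0+v_2\le 2d_0+d_1+d_2-c$. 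Set $v^\ast:=(d_0,\,c-d_0,\,d_0+d_1+d_2-c,\,d_3)$: as $c\in[d_0+d_2,d_0+d_1]$ both middle entries lie in $[d_2,d_1]$, so $v^\ast$ has largest entry $d_0$ and smallest $d_3$, and checking the three partial sums together with $\sum_i v^\ast_i=1$ gives $v^\ast\prec\mathfrak{D}(\rho_{SM})$ (equivalently $v^\ast$ is the $T$-transform of $\mathfrak{D}(\rho_{SM})$ on the indices of $\ket{01}_{SM},\ket{10}_{SM}$ with parameter $\mu$, cf.\ Corollary~\ref{cor:Uincluded}); since $v^\ast_0+v^\ast_1=c$ and $v^\ast_0+v^\ast_2=2d_0+d_1+d_2-c$, it saturates the bound and hence minimizes Eq.~\ref{eq:deltaFoptimization1-1}. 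It remains to see $U=e^{-it\mathcal{L}}$ realizes it: $\mathcal{L}$ vanishes on $\text{span}^c\{\ket{01}_{SM},\ket{10}_{SM}\}$ and squares to $\mathds{1}$ on $\text{span}\{\ket{01}_{SM},\ket{10}_{SM}\}$, so $U$ is the identity off that plane and the real rotation $\cos t\,\mathds{1}-i\sin t\,\mathcal{L}$ on it; thus $\mathfrak{D}(U\rho_{SM}U^{\dagger})$ agrees with $\mathfrak{D}(\rho_{SM})$ at $\ket{00}_{SM},\ket{11}_{SM}$, and at $\ket{01}_{SM},\ket{10}_{SM}$ (values $d_2,d_1$) it equals $\bigl(\cos^2t\,d_2+\sin^2t\,d_1,\ \sin^2t\,d_2+\cos^2t\,d_1\bigr)=\bigl(d_2+\mu(d_1-d_2),\,d_1-\mu(d_1-d_2)\bigr)=(c-d_0,\,d_0+d_1+d_2-c)$, i.e.\ $\mathfrak{D}(U\rho_{SM}U^{\dagger})=v^\ast$.

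Finally, the work cost follows by applying the identity $v\cdot\mathfrak{D}(H_{SM})=\mathcal{E}_M(v_1+v_3)+E_S(1-c)$ to both $v^\ast$ and $\mathfrak{D}(\rho_{SM})$ (the latter being the case $c=r_S$, $v_1=d_2$, $v_3=d_3$): $\Delta F_{\text{coh}}=v^\ast\cdot\mathfrak{D}(H_{SM})-\mathfrak{D}(\rho_{SM})\cdot\mathfrak{D}(H_{SM})=\mathcal{E}_M\bigl((v^\ast_1+v^\ast_3)-(d_2+d_3)\bigr)+E_S(r_S-c)=\mathcal{E}_M(v^\ast_1-d_2)+E_S(r_S-c)=(c-r_S)(\mathcal{E}_M-E_S)$, using $v^\ast_1-d_2=c-d_0-d_2=c-r_S$ and $v^\ast_3=d_3$; this is the asserted $(r_{\text{coh}}-r_S)(\mathcal{E}_M-E_S)$. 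The one genuinely delicate point is the upper bound on $v_0+v_2$, since it is where the fixed ground-state population $v_0+v_1=c$ and the majorization constraint must be combined; the right device is to rewrite $v_0+v_2=2v_0+v_1+v_2-c$ and split $2v_0+v_1+v_2=v_0+(v_0+v_1+v_2)$ so that two elementary majorization partial sums apply and are simultaneously saturated by $v^\ast$. Everything else is bookkeeping plus the one-line computation of how $e^{-it\mathcal{L}}$ acts on the diagonal.
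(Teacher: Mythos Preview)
Your argument is correct and is essentially the paper's proof in different clothing: the paper rewrites $v\cdot\mathfrak{D}(H_{SM})=[c-v_0+v_3]\mathcal{E}_M+(1-c)E_S$ and bounds $v_0\le d_0$, $v_3\ge d_3$, while you rewrite it as $\mathcal{E}_M[1-(v_0+v_2)]+(1-c)E_S$ and bound $v_0+v_2$ via $v_0\le d_0$ and $v_0+v_1+v_2\le d_0+d_1+d_2$ (i.e.\ $v_3\ge d_3$), which is literally the same pair of majorization inequalities and leads to the same optimizer $v^\ast$, the $T$-transform on the $\ket{01},\ket{10}$ slots. One cosmetic quibble: the parenthetical cross-reference to Corollary~\ref{cor:Uincluded} is off (that corollary is Schur's theorem, not a statement about $T$-transforms), but since you also verify $v^\ast\prec\mathfrak{D}(\rho_{SM})$ directly via partial sums this does not affect the argument.
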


\begin{proof}
We want to solve 
\begin{equation}
\min_{v \prec \mathfrak{D}(\rho_{SM})} v \cdot \mathfrak{D}(H_{SM}), \quad \text{s.t. } v_0+v_1=c
\end{equation}
for $v \in \mathbb{R}^4$, $\mathfrak{D}(\rho_{SM})=(0, \mathcal{E}_M, E_S, \mathcal{E}_M+E_S)$ and $c=r_{\text{coh}}$. We here derive the core idea of the proof. The idea is based on rewriting
\begin{equation}
v \cdot \mathfrak{D}(H_{SM})
\end{equation}
such that the majorization conditions as well as the constraint can naturally be expressed. This rewriting is the following.
\begin{align}
v \cdot \mathfrak{D}(H_{SM}) &= v_1 \mathcal{E}_{M} + v_2 E_S + v_3 (E_S + \mathcal{E}_M)\\
&= \left[ (v_0+v_1) - v_0 + v_3 \right] \mathcal{E}_M + \left[ 1-(v_0+v_1) \right] E_S.
\end{align}

Now using $\min (A+B) \geq \min A + \min B$  in our rewriting of $ v \cdot \mathfrak{D}(H_{SM})$, we get
\begin{align}
\min_{\stackrel{v \prec \mathfrak{D}(\rho_{SM}}{\text{s.t. } v_0 + v_1=c}}  v \cdot \mathfrak{D}(H_{SM}) &\geq \min_{\stackrel{v \prec \mathfrak{D}(\rho_{SM}}{\text{  s.t. } v_0 + v_1=c}} [(v_0+v_1)-v_0+v_3] \mathcal{E}_M   \\
&+ \min_{\stackrel{v \prec \mathfrak{D}(\rho_{SM}}{\text{  s.t. } v_0 + v_1=c }} [1-(v_0+v_1)] E_S  \\
&\geq \left(c- [\mathfrak{D}(\rho_{SM})]_0 + [\mathfrak{D}(\rho_{SM})]_3 \right) \mathcal{E}_M + (1-c) E_S\\
&= v(c) \cdot \mathfrak{D}(H_{SM}),
\end{align}
where
\begin{align}
v(c)=& \left( [\mathfrak{D}(\rho_{SM})]_0, (1-\mu(c)) [\mathfrak{D}(\rho_{SM})]_1+ \mu(c) [\mathfrak{D}(\rho_{SM})]_2, \right.\\
 &\left. (1-\mu(c)) [\mathfrak{D}(\rho_{SM})]_2 + \mu(c) [\mathfrak{D}(\rho_{SM})]_1, [\mathfrak{D}(\rho_{SM})]_3 \right),
\end{align}
with
\begin{equation}
\mu(c)= \frac{c-[\mathfrak{D}(\rho_{SM})]_0 - [\mathfrak{D}(\rho_{SM})]_1}{[\mathfrak{D}(\rho_{SM})]_2 - [\mathfrak{D}(\rho_{SM})]_1}=\frac{c-r_S}{r_M-r_S}.
\end{equation}
The first minimization follows from $v_0+v_1=c$ and 
\begin{align}
v_3 &\geq v_3^{\downarrow} \geq [\mathfrak{D}^{\downarrow}(\rho_{SM})]_3=[\mathfrak{D}(\rho_{SM})]_3,\\
v_0 &\leq v_0^{\downarrow} \leq [\mathfrak{D}^{\downarrow}(\rho_{SM})]_0=[\mathfrak{D}(\rho_{SM})]_0.
\end{align}
The second minimization just uses $v_0+v_1=c$. As $v(c)$ satisfies the constraint $[v(c)]_0+[v(c)]_1=c$, $v(r_{\text{coh}})$ is the solution of our problem. For the rewriting of $v(r_{\text{coh}})$ as in the statement as well as the expression of $\Delta F_{\text{coh}}$ we refer to Appendix C of \cite{Clivaz-2019bis}.
\end{proof}

Note that $\mathcal{L}$ is the Hamiltonian generating the swapping between the energy levels $\ket{0 1}_{SM}$ and $\ket{1 0}_{SM}$. Also note that once one has cooled the target to $r_M$, further applications of the scenario do not enhance cooling here. Indeed, one can only cool further if

\begin{equation}
\left[ \mathfrak{D} (\Lambda_{\text{coh}}^* (\rho_S) \otimes \tau_M)\right]_1 < \left[ \mathfrak{D} (\Lambda_{\text{coh}}^* (\rho_S) \otimes \tau_M)\right]_2.
\end{equation}
But we have that

\begin{align}
\left[ \mathfrak{D} (\Lambda_{\text{coh}}^* (\rho_S) \otimes \tau_M)\right]_1 &= \frac{ e^{-\beta_{\text{coh}}^* E_S}}{\left(1+e^{-\beta_{\text{coh}}^* E_S} \right) \left(1+e^{-\beta_{R} \mathcal{E}_M} \right)}\\
&=  \frac{ e^{-\beta_{R} \mathcal{E}_M}}{\left(1+e^{-\beta_{\text{coh}}^* E_S} \right) \left(1+e^{-\beta_{R} \mathcal{E}_M} \right)}\\
&= \left[ \mathfrak{D} (\Lambda_{\text{coh}}^* (\rho_S) \otimes \tau_M)\right]_2,
\end{align} 

where we used that from Eq.~\ref{eq:Tstarcoh1-1}

\begin{equation}
\beta_{\text{coh}}^* E_S = \beta_R \mathcal{E}_M.
\end{equation}

With this we have fully solved the qubit machine qubit target case of the coherent scenario and are ready to move to the second most simple instance of our scenario, namely the 2 qubit machine 1 qubit target one.

\section{Two Qubit Machine} \label{sec:twoqubitmachine}

We now move our attention to the two qubit machine case. We will denote the qubits of the machine by $M_1$ and $M_2$ and their associated energy gaps by $\mathcal{E}_{M_1}$ and $\mathcal{E}_{M_2}$ respectively. The Hamiltonian of the machine is therefore given by
\begin{equation}
H_M = \mathcal{E}_{M_1} \ket{1} \bra{1}_{M_1} \otimes \mathds{1}_{M_2} + \mathds{1}_{M_1} \otimes \mathcal{E}_{M_2} \ket{1} \bra{1}_{M_2}.
\end{equation}
Our target system is as before a qubit of energy gap $E_S$ and Hamiltonian $H_S= E_S \ket{1} \bra{1}_S$. The joint initial state is therefore given by

\begin{equation}
\rho_{SM} = \tau_S \otimes \tau_{M_1} \otimes \tau_{M_2}.
\end{equation}

We will start by investigating the single cycle regime in each scenario, i.e., a single application of $\Lambda_{\text{inc}}$ and $\Lambda_{\text{coh}}$ respectively. We will then move on to studying the repeated cycle regime as well as the asymptotic regime, i.e., when the number of repetitions $n \rightarrow \infty$.

\subsection{Single Cycle} \label{subsubsec:single}

\subsubsection{Incoherent Scenario}

Given a machine dimension $d_M$ and a system $S$, the main challenge of the incoherent scenario is to identify which specific machines allow cooling system $S$. From Section~\ref{sec:remarksincoh} we know that the machine has to have a tensor product structure to be able to cool $S$. Indeed, if not, the only way it can use the hot thermal bath is to either stay thermal at $T_R$ or to entirely thermalize to $T_H$. But we know from Lemma~\ref{lemma:hotmachine} and Lemma~\ref{lemma:roommachine} that this doesn't lead to any cooling of $S$, whatever the degeneracies of $H_{SM}$. For machines of dimension 4, i.e., $d_M=4$, this means that the useful machines have to consist of 2 qubits. Furthermore, the only way of making good use of the hot thermal bath is to leave one qubit at $T_R$ and thermalize the other to $T_H$. W.l.o.g. let $M_1$ remain at $T_R$ and $M_2$ be the one thermalizing to $T_H$. We therefore have

\begin{equation}
\rho_{SM}^{R_,H}= \tau_S \otimes \tau_{M_1} \otimes \tau_{M_2}^H.
\end{equation}

The next step is to screen through the potential degeneracies of $H_{SM}$ to see which ones enable to cool $S$. From Lemma~\ref{lemma:machinedeg} we know that degenerate subspaces consisting of degeneracies of the machine only are of no use. This gets rid of the degeneracies $\mathcal{E}_{M_1}=0, \mathcal{E}_{M_2}=0, \text{and } \mathcal{E}_{M_1}+\mathcal{E}_{M_1}=0$ as potential candidates. Similarly, Lemma~\ref{lemma:systdeg} tells us that $E_S=0$ is also not a useful degeneracy to have in $H_{SM}$. This leaves us with 5 potential degeneracies

\begin{enumerate}
\item $E_S= \mathcal{E}_{M_1}$
\item $E_S = \mathcal{E}_{M_2}$
\item $E_S = \mathcal{E}_{M_1} + \mathcal{E}_{M_2}$
\item $ \mathcal{E}_{M_1} = E_S + \mathcal{E}_{M_2}$ \label{enum:coolingdeg}
\item $\mathcal{E}_{M_2} = E_S + \mathcal{E}_{M_1}$.
\end{enumerate} 

One can show explicitly, see Appendix A of \cite{Clivaz-2019bis}, that none of these degeneracies lead to cooling except number \ref{enum:coolingdeg}
\begin{equation}
\mathcal{E}_{M_1} = E_S + \mathcal{E}_{M_2}.
\end{equation}

This seems to fully take care of the degeneracy question for $d_M=4$. There is one small subtlety, however. Degenerate subspaces can be defined by more than just one degeneracy condition. To fully conclude the proof that one cannot cool incoherently with machines of dimension 4 unless
\begin{equation}
\mathcal{E}_{M_1} = E_S + \mathcal{E}_{M_2},
\end{equation}
we therefore have to make sure that combined subspaces cannot get activated, i.e., cool, although they individually were useless. For the 2 qubit machine this can be done explicitly. For the details we refer to Appendix A of \cite{Clivaz-2019bis}. We therefore have the following result

\begin{theorem}
Given a qubit target system and a machine of dimension 4, the incoherent scenario can cool the target if and only if the following holds
\begin{enumerate}
\item The machine comprises two qubits $M_1$ and $M_2$
\item $\mathcal{E}_{M_1} = E_S + \mathcal{E}_{M_2}$
\item Only $M_2$ is brought into contact with the hot thermal bath.
\end{enumerate}
\end{theorem}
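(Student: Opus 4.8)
The plan is to establish the three conditions as jointly necessary and sufficient, leaning heavily on the structural lemmas already proved in Chapter~\ref{chap:remarks}. The sufficiency direction is the easier half: assuming $M=M_1\otimes M_2$ with $\mathcal{E}_{M_1}=E_S+\mathcal{E}_{M_2}$, thermalizing only $M_2$ to $T_H$, I would exhibit the degenerate subspace $\operatorname{span}\{\ket{1,0,0}_{SM_1M_2},\ket{0,1,1}_{SM_1M_2}\}$ of $H_{SM}$ (the states have common energy $\mathcal{E}_{M_1}=E_S+\mathcal{E}_{M_2}$), and compare the two populations of $\rho_{SM}^{R,H}=\tau_S\otimes\tau_{M_1}\otimes\tau_{M_2}^H$ on it. One checks $e^{-\beta_R E_S}e^{-\beta_R\mathcal{E}_{M_1}}$ against $e^{-\beta_R\cdot 0}\,e^{-\beta_R\mathcal{E}_{M_1}}\cdot$(hot factor) — more precisely that $\bra{0,1,1}\rho_{SM}^{R,H}\ket{0,1,1} > \bra{1,0,0}\rho_{SM}^{R,H}\ket{1,0,0}$ fails in general but can be arranged by raising $T_H$, so that a swap $U_{\text{inc}}$ on this two-dimensional subspace moves population from $\ket{1,\cdot,\cdot}_S$ into $\ket{0,\cdot,\cdot}_S$, strictly increasing the ground-state population of $S$. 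Since by Lemma~\ref{lemma:staydiag} we may take the target diagonal, this is genuine cooling in our sense.

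For necessity I would argue by elimination. First, by the discussion in Section~\ref{sec:remarksincoh} (Lemma~\ref{lemma:roommachine} and Lemma~\ref{lemma:hotmachine}), a machine with no tensor-product structure can only be left at $T_R$ or fully thermalized to $T_H$, and neither cools; since $d_M=4$ this forces the machine to split as two qubits $M_1,M_2$, and forces the hot bath to act on exactly one of them (thermalizing both to $T_H$ is again the ``fully hot'' case covered by Lemma~\ref{lemma:hotmachine}, and thermalizing none is Lemma~\ref{lemma:roommachine}); label the hot one $M_2$. Then by Lemma~\ref{lemma:degenerate}, cooling requires $U_{\text{inc}}$ to act nontrivially in a degenerate eigenspace of $H_{SM}$; by Lemma~\ref{lemma:machinedeg} and Lemma~\ref{lemma:systdeg} we may discard the degeneracies $\mathcal{E}_{M_1}=0$, $\mathcal{E}_{M_2}=0$, $\mathcal{E}_{M_1}+\mathcal{E}_{M_2}=0$ and $E_S=0$, leaving exactly the five ``mixed'' coincidences listed in the excerpt. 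For each of the four non-favorable ones I would exhibit, exactly as in the sufficiency computation, that the relevant two (or more) populations of $\rho_{SM}^{R,H}$ are already in the order that makes any unitary on that subspace heat (or not affect) the target — e.g. for $E_S=\mathcal{E}_{M_1}$ the degenerate pair is $\ket{0,1,\cdot}$ vs $\ket{1,0,\cdot}$ and the $\ket{0,\cdot}_S$ state always carries more weight, since the hot bath only touches $M_2$ — invoking Corollary~\ref{cor:Uincluded} exactly as in the proof of Lemma~\ref{lemma:hotmachine} to conclude no partial sum of target populations can increase.

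The main obstacle, and the place where I expect to have to defer to the cited Appendix~A of \cite{Clivaz-2019bis}, is the ``combined subspace'' subtlety flagged in the text: a single energy level of $H_{SM}$ may be simultaneously degenerate for several of the coincidence reasons at once (for instance when two of the conditions among the five hold together, or when an accidental extra coincidence merges a ``machine-only'' degeneracy with a ``mixed'' one), and in principle a unitary mixing all the states in such a larger block could cool even though each pairwise mechanism in isolation could not. Ruling this out requires checking, for every way several of the listed equalities can hold simultaneously, that the full multiset of populations of $\rho_{SM}^{R,H}$ restricted to the merged eigenspace is still ordered so that the induced ordering on target populations is non-increasing after any unitary — again a Schur/Corollary~\ref{cor:Uincluded} argument, but now a finite case analysis over the lattice of possible simultaneous coincidences rather than a single inequality. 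For $d_M=4$ this case list is short enough to do by hand, and I would present the representative cases and reference the appendix for the remainder. The upshot is that the only surviving mechanism is condition~\ref{enum:coolingdeg}, $\mathcal{E}_{M_1}=E_S+\mathcal{E}_{M_2}$, which together with the forced tensor structure and the forced choice of hot qubit gives the stated equivalence.
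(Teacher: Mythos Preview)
Your plan matches the paper's approach essentially step for step: force the two-qubit tensor structure and the single hot qubit via Lemmas~\ref{lemma:roommachine} and~\ref{lemma:hotmachine}, discard machine-only and system-only degeneracies via Lemmas~\ref{lemma:machinedeg} and~\ref{lemma:systdeg}, check the five surviving coincidences by direct population comparison, and defer the combined-subspace case analysis to Appendix~A of \cite{Clivaz-2019bis}. One slip to fix in your sufficiency sketch: under $\mathcal{E}_{M_1}=E_S+\mathcal{E}_{M_2}$ the degenerate pair is $\{\ket{0,1,0}_{SM_1M_2},\ket{1,0,1}_{SM_1M_2}\}$ (both at energy $\mathcal{E}_{M_1}$), not $\{\ket{1,0,0},\ket{0,1,1}\}$ --- the latter pair is degenerate precisely under condition~3, $E_S=\mathcal{E}_{M_1}+\mathcal{E}_{M_2}$, which is one of the four cases you must show does \emph{not} cool.
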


Once the machine, as well as which part to heat up, has been identified, the cooling part of the incoherent scenario is fairly straightforward. Indeed, given the hot temperature $T_H$, one heats up $M_2$ to $T_H$ and then performs the unitary cooling maximally within the degenerate subspace. Note that the unitary operation is in particular performed at no cost. In the case of the 2 qubit machine this unitary is given by

\begin{equation}
U = \ket{010} \bra{101}_{SM} + \ket{101} \bra{010}_{SM} + \mathds{1}_{\text{span}^c\{\ket{010}, \ket{101}\}},
\end{equation}
where $\text{span}^c\{\ket{010}, \ket{101}\}$ denotes the complement of the set $\text{span}\{\ket{010}, \ket{101}\}$. Doing so we get
\begin{equation}
r_{\text{inc}}(T_H)= r_S r_{M_1} + \left[ (1-r_S) r_{M_1} + r_S (1-r_{M_1}) \right] (1-r_{M_2}^H).
\end{equation}

The associated temperature $T_{\text{inc}}(T_H)$ is given by the usual formula,
\begin{equation} \label{eq:Tinc2qubits}
T_{\text{inc}}(T_H)=\frac{E_S}{\text{ln} \left( \frac{r_{\text{inc}}(T_H)}{1-r_{\text{inc}}(T_H)}\right)}.
\end{equation}

 The work cost is calculated from the hot bath. According to Eq.~\ref{eq:generaldeltaFinc} this amounts to calculating the heat drawn from the bath, $Q^H$, which is equal to the change of energy of $M_2$ upon heating it. We therefore get
 \begin{equation}
 Q^H= \mathcal{E}_{M_2} (r_{M_2} - r_{M_2}^H)
 \end{equation}
 
 and
 \begin{equation} \label{eq:Finc2qubits}
 \Delta F_{\text{inc}} (T_H) = \mathcal{E}_{M_2} (r_{M_2} - r_{M_2}^H) \left( 1- \frac{T_R}{T_H} \right).
 \end{equation}
 
 From this we get the * quantities by taking the limit $T_H \rightarrow \infty$, i.e.,
 
 \begin{align}
 r_{\text{inc}}^* &= \frac{1}{2} (r+r_{M_1})\\
 T_{\text{inc}}^* &= \frac{E_S}{\ln \left(\frac{r_S+ r_{M_1}}{2-(r_S+r_{M_2})} \right)}\\
 \Delta F_{\text{inc}}^* &= \mathcal{E}_{M_2} (r_{M_2} - \frac{1}{2} ).
 \end{align}

\subsubsection{Coherent Scenario}

We now turn our attention to the coherent scenario. Since we are mostly interested in seeing how it compares to the incoherent scenario, we will impose here the restriction

\begin{equation} \label{eq:restrictioncoh2qubit}
\mathcal{E}_{M_1}= E_S + \mathcal{E}_{M_2},
\end{equation}
although this is not needed here for the machine to cool. This will also simplify the analysis, as will become clearer in the following. As in Section~\ref{subsec:coherent1qubitmachine}, we are interested in solving a special instance of the general problem of Eq.~\ref{eq:deltaFoptimization}. More precisely, we are interested in solving

\begin{equation} \label{eq:cohproblem2qubit}
\min_{v \prec \mathfrak{D}(\rho_{SM})} v \cdot \mathfrak{D}(H_{SM}), \quad \text{s.t. } \sum_{i=0}^3 v_i = c,
\end{equation}

with

\begin{align}
v &= (v_0, \dots, v_7) \in \mathbb{R}^8\\
\mathfrak{D} (H_{SM}) &= (0, \mathcal{E}_{M_2}, \mathcal{E}_{M_1}, \mathcal{E}_{M_1} + \mathcal{E}_{M_2}, E_S, E_S + \mathcal{E}_{M_2}, E_S+ \mathcal{E}_{M_1}, E_S + \mathcal{E}_{M_1} + \mathcal{E}_{M_2})\\
\mathfrak{D}(\rho_{SM})&= \frac{1}{\Tr \left( e^{-\beta_R H_{SM}} \right)} \left(1, e^{- \beta_R \mathcal{E}_{M_2}}, e^{- \beta_R \mathcal{E}_{M_1}},  e^{- \beta_R (\mathcal{E}_{M_1}+\mathcal{E}_{M_2})}, e^{- \beta_R E_S},\right. \\
&\quad   \left. e^{- \beta_R (E_S+\mathcal{E}_{M_2})}, e^{- \beta_R (E_S+\mathcal{E}_{M_1})}, e^{- \beta_R (E_S+\mathcal{E}_{M_1}+\mathcal{E}_{M_1})} \right).
\end{align}

Furthermore, $c \in [r, r_{\text{coh}}^*]$ with 
\begin{equation}
r= \sum_{i=0}^3 \left[ \mathfrak{D}(\rho_{SM}) \right]_i= \frac{1}{1+ e^{-\beta_R E_S}}
\end{equation}

and

\begin{equation}
r_{\text{coh}}^*= \sum_{i=0}^3 \left[ \mathfrak{D}^{\downarrow} (\rho_{SM}) \right]_i.
\end{equation}

We first would like to determine $r_{\text{coh}}^*$ to know what range $c$ is allowed to evolve in, and also to know what the maximal cooling on the target is. For this we need to determine what the 4 biggest entries of $\mathfrak{D}(\rho_{SM})$ are. Looking at this more closely, and taking our restriction of Eq.~\ref{eq:restrictioncoh2qubit} into account, we find the following ordering

\begin{align}
&\left[ \mathfrak{D} (\rho_{SM}) \right]_0 > \left\{\left[ \mathfrak{D} (\rho_{SM}) \right]_1 , \left[ \mathfrak{D} (\rho_{SM}) \right]_4 \right\} > \left[ \mathfrak{D} (\rho_{SM}) \right]_2 = \left[ \mathfrak{D} (\rho_{SM}) \right]_5 > \\
& >\left\{ \left[ \mathfrak{D} (\rho_{SM}) \right]_3, \left[ \mathfrak{D} (\rho_{SM}) \right]_6 \right\} > \left[ \mathfrak{D} (\rho_{SM}) \right]_7,
\end{align}

where $\{ , \}$ means that the ordering depends on whether $\mathcal{E}_{M_2} < E_S$ or $\mathcal{E}_{M_2} > E_S$. More precisely

\begin{equation}
\{x,y\} = \begin{cases}
x < y &, \text{if } \mathcal{E}_{M_2} < E_S\\
x > y & , \text{if } \mathcal{E}_{M_2} > E_S\\
x=y & , \text{if } \mathcal{E}_{M_2} =E_S.
\end{cases}
\end{equation}

In both cases

\begin{align}
r_{\text{coh}}^* &= \left[ \mathfrak{D} (\rho_{SM}) \right]_0 + \left[ \mathfrak{D} (\rho_{SM}) \right]_1 + \left[ \mathfrak{D} (\rho_{SM}) \right]_4 +\left[ \mathfrak{D} (\rho_{SM}) \right]_5\\
&= r_{M_1},
\end{align}

which gives an associated temperature of

\begin{equation}
T_{\text{coh}}^* = \frac{E_S}{\mathcal{E}_{M_1}} T_R.
\end{equation}

A transformation achieving this cooling is the unitary $U$ swapping the energy eigenstates $\ket{011}_{SM_1 M_2}$ and $\ket{010}_{SM_1 M_2}$, i.e.,
\begin{equation}
U= \ket{011} \bra{010}_{SM_1M_2} + \ket{010} \bra{011}_{SM_1M_2} + \mathds{1}_{\text{span}^c \{ \ket{011}_{SM_1M_S}, \ket{010}_{SM_1M_2}\}}.
\end{equation}

This is however not the unitary operating at the minimal work cost since the state that one gets after performing $U$ is not passive in the respective subspaces of Lemma~\ref{lemma:split}. We will now work out the energetically most efficient transformation, which depends on the ordering of $ \mathfrak{D}(\rho_{SM})$. As the ordering depends on whether $\mathcal{E}_{M_2} \leq E_S$ or $\mathcal{E}_{M_2} > E_S$, we treat both cases separately.

\paragraph{{\bf a.} $\mathcal{E}_{M_2} \leq E_S$}

In this case we have 
\begin{equation} \label{eq:orderingSmallEs}
\begin{aligned}
&\left[ \mathfrak{D} (\rho_{SM}) \right]_0 > \left[ \mathfrak{D} (\rho_{SM}) \right]_1 , \left[ \mathfrak{D} (\rho_{SM}) \right]_4  > \left[ \mathfrak{D} (\rho_{SM}) \right]_5 = \left[ \mathfrak{D} (\rho_{SM}) \right]_2 > \\
& > \left[ \mathfrak{D} (\rho_{SM}) \right]_3, \left[ \mathfrak{D} (\rho_{SM}) \right]_6  > \left[ \mathfrak{D} (\rho_{SM}) \right]_7,
\end{aligned}
\end{equation}

and so the vector $v_{\leq} \in \mathbb{R}^8$ minimizing Eq.~\ref{eq:cohproblem2qubit} for $c=r_{\text{coh}}^*$ is given by

\begin{align}
v&=\left(\left[ \mathfrak{D} (\rho_{SM}) \right]_0 , \left[ \mathfrak{D} (\rho_{SM}) \right]_1 , \left[ \mathfrak{D} (\rho_{SM}) \right]_4  , \left[ \mathfrak{D} (\rho_{SM}) \right]_5 , \left[ \mathfrak{D} (\rho_{SM}) \right]_2 , \right.\\
& , \left. \left[ \mathfrak{D} (\rho_{SM}) \right]_3, \left[ \mathfrak{D} (\rho_{SM}) \right]_6  , \left[ \mathfrak{D} (\rho_{SM}) \right]_7 \right).
\end{align}

One checks that, indeed, the first half of $v_{\leq}$,
 \begin{equation}
 a_{v_{\leq}}= \left( \left[ \mathfrak{D} (\rho_{SM}) \right]_0 , \left[ \mathfrak{D} (\rho_{SM}) \right]_1 , \left[ \mathfrak{D} (\rho_{SM}) \right]_4  , \left[ \mathfrak{D} (\rho_{SM}) \right]_5  \right),
 \end{equation}
  is inversely ordered with respect the first half of $\mathfrak{D}(H_{SM})$,

\begin{equation}
a_{\mathfrak{D}(H_{SM})}= \left(0, \mathcal{E}_{M_2}, \mathcal{E}_{M_1}, \mathcal{E}_{M_1}+ \mathcal{E}_{M_2} \right),
\end{equation}

and that the second half of $v_{\leq}$, 

 \begin{equation}
 b_{v_{\leq}}= \left( \left[ \mathfrak{D} (\rho_{SM}) \right]_2 , \left[ \mathfrak{D} (\rho_{SM}) \right]_3 , \left[ \mathfrak{D} (\rho_{SM}) \right]_6  , \left[ \mathfrak{D} (\rho_{SM}) \right]_7  \right),
 \end{equation}
  is also inversely ordered with respect the second half of $\mathfrak{D}(H_{SM})$,

\begin{equation}
b_{\mathfrak{D}(H_{SM})}= \left(E_S, E_S+\mathcal{E}_{M_2}, E_S+\mathcal{E}_{M_1}, E_S+\mathcal{E}_{M_1}+ \mathcal{E}_{M_2} \right).
\end{equation}

A unitary $U_{\leq}^*$ that achieves $v_{\leq}$ is given by

\begin{equation}
U_{\leq}^* = U_{24} U_{35},
\end{equation}

with
\begin{equation}
U_{ij} = \ket{i_2} \bra{j_2} + \ket{j_2} \bra{i_2} + \mathds{1}_{\text{span}^c\{ \ket{i_2}, \ket{j_2} \}},
\end{equation}
where $i_2$ is the 3 digit display of $i= 0,\dots,7$ in base 2, e.g., $0_2=000$. This corresponds to swapping the energy eigenstates $\ket{010}_{SM}$ with $\ket{100}_{SM}$ as well as $\ket{011}_{SM}$ with $\ket{101}_{SM}$. Physically it corresponds to swapping the population of $S$ with $M_1$ and $U_{\leq}^*$ can be written compactly as

\begin{equation}
U_{\leq}^*= e^{-i \frac{\pi}{2} \mathcal{L}_{SM_1}},
\end{equation}
where
\begin{equation}
\mathcal{L}_{SM_k}= i \ket{01} \bra{10}_{S M_k} - i \ket{10} \bra{01}_{S M_k}.
\end{equation}

The work cost associated to $U_{\leq}^*$ is
\begin{align}
\Delta F_{\text{coh}}^* &= (\mathcal{E}_{M_1}- E_S) (r_{M_1} - r_S)\\
&= \mathcal{E}_{M_2} (r_{M_1}- r_S).
\end{align}

\paragraph{{\bf b.} $\mathcal{E}_{M_2} > E_S$}

In this case we have 

\begin{align}
&\left[ \mathfrak{D} (\rho_{SM}) \right]_0 > \left[ \mathfrak{D} (\rho_{SM}) \right]_4 , \left[ \mathfrak{D} (\rho_{SM}) \right]_1  > \left[ \mathfrak{D} (\rho_{SM}) \right]_5 = \left[ \mathfrak{D} (\rho_{SM}) \right]_2 > \\
& > \left[ \mathfrak{D} (\rho_{SM}) \right]_6, \left[ \mathfrak{D} (\rho_{SM}) \right]_3  > \left[ \mathfrak{D} (\rho_{SM}) \right]_7,
\end{align}

and so the vector $v_{>} \in \mathbb{R}^8$ minimizing Eq.~\ref{eq:cohproblem2qubit} for $c=r_{\text{coh}}^*$ is given by

\begin{align}
v_{>}&=\left(\left[ \mathfrak{D} (\rho_{SM}) \right]_0 , \left[ \mathfrak{D} (\rho_{SM}) \right]_4 , \left[ \mathfrak{D} (\rho_{SM}) \right]_1  , \left[ \mathfrak{D} (\rho_{SM}) \right]_5 , \left[ \mathfrak{D} (\rho_{SM}) \right]_2 , \right.\\
& , \left. \left[ \mathfrak{D} (\rho_{SM}) \right]_6, \left[ \mathfrak{D} (\rho_{SM}) \right]_3  , \left[ \mathfrak{D} (\rho_{SM}) \right]_7 \right).
\end{align}

As before, $a_{v_{>}}$ and $b_{v_{>}}$ are indeed inversely ordered w.r.t. $a_{\mathfrak{D}(H_{SM})}$ and $b_{\mathfrak{D}(H_{SM})}$ respectively. A unitary $U_{\geq}^*$ that achieves $v_{>}$ is here given by

\begin{equation}
U_{>}^* = U_{24} U_{35} U_{14} U_{36}.
\end{equation}

 Physically, $U_{14}$ and $U_{36}$ together correspond to swapping the populations of $S$ and $M_2$. Once this is done, $U_{24} U_{35}$ is applied, which corresponds to swapping the population of $S$ with $M_1$. One can write $U_{>}^*$ compactly as

\begin{equation}
U_{>}^*= e^{-i \frac{\pi}{2} \mathcal{L}_{SM_1}} e^{-i \frac{\pi}{2} \mathcal{L}_{SM_2}}.
\end{equation}

The work cost associated to $U_{>}^*$ is
\begin{align}
\Delta F_{\text{coh}, >}^* &= (\mathcal{E}_{M_2}- E_S) (r_{M_2} - r_S)+ \mathcal{E}_{M_2} (r_{M_1} - r_{M_2}).
\end{align}

All the analysis of this section until now is a straight exemplification of Lemma~\ref{lemma:split}. It allowed us to nevertheless gather some important intuition about the general case of $r_{\text{coh}} \in [r_S, r_{\text{coh}}^*]$. We also saw that in the case $\mathcal{E}_{M_2} > E_S$, one could adopt a better strategy than swapping the populations of $S$ and $M_1$ directly by swapping the populations of $S$ with that of $M_2$ first. The achieved temperature on the target is the same but the work cost differs between both strategies. \\

Next we turn our attention to the case, where $r_{\text{coh}} \in [r_S, r_{\text{coh}}^*]$. Here we again distinguish between the case $\mathcal{E}_{M_2} \leq E_S$ and $\mathcal{E}_{M_2} > E_S$. For $\mathcal{E}_{M_2} \leq E_S$ we find a result that is a complete analog of the 1 qubit machine result of Section~\ref{subsec:coherent1qubitmachine}.

\begin{theorem} \label{thm:optsolution1-2small}
Let $\mathcal{E}_{M_2} \leq E_S$. Let $r_{\text{coh}} \in [r_S, r_{\text{coh}}^*]$. Let $\mu = \frac{r_{\text{coh}}-r_S}{r_{M_1}-r_S}$. Let $t = \arcsin(\sqrt{\mu})$. 
Then
\begin{equation}
v= \mathfrak{D} (U \rho_{SM} U^{\dagger}), \quad \text{with } U=e^{- i t \mathcal{L}_{S M_1}}
\end{equation}
minimizes the optimization problem of Eq.~\ref{eq:cohproblem2qubit} for $c=r_{\text{coh}}$ and has an associated work cost of
\begin{equation} \label{eq:Fcoh2qubitssmall}
\Delta F_{\text{coh}} = (r_{\text{coh}}-r_S) (\mathcal{E}_{M_1}-E_S).
\end{equation}
\end{theorem}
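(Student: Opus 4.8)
The plan is to follow exactly the structure of the proof of Theorem~\ref{thm:optsolution1-1}, the one-qubit-machine case, since the restriction $\mathcal{E}_{M_1} = E_S + \mathcal{E}_{M_2}$ collapses the eight-dimensional problem into essentially the same T-transform situation. First I would record the relevant ordering of $\mathfrak{D}(\rho_{SM})$ under the hypothesis $\mathcal{E}_{M_2} \le E_S$: this is precisely Eq.~\eqref{eq:orderingSmallEs}, which tells us that the four largest entries of $\mathfrak{D}(\rho_{SM})$ are $[\mathfrak{D}(\rho_{SM})]_0, [\mathfrak{D}(\rho_{SM})]_1, [\mathfrak{D}(\rho_{SM})]_4$ and one of $\{[\mathfrak{D}(\rho_{SM})]_2,[\mathfrak{D}(\rho_{SM})]_5\}$ (which are equal because of the gap restriction). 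By the Lemma computing $r_{\text{coh}}^*$, this gives $r_{\text{coh}}^* = r_{M_1}$, and defines $\mu = (r_{\text{coh}} - r_S)/(r_{M_1} - r_S) \in [0,1]$ as the natural interpolation parameter.

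Next I would invoke Lemma~\ref{lemma:split}: any $v \prec \mathfrak{D}(\rho_{SM})$ with $\sum_{i=0}^{3} v_i = r_{\text{coh}}^*$ splits as $a_v \prec a_{\mathfrak{D}^{\downarrow}(\rho_{SM})}$ and $b_v \prec b_{\mathfrak{D}^{\downarrow}(\rho_{SM})}$. However, here the constraint is $\sum_{i=0}^3 v_i = c = r_{\text{coh}} \le r_{\text{coh}}^*$, not necessarily $r_{\text{coh}}^*$, so instead I would mimic the rewriting trick in the proof of Theorem~\ref{thm:optsolution1-1}. Concretely, write
\begin{equation}
v \cdot \mathfrak{D}(H_{SM}) = \big[\text{terms controlled by } v_0, v_3 \text{ and the partial sum } \textstyle\sum_{i=0}^3 v_i\big] + \big[\text{terms fixed by the constraint}\big],
\end{equation}
exploiting $\mathcal{E}_{M_1} = E_S + \mathcal{E}_{M_2}$ so that the energy coefficients group cleanly. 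Then bound below by $\min A + \min B$, using that $v_0 \le [\mathfrak{D}^{\downarrow}(\rho_{SM})]_0$, that the bottom entries $v_6,v_7$ are bounded below by the corresponding smallest entries of $\mathfrak{D}^{\downarrow}(\rho_{SM})$, and that the partial sum equals $c$. The minimiser $v(c)$ should come out as the vector that is a convex combination (with weight $\mu(c)$) of the sorted populations on the relevant swapped pair of levels, leaving the other levels untouched — this is exactly the diagonal of $e^{-it\mathcal{L}_{SM_1}} \rho_{SM} e^{it\mathcal{L}_{SM_1}}$ with $t = \arcsin\sqrt{\mu}$, by the same computation as in the one-qubit-machine case (the $\sin^2$/$\cos^2$ weights of a two-level rotation give exactly $\mu$ and $1-\mu$).

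Finally I would check that $v(r_{\text{coh}})$ is admissible: it satisfies $v \prec \mathfrak{D}(\rho_{SM})$ (it is obtained from $\mathfrak{D}^{\downarrow}$ by a single T-transform on one pair of levels, hence majorised by it) and meets the constraint $\sum_{i=0}^3 [v(r_{\text{coh}})]_i = r_{\text{coh}}$, so the lower bound is attained and $U = e^{-it\mathcal{L}_{SM_1}}$ is optimal. The work cost then follows from $\Delta F_{\text{coh}} = \mathfrak{D}(U\rho_{SM}U^\dagger)\cdot\mathfrak{D}(H_{SM}) - \mathfrak{D}(\rho_{SM})\cdot\mathfrak{D}(H_{SM})$: since the rotation only moves population between levels $\ket{010}_{SM}$ and $\ket{100}_{SM}$ (energies $\mathcal{E}_{M_1}$ and $E_S$) by an amount $r_{\text{coh}} - r_S$, the energy change is $(r_{\text{coh}} - r_S)(\mathcal{E}_{M_1} - E_S)$, giving Eq.~\eqref{eq:Fcoh2qubitssmall}; the detailed verification of this last identity can be deferred to the appendix reference, as in Theorem~\ref{thm:optsolution1-1}. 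The main obstacle I anticipate is bookkeeping: making sure that in the regime $\mathcal{E}_{M_2} \le E_S$ the optimal T-transform really involves only the single pair $(\ket{010}_{SM},\ket{100}_{SM})$ and that no competing pair of levels can do better — this is where the precise ordering Eq.~\eqref{eq:orderingSmallEs} (in particular $[\mathfrak{D}(\rho_{SM})]_1 \ge [\mathfrak{D}(\rho_{SM})]_4$ when $\mathcal{E}_{M_2}\le E_S$) is essential, and it is exactly the point where the case $\mathcal{E}_{M_2} > E_S$ would force the more elaborate strategy of Theorem~\ref{thm:optsolution1-2small}'s sequel.
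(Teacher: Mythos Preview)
Your overall strategy matches the paper's: the proof is indeed the eight-dimensional analogue of Theorem~\ref{thm:optsolution1-1}, carried out via a tailored rewriting of $v\cdot\mathfrak{D}(H_{SM})$ followed by term-by-term minimisation. Two points, however, need correction.

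First, the optimal transformation is \emph{not} a single T-transform on one pair of levels. Since $\mathcal{L}_{SM_1}$ acts as $\mathcal{L}_{SM_1}\otimes\mathds{1}_{M_2}$ on the full $SM_1M_2$ space, the unitary $e^{-it\mathcal{L}_{SM_1}}$ simultaneously rotates \emph{two} pairs: $\ket{010}\leftrightarrow\ket{100}$ (indices $2,4$) and $\ket{011}\leftrightarrow\ket{101}$ (indices $3,5$). The paper's minimiser is accordingly $v(c)=T_{24}(\mu(c))\,T_{35}(\mu(c))\,\mathfrak{D}(\rho_{SM})$, a product of two T-transforms sharing the same parameter. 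Your work-cost formula happens to survive because both pairs have the same energy gap $\mathcal{E}_{M_1}-E_S$ and the population shifts add up to $r_{\text{coh}}-r_S$, but the admissibility check (that $v(c)\prec\mathfrak{D}(\rho_{SM})$) and the identification of the saturating vector must account for both pairs.

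Second, your sketch of the rewriting is too close to the four-dimensional case. In eight dimensions one needs more than control over $v_0$, $v_7$, and the constraint sum; the paper's rewriting (Eq.~\eqref{eq:usefulvHSmall}) also isolates the partial sums $v_0+v_1$ and $v_0+v_1+v_2+v_4$ (the latter equalling $r_{\text{coh}}^*$ at optimum) as well as $v_6+v_7$, each multiplied by a nonnegative coefficient so that the majorization inequalities can be applied term by term. Getting these groupings right, and checking that the resulting lower bound is simultaneously saturated by the single vector $v(c)$, is exactly the ``bookkeeping'' you anticipate --- but it is more delicate than the $v_0,v_3$ story from Theorem~\ref{thm:optsolution1-1}.
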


\begin{proof}[Proof idea]

The idea of the proof is exactly the same as that of Theorem~\ref{thm:optsolution1-1}, namely to rewrite
\begin{equation}
v \cdot \mathfrak{D}(H_{SM})
\end{equation}
such that the majorization conditions as well as the constraint can naturally be expressed. The practical rewriting depends on the ordering of $\mathfrak{D}(\rho_{SM})$ as well as how $E_S, \mathcal{E}_{M_1}, \text{ and } \mathcal{E}_{M_2}$ relate to one another. In the case $\mathcal{E}_{M_1}=E_S + \mathcal{E}_{M_2}$ and $\mathcal{E}_{M_2} \leq E_S$ the ordering of $\mathfrak{D}(\rho_{SM})$ is fixed, and given by Eq.~\ref{eq:orderingSmallEs}, and the useful rewriting is the following
\begin{equation} \label{eq:usefulvHSmall}
\begin{aligned}
v \cdot \mathfrak{D}(H_{SM}) &= \left( \sum_{i=0}^3 v_i-v_0 \right) \mathcal{E}_{M_2} +[1-(v_0+v_1+v_4+v_2)] \mathcal{E}_{M_2}+(1-\sum_{i=0}^3 v_i) E_S\\
& + [\sum_{i=0}^3 v_i-(v_0+v_1)] (\mathcal{E}_{M_1}-\mathcal{E}_{M_2})+ v_7 \mathcal{E}_{M_2} + (v_6+v_7) (\mathcal{E}_{M_1}-\mathcal{E}_{M_2}).
\end{aligned}
\end{equation}

Using again that the minimum of the sum is greater than the sum of the minima we get
\begin{align}
\min_{\stackrel{v \prec \mathfrak{D}(\rho_{SM})}{\sum_{i=0}^3 v_i =c}} v \cdot \mathfrak{D}(H_{SM}) &\geq v(c) \cdot \mathfrak{D}(\rho_{SM}),
\end{align}
where
\begin{equation}
v(c)= T_{24}(\mu(c)) T_{35}(\mu(c)) \mathfrak{D}(\rho_{SM}),
\end{equation}
with
\begin{align}
T_{ij}(\lambda) &= (1-\lambda) \mathds{1} + \lambda Q_{ij}\\
Q_{ij}&: \text{ permutation matrix exchanging coordinates $i$ and $j$ only},\\
\mu(c)&= \frac{c-r_S}{r_{M_1}-r_S}.
\end{align}

As $\sum_{i=0}^3 [v(c)]_i =c$, $v(r_{\text{coh}})$ is the solution of our problem. For further details, the rewriting of $v(r_{\text{coh}})$ as in the statement, as well as the expression of $\Delta F_{\text{coh}}$ we refer to Appendix C of \cite{Clivaz-2019bis}.
\end{proof}

The solution of Theorem~\ref{thm:optsolution1-2small} corresponds to partially performing the swapping of $S$ and $M_1$ instead of performing the full swap. In the case of $\mathcal{E}_{M_2} > E_S$ we also find a very intuitive result, namely

\begin{theorem} \label{thm:optsolution1-2big}
Let $\mathcal{E}_{M_2} > E_S$. Let $r_{\text{coh}} \in [r_S, r_{\text{coh}}^*]$. Let 

\begin{equation}
\mu = \begin{cases}
 \frac{r_{\text{coh}}-r_S}{2( r_{M_2}-r_S)} &, \text{if } r_S \leq r_{\text{coh}} \leq r_{M_2},\\
 \frac{r_{\text{coh}}-r_{M_2}}{2( r_{M_1}-r_{M_2})}+\frac{1}{2} &, \text{if } r_{M_2} < r_{\text{coh}} \leq r_{M_1}.
 \end{cases}
\end{equation}

 Let $f(\mu) = \arcsin(\sqrt{\min\{2 \mu,1\}})$. Let $g(\mu) = \arcsin(\sqrt{\max\{2 \mu-1,0\}})$.
Then
\begin{equation}
v= \mathfrak{D} (U \rho_{SM} U^{\dagger}), \quad \text{with } U=e^{- i g(\mu) \mathcal{L}_{S M_1}} e^{- i f(\mu) \mathcal{L}_{S M_2}},
\end{equation}
minimizes the optimization problem of Eq.~\ref{eq:cohproblem2qubit} for $c=r_{\text{coh}}$ and has an associated work cost of
\begin{equation}
\Delta F_{\text{coh}} =\begin{cases} 
(r_{\text{coh}}-r_S) (\mathcal{E}_{M_2}-E_S) &, \text{if } r \leq r_{\text{coh}} < r_{M_2}\\
(r_{M_2}-r_S) (\mathcal{E}_{M_2}-E_S)+(r_{\text{coh}}-r_{M_2}) (\mathcal{E}_{M_1}-E_S) &, \text{if } r_{M_2} \leq r_{\text{coh}} < r_{M_1}.
\end{cases}
\end{equation}
\end{theorem}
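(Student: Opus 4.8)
The plan is to imitate verbatim the proofs of Theorem~\ref{thm:optsolution1-1} and Theorem~\ref{thm:optsolution1-2small}. First I would reduce, as in Section~\ref{sec:twoqubitmachine}, to the majorization problem of Eq.~\ref{eq:cohproblem2qubit}: minimize $v\cdot\mathfrak{D}(H_{SM})$ over $v\prec\mathfrak{D}(\rho_{SM})$ subject to $\sum_{i=0}^3 v_i=r_{\text{coh}}$, using the ordering of $\mathfrak{D}(\rho_{SM})$ valid in the present regime $\mathcal{E}_{M_2}>E_S$ (with $\mathcal{E}_{M_1}=E_S+\mathcal{E}_{M_2}$), namely $[\mathfrak{D}(\rho_{SM})]_0>[\mathfrak{D}(\rho_{SM})]_4>[\mathfrak{D}(\rho_{SM})]_1>[\mathfrak{D}(\rho_{SM})]_2=[\mathfrak{D}(\rho_{SM})]_5>[\mathfrak{D}(\rho_{SM})]_6>[\mathfrak{D}(\rho_{SM})]_3>[\mathfrak{D}(\rho_{SM})]_7$. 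The work cost at the end is $\Delta F_{\text{coh}}=v(r_{\text{coh}})\cdot\mathfrak{D}(H_{SM})-\mathfrak{D}(\rho_{SM})\cdot\mathfrak{D}(H_{SM})$, so the whole theorem follows once the minimizer $v(r_{\text{coh}})$ is identified.

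The core step is to split $[r_S,r_{M_1}]$ at $r_{M_2}$ and, in each subinterval, rewrite $v\cdot\mathfrak{D}(H_{SM})$ --- in the spirit of Eq.~\ref{eq:usefulvHSmall}, exploiting $\mathcal{E}_{M_1}-\mathcal{E}_{M_2}=E_S$ --- as a sum of terms each of which is separately minimized, over $v\prec\mathfrak{D}(\rho_{SM})$ with the constraint, by one and the same candidate vector, so that the bound $\min\sum\ge\sum\min$ is tight. For $r_S\le r_{\text{coh}}\le r_{M_2}$ the candidate is $v(c)=T_{14}(2\mu(c))\,T_{36}(2\mu(c))\,\mathfrak{D}(\rho_{SM})$ with $\mu(c)=\frac{c-r_S}{2(r_{M_2}-r_S)}$: it moves population only within the pairs $\{1,4\}$ and $\{3,6\}$, i.e. realizes a partial $S\leftrightarrow M_2$ swap, the cheapest form of cooling since $\mathcal{E}_{M_2}-E_S<\mathcal{E}_{M_1}-E_S$. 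Here the coefficient of $E_S$ in the rewriting collapses to $1-\sum_{i=0}^3 v_i$ up to manifestly nonnegative extra terms in $v_6,v_7$, while the coefficient of $\mathcal{E}_{M_2}$ is controlled by the majorization bounds $v_0\le[\mathfrak{D}(\rho_{SM})]_0$, $v_0+v_4\le[\mathfrak{D}(\rho_{SM})]_0+[\mathfrak{D}(\rho_{SM})]_4$ and their downward complements for $v_7$ and $v_3$. For $r_{M_2}<r_{\text{coh}}\le r_{M_1}$ the $S\leftrightarrow M_2$ swap is completed and a partial $S\leftrightarrow M_1$ swap superimposed, at the higher unit cost $\mathcal{E}_{M_1}-E_S$; the candidate is $v(c)=T_{24}(2\mu(c)-1)\,T_{35}(2\mu(c)-1)\,T_{14}(1)\,T_{36}(1)\,\mathfrak{D}(\rho_{SM})$ with $\mu(c)=\frac{c-r_{M_2}}{2(r_{M_1}-r_{M_2})}+\frac{1}{2}$, and the rewriting is adjusted so that this vector realizes the componentwise minima.

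Next come the routine checks. Each candidate is obtained from $\mathfrak{D}(\rho_{SM})$ by T-transforms (and one permutation), hence $v(c)\prec\mathfrak{D}(\rho_{SM})$ automatically; a short computation with the Gibbs factorization of $\mathfrak{D}(\rho_{SM})$ gives $\sum_{i=0}^3 [v(c)]_i=c$ --- in the first regime because $[\mathfrak{D}(\rho_{SM})]_4-[\mathfrak{D}(\rho_{SM})]_1+[\mathfrak{D}(\rho_{SM})]_6-[\mathfrak{D}(\rho_{SM})]_3=r_{M_2}-r_S$, and in the second because the index blocks $\{0,4,2,6\}$ and $\{0,4,1,5\}$ are precisely those where $M_2=0$ and $M_1=0$, whose total weights are $r_{M_2}$ and $r_{M_1}$. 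The passivity/ordering arguments already used for the endpoint $c=r_{\text{coh}}^*$ in Section~\ref{sec:twoqubitmachine} then show that the lower bounds in each block are attained by $v(c)$. Finally, $e^{-if\mathcal{L}_{SM_2}}$ implements $T_{14}(\sin^2 f)\,T_{36}(\sin^2 f)$ on the diagonal and $e^{-ig\mathcal{L}_{SM_1}}$ implements $T_{24}(\sin^2 g)\,T_{35}(\sin^2 g)$; with $\sin^2 f(\mu)=\min\{2\mu,1\}$ and $\sin^2 g(\mu)=\max\{2\mu-1,0\}$ one gets $v(r_{\text{coh}})=\mathfrak{D}(U\rho_{SM}U^\dagger)$ for $U=e^{-ig(\mu)\mathcal{L}_{SM_1}}e^{-if(\mu)\mathcal{L}_{SM_2}}$, and substituting $v(r_{\text{coh}})$ back into $v\cdot\mathfrak{D}(H_{SM})-\mathfrak{D}(\rho_{SM})\cdot\mathfrak{D}(H_{SM})$ yields the piecewise formula for $\Delta F_{\text{coh}}$.

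The main obstacle is the core step: guessing the precise rewriting in each of the two regimes so that $\min\sum=\sum\min$ holds with a single minimizer, and in particular making the two regimes agree continuously at $r_{\text{coh}}=r_{M_2}$ while establishing --- at the level of the lower bound, not merely intuitively --- that completing the $S\leftrightarrow M_2$ swap before starting the $S\leftrightarrow M_1$ one is genuinely optimal and that interleaving the two swaps cannot do better. As in Theorem~\ref{thm:optsolution1-1} and Theorem~\ref{thm:optsolution1-2small}, the detailed rewritings, the tightness verification, and the final energy bookkeeping would be deferred to Appendix C of \cite{Clivaz-2019bis}.
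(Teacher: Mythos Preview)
Your proposal is correct and follows essentially the same approach as the paper's own proof: split the range at $r_{M_2}$, produce in each subinterval a tailored rewriting of $v\cdot\mathfrak{D}(H_{SM})$ so that $\min\sum\ge\sum\min$ is saturated by a single candidate, identify those candidates as $T_{14}T_{36}$ and $T_{24}T_{35}T_{14}(1)T_{36}(1)$ acting on $\mathfrak{D}(\rho_{SM})$ (your $2\mu$ and $2\mu-1$ are exactly the paper's $\mu_1$ and $\mu_2$), and defer the detailed rewritings and bookkeeping to Appendix~C of \cite{Clivaz-2019bis}. The paper in fact spells out the two explicit rewritings (its Eq.~\ref{eq:usefulbeginBig} and the analogous one for the second regime), which is precisely the step you flag as the main obstacle, but otherwise your plan matches the paper's argument step for step.
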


\begin{proof}[Proof idea]

 In this case we have two practical rewritings of $v \cdot \mathfrak{D}(H_{SM})$ depending on the value of $r_{\text{coh}}$. The first one is practical for $r_{\text{coh}} \in [r_S, r_{M_2}]$ and is the following
 \begin{equation} \label{eq:usefulbeginBig}
\begin{aligned}
v \cdot \mathfrak{D}(H_{SM}) =& -v_0 \mathcal{E}_{M_2} + (-v_0-v_1-v_4) (\mathcal{E}_{M_1}-\mathcal{E}_{M_2}) + v_7 \mathcal{E}_{M_2}\\
&+ (v_6+v_3+v_7) (\mathcal{E}_{M_1}-\mathcal{E}_{M_2}) + (v_5+v_6+v_3+v_7) (2 \mathcal{E}_{M_2}-\mathcal{E}_{M_1}) \\
& + \mathcal{E}_{M_1}+ E_S -\mathcal{E}_{M_2}+ \sum_{i=0}^3 v_i (\mathcal{E}_{M_2}-E_S).
\end{aligned}
\end{equation}

This leads to
\begin{align}
\min_{\stackrel{v \prec \mathfrak{D}(\rho_{SM})}{\sum_{i=0}^3 v_i =c}} v \cdot \mathfrak{D}(H_{SM}) &\geq v_1(c) \cdot \mathfrak{D}(\rho_{SM}),
\end{align}
where
\begin{equation}
v_1(c)= T_{14}(\mu_1(c)) T_{36}(\mu_1(c)) \mathfrak{D}(\rho_{SM}),
\end{equation}
with
\begin{align}
\mu_1(c)= \frac{c-r_S}{r_{M_2}-r_S}.
\end{align}

As $\sum_{i=0}^3 [v_1(c)]_i =c$ for $c \in [r_S , r_{M_2}]$, $v_1(r_{\text{coh}})$ is the solution of our problem for $r_{\text{coh}} \in [r_S, r_{M_2}]$. For $r_{\text{coh}} \in [r_{M_2}, r_{M_1}]$ we use the following rewriting
\begin{align}
v \cdot \mathfrak{D}(H_{SM}) =& -v_0 \mathcal{E}_{M_2} + (-v_0-v_1) (\mathcal{E}_{M_1}-\mathcal{E}_{M_2}) + (v_6+v_7) (\mathcal{E}_{M_1}-\mathcal{E}_{M_2})  \\
&+ v_7 \mathcal{E}_{M_2}+ (v_3+v_5+v_6+v_7) \mathcal{E}_{M_2}  +  E_S + \sum_{i=0}^3 v_i (\mathcal{E}_{M_1}-E_S).
\end{align}

This leads to
\begin{align}
\min_{\stackrel{v \prec \mathfrak{D}(\rho_{SM})}{\sum_{i=0}^3 v_i =c}} v \cdot \mathfrak{D}(H_{SM}) &\geq v_2(c) \cdot \mathfrak{D}(\rho_{SM}),
\end{align}
where
\begin{equation}
v_2(c)= T_{24}(\mu_2(c)) T_{35}(\mu_2(c)) T_{14}(1) T_{36}(1) \mathfrak{D}(\rho_{SM}),
\end{equation}
with
\begin{align}
\mu_2(c)= \frac{c-r_{M_2}}{r_{M_1}-r_{M_2}}.
\end{align}

As $\sum_{i=0}^3 [v_2(c)]_i =c$ for $c \in [r_{M_2} , r_{M_1}]$, $v_2(r_{\text{coh}})$ is the solution of our problem for $r_{\text{coh}} \in [r_{M_2}, r_{M_1}]$. For further details we refer to Appendix C of \cite{Clivaz-2019bis}.
\end{proof}

What Theorem~\ref{thm:optsolution1-2big} says is that for $r_{\text{coh}} \in [r_S, r_{M_2}]$, the best strategy is to partially swap the populations of $S$ and $M_2$. Indeed, in that case $\mu \in [0,\frac{1}{2}]$ and so $g(\mu) = 0$ and $f(\mu)= \arcsin(\sqrt{2\mu})$, so that
\begin{equation}
U= e^{-i \arcsin(\sqrt{2 \mu}) \mathcal{L}_{SM_2}}.
\end{equation}

If $r_{\text{coh}} \in (r_{M_2}, r_{M_1}]$, Theorem~\ref{thm:optsolution1-2big} tells us that the best strategy is to fully swap the populations of $S$ and $M_2$ and then subsequently partially swap the populations of $S$ and $M_1$. Indeed, in that case $\mu \in (\frac{1}{2}, 1]$ so that $f(\mu)= \frac{\pi}{2}$ and $g(\mu)= \arcsin (\sqrt{2 \mu -1})$. We therefore have
\begin{equation}
U= e^{-i \arcsin(\sqrt{2 \mu-1}) \mathcal{L}_{S M_1}} e^{- i\frac{\pi}{2} \mathcal{L}_{SM_2}}.
\end{equation}

 Note that it is a priori not evident at all, at least to us, that out of all the $ 8 \times 8 $ unitaries that one could potentially choose, that the ones performing the partial swaps of the end result are the most efficient in terms of energy expenditure. The proof has to deal with all these possibilities at first and makes sure that no other transformation performs better. To do so it uses a lot of the fine-tuned information about the problem, i.e., the specific ordering of $\mathfrak{D}(\rho_{SM})$ as well as the relationship of the different energy levels of the joint system $SM$. Given the end results and the fact that they are so intuitive suggests, however, that this fine-tuned information might not, after all, be that crucial. It would be interesting to find a proof that is based on a more general principle. This might allow generalizing the result to more complicated machines more easily as well.

\subsubsection{Coherent vs. Incoherent}

Now that we have analyzed both scenarios in detail for the two qubit machine cooling a qubit target, we would like to see how they compare both in terms of cooling performance and in terms of cooling for a given amount of injected work cost. This comparison is summarized in Figure~\ref{fig:compasingle}, where the amount of cooling vs. the associated work cost is mapped out. The incoherent curve is generated by plotting $T_{\text{inc}}(T_H)$, Eq.~\ref{eq:Tinc2qubits}, and $\Delta F_{\text{inc}}(T_H)$, Eq.~\ref{eq:Finc2qubits}, parametrically in the hot bath temperature $T_H \in [T_R, +\infty]$. The coherent curve is generated by plotting $T_{\text{coh}}(r_{\text{coh}})$, Eq.~\ref{eq:Tcoh}, and $\Delta F_{\text{coh}}(r_{\text{coh}})$, Eq.~\ref{eq:Fcoh2qubitssmall}, parametrically in the target ground state population $r_{\text{coh}} \in [r_S,r_{\text{coh}}^*]$.

For the Figure we selected $\mathcal{E}_{M_2} < E_S$. But apart from the coherent curve having a discontinuity in the first derivative at $r_{\text{coh}}=r_{M_2}$, the behavior of the curves for $\mathcal{E}_{M_2} \geq E_S$ are qualitatively the same.

\begin{figure}[t]
	\centering
	\includegraphics[width=8.5cm]{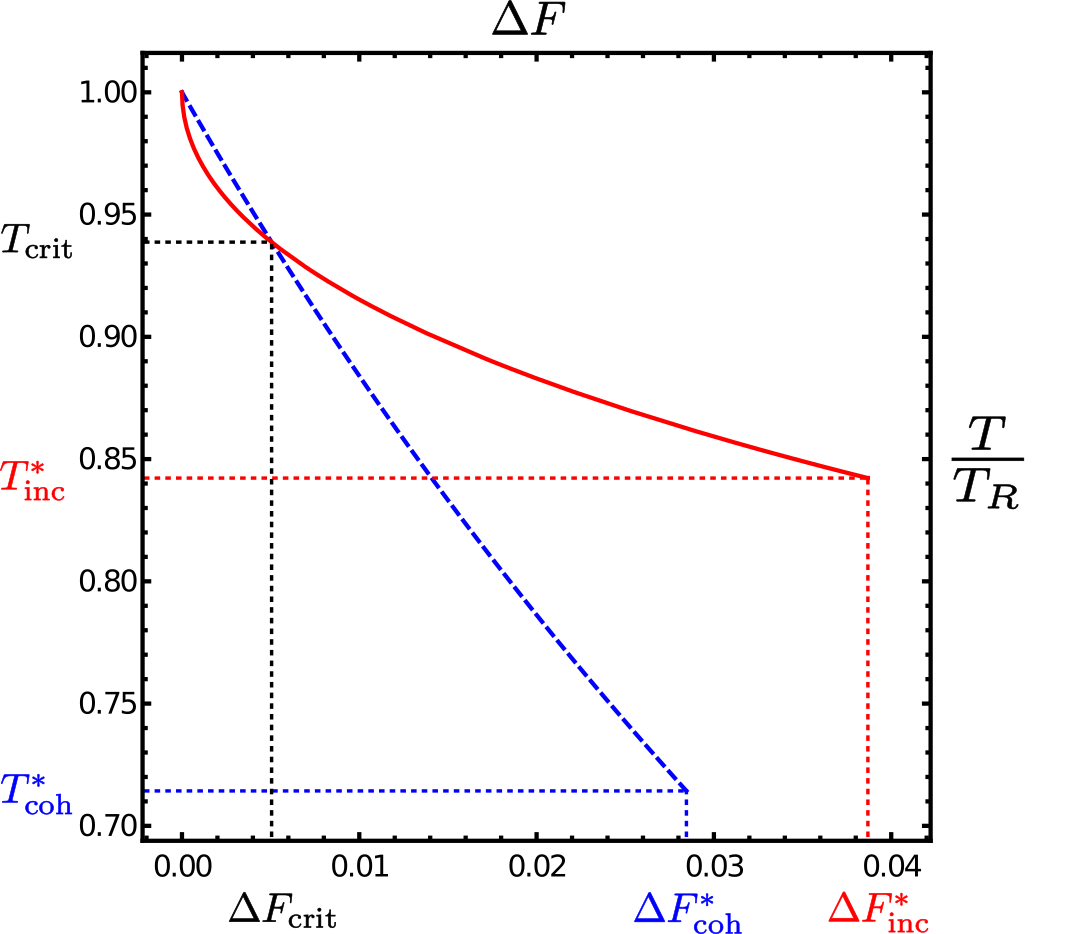}
\caption{Parametric plot of the relative temperature of the target qubit \(\frac{T}{T_R}\) as a function of its work cost $\Delta F$ for $\mathcal{E}_{M_2}=0.4$ and \(T_R=1\). The red solid curve corresponds to the incoherent scenario, the blue dashed, to the coherent scenario. When the cooling is maximal (i.e., the work cost is unrestricted), the coherent scenario always outperforms the incoherent one, $T_{\text{coh}}^{*} < T_{\text{inc}}^{*}$ and $\Delta F_{\text{coh}}^* < \Delta F_{\text{inc}}^*$. However, below a critical work cost $\Delta F_\text{crit}$, the incoherent scenario always outperforms the coherent one.\label{fig:compasingle}.}
\end{figure}

There are a few interesting observations that can be made from this comparison. First of all, looking at the end points of the curve, we see that one can coherently cool more than incoherently and that one does so at a lower work cost. This is true in general as one can easily analytically prove that
\begin{align}
T_{\text{coh}}^* &< T_{\text{inc}}^*,\\
\Delta F_{\text{coh}}^* &< \Delta F_{\text{inc}}^*,
\end{align}
always holds, see~\cite{Clivaz-2019bis}. The coherent scenario therefore always performs better than the incoherent scenario for maximal cooling in the single cycle regime. However, the coherent scenario is not universally superior to the incoherent one. Indeed, for sufficiently low work cost, the incoherent scenario always outperforms the coherent one. This is suggested by Figure~\ref{fig:compasingle} and can be easily proven in general by computing the initial slope of each curve. The incoherent scenario starts with an infinite slope while the coherent one with a finite one. The incoherent curve therefore always lies below the coherent one at the beginning. The incoherent and coherent curves must eventually cross since the endpoint of the coherent curve lies below that of the incoherent one. There therefore exists a critical work cost $\Delta F_{\text{crit}}$ such that if one only has at its disposal some work cost $\Delta F$ to invest that is smaller than $\Delta F_{\text{crit}}$, one can cool more incoherently than coherently.

\subsection{Repeated \& Asymptotic Cycles} \label{subsec:rep}

We here would like to go beyond the single cycle regime of the 2 qubit machine 1 qubit target joint system. Studying the repeated applications for each scenario is in this case still very much tractable due to the simplicity of the systems at hand. The main analytical challenges have indeed already been dealt within the previous sections. The choice of the machine that cools in the incoherent scenario has been made. In the coherent scenario, we will see that after the first cycle of cooling is completed, the optimization problem is easily solved due to the restricted subspace that is still available for cooling.

\subsubsection{Repeated Incoherent Scenario}

After a cycle is completed, applying another $\Lambda_{\text{inc}}$ to the target amounts to rethermalizing  $M_1$ to $T_R$ and $M_2$ to $T_H$ before repeating the energy conserving unitary operation
\begin{equation}
U_{\text{inc}}= \ket{010} \bra{101}_{SM} + \ket{101} \bra{010}_{SM} + \mathds{1}_{\text{span}^c\{\ket{010}_{SM}, \ket{101}_{SM}\}}.
\end{equation}

As before, the effect of $U_{\text{inc}}$ to the ground state population of the target is to remove the population of $\ket{010}_{SM}$ and add that of $\ket{101}_{SM}$. If the ground state population of the target before the application of $U_{\text{inc}}$ is $r$, then after applying $U_{\text{inc}}$ it becomes
\begin{equation} \label{eq:degpopchange}
r'=r-r (1-r_{M_1}) r_{M_2}^H+ (1-r) r_{M_1} (1-r_{M_2}^H).
\end{equation}

One can massage the right-hand side of Eq.~\ref{eq:degpopchange} and rewrite it as
\begin{equation}
r'=r_V + (1-N_V) (r-r_V),
\end{equation}
where $N_V$ is the sum of the populations of $\ket{01}_M$ and $\ket{10}_M$, i.e.,
\begin{equation}
N_V= r_{M_1} (1-r_{M_2}^H) + (1-r_{M_1}) r_{M_2}^H,
\end{equation}
and $r_V$ is the population of $\ket{01}_M$ renormalized by $N_V$, i.e.,
\begin{equation}
r_V= \frac{r_{M_1} (1-r_{M_2}^H)}{N_V}=\frac{1}{1+ e^{-\beta_R \mathcal{E}_{M_1}} e^{\beta_H \mathcal{E}_{M_2}}}.
\end{equation}

The rewriting may at first seem arbitrary but it proves to be very convenient to calculate the ground state population of the target after $n$ applications of the incoherent scenario. Indeed,

\begin{align}
r_{\text{inc},n}-r_V&=(1-N_V) (r_{\text{inc},n-1}-r_V)\\
r_{\text{inc},n-1}-r_V&=(1-N_V) (r_{\text{inc},n-2}-r_V)\\
&\vdots\\
r_{\text{inc}}-r_V&=(1-N_V) (r_{S}-r_V),
\end{align}
such that
\begin{equation}
r_{\text{inc},n}=r_V+(1-N_V)^n (r_S-r_V).
\end{equation}

And as $0 < N_V \leq 1$,
\begin{equation}
r_{\text{inc},\infty}=\lim_{n \rightarrow \infty} r_{\text{inc},n}=r_V.
\end{equation}

From the expression of $r_V$ and $r_{\text{inc}, \infty}=\frac{1}{1+e^{-\beta_{\text{inc}, \infty} E_S}}$ we get
\begin{equation}
-\beta_{\text{inc}, \infty} E_S= -\beta_R \mathcal{E}_{M_1} + \beta_H \mathcal{E}_{M_2},
\end{equation}
which yields

\begin{equation}
T_{\text{inc},\infty}=\frac{E_S}{\frac{\mathcal{E}_{M_1}}{T_R} - \frac{\mathcal{E}_{M_2}}{T_H}}.
\end{equation}

Besides being convenient to calculate, $r_V$ and $N_V$ can also be given a physical interpretation. If one pretends that the two level system of the machine $\ket{01}_M$, $\ket{10}_M$ is a qubit of energy gap $\mathcal{E}_V= \mathcal{E}_{M_1}-\mathcal{E}_{M_2}$, then its norm is given by $N_V$ and its normalized ground state population by $r_V$. As a reminder that this two level system is not exactly a real qubit but only a two level system, the name \enquote{virtual qubit} was coined to denote this two level system \cite{Brunner-2012}. See also Appendix G of \cite{Clivaz-2019bis}. With this we have calculated the attainable ground state populations after $n$ applications of the incoherent scenario for any $n \in \mathbb{N}$. We next turn our attention to the work cost. Calculating the work cost amounts to calculating $q_i^H$, the heat drawn from the hot bath to rethermalize $M_2$ to $T_H$ at the beginning of step $i=1, \dots, n$. Remember that by energy conservation $q_i^H$ equals the population change effected in $M_2$ by the rethermalization times $\mathcal{E}_{M_2}$, i.e.,
\begin{equation}
q_i^H= \mathcal{E}_{M_2} (\tau_{M_2}^H-\sigma_{M_2}^i),
\end{equation}
where $\sigma_{M_2}^i$ is the state of $M_2$ at the end of step $i-1$. The total heat drawn after $n$ applications of $\Lambda_{\text{inc}}$ is therefore given by
\begin{align}
Q_n^H &= \mathcal{E}_{M_2} (r_{M_2} - r_{M_2}^H) + \sum_{i=2}^n  q_i^H\\
&=\mathcal{E}_{M_2} (r_{M_2} - r_{M_2}^H) + \sum_{i=2}^n \mathcal{E}_{M_2} (r_{\text{inc}, i-1} - r_{\text{inc},i-2})\\
&= \mathcal{E}_{M_2} (r_{M_2} - r_{M_2}^H) + \mathcal{E}_{M_2} (r_{\text{inc}, n-1} - r_{S}),
\end{align}

where in the second step we used that from the form of $U_{\text{inc}}$, the difference of population contributing to $q_i^H$ is
\begin{equation}
r_{\text{inc},i-1}-r_{\text{inc},i-2},
\end{equation}
the difference in  ground state population that $U_{\text{inc}}$ effects on S at step $i-1$. In summary we therefore have

\begin{align}
\Delta F_{\text{inc},n}(T_H)&= \mathcal{E}_{M_2} \left( r_{M_2} - r_{M_2}^H + r_{\text{inc}, n-1} - r_{S} \right) \left( 1- \frac{T_R}{T_H} \right),\\
r_{\text{inc},n}(T_H)&=r_{\text{inc}, \infty} + (1-N_V)^n (r_S-r_{\text{inc}, \infty}),
\end{align}
where
\begin{align}
r_{\text{inc}, \infty} &= \frac{1}{1+e^{- \frac{E_S}{T_{\text{inc}, \infty}}}} , \quad T_{\text{inc}, \infty} = \frac{E_S}{\frac{\mathcal{E}_{M_1}}{T_R}-\frac{\mathcal{E}_{M_2}}{T_H}},\\
N_V &= r_{M_1} (1-r_{M_2}^H)+(1-r_{M_1}) r_{M_2}^H.
\end{align}

Note in particular that
\begin{equation}
\lim_{T_H \rightarrow \infty} r_{\text{inc}, \infty} = r_{M_1},
\end{equation}
which is the temperature achieved by a one qubit machine of energy gap $\mathcal{E}_{M_1}$ for the coherent scenario. This is an exemplification of a more general connection between the incoherent and coherent scenario that we will explore in Section~\ref{chap:quditsystem}.

\subsubsection{Repeated Coherent Scenario}

Having studied a single application of $\Lambda_{\text{coh}}$ for the special instance of the 2 qubit machine 1 qubit target such that $\mathcal{E}_{M_1}=\mathcal{E}_{M_2}+E_S$, we now turn our attention to the repeated application of $\Lambda^n_{\text{coh}}$. As done for the single application case, out of all the possible $\Lambda_{\text{coh}}$ maps, we would like to focus here on the maps that minimize the work cost. For temperatures of the target $r_{\text{coh}} \in [r_S, r_{\text{coh}}^*]$ we already know which maps $\Lambda^n_{\text{coh}}= \Lambda_{\text{coh}}^{(n)} \circ \Lambda_{\text{coh}}^{(n-1)} \circ \dots \circ \Lambda_{\text{coh}}^{(1)}$ minimize the work cost. Indeed, they are the maps such that each $\Lambda^{(i)}$ is as prescribed by Theorem~\ref{thm:optsolution1-2small} and Theorem~\ref{thm:optsolution1-2big}. This is not straightforward to see from the statement of both Theorems but upon looking at their proofs more closely, one realizes that they extend to the desired case. Indeed, given $\mathcal{E}_{M_1} = \mathcal{E}_{M_2} + E_S$, the practical rewriting used in the proof of Theorem~\ref{thm:optsolution1-2small}, Eq.~\ref{eq:usefulvHSmall}, remains practical if one has an initial system state $\tilde{\rho}_S$ at hand such that its ground state population $\tilde{r}_S$ fulfills $r_{M_2} \leq \tilde{r}_S \leq r_{M_1}$. This is because in this case the ordering of $\mathfrak{D}(\tilde{\rho}_S \otimes \rho_M)$ is the same as that of $\mathfrak{D}(\rho_{SM})$, namely
\begin{align}
&\left[ \mathfrak{D}(\tilde{\rho}_S \otimes \rho_M)\right]_0 \geq \left[ \mathfrak{D}(\tilde{\rho}_S \otimes \rho_M)\right]_1 \geq \left[ \mathfrak{D}(\tilde{\rho}_S \otimes \rho_M)\right]_4 \geq \left[ \mathfrak{D}(\tilde{\rho}_S \otimes \rho_M)\right]_5 \\
&\geq \left[ \mathfrak{D}(\tilde{\rho}_S \otimes \rho_M)\right]_2 \geq \left[ \mathfrak{D}(\tilde{\rho}_S \otimes \rho_M)\right]_3 \geq \left[ \mathfrak{D}(\tilde{\rho}_S \otimes \rho_M)\right]_6 \geq \left[ \mathfrak{D}(\tilde{\rho}_S \otimes \rho_M)\right]_7.
\end{align}

We therefore have the following result.

\begin{theorem} \label{thm:optrepsolution1-2small}
Let $\mathcal{E}_{M_1}=\mathcal{E}_{M_2} +E_S$. Let $\tilde{\rho}_S$ be a system state such that $\tilde{r}_S \in [ r_{M_2}, r_{M_1}]$. Let $r_{\text{coh}} \in [\tilde{r}_S, r_{\text{coh}}^*]$. Let $\mu = \frac{r_{\text{coh}}-\tilde{r}_S}{r_{M_1}-\tilde{r}_S}$. Let $t = \arcsin(\sqrt{\mu})$. 
Then
\begin{equation}
v= \mathfrak{D} (U \tilde{\rho_S} \otimes \rho_M U^{\dagger}), \quad \text{with } U=e^{- i t \mathcal{L}_{S M_1}}
\end{equation}
minimizes the optimization problem
\begin{equation}
\min_{v \prec \mathfrak{D}(\tilde{\rho}_S \otimes \rho_M)} v \cdot \mathfrak{D}(H_{SM}), \quad \text{s.t. } \sum_{i=0}^3 v_i = r_{\text{coh}}.
\end{equation}
 and has an associated work cost of
\begin{equation} \label{eq:Fcoh2qubitssmall}
\Delta F_{\text{coh}} = (r_{\text{coh}}-\tilde{r}_S) (\mathcal{E}_{M_1}-E_S).
\end{equation}
\end{theorem}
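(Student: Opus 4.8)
The plan is to reduce Theorem~\ref{thm:optrepsolution1-2small} to an iterated application of the machinery already developed for the single-cycle case, in particular Theorem~\ref{thm:optsolution1-2small} and Lemma~\ref{lemma:split}. The key observation is that once the target has been cooled to some $\tilde r_S$ with $r_{M_2}\le\tilde r_S\le r_{M_1}$, the diagonal vector $\mathfrak{D}(\tilde\rho_S\otimes\rho_M)$ has exactly the same descending ordering of its eight components as $\mathfrak{D}(\rho_{SM})$ did in the regime $\mathcal{E}_{M_2}\le E_S$ (this ordering is displayed in the statement). First I would verify this ordering claim directly from $\mathcal{E}_{M_1}=\mathcal{E}_{M_2}+E_S$: the only comparisons that can flip are those between the system-dependent slot and the machine slots, and the hypothesis $\tilde r_S\ge r_{M_2}$ together with $\tilde r_S\le r_{M_1}$ pins them down. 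This is the step that makes everything else go through and is essentially bookkeeping.

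With the ordering fixed, I would then argue that the proof of Theorem~\ref{thm:optsolution1-2small} never used anything about $\tilde\rho_S$ beyond (i) it being diagonal, and (ii) the resulting ordering of $\mathfrak{D}(\tilde\rho_S\otimes\rho_M)$ being precisely Eq.~\ref{eq:orderingSmallEs}. Concretely, the rewriting Eq.~\ref{eq:usefulvHSmall} of $v\cdot\mathfrak{D}(H_{SM})$ is a purely algebraic identity in the entries of $v$, independent of which state one started from; and the subsequent lower-bounding via ``minimum of a sum $\ge$ sum of minima'' uses only the majorization constraint $v\prec\mathfrak{D}(\tilde\rho_S\otimes\rho_M)$ and the constraint $\sum_{i=0}^3 v_i=r_{\mathrm{coh}}$, plus the two inequalities $v_0\le[\mathfrak{D}^\downarrow]_0$, $v_7\ge[\mathfrak{D}^\downarrow]_7$ etc. So the candidate minimiser is again of the form $v(c)=T_{24}(\mu(c))T_{35}(\mu(c))\,\mathfrak{D}(\tilde\rho_S\otimes\rho_M)$, now with $\mu(c)=\frac{c-\tilde r_S}{r_{M_1}-\tilde r_S}$, and one checks as before that $v(c)\prec\mathfrak{D}(\tilde\rho_S\otimes\rho_M)$, that it meets the constraint, and that it is the T-transform image hence achievable by the unitary $e^{-it\mathcal{L}_{SM_1}}$ with $t=\arcsin\sqrt\mu$. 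The work-cost formula $\Delta F_{\mathrm{coh}}=(r_{\mathrm{coh}}-\tilde r_S)(\mathcal{E}_{M_1}-E_S)$ then follows from $\Delta F_{\mathrm{coh}}=v(r_{\mathrm{coh}})\cdot\mathfrak{D}(H_{SM})-\mathfrak{D}(\tilde\rho_S\otimes\rho_M)\cdot\mathfrak{D}(H_{SM})$ by substituting the explicit $v(r_{\mathrm{coh}})$ and simplifying, exactly as in Appendix~C of \cite{Clivaz-2019bis}.

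The main obstacle — and the only place genuine work is needed beyond citing the earlier proof — is confirming that passing from the initial thermal $\rho_S$ to an arbitrary colder diagonal $\tilde\rho_S$ does not spoil the ordering used in Eq.~\ref{eq:orderingSmallEs}, and in particular that the slot $[\mathfrak{D}(\tilde\rho_S\otimes\rho_M)]_5$ no longer necessarily equals $[\mathfrak{D}(\tilde\rho_S\otimes\rho_M)]_2$ but instead sits in the weaker chain $\ge$ displayed in the statement; one must check the lower-bounding argument of Theorem~\ref{thm:optsolution1-2small} is robust to this degeneracy being lifted (it is, because that argument only ever needed the relevant partial sums, not strict equalities). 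A secondary point to be careful about: the hypothesis $\tilde r_S\in[r_{M_2},r_{M_1}]$ is exactly what guarantees one stays in the $\mathcal{E}_{M_2}\le E_S$-type ordering rather than the $\mathcal{E}_{M_2}>E_S$-type ordering of Theorem~\ref{thm:optsolution1-2big}; I would remark explicitly that if $\tilde r_S<r_{M_2}$ one would instead be in the regime governed by that other theorem, which is why the statement restricts to $[r_{M_2},r_{M_1}]$. Everything else is a transcription of the single-cycle proof with $r_S$ replaced by $\tilde r_S$.
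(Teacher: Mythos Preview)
Your proposal is correct and follows essentially the same route as the paper: the paper also argues that for $\tilde r_S\in[r_{M_2},r_{M_1}]$ the ordering of $\mathfrak{D}(\tilde\rho_S\otimes\rho_M)$ coincides (with weak inequalities) with Eq.~\ref{eq:orderingSmallEs}, so that the rewriting Eq.~\ref{eq:usefulvHSmall} and hence the entire proof of Theorem~\ref{thm:optsolution1-2small} carry over verbatim with $r_S$ replaced by $\tilde r_S$. The paper likewise remarks afterwards that this holds irrespective of whether $\mathcal{E}_{M_2}\le E_S$ or $\mathcal{E}_{M_2}>E_S$, since Eq.~\ref{eq:usefulvHSmall} is independent of that relation.
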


Note that Theorem~\ref{thm:optrepsolution1-2small} holds for $\mathcal{E}_{M_2} \leq E_S$ as well as for $\mathcal{E}_{M_2} > E_S$. This is because the rewriting of $v \cdot \mathfrak{D}(H_{SM})$ of Eq.~\ref{eq:usefulvHSmall} does not depend on how $\mathcal{E}_{M_2}$ and $E_S$ relate to one another. For the case $\mathcal{E}_{M_2} \leq E_S$, as $r_{M_2} \leq r_{S}$, the result of Theorem~\ref{thm:optrepsolution1-2small} can straightforwardly be applied. It tells us that in this case it is of no advantage, from a work cost perspective, to rethermalize our machine before the target has been cooled to $r_{M_1}$. Indeed, if we start with $\rho_S$,  any $\Lambda_{\text{coh}}$ that cools the target to less than $r_{M_1}$, does so at a work cost of at least $\mathcal{E}_{M_1}-E_S$ per population change of the target. This is the content of Theorem~\ref{thm:optsolution1-2small}. Furthermore, the next application of the coherent scenario will just continue to cool at a work cost of at least $\mathcal{E}_{M_1} - E_S$ per population change of the target and so on as long as $r_{\text{coh}} \in [r_S, r_{M_1}]$. One may therefore as well not rethermalize the machine and directly cool the target to $r_{M_1}$ at a work cost of $ (r_{M_1}-r_S) (\mathcal{E}_{M_2} - E_S)$ within a single application of the scenario, as prescribed by Theorem~\ref{thm:optsolution1-2small}.

In the case $\mathcal{E}_{M_2} > E_S$, the ordering of $\mathfrak{D}(\rho_{SM})$ is different. And so, one cannot directly make use of Theorem~\ref{thm:optrepsolution1-2small}. But a similar reasoning applies here as well. If our initial state $\tilde{\rho}_S$ is such that $\tilde{r}_S \in [r_S, r_{M_2}]$, then the ordering of $\mathfrak{D}(\tilde{\rho}_S \otimes \rho_M)$ is the same as that of $\mathfrak{D}(\rho_{SM})$ and we can use the rewriting of $v \cdot \mathfrak{D}(H_{SM})$ of Eq.~\ref{eq:usefulbeginBig} to get the following result.

\begin{theorem} \label{thm:optrepsolution1-2big}
Let $\mathcal{E}_{M_1}=\mathcal{E}_{M_2} +E_S$. Let $\tilde{\rho}_S$ be a system state such that $\tilde{r}_S \leq  r_{M_2}$. Let $r_{\text{coh}} \in [\tilde{r}_S, r_{M_2}]$. Let $\mu = \frac{r_{\text{coh}}-\tilde{r}_S}{r_{M_2}-\tilde{r}_S}$. Let $t = \arcsin(\sqrt{\mu})$. 
Then
\begin{equation}
v= \mathfrak{D} (U \tilde{\rho_S} \otimes \rho_M U^{\dagger}), \quad \text{with } U=e^{- i t \mathcal{L}_{S M_2}}
\end{equation}
minimizes the optimization problem
\begin{equation}
\min_{v \prec \mathfrak{D}(\tilde{\rho}_S \otimes \rho_M)} v \cdot \mathfrak{D}(H_{SM}), \quad \text{s.t. } \sum_{i=0}^3 v_i = r_{\text{coh}}.
\end{equation}
 and has an associated work cost of
\begin{equation} \label{eq:Fcoh2qubitssmall}
\Delta F_{\text{coh}} = (r_{\text{coh}}-\tilde{r}_S) (\mathcal{E}_{M_2}-E_S).
\end{equation}
\end{theorem}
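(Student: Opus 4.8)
The plan is to deduce this as the analogue, with the initial state $\rho_S$ replaced by $\tilde{\rho}_S$, of the first branch ($r_{\mathrm{coh}}\in[r_S,r_{M_2}]$) of Theorem~\ref{thm:optsolution1-2big}. The only ingredient of that proof that is sensitive to the data is the decreasing ordering of the entries of $\mathfrak{D}(\rho_{SM})$; so the first step is to check that, under $\mathcal{E}_{M_1}=\mathcal{E}_{M_2}+E_S$, $\mathcal{E}_{M_2}>E_S$, and $\tilde{r}_S$ in the relevant range (in particular $\tilde{r}_S\geq r_S>\tfrac12$ and $\tilde{r}_S\leq r_{M_2}$), the vector $\mathfrak{D}(\tilde{\rho}_S\otimes\rho_M)$ has the same ordering as in the $\mathcal{E}_{M_2}>E_S$ case, namely $[\,\cdot\,]_0\geq[\,\cdot\,]_4\geq[\,\cdot\,]_1\geq[\,\cdot\,]_5,[\,\cdot\,]_2\geq[\,\cdot\,]_6\geq[\,\cdot\,]_3\geq[\,\cdot\,]_7$. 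This is a finite list of pairwise comparisons of products $\tilde{r}_S(\rho_M)_{m_1m_2}$ and $(1-\tilde{r}_S)(\rho_M)_{m_1m_2}$; writing $\tilde{r}_S=(1+e^{-\beta_R\tilde{E}_S})^{-1}$ each comparison becomes an inequality between $\tilde{E}_S$ and a combination of $E_S,\mathcal{E}_{M_1},\mathcal{E}_{M_2}$, all of which follow from $E_S\leq\tilde{E}_S\leq\mathcal{E}_{M_2}$ together with $\mathcal{E}_{M_1}-\mathcal{E}_{M_2}=E_S$ (the borderline $\tilde{E}_S=E_S$ recovering exactly Eq.~\ref{eq:orderingSmallEs} in the $\mathcal{E}_{M_2}>E_S$ sub-case). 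Note also that levels $2$ and $5$ are degenerate in $H_{SM}$ under the restriction $\mathcal{E}_{M_1}=\mathcal{E}_{M_2}+E_S$, so any residual ambiguity between them is immaterial.

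Given the ordering, the argument is the one sketched in the proof of Theorem~\ref{thm:optsolution1-2big}: rewrite $v\cdot\mathfrak{D}(H_{SM})$ exactly as in Eq.~\ref{eq:usefulbeginBig} (this rewriting uses only $\mathcal{E}_{M_1}=\mathcal{E}_{M_2}+E_S$), group the terms into partial sums $\sum_{i\in I}v_i$ that are controlled either by the constraint $\sum_{i=0}^3 v_i=r_{\mathrm{coh}}$ or by a majorization bound $\sum_{i\in I}v_i\leq\sum_{i\in I}[\mathfrak{D}^{\downarrow}(\tilde{\rho}_S\otimes\rho_M)]_i$ (or its complementary form), and apply $\min(A+B)\geq\min A+\min B$ to each group, each of which is tight for the claimed solution. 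This yields
\begin{equation}
\min_{\stackrel{v\prec\mathfrak{D}(\tilde{\rho}_S\otimes\rho_M)}{\sum_{i=0}^3 v_i=c}}v\cdot\mathfrak{D}(H_{SM})\ \geq\ v_1(c)\cdot\mathfrak{D}(H_{SM}),\qquad v_1(c)=T_{14}(\mu_1(c))\,T_{36}(\mu_1(c))\,\mathfrak{D}(\tilde{\rho}_S\otimes\rho_M),
\end{equation}
with $\mu_1(c)=\frac{c-\tilde{r}_S}{r_{M_2}-\tilde{r}_S}$. For $c\in[\tilde{r}_S,r_{M_2}]$ one has $\mu_1(c)\in[0,1]$, so the product of T-transforms is doubly stochastic, hence $v_1(c)\prec\mathfrak{D}(\tilde{\rho}_S\otimes\rho_M)$, and a direct check gives $\sum_{i=0}^3[v_1(c)]_i=c$ (using that $(\rho_M)_{00}+(\rho_M)_{10}=r_{M_2}$); thus the bound is attained and $v_1(r_{\mathrm{coh}})$ is the minimizer. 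Finally, $T_{14}(\mu)\,T_{36}(\mu)$ is precisely the action on diagonals of $e^{-it\mathcal{L}_{SM_2}}$ with $\sin^2 t=\mu$: $\mathcal{L}_{SM_2}$ couples only the pairs $\ket{001}\!\leftrightarrow\!\ket{100}$ (indices $1,4$) and $\ket{011}\!\leftrightarrow\!\ket{110}$ (indices $3,6$) and rotates each by angle $t$, which gives the stated unitary. The work cost then follows by bookkeeping: the partial swap transfers mass from level $4$ (energy $E_S$) to level $1$ (energy $\mathcal{E}_{M_2}$) and from level $6$ (energy $E_S+\mathcal{E}_{M_1}$) to level $3$ (energy $\mathcal{E}_{M_1}+\mathcal{E}_{M_2}$), each transfer costing $\mathcal{E}_{M_2}-E_S$ per unit, while the total mass shifted into the ground subspace of $S$ equals $r_{\mathrm{coh}}-\tilde{r}_S$, giving $\Delta F_{\mathrm{coh}}=(r_{\mathrm{coh}}-\tilde{r}_S)(\mathcal{E}_{M_2}-E_S)$.

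The main obstacle is the same as in Theorem~\ref{thm:optsolution1-2big}: producing the correct rewriting of $v\cdot\mathfrak{D}(H_{SM})$, i.e.\ the decomposition into groups on which $\min(A+B)\geq\min A+\min B$ is \emph{tight}, so that the resulting lower bound is actually achieved rather than merely valid; this is what singles out the optimal unitary among all $8\times 8$ unitaries, and it is genuinely ordering-dependent. Consequently, the real crux is verifying that $\mathfrak{D}(\tilde{\rho}_S\otimes\rho_M)$ inherits the ordering of the $\mathcal{E}_{M_2}>E_S$ case, which is exactly the content of the hypothesis $\tilde{r}_S\leq r_{M_2}$ (together with $\tilde{r}_S\geq r_S$ from the cooling process); once the ordering is pinned down, the remaining verifications are routine and the detailed computation can be imported verbatim from Appendix~C of \cite{Clivaz-2019bis}.
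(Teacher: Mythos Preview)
Your approach is essentially the paper's: observe that for $\tilde{r}_S\leq r_{M_2}$ the ordering of $\mathfrak{D}(\tilde{\rho}_S\otimes\rho_M)$ coincides with that of $\mathfrak{D}(\rho_{SM})$ in the $\mathcal{E}_{M_2}>E_S$ case, then recycle the rewriting Eq.~\ref{eq:usefulbeginBig} and the $T_{14}(\mu)T_{36}(\mu)$ solution verbatim. One point you miss: the paper explicitly notes that Theorem~\ref{thm:optrepsolution1-2big} holds for \emph{both} $\mathcal{E}_{M_2}\leq E_S$ and $\mathcal{E}_{M_2}>E_S$, whereas you restrict to the latter throughout. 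The extra case requires only the sign-flip observation $(v_5+v_3+v_6+v_7)(2\mathcal{E}_{M_2}-\mathcal{E}_{M_1})=(1-v_0-v_1-v_2-v_4)(\mathcal{E}_{M_1}-2\mathcal{E}_{M_2})$, which lets the same rewriting yield a valid lower bound when $2\mathcal{E}_{M_2}-\mathcal{E}_{M_1}<0$; without it your argument is incomplete relative to the theorem as stated (which does not assume $\mathcal{E}_{M_2}>E_S$). Also note that the theorem does not assume $\tilde{r}_S\geq r_S$; the ordering check goes through from $\tilde{r}_S\leq r_{M_2}$ alone.
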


Note that Theorem~\ref{thm:optrepsolution1-2big} also holds for both $\mathcal{E}_{M_2} \leq E_S$ and $\mathcal{E}_{M_2} > E_S$. In the case $\mathcal{E}_{M_2} \leq E_S$ one may be worried that $2 \mathcal{E}_{M_2} - \mathcal{E}_{M_1} <0$ might happen for some machines and that in that case one cannot make use of Eq.~\ref{eq:usefulbeginBig} to derive the desired result anymore. But as
\begin{equation}
(v_5+v_6+v_5+v_7) (2 \mathcal{E}_{M_2} - \mathcal{E}_{M_1}) = (1-v_0-v_1-v_2-v_4) (\mathcal{E}_{M_1} - 2 \mathcal{E}_{M_2}),
\end{equation}
the statement of Theorem~\ref{thm:optrepsolution1-2big} also holds for these machines. In any case, Theorem~\ref{thm:optrepsolution1-2big} tells us that also in the case of $\mathcal{E}_{M_2} > E_S$, it is of no advantage to rethermalize the machine before one has cooled the target to $r_{M_1}$.

Once one has cooled the target to $r_{M_1}$, the only way to further cool the target is to rethermalize the machine. Once this is done, further applications of the coherent scenario are only able to cool by acting on the $\ket{011}_{SM}$ and $\ket{100}_{SM}$ subspace of the joint system $SM$. Performing a partial swap, gradually moving population form $\ket{100}_{SM}$ to $\ket{011}_{SM}$, is then the optimal operation to further cool. At the end of this partial swap one reaches a ground sate population of
\begin{equation}\label{eq:cohrepeatedchange}
r'=r+ (1-r) r_{M_1} r_{M_2} - r (1-r_{M_1}) (1-r_{M_2}),
\end{equation}
if the ground state population of the target before the swap was $r$. Massaging the right-hand side of Eq.~\ref{eq:cohrepeatedchange} as done for Eq.~\ref{eq:degpopchange} we get
\begin{equation}
r'=r_{\tilde{V}}+(1-N_{\tilde{V}}) (r-r_{\tilde{V}}),
\end{equation}
where $N_{\tilde{V}}$ is the sum of population of $\ket{00}_M$ and $\ket{11}_M$, i.e.,
\begin{equation}
N_{\tilde{V}}=r_{M_1} r_{M_2} + (1-r_{M_1}) (1-r_{M_2}),
\end{equation}
and $r_{\tilde{V}}$ is the population of $\ket{00}_M$ renormalized by $N_{\tilde{V}}$, i.e.,
\begin{equation}
r_{\tilde{V}}= \frac{r_{M_1} r_{M_2}}{N_{\tilde{V}}} = \frac{1}{1+e^{-\beta_R (\mathcal{E}_{M_2} + \mathcal{E}_{M_1})}}.
\end{equation}
One therefore gets
\begin{equation}
r_{\text{coh},n}^*=r_{\tilde{V}} + (1-N_{\tilde{V}})^{n-1} (r_B-r_{\tilde{V}}).
\end{equation}
And as again $ 0 < N_{\tilde{V}} \leq 1$,
\begin{equation}
r_{\text{coh}, \infty}^*= \lim_{n \rightarrow \infty} r_{\text{coh},n}^*=r_{\tilde{V}}.
\end{equation}
From the expression of $r_{\tilde{V}}$ and $r_{\text{coh},\infty}^*= \frac{1}{1+ e^{-\beta_{\text{coh},\infty}^* E_S}}$ we get
\begin{equation}
\beta_{\text{coh},\infty}^*=\beta_R \frac{\mathcal{E}_{M_2} + \mathcal{E}_{M_1}}{E_S},
\end{equation}
which yields
\begin{equation}
T_{\text{coh},\infty}^*= \frac{E_S}{\mathcal{E}_{M_1}+\mathcal{E}_{M_2}} T_R.
\end{equation}
The work cost associated to it is, for $n \in \mathds{N} \cup \{+\infty\}$,
\begin{align}
\Delta F_{\text{coh},n}^* &= \Delta F_{\text{coh}}^* +2  \mathcal{E}_{M_2} \sum_{i=2}^n (r_{\text{coh},i}^*-r_{\text{coh}, i-i}^*)\\
&= \Delta F_{\text{coh}}^* + 2 \mathcal{E}_{M_2} (r_{\text{coh},n}-r_{M_1}).
\end{align}
\section{Two Open Problems} \label{sec:twoopenproblems}

By gathering intuition on each of the scenarios and systematically working out how they express themselves in the simplest settings, we have stumbled across two interesting, as well as challenging, problems.\\

In the incoherent scenario, we found that working out the degeneracies that allow cooling, turned out to be more challenging than expected. Indeed, explicitly treating all the potential degenerate subspaces of the 1 qubit system 2 qubit machine is already quite a cumbersome task. There we found that results such as those of Lemma~\ref{lemma:roommachine} and Lemma~\ref{lemma:hotmachine} render the quest for the useful machine a more tractable one. We believe that we only scratched the surface of what is possible to say in that direction and that more statements of the sort would allow getting a better picture at what is possible to achieve within the incoherent  scenario. In particular, it would allow us to better grasp which machines are expected to perform at all.\\

In the coherent scenario, we found that the main challenge was to choose among all the potential unitaries, the one operating at the lowest work cost. This made us formulate a general optimization problem. While we could solve the case of maximal cooling, we found it much more challenging to tackle the finite resource regime. There, we found that solving the 1 qubit and 2 qubit machine case for 1 qubit target stem was already surprisingly challenging. In particular, the proof techniques per se, used a lot of the fine-tuning information about the problem itself. As such, they provide little hope for generalizing to more complex system straightforwardly. However, working on them gave us a lot of intuition about the problem at hand. In particular, looking back on them, one notices that all the involved optimal transformations are T-transforms, i.e., doubly stochastic matrices $T(\lambda)$ of the form
\begin{equation}
T(\lambda)= (1-\lambda) \mathds{1} + \lambda Q,
\end{equation}
where $Q$ is a permutation matrix exchanging only two coordinates. That every vector $x$ majorized by another vector $y$ is reachable via the application of finitely many successive T-transforms is a known fact of majorization theory, see for example \cite[Chapter 2.B.1]{Marshall-2011}. But this fact alone is not quite enough to prove what we would like to have. In particular, it would be desirable that once one picks a T-transform, that one can carry it fully, i.e., fully swaps, before one chooses to carry another T-transform. This is not ensured by the standard result of majorization theory. The question that interests us here is if the solutions of Eq.~\ref{eq:deltaFoptimization} can be continuously parameterized by T-transforms in the following way.

\begin{question}
Given a $v \prec \mathfrak{D}(\rho_{SM})$ that solves the optimization problem of Eq.~\ref{eq:deltaFoptimization} for some $c \in [ r_S, r_{\text{coh}}^*]$, does there exist T-transforms $T_0(\lambda), \dots, T_{k-1}(\lambda)$, where
\begin{equation}
T_i(\lambda) = (1-\lambda) \mathds{1} + \lambda Q_i,
\end{equation}
for some permutation matrix $Q_i$ exchanging only two coordinates, such that
\begin{enumerate}
\item $v= T_k(\lambda) T_{k-1}(1) \dots T_1(1) \mathfrak{D}(\rho_{SM})$ for some $\lambda \in [0,1]$,
\item for all $\mu \in [0,k]$, $w(\mu) = T_{k-1} (t_{k-1}(\mu)) T_{k-2} (t_{k-2}(\mu)) \dots T_0(t_0(\mu))$ is a solution of Eq.~\ref{eq:deltaFoptimization} for some $\tilde{c} \in [r_S, c]$, where
\end{enumerate}

\begin{equation}
t_i = \begin{cases} 0, & \text{if } \mu < i,\\
\mu-i, & \text{if } \mu \in [i, i+1],\\
1, & \text{if } \mu > i+1.
\end{cases}
\end{equation}
\end{question}

If this were to be true, then it would mean that one could continuously move along the curve of solutions of Eq.~\ref{eq:deltaFoptimization} only by continuously exchange two coordinates of our vector at a time. All the solutions of our problem encountered until now satisfy this property. This is why we believe the answer of the above question to be affirmative. The cases treated rigorously thus far consisting of relatively elementary systems, it might, however, well be that the problem unfolds in unexpected ways for more complex systems.\\

From a mathematical perspective, an affirmative answer to the question would also drastically simplify the optimization problem. Indeed, instead of having to look through all doubly stochastic matrices to decide which one to pick, one would only need to choose the best T-transform, which boils down to choosing the pair of coordinates that can be exchanged at the cheapest in terms of energy expenditure.\\

Physically, performing a T-transform corresponds to performing a unitary in a two-dimensional subspace of the $SM$ joint system, i.e.,

\begin{equation}
U = U \big|_{\text{span} \{ \ket{ij}_{SM}, \ket{kl}_{SM} \}} \oplus \mathds{1}_{\text{span}^c \{ \ket{ij}_{SM}, \ket{kl}_{SM} \}}.
\end{equation}
 If the system is a qubit and $\ket{i}_S \neq \ket{k}_S$, this can be viewed as swapping population between $S$ and a virtual qubit of the machine $\ket{j}_M$, $\ket{l}_M$. If the answer to this question is affirmative, then all what one would need to do in order to solve the optimization problem is select the best virtual qubit to cool and then proceed to cool with this virtual qubit until the qubit is exhausted. Once the best virtual qubit is exhausted, one would then select the second best virtual qubit and cool with it until no cooling is possible with it anymore, and so on. This would elevate the concept of a virtual qubit to the fundamental operation that allows continuously cooling a qubit target system at the best energy expenditure. It would in particular mean that once the best virtual qubit has been identified, one can with confidence cool the system using that virtual qubit until the system has reached the temperature of the virtual qubit without having to continuously check while cooling with a particular virtual qubit if another operation could perform better than continuing to cool using that virtual qubit.

\chapter{Qudit System} \label{chap:quditsystem} 

We have seen in Chapter~\ref{chap:qubitsyst} that keeping track of the exact energy expenditure in the coherent scenario quickly becomes very  challenging. Similarly, precisely selecting the useful machines in the incoherent scenario has also proved itself to be a hard task, already for low dimensional machines. In order to bypass these challenges, we will in the following not keep track of the work cost of the process anymore as well as not worry in detail about what degeneracies the joint system $SM$ may or may not have. We will instead focus on the state transformations that each scenario allows performing on the target system S. Not worrying about energy expenditure and the degeneracies of our joint system will allow us to make statements about a much larger class of machines and target systems. Indeed, in the following we will allow the target system as well as the machine to be general finite dimensional quantum systems. Since our target system is not restricted to be a qubit anymore, our notion of temperature will play a bigger role in the following. We would therefore like to expand a bit more on that notion before we move on to the main results of this Chapter.

\section{Sumtemperature} \label{sec:sumtemperature}

In Section~\ref{sec:temperature} we admitted in being interested in a bit more that the ground state population of $S$ only. Indeed, we stated that we were not only interested in maximizing the ground state population of $S$ but rather in maximizing

\begin{equation}
\sum_{k=0}^l \bra{k} \sigma_S \ket{k}_S, \quad \forall l=0,\dots, d_S-1.
\end{equation}

We would here like to expand a bit more on this. First of all, we would like to define our notion of temperature more precisely.

\begin{definition}[Sumcolder] \label{def:sumcolder}
Given two system states $\sigma_1$ and $\sigma_2$, we say that $\sigma_1$ is {\bf sumcolder} than $\sigma_2$ if
\begin{equation}
\sum_{k=0}^l \bra{k} \sigma_1 \ket{k}_S \geq \sum_{k=0}^l \bra{k} \sigma_2 \ket{k}_S, \quad \forall l \in \{0, \dots, d_S-2\}.
\end{equation}
\end{definition}

\begin{definition}[Sumhotter] \label{def:sumhotter}
Given two system states $\sigma_1$ and $\sigma_2$, we say that $\sigma_1$ is {\bf  sumhotter } than $\sigma_2$ if  $\sigma_2$ is sumcolder than $\sigma_1$.
\end{definition}

\begin{definition}[Sumtemperature] \label{def:sumtemperature}
Given two system states $\sigma_1$ and $\sigma_2$, we say that $\sigma_1$ and $\sigma_2$ have the same {\bf  sumtemperature} if  $\sigma_1$ is both sumhotter and sumcolder than $\sigma_2$.
\end{definition}

Note that the sumcolder relation is (only) a preorder on the set of states (it is reflexive and transitive). In particular, the sumcolder relation is not a total order since there are states \(\sigma_1\) and \(\sigma_2\) for which \(\sigma_1\) is neither sumcolder nor sumhotter than \(\sigma_2\). In other words, not all states are comparable according to this notion of temperature. For comparable states however, our definition is a generalization of the ground state notion of temperature and so if $\sigma_1$ is colder than $\sigma_2$ according to our notion of temperature, it is also according to the ground state notion of temperature. Formally

\begin{lemma} \label{lemma:sumcolderrgen}
Let  $\sigma_1$ and $\sigma_2$ be two system states such that $\sigma_1$ is sumcolder than $\sigma_2$. Then
\begin{equation}
\bra{0} \sigma_1 \ket{0}_S \geq \bra{0} \sigma_2 \ket{0}_S.
\end{equation}
\end{lemma}

There are another few straightforward things that can be said about this new notion of temperature.

\begin{lemma}\label{lemma:sumtempcons}
$\sigma_1$ has the same sumtemperature as $\sigma_2$ if and only if
\begin{equation}
\mathfrak{D}(\sigma_1)= \mathfrak{D}(\sigma_2).
\end{equation}
\end{lemma}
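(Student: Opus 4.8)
\textbf{Proof plan for Lemma~\ref{lemma:sumtempcons}.}

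The statement to prove is: $\sigma_1$ has the same sumtemperature as $\sigma_2$ if and only if $\mathfrak{D}(\sigma_1) = \mathfrak{D}(\sigma_2)$. The plan is to unfold the definitions on both sides and reduce the claim to an elementary fact about finite real sequences, namely that two sequences with equal partial sums (up to and including the total) are equal entry by entry.

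First I would handle the easy direction $(\Leftarrow)$: if $\mathfrak{D}(\sigma_1) = \mathfrak{D}(\sigma_2)$, then for every $l$ we have $\sum_{k=0}^l \bra{k}\sigma_1\ket{k}_S = \sum_{k=0}^l \bra{k}\sigma_2\ket{k}_S$, so in particular $\sigma_1$ is both sumcolder and sumhotter than $\sigma_2$ by Definitions~\ref{def:sumcolder} and \ref{def:sumhotter}, hence they have the same sumtemperature by Definition~\ref{def:sumtemperature}. For the direction $(\Rightarrow)$, suppose $\sigma_1$ and $\sigma_2$ have the same sumtemperature. By Definition~\ref{def:sumtemperature} this means $\sigma_1$ is sumcolder than $\sigma_2$ and $\sigma_2$ is sumcolder than $\sigma_1$. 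Writing out Definition~\ref{def:sumcolder} for both relations gives, for all $l \in \{0,\dots,d_S-2\}$,
\begin{equation}
\sum_{k=0}^l \bra{k}\sigma_1\ket{k}_S \geq \sum_{k=0}^l \bra{k}\sigma_2\ket{k}_S \quad \text{and} \quad \sum_{k=0}^l \bra{k}\sigma_2\ket{k}_S \geq \sum_{k=0}^l \bra{k}\sigma_1\ket{k}_S,
\end{equation}
hence equality of all partial sums up to $l = d_S-2$. The partial sum for $l = d_S-1$ is also equal, since both equal $\Tr(\sigma_i) = 1$. Then I would conclude by the telescoping observation: $\bra{0}\sigma_1\ket{0}_S$ equals the $l=0$ partial sum, and for $k \geq 1$, $\bra{k}\sigma_1\ket{k}_S$ equals the difference of the $l=k$ and $l=k-1$ partial sums; since all these partial sums agree for $\sigma_1$ and $\sigma_2$, we get $\bra{k}\sigma_1\ket{k}_S = \bra{k}\sigma_2\ket{k}_S$ for every $k \in \{0,\dots,d_S-1\}$, i.e.\ $\mathfrak{D}(\sigma_1) = \mathfrak{D}(\sigma_2)$.

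There is no real obstacle here; the only point requiring a moment of care is remembering to use the trace-one normalization to upgrade the partial-sum equalities from $l \leq d_S-2$ (as literally stated in Definition~\ref{def:sumcolder}) to $l \leq d_S-1$, which is what makes the telescoping recover the last diagonal entry. Everything else is definition-chasing, so I would keep the write-up to a few lines.
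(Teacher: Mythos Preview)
Your proof is correct and matches the paper's approach exactly: the paper simply remarks that the lemma follows directly from the definitions of sumcolder and sumhotter together with the normalization $\Tr(\sigma_1)=\Tr(\sigma_2)=1$, which is precisely what you unfold in detail (including the telescoping and the use of the trace condition to recover the $l=d_S-1$ partial sum).
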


\begin{lemma}\label{lemma:sumtempdiagcons}
If $\sigma_1$ and $\sigma_2$ are diagonal in the energy eigenbasis, then $\sigma_1$ has the same sumtemperature as $\sigma_2$ if and only if
\begin{equation}
\sigma_1=\sigma_2.
\end{equation}
\end{lemma}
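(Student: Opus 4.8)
The plan is to derive Lemma~\ref{lemma:sumtempdiagcons} as an immediate corollary of Lemma~\ref{lemma:sumtempcons}. Recall that Lemma~\ref{lemma:sumtempcons} tells us that $\sigma_1$ has the same sumtemperature as $\sigma_2$ if and only if $\mathfrak{D}(\sigma_1) = \mathfrak{D}(\sigma_2)$, i.e.\ the two states have the same vector of diagonal entries in the energy eigenbasis. So the entire content of the new lemma is the extra step: when both states are \emph{diagonal} in the energy eigenbasis, equality of the diagonals upgrades to equality of the full operators.

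First I would prove the backward implication, which is trivial: if $\sigma_1 = \sigma_2$ then $\mathfrak{D}(\sigma_1) = \mathfrak{D}(\sigma_2)$, hence by Lemma~\ref{lemma:sumtempcons} they have the same sumtemperature (alternatively one checks directly that the sumcolder inequalities of Definition~\ref{def:sumcolder} hold as equalities in both directions). For the forward implication, suppose $\sigma_1$ and $\sigma_2$ are diagonal in $(\ket{i}_S)_{i=0}^{d_S-1}$ and have the same sumtemperature. By Lemma~\ref{lemma:sumtempcons}, $\mathfrak{D}(\sigma_1) = \mathfrak{D}(\sigma_2)$, so $\bra{k}\sigma_1\ket{k}_S = \bra{k}\sigma_2\ket{k}_S$ for every $k = 0,\dots,d_S-1$. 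Since each $\sigma_j$ is diagonal, it is completely determined by its diagonal entries, $\sigma_j = \sum_{k=0}^{d_S-1} \bra{k}\sigma_j\ket{k}_S \, \ket{k}\bra{k}_S$, and therefore $\sigma_1 = \sigma_2$.

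This is really just bookkeeping; there is no genuine obstacle. The only point worth stating carefully is the logical dependence on Lemma~\ref{lemma:sumtempcons} (so that one does not re-prove the equivalence between sumtemperature and the diagonal vector from scratch), and the observation that diagonality is precisely the condition under which the map $\sigma \mapsto \mathfrak{D}(\sigma)$ is injective on the relevant set. I would write the proof in two or three lines, invoking Lemma~\ref{lemma:sumtempcons} explicitly and then noting that a diagonal operator equals the diagonal matrix built from its own diagonal.

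\begin{proof}
If $\sigma_1 = \sigma_2$ then trivially $\mathfrak{D}(\sigma_1) = \mathfrak{D}(\sigma_2)$, so by Lemma~\ref{lemma:sumtempcons} $\sigma_1$ has the same sumtemperature as $\sigma_2$. Conversely, suppose $\sigma_1$ and $\sigma_2$ are diagonal in $(\ket{i}_S)_{i=0}^{d_S-1}$ and have the same sumtemperature. By Lemma~\ref{lemma:sumtempcons}, $\mathfrak{D}(\sigma_1) = \mathfrak{D}(\sigma_2)$, i.e.\ $\bra{k} \sigma_1 \ket{k}_S = \bra{k} \sigma_2 \ket{k}_S$ for all $k = 0, \dots, d_S-1$. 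Since $\sigma_j$ is diagonal in the energy eigenbasis, $\sigma_j = \sum_{k=0}^{d_S-1} \bra{k} \sigma_j \ket{k}_S \ket{k}\bra{k}_S$ for $j=1,2$, and hence $\sigma_1 = \sigma_2$.
\end{proof}
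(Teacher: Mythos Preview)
Your proof is correct and matches the paper's approach: the paper simply remarks that Lemmas~\ref{lemma:sumtempcons} and~\ref{lemma:sumtempdiagcons} follow directly from the definitions of sumcolder/sumhotter together with the normalization $\sum_k [\sigma_1]_k = \sum_k [\sigma_2]_k = 1$. Your version makes the dependence on Lemma~\ref{lemma:sumtempcons} explicit and spells out that diagonal states are determined by their diagonals, which is exactly the intended content.
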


The results of Lemma~\ref{lemma:sumtempcons} and Lemma~\ref{lemma:sumtempdiagcons} follow directly from the definitions of sumcolder and sumhotter, Definition~\ref{def:sumcolder} and Definition~\ref{def:sumhotter}, and the fact that states are normalized, i.e.,

\begin{equation}
\Tr(\sigma_1) = \sum_{k=0}^{d_S-1} [\sigma_1]_k =1=\sum_{k=0}^{d_S-1} [\sigma_2]_k=  \Tr(\sigma_2).
\end{equation}

Lemma~\ref{lemma:sumtempdiagcons} in particular means that on the set of states diagonal in the energy eigenbasis, the sumcolder relation is a partial order. The next result relates our notion of temperature with that based on the average energy $\Tr(\sigma H_S)$, see Section~\ref{sec:temperature} for more details on its definition. It states that if a state $\sigma_1$ is colder than another state $\sigma_2$ according to our notion of temperature, then it is also colder according to the average energy notion of temperature. The formal result reads as follows.

\begin{lemma} \label{lemma:sumcolderEgen}
Let $\sigma_1$ and $\sigma_2$ be two system states such that $\sigma_1$ is sumcolder than $\sigma_2$. Then
\begin{equation}
\Tr (\sigma_1 H_S) \leq \Tr(\sigma_2 H_S).
\end{equation}
\end{lemma}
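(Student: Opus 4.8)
The plan is to reduce the claim to a statement about the diagonals of $\sigma_1$ and $\sigma_2$ and then apply a standard Abel summation (summation by parts) argument. First I would observe that, since $H_S$ is diagonal in the energy eigenbasis with eigenvalues $E_0 \leq E_1 \leq \dots \leq E_{d_S-1}$, the average energy depends only on the diagonal of the state:
\begin{equation}
\Tr(\sigma H_S) = \sum_{k=0}^{d_S-1} E_k\, [\sigma]_{kk} = \mathfrak{D}(\sigma) \cdot \mathfrak{D}(H_S).
\end{equation}
So it suffices to show that $\sum_k E_k [\sigma_1]_{kk} \leq \sum_k E_k [\sigma_2]_{kk}$. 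Writing $p_k = [\sigma_1]_{kk}$, $q_k = [\sigma_2]_{kk}$, and introducing the partial sums $P_l = \sum_{k=0}^l p_k$ and $Q_l = \sum_{k=0}^l q_k$, the sumcolder hypothesis (Definition~\ref{def:sumcolder}) gives $P_l \geq Q_l$ for all $l \in \{0,\dots,d_S-2\}$, and normalization gives $P_{d_S-1} = Q_{d_S-1} = 1$.

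Next I would perform the summation by parts. Using $p_k = P_k - P_{k-1}$ (with $P_{-1} := 0$) one rewrites
\begin{equation}
\sum_{k=0}^{d_S-1} E_k p_k = \sum_{k=0}^{d_S-2} P_k (E_k - E_{k+1}) + E_{d_S-1} P_{d_S-1},
\end{equation}
and the analogous identity holds for $q_k$ and $Q_k$. Subtracting, the boundary term cancels because $P_{d_S-1} = Q_{d_S-1}$, leaving
\begin{equation}
\sum_{k=0}^{d_S-1} E_k (p_k - q_k) = \sum_{k=0}^{d_S-2} (P_k - Q_k)(E_k - E_{k+1}).
\end{equation}
Each factor $P_k - Q_k \geq 0$ by hypothesis and each factor $E_k - E_{k+1} \leq 0$ since the energies are non-decreasing, so every summand is $\leq 0$, giving $\sum_k E_k p_k \leq \sum_k E_k q_k$, i.e. $\Tr(\sigma_1 H_S) \leq \Tr(\sigma_2 H_S)$, as desired.

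This argument is essentially routine; the only mild subtlety — which I would be careful to state explicitly — is that the sumcolder relation as defined only quantifies over $l \in \{0,\dots,d_S-2\}$, and one must separately invoke normalization $\Tr(\sigma_1) = \Tr(\sigma_2) = 1$ to handle the top partial sum $l = d_S-1$; this is precisely what makes the boundary term vanish in the Abel summation. I do not anticipate any real obstacle here, since this is the classical fact that majorization-type dominance of partial sums of diagonals, together with monotone weights, controls weighted sums; the result can also be viewed as a direct consequence of the characterization of $\mathfrak{D}(H_S)$-monotonicity. If one prefers, an alternative phrasing notes that $\Tr(\sigma H_S) = E_{d_S-1} - \sum_{l=0}^{d_S-2}(E_{l+1}-E_l)\sum_{k=0}^l [\sigma]_{kk}$, from which the inequality is immediate term by term.
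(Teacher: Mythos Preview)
Your proof is correct and is essentially the same Abel summation argument as the paper's: the paper telescopes $E_i = E_0 + \sum_{l=1}^{i}(E_l - E_{l-1})$ and swaps the order of summation to write $\Tr(\sigma H_S) = E_0 + \sum_{l=1}^{d_S-1}\bigl(\sum_{i=l}^{d_S-1}[\sigma]_{ii}\bigr)(E_l - E_{l-1})$, i.e.\ it works with the tail sums $1 - P_{l-1}$ where you work with the head sums $P_l$, which is the same identity up to the normalization you explicitly invoke. Your closing ``alternative phrasing'' is in fact exactly the paper's formula after the substitution $\sum_{i=l}^{d_S-1}[\sigma]_{ii} = 1 - \sum_{k=0}^{l-1}[\sigma]_{kk}$.
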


\begin{proof}
We first note that for any $i \in \{1, \dots, d_S-1\}$
\begin{equation}
E_i= E_0+ \sum_{l=1}^i E_l-E_{l-1}. 
\end{equation}

So that for any system state $\sigma$
\begin{align}
\Tr(\sigma H_S) =\sum_{i=0}^{d_S-1} [\sigma]_i E_i &=  E_0+\sum_{i=1}^{d_S-1} [\sigma]_i \left( \sum_{l=1}^i E_l-E_{l-1} \right) \\
&= E_0+ \sum_{l=1}^{d_S-1}  \left( \sum_{i=l}^{d_S-1} [\sigma]_i \right) (E_l-E_{l-1}),
\end{align}
where in the last step we used $\sum_{i=1}^{d_S-1} \sum_{l=1}^i =\sum_{l=1}^{d_S-1} \sum_{i=l}^{d_S-1}$. As from $\sigma_1$ sumcolder than $\sigma_2$ follows
\begin{equation}
\sum_{i=l}^{d_S-1} [\sigma_1]_i \leq  \sum_{i=l}^{d_S-1} [\sigma_2]_i, \quad \forall l \in \{1,\dots,d_S-1\},
\end{equation}
and that 

\begin{equation}
E_l-E_{l-1} \geq 0, \quad \forall l \in \{1,\dots,d_S-1\},
\end{equation}
our result is proven.
\end{proof}

To see how our notion of temperature compares to the other notions of temperature we talked about in Section~\ref{sec:temperature}, we need two more definitions.

\begin{definition} [Ordered]
Let $\sigma$ be a target system state. Let $(\ket{v_i})_{i=0}^{d_S-1}=(\ket{v_0}, \dots, \ket{v_{d_S-1}})$  be a basis of the system. We say that $\sigma$ is { \bf  ordered } in $(\ket{v_i})_{i=0}^{d_S-1}$ if
\begin{equation}
\bra{v_i} \sigma \ket{v_i} \geq \bra{v_{i+1}} \sigma \ket{v_{i+1}}, \quad \forall i \in \{0, \dots, d_S-2\}.
\end{equation}
\end{definition}

\begin{definition}[Diagonally Ordered]
Let $\sigma$ be a target system state. Let $(\ket{v_i})_{i=0}^{d_S-1}=(\ket{v_0}, \dots, \ket{v_{d_S-1}})$  be a basis of the system. We say that $\sigma$ is {\bf  diagonally ordered } in $(\ket{v_i})_{i=0}^{d_S-1}$ if
\begin{enumerate}
\item $\sigma$ is diagonal in $(\ket{v_i})_{i=0}^{d_S-1}$,
\item $\sigma$ is ordered in $(\ket{v_i})_{i=0}^{d_S-1}$.
\end{enumerate}
\end{definition}

Note that since for $i=0, \dots, d_S-2 $, $E_i \leq E_{i+1}$, the passive states are precisely the states diagonally ordered in the energy eigenbasis~\cite{Pusz-1978, Lenard-1978}. We will use the two terms interchangeably. With these two definitions, we can now easily relate our notion of temperature to majorization. This is the content of the following two results.

\begin{lemma}\label{lemma:orderedsumcolder}
Let $\sigma_1$, $\sigma_2$ be system states ordered in $(\ket{i})_{i=0}^{d_S-1}$. Then $\sigma_1$ is sumcolder than $\sigma_2$ if and only if
\begin{equation}
\mathfrak{D}(\sigma_1) \succ \mathfrak{D}(\sigma_2).
\end{equation}
\end{lemma}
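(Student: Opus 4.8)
The statement to prove is Lemma~\ref{lemma:orderedsumcolder}: for $\sigma_1$ and $\sigma_2$ both ordered in $(\ket{i})_{i=0}^{d_S-1}$, $\sigma_1$ is sumcolder than $\sigma_2$ if and only if $\mathfrak{D}(\sigma_1) \succ \mathfrak{D}(\sigma_2)$. The plan is to unwind both sides to statements about partial sums of the diagonal vectors and observe they coincide once the ordering hypothesis is in force. First I would recall the definitions: $\sigma_1$ sumcolder than $\sigma_2$ means $\sum_{k=0}^l [\sigma_1]_{kk} \geq \sum_{k=0}^l [\sigma_2]_{kk}$ for all $l \in \{0,\dots,d_S-2\}$ (Definition~\ref{def:sumcolder}), where I write $[\sigma]_{kk} = \bra{k}\sigma\ket{k}_S$. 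On the other hand, $\mathfrak{D}(\sigma_1) \succ \mathfrak{D}(\sigma_2)$ means $\sum_{i=0}^k [\mathfrak{D}^{\downarrow}(\sigma_1)]_i \geq \sum_{i=0}^k [\mathfrak{D}^{\downarrow}(\sigma_2)]_i$ for all $k \in \{0,\dots,d_S-2\}$ together with equality at $k=d_S-1$.

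The key observation is that being \emph{ordered} in $(\ket{i})_{i=0}^{d_S-1}$ means precisely that $\mathfrak{D}(\sigma_j)$ is already arranged in decreasing order, i.e.\ $\mathfrak{D}(\sigma_j) = \mathfrak{D}^{\downarrow}(\sigma_j)$ for $j=1,2$. Hence the prefix sums $\sum_{i=0}^k [\mathfrak{D}^{\downarrow}(\sigma_j)]_i$ appearing in the majorization condition are literally the prefix sums $\sum_{k'=0}^k [\sigma_j]_{k'k'}$ appearing in the sumcolder condition. So the forward implication is immediate: if $\sigma_1$ is sumcolder than $\sigma_2$ then for every $k \leq d_S-2$ the partial sums satisfy the required inequality, and the equality at $k=d_S-1$ holds automatically because both are density matrices with $\Tr(\sigma_1) = \Tr(\sigma_2) = 1$ (as already used in the excerpt for Lemmas~\ref{lemma:sumtempcons} and \ref{lemma:sumtempdiagcons}); this gives $\mathfrak{D}(\sigma_1) \succ \mathfrak{D}(\sigma_2)$. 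The reverse implication is the same identification read backwards: majorization gives the prefix-sum inequalities for $k \leq d_S-2$, which are exactly the sumcolder inequalities, so $\sigma_1$ is sumcolder than $\sigma_2$.

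There is essentially no hard part here — the content is entirely in noticing that the ordering hypothesis collapses $\mathfrak{D}$ and $\mathfrak{D}^{\downarrow}$, at which point the two definitions are manifestly the same system of inequalities plus a trace normalization. The only point requiring a word of care is the equality constraint in the definition of majorization at index $d_S-1$, which is not part of the sumcolder relation (that relation only runs up to $l = d_S-2$); this is supplied for free by normalization of states, so it needs to be invoked explicitly when proving the ``$\Rightarrow$'' direction but contributes nothing in the ``$\Leftarrow$'' direction. I would therefore structure the writeup as: (i) note $\mathfrak{D}(\sigma_j) = \mathfrak{D}^{\downarrow}(\sigma_j)$ by the ordered hypothesis; (ii) ``$\Rightarrow$'': combine the sumcolder inequalities with the normalization equality to read off all the majorization conditions; (iii) ``$\Leftarrow$'': drop the $k = d_S-1$ majorization equality and read off the sumcolder inequalities directly.
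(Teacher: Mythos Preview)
Your proposal is correct and is essentially what the paper does: it simply states that the result follows directly from the definitions of majorization and of sumcolder. Your explicit observation that the ordering hypothesis forces $\mathfrak{D}(\sigma_j) = \mathfrak{D}^{\downarrow}(\sigma_j)$, together with the trace normalization supplying the equality condition, is exactly the unpacking the paper leaves implicit.
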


The result of Lemma~\ref{lemma:orderedsumcolder} follows directly from the definition of Majorization and of sumcolder. We remind the reader that for two hermitian matrices $\sigma_1$ and $\sigma_2$,  $\sigma_1 \succ \sigma_2$ means $\lambda(\sigma_1) \succ \lambda(\sigma_2)$. With that we can state our next result.

\begin{lemma}\label{lemma:diagorderedsumcolder}
Let $\sigma_1$, $\sigma_2$ be system states diagonally ordered in $(\ket{i})_{i=0}^{d_S-1}$. Then $\sigma_1$ is sumcolder than $\sigma_2$ if and only if
\begin{equation}
\sigma_1 \succ \sigma_2.
\end{equation}
\end{lemma}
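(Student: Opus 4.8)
The plan is to reduce this statement to Lemma~\ref{lemma:orderedsumcolder} together with the definition of majorization for hermitian matrices. The key observation is that for a state $\sigma$ that is diagonally ordered in $(\ket{i})_{i=0}^{d_S-1}$, the diagonal vector $\mathfrak{D}(\sigma)$ is already arranged in decreasing order, so $\mathfrak{D}(\sigma) = \mathfrak{D}^{\downarrow}(\sigma)$, and moreover $\mathfrak{D}(\sigma)$ coincides with $\lambda(\sigma)$ up to ordering. Indeed, a diagonally ordered state is diagonal in the energy eigenbasis, hence its eigenvalues are exactly its diagonal entries, so $\lambda(\sigma)^{\downarrow} = \mathfrak{D}^{\downarrow}(\sigma) = \mathfrak{D}(\sigma)$.

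First I would record the elementary fact just stated: if $\sigma_1,\sigma_2$ are diagonally ordered in $(\ket{i})_{i=0}^{d_S-1}$, then they are in particular ordered in that basis, and $\mathfrak{D}(\sigma_j)=\lambda(\sigma_j)^{\downarrow}$ for $j=1,2$. Then, applying Lemma~\ref{lemma:orderedsumcolder}, $\sigma_1$ is sumcolder than $\sigma_2$ if and only if $\mathfrak{D}(\sigma_1)\succ\mathfrak{D}(\sigma_2)$. It remains only to observe that $\mathfrak{D}(\sigma_1)\succ\mathfrak{D}(\sigma_2)$ is by definition the same as $\lambda(\sigma_1)^{\downarrow}\succ\lambda(\sigma_2)^{\downarrow}$, which (since majorization is insensitive to reordering of the arguments, comparing only sorted partial sums) is exactly $\lambda(\sigma_1)\succ\lambda(\sigma_2)$, i.e.\ $\sigma_1\succ\sigma_2$ in the sense recalled just before the statement. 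Chaining these equivalences gives the claim.

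There is essentially no hard step here; the only thing to be careful about is the bookkeeping of what ``$\succ$'' means for matrices versus vectors, and the fact that ``diagonally ordered'' already supplies both the ``diagonal in the energy eigenbasis'' hypothesis (so eigenvalues $=$ diagonal entries) and the ``ordered'' hypothesis needed to invoke Lemma~\ref{lemma:orderedsumcolder} without any further sorting. I would also remark in passing why the ``diagonally ordered'' hypothesis cannot be weakened to merely ``ordered'': in that case $\mathfrak{D}(\sigma_j)$ need not equal $\lambda(\sigma_j)$ (off-diagonal elements can carry weight into the spectrum via Schur's theorem, Theorem~\ref{thm:Schur}), so $\sigma_1\succ\sigma_2$ and $\mathfrak{D}(\sigma_1)\succ\mathfrak{D}(\sigma_2)$ genuinely differ — which is precisely why Lemma~\ref{lemma:orderedsumcolder} is phrased in terms of $\mathfrak{D}$ and the present lemma in terms of $\lambda$. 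This makes the two lemmas a natural pair rather than redundant.
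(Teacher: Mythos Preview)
Your proposal is correct and follows essentially the same approach as the paper, which simply remarks that the proof is direct after recalling that $\sigma_1\succ\sigma_2$ means $\lambda(\sigma_1)\succ\lambda(\sigma_2)$. Your explicit reduction to Lemma~\ref{lemma:orderedsumcolder} via the observation that diagonally ordered states satisfy $\mathfrak{D}(\sigma_j)=\lambda(\sigma_j)^{\downarrow}$ is exactly the unpacking the paper leaves implicit.
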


The proof of Lemma~\ref{lemma:diagorderedsumcolder} is also direct. With this nomenclature fixed and its direct consequences spelled out, we are now ready to state the main result of this section.

\begin{theorem} \label{thm:sumtemperaturegenerality}
Let $\sigma_1$ and $\sigma_2$ be two system states diagonally ordered in $(\ket{i}_S)_{i=0}^{d_S-1}$ such that $\sigma_1$ is sumcolder than $\sigma_2$. Then

\begin{enumerate}
\item $\bra{0} \sigma_1 \ket{0} \geq \bra{0} \sigma_2 \ket{0}$,
\item $\Tr (\sigma_1 H_S) \leq \Tr(\sigma_2 H_S)$,
\item $S(\sigma_1) \leq S(\sigma_2)$,
\item $\mathcal{P}(\sigma_1) \geq \mathcal{P}(\sigma_2).$
\end{enumerate} 
\end{theorem}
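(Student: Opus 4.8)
The plan is to reduce each of the four claims to results already established in this section. Claims 1 and 2 are literally Lemma~\ref{lemma:sumcolderrgen} and Lemma~\ref{lemma:sumcolderEgen}, which hold for arbitrary states (no need for the diagonally ordered hypothesis), so for those I would simply invoke those lemmas. The real content is in claims 3 and 4, concerning the von Neumann entropy $S(\sigma)=-\Tr(\sigma\log\sigma)$ and the purity $\mathcal{P}(\sigma)=1-\Tr(\sigma^2)$ (note the sign convention: here ``more pure'' means larger $1-\Tr(\sigma^2)$, i.e.\ what was called purity in Section~\ref{sec:temperature}, which is a monotone that \emph{decreases} with $T$ — actually one should double-check the monotonicity direction against the list in Section~\ref{sec:temperature} and state claim 4 accordingly; I will assume the convention there).

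For claims 3 and 4 the key observation is that since $\sigma_1,\sigma_2$ are \emph{diagonally ordered} in $(\ket{i}_S)$, Lemma~\ref{lemma:diagorderedsumcolder} tells us that ``$\sigma_1$ sumcolder than $\sigma_2$'' is equivalent to $\sigma_1 \succ \sigma_2$, i.e.\ $\lambda(\sigma_1)\succ\lambda(\sigma_2)$. Now both $S$ and $\mathcal{P}$ are Schur-concave functions of the eigenvalue vector: $S(\sigma) = \sum_i g(\lambda_i(\sigma))$ with $g(x)=-x\log x$ concave, and $1-\Tr(\sigma^2) = \sum_i h(\lambda_i(\sigma))$ with $h(x)=x-x^2$ concave. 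A standard fact (see e.g.\ \cite{Marshall-2011}, the section on Schur-convex functions — sums of a concave function of the coordinates are Schur-concave) gives that $x\succ y \Rightarrow \sum_i g(x_i)\le \sum_i g(y_i)$ for concave $g$, and likewise for $h$. Applying this with $x=\lambda(\sigma_1)$, $y=\lambda(\sigma_2)$ yields $S(\sigma_1)\le S(\sigma_2)$ and $1-\Tr(\sigma_1^2)\ge 1-\Tr(\sigma_2^2)$, which are exactly claims 3 and 4. So the proof is essentially: (i) cite Lemmas~\ref{lemma:sumcolderrgen} and \ref{lemma:sumcolderEgen} for the first two; (ii) use Lemma~\ref{lemma:diagorderedsumcolder} to pass to $\sigma_1\succ\sigma_2$; (iii) invoke Schur-concavity of $-\Tr(\cdot\log\cdot)$ and of $1-\Tr(\cdot^2)$.

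The main obstacle — really the only place requiring care — is making sure the hypotheses line up: the Schur-concavity argument needs the statement in terms of the eigenvalue vectors, and Lemma~\ref{lemma:diagorderedsumcolder} is precisely what licenses replacing the sumcolder relation by matrix majorization $\sigma_1\succ\sigma_2$ (recall $\sigma_1\succ\sigma_2$ was defined as $\lambda(\sigma_1)\succ\lambda(\sigma_2)$). Without ``diagonally ordered'' one would only get $\mathfrak{D}(\sigma_1)\succ\mathfrak{D}(\sigma_2)$ via Lemma~\ref{lemma:orderedsumcolder} for merely ordered states, and the diagonal vector need not coincide with the spectrum — but since we do assume diagonally ordered, $\mathfrak{D}(\sigma_i)=\lambda(\sigma_i)$ and everything goes through. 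One should also note that $S$ and $\mathcal{P}$ depend only on the spectrum, so no basis subtlety enters once we are at the level of eigenvalue vectors.

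\begin{proof}
Claims 1 and 2 are immediate: Lemma~\ref{lemma:sumcolderrgen} gives $\bra{0}\sigma_1\ket{0}_S \geq \bra{0}\sigma_2\ket{0}_S$ and Lemma~\ref{lemma:sumcolderEgen} gives $\Tr(\sigma_1 H_S)\leq\Tr(\sigma_2 H_S)$, both of which hold for arbitrary states and in particular for diagonally ordered ones.

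For claims 3 and 4, since $\sigma_1$ and $\sigma_2$ are diagonally ordered in $(\ket{i}_S)_{i=0}^{d_S-1}$ and $\sigma_1$ is sumcolder than $\sigma_2$, Lemma~\ref{lemma:diagorderedsumcolder} yields $\sigma_1 \succ \sigma_2$, i.e.
\begin{equation}
\lambda(\sigma_1) \succ \lambda(\sigma_2).
\end{equation}
Write $\lambda(\sigma_j) = (\lambda_0^{(j)}, \dots, \lambda_{d_S-1}^{(j)})$ for $j=1,2$. The von Neumann entropy is $S(\sigma_j) = \sum_{i=0}^{d_S-1} g(\lambda_i^{(j)})$ with $g(x) = -x\log x$, and $1-\Tr(\sigma_j^2) = \sum_{i=0}^{d_S-1} h(\lambda_i^{(j)})$ with $h(x) = x - x^2$; both $g$ and $h$ are concave on $[0,1]$. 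A function of the form $v \mapsto \sum_i \phi(v_i)$ with $\phi$ concave is Schur-concave \cite{Marshall-2011}, hence $\lambda(\sigma_1)\succ\lambda(\sigma_2)$ implies
\begin{equation}
\sum_{i=0}^{d_S-1} g(\lambda_i^{(1)}) \leq \sum_{i=0}^{d_S-1} g(\lambda_i^{(2)}), \qquad \sum_{i=0}^{d_S-1} h(\lambda_i^{(1)}) \geq \sum_{i=0}^{d_S-1} h(\lambda_i^{(2)}),
\end{equation}
that is, $S(\sigma_1) \leq S(\sigma_2)$ and $\mathcal{P}(\sigma_1) \geq \mathcal{P}(\sigma_2)$, proving claims 3 and 4.
\end{proof}
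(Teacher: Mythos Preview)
Your approach is exactly the paper's: items 1 and 2 are the earlier lemmas, and items 3 and 4 follow by converting sumcolder into majorization via Lemma~\ref{lemma:diagorderedsumcolder} and then invoking Schur-convexity/concavity of the relevant spectral functions.

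There is one slip in your claim 4. You take $\mathcal{P}(\sigma)=1-\Tr(\sigma^2)=\sum_i h(\lambda_i)$ with $h(x)=x-x^2$ concave, so $\sum_i h$ is Schur-\emph{concave}; then $\lambda(\sigma_1)\succ\lambda(\sigma_2)$ would give $\sum_i h(\lambda_i^{(1)})\le\sum_i h(\lambda_i^{(2)})$, not $\ge$ as you wrote. The paper's proof instead treats the purity as a Schur \emph{convex} function (i.e.\ effectively $\mathcal{P}(\sigma)=\Tr(\sigma^2)$, consistent with the domain $[\tfrac{1}{d_S},1]$ of $f_p$ in Section~\ref{sec:temperature}), which is what yields $\mathcal{P}(\sigma_1)\ge\mathcal{P}(\sigma_2)$ as stated in the theorem. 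Once you align the convention, your argument is identical to the paper's.
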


\begin{proof}
1. and 2. are special cases of Lemma~\ref{lemma:sumcolderrgen} and Lemma~\ref{lemma:sumcolderEgen}. 3. and 4. follow from the fact that our sumtemperature notion of temperature is equivalent to majorization for diagonally ordered states and that the von Neumann entropy $S$ is a Schur concave function and that the purity $\mathcal{P}$ is a Schur convex function when seen as functions from the vector of eigenvalues of $\sigma$ to the reals.
\end{proof}

This means that for states diagonally ordered in the energy eigenbasis, our notion of temperature reunites all notions of temperatures. The next question to ask is if the states that one encounters in the coherent and incoherent scenario are diagonally ordered in the energy eigenbasis.\\

In the coherent scenario our target system state does not have to be diagonally ordered in $(\ket{i})_{i=0}^{d_S-1}$, since one can apply any local unitary to it. It therefore doesn't even have to be diagonal, let alone ordered, in $(\ket{i})_{i=0}^{d_S-1}$. However, since we can apply any local unitary to it and since we do not keep track of the work cost, we may w.l.o.g. assume that our system state is diagonally ordered in $(\ket{i})_{i=0}^{d_S-1}$. This will also ensure that we always pick the coldest state in the local unitary orbit of our state, where cold is according to our sumtemperature notion of temperature. This last statement follows from the following.

\begin{lemma} \label{lemma:diagorderedlowerbound}
Let $\sigma$ be a system state. Let 
\begin{equation}
\tilde{\sigma}= \sum_{i=0}^{d_S-1} \left[ \lambda^{\downarrow} (\sigma) \right]_i \ket{i} \bra{i}_S.
\end{equation}

Then $\tilde{\sigma}$ is sumcolder than $\sigma$.
\end{lemma}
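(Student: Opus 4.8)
Looking at this statement, I need to prove that for any system state $\sigma$, the diagonally-ordered state $\tilde{\sigma}$ built from the sorted eigenvalues of $\sigma$ is sumcolder than $\sigma$. Let me think about what tools are available: Schur's theorem (diagonal is majorized by eigenvalues), the definition of sumcolder, and Lemma on orderedsumcolder.

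The plan is to reduce everything to Schur's theorem (Theorem~\ref{thm:Schur}) together with the trivial observation that the sum of the first $l+1$ entries of a real vector never exceeds the sum of its $l+1$ largest entries. By Definition~\ref{def:sumcolder}, I must establish that for every $l \in \{0,\dots,d_S-2\}$,
\[
\sum_{k=0}^l \bra{k} \tilde{\sigma} \ket{k}_S \;\geq\; \sum_{k=0}^l \bra{k} \sigma \ket{k}_S .
\]
First I would note that, by construction of $\tilde{\sigma}$, its vector of diagonal entries is already sorted: $\mathfrak{D}(\tilde{\sigma}) = \lambda^{\downarrow}(\sigma) = \mathfrak{D}^{\downarrow}(\tilde{\sigma})$. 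Hence the left-hand side above equals $\sum_{k=0}^l [\lambda^{\downarrow}(\sigma)]_k$, the sum of the $l+1$ largest eigenvalues of $\sigma$.

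Next I would bound the right-hand side in two steps. Since $\bra{k}\sigma\ket{k}_S = [\mathfrak{D}(\sigma)]_k$ and rearranging a vector in decreasing order can only increase any leading partial sum, we have $\sum_{k=0}^l [\mathfrak{D}(\sigma)]_k \leq \sum_{k=0}^l [\mathfrak{D}^{\downarrow}(\sigma)]_k$. Then Schur's theorem, applied to the Hermitian matrix $\sigma$, gives $\mathfrak{D}(\sigma) \prec \lambda(\sigma)$, which by the definition of majorization is exactly $\sum_{k=0}^l [\mathfrak{D}^{\downarrow}(\sigma)]_k \leq \sum_{k=0}^l [\lambda^{\downarrow}(\sigma)]_k$ for all $l$. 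Chaining the two inequalities yields
\[
\sum_{k=0}^l \bra{k}\sigma\ket{k}_S \;\leq\; \sum_{k=0}^l [\mathfrak{D}^{\downarrow}(\sigma)]_k \;\leq\; \sum_{k=0}^l [\lambda^{\downarrow}(\sigma)]_k \;=\; \sum_{k=0}^l \bra{k}\tilde{\sigma}\ket{k}_S ,
\]
which is precisely the sumcolder condition. The case $l = d_S-1$ is just the normalization $\Tr(\sigma) = \Tr(\tilde{\sigma}) = 1$ and is not even required by Definition~\ref{def:sumcolder}.

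There is essentially no real obstacle here; the statement is a direct corollary of Schur's theorem. The only thing to be careful about is the bookkeeping distinction between the unsorted diagonal vector $\mathfrak{D}(\sigma)$, its decreasing rearrangement $\mathfrak{D}^{\downarrow}(\sigma)$, and the (already decreasing) eigenvalue vector $\lambda^{\downarrow}(\sigma)$, and the order in which the two elementary inequalities are composed — one must first pass from $\mathfrak{D}(\sigma)$ to $\mathfrak{D}^{\downarrow}(\sigma)$ and only then invoke majorization. Alternatively, the whole argument can be compressed to the single remark that $\mathfrak{D}(\sigma)$ is majorized by the sorted vector $\lambda^{\downarrow}(\sigma)$ (Schur, or equivalently Corollary~\ref{cor:Uincluded}), and that majorization by a sorted vector controls all leading partial sums of the majorized vector irrespective of its own ordering.
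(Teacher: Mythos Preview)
Your proof is correct and essentially identical to the paper's own argument: both invoke Schur's theorem to get $\mathfrak{D}(\sigma)\prec\lambda(\sigma)$ and then chain the inequalities $\sum_{i=0}^l[\mathfrak{D}(\sigma)]_i\le\sum_{i=0}^l[\mathfrak{D}^{\downarrow}(\sigma)]_i\le\sum_{i=0}^l[\lambda^{\downarrow}(\sigma)]_i=\sum_{i=0}^l\bra{i}\tilde\sigma\ket{i}$.
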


\begin{proof}
From Schur's Theorem, Theorem~\ref{thm:Schur}, we know that
\begin{equation}
\mathfrak{D}(\sigma) \prec \lambda(\sigma).
\end{equation}
So
\begin{align}
\sum_{i=0}^{l} \bra{i} \sigma \ket{i} = \sum_{i=0}^l \left[ \mathfrak{D}(\sigma) \right]_i \leq \sum_{i=0}^l \left[ \mathfrak{D}^{\downarrow}(\sigma) \right]_i \leq \sum_{i=0}^l \left[ \lambda^{\downarrow}(\sigma) \right]_i= \sum_{i=0}^{l} \bra{i} \tilde{\sigma} \ket{i}.
\end{align}
\end{proof}

In the coherent scenario we will therefore always assume that $\sigma$ is diagonally ordered.\\

In the incoherent scenario, we know from Lemma~\ref{lemma:staydiag} that we may always assume our system state to be diagonal in $(\ket{i})_{i=0}^{d_S-1}$. However, we are not allowed to perform any local unitary within this scenario as these are typically energy non-conserving. The question then arises of when the states in the incoherent scenario are diagonally ordered and when not. In general, one has to be careful when comparing results obtained with the sumcolder temperature notion to the purity and entropy notion of temperature. One may nevertheless infer lower bounds for these notions of cooling from sumtemperature results by looking at the passive state~\cite{Pusz-1978, Lenard-1978} corresponding to the obtained state and making use of Lemma~\ref{lemma:diagorderedlowerbound}. For the results presented in this thesis, the explicit final states obtained in the incoherent scenario will always be diagonally ordered and as such comparable to all the other notions of cooling as prescribed by Theorem~\ref{thm:sumtemperaturegenerality}.

\section{Universal Bound on Cooling} \label{sec:unibound}

Now that we are properly equipped with our sumtemperature notion, we are ready to use it within the context of the coherent and incoherent scenario. We remind the reader that unless otherwise stated, we will in the following assume that we have a $d_S < \infty$ dimensional system $S$ at hand as well as a $d_M< \infty$ dimensional machine $M$ with maximal energy gap $\mathcal{E}_{\text{max}} = \mathcal{E}_{d_M-1} < \infty$. As usual, $S$ and $M$ will be assumed to have an initial state thermal at $T_R$ and the underlying Hamiltonian to be non-interacting. Given such a system and machine, we will be in the following interested in investigating what is the lowest temperature, in terms of sumtemperature, that can be achieved on the target system state when one is allowed to apply the coherent and incoherent scenario an unbounded number of times. In short, we will be interested in $\Lambda^{\infty}_{\text{coh}} (\rho_S)$ and $\Lambda^{\infty}_{\text{inc}} (\rho_S)$.\\

Our first result in this direction will consist of a bound on the achievable temperature of $\Lambda^{\infty}_{\text{coh}} (\rho_S)$ and $\Lambda^{\infty}_{\text{inc}} (\rho_S)$, i.e., a bound that holds for both scenarios. Before we turn to the bound itself, we would like to state and prove a preliminary result that will allow simplifying our analysis as well as clarifying the interplay between both scenarios in terms of state attainability. This result states that given a system state $\sigma_S$ diagonal in the energy eigenbasis, one can reach any sumtemperature via a single application of the coherent scenario that one can reach via a single application of the incoherent scenario. We remind the reader that this relation is a priori not clear. Indeed, while the incoherent scenario restricts to energy conserving unitaries only -- as opposed to arbitrary unitaries for the coherent scenario -- the state of the machine in the incoherent scenario is allowed to have a temperature gradient, and as such has a richer structure than in the coherent scenario. The formal result reads as follows.

\begin{lemma} \label{lemma:singlecohpower}
Let our system and machine, $S$ and $M$, be given. Let $\sigma_S$ be a state of $S$ diagonal in the energy eigenbasis. Then, for all $v \in \mathbb{R}^{d_S}$ such that there exists a $\Lambda_{\text{inc}}$ with $v = \mathfrak{D} (\Lambda_{\text{inc}}(\sigma_S))$, there also exists a $\Lambda_{\text{coh}}$ such that $v=\mathfrak{D}(\Lambda_{\text{coh}}(\sigma_S))$.
\end{lemma}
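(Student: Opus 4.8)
\textbf{Proof plan for Lemma~\ref{lemma:singlecohpower}.}

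The plan is to show that whatever an energy-conserving unitary $U_{\text{inc}}$ achieves on $\sigma_S \otimes \rho_M^{R,H}$ (with $\rho_M^{R,H} = \tau_{M_R}\otimes\tau_{M_H}^H$ a product of a cold and a hot thermal part), a suitable arbitrary unitary $U$ can achieve on $\sigma_S \otimes \rho_M$ (with $\rho_M$ thermal at $T_R$), at the level of the diagonal of the reduced state on $S$. The key observation is that $\rho_M$ and $\rho_M^{R,H}$ are \emph{both} diagonal in the machine energy eigenbasis, so the relevant data is just the two probability vectors $\mathfrak{D}(\rho_M)$ and $\mathfrak{D}(\rho_M^{R,H})$ (together with $\mathfrak{D}(\sigma_S)$, which is shared). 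By Lemma~\ref{lemma:staydiag}-type reasoning (or directly, since $\sigma_S\otimes\rho_M^{R,H}$ is diagonal and $U_{\text{inc}}$ commutes with $H_{SM}$), the output $\Lambda_{\text{inc}}(\sigma_S)$ is diagonal, so $v=\mathfrak{D}(\Lambda_{\text{inc}}(\sigma_S))$ is obtained from $\mathfrak{D}(\sigma_S\otimes\rho_M^{R,H})$ by the doubly stochastic map with entries $|[U_{\text{inc}}]_{ab}|^2$, which moreover is block-diagonal across the energy eigenspaces of $H_{SM}$ (Lemma~\ref{lemma:directsum}).

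First I would reduce to a majorization statement. Let $p = \mathfrak{D}(\sigma_S\otimes\rho_M) \in \mathbb{R}^{d_S d_M}$ and $q = \mathfrak{D}(\sigma_S\otimes\rho_M^{R,H})\in\mathbb{R}^{d_S d_M}$, both viewed as vectors over the product index set $\{0,\dots,d_S-1\}\times\{0,\dots,d_M-1\}$. The incoherent evolution sends $q$ to some $q' \prec q$ (by Corollary~\ref{cor:Uincluded}/Schur), and $v$ is the $S$-marginal of $q'$, i.e.\ $v_k = \sum_{j} q'_{kj}$. By the Schur--Horn Theorem~\ref{thm:Schur-Horn}, within the coherent scenario I can realize \emph{any} $p'$ with $p'\prec p$ as $p' = \mathfrak{D}(U\,\sigma_S\otimes\rho_M\,U^\dagger)$, and then take its $S$-marginal. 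So it suffices to produce $p'\prec p$ whose $S$-marginal equals $v$. The natural candidate is to not disturb the machine distribution beyond what is needed: I would aim to show $q\prec p$ (coordinatewise after suitable matching of indices). Indeed $q$ is obtained from $p$ by replacing the cold factor on $M_H$ by a hotter one; heating a thermal state makes its distribution \emph{more mixed}, hence $\mathfrak{D}(\tau_{M_H}^H)\prec\mathfrak{D}(\tau_{M_H})$, and majorization is preserved under tensoring with a fixed vector, giving $q\prec p$. Then $q'\prec q\prec p$, so $q'$ itself is an admissible coherent output, and its $S$-marginal is exactly $v$. This would finish the argument: choose $U$ (real, by Horn~\ref{thm:Horn}) with $\mathfrak{D}(U\sigma_S\otimes\rho_M U^\dagger) = q'$ and trace out $M$.

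The main obstacle I anticipate is the bookkeeping around \emph{index matching}: $q$ and $p$ live on the same index set, but $q'$ arises from $q$ via a unitary that is block-diagonal over $H_{SM}$-eigenspaces, and I must be sure that ``$q'\prec q$'' as unordered vectors is enough — which it is, since Schur--Horn only cares about the multiset of entries and the $S$-marginal of a diagonal state is insensitive to permutations within each fixed $S$-level. A secondary subtlety is justifying $\mathfrak{D}(\tau_{M_H}^H)\prec\mathfrak{D}(\tau_{M_H})$: this is the elementary fact that for a fixed Hamiltonian the Gibbs distribution is Schur-monotone in temperature (flatter when hotter), which follows because raising $T$ moves the ordered probability vector towards the uniform one; I would either cite this or give the one-line argument that $e^{-\beta H}/Z$ as a function of $\beta$ has the larger-$\beta$ vector majorizing the smaller-$\beta$ one. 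With those two points handled, the rest is a direct chain $v = S\text{-marginal}(q') $, $q'\prec q\prec p=\mathfrak{D}(\sigma_S\otimes\rho_M)$, plus one invocation each of Schur--Horn and Horn.
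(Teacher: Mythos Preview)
Your proposal is correct and follows essentially the same route as the paper's own proof: the key chain is $\mathfrak{D}(U_{\text{inc}}\,\sigma_S\otimes\rho_M^{R,H}\,U_{\text{inc}}^{\dagger}) \prec \lambda(\sigma_S\otimes\rho_M^{R,H}) \prec \lambda(\sigma_S\otimes\rho_M)$, obtained from Schur's theorem together with the facts that hotter Gibbs states are majorized by colder ones and that majorization is stable under tensor products, followed by one invocation of Horn's theorem to realize the same diagonal from $\sigma_S\otimes\rho_M$. Your anticipated ``index matching'' subtlety is indeed a non-issue, since both joint states are diagonal in the same product energy eigenbasis and the $S$-marginal of the diagonal equals the diagonal of the $S$-marginal.
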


\begin{proof}
The proof is not hard. It relies mostly on the fact that a hotter thermal state is majorized by a colder one, i.e., for us
\begin{equation}
\tau^H_{M_2} \prec \tau_{M_2},
\end{equation}

and that majorization is stable under tensor product, see Corollary 1.2. of \cite{Bondar-2003}. For us that means
\begin{equation}
\rho_{M}^{R,M}=\tau_{M_1}^R \otimes \tau_{M_2}^H \prec \tau_{M_1}^R \otimes \tau_{M_2}^R= \rho_M.
\end{equation}

From this we get, using the stability under tensor product again, that
\begin{equation}
\sigma_S \otimes \rho_M^{R,H} \prec \sigma_S \otimes \rho_M,
\end{equation}

which means
\begin{equation}
\lambda \left( \sigma_S \otimes \rho_M^{R,H} \right) \prec \lambda \left( \sigma_S \otimes \rho_M \right).
\end{equation}

Now let $\tilde{\Lambda}_{\text{inc}}$ be a specific application of the incoherent scenario. Let $\tilde{U}_{\text{inc}}$ be the energy conserving unitary such that $\tilde{\Lambda}_{\text{inc}}(\cdot)=\Tr_M(\tilde{U}_{\text{inc}} \cdot \otimes \rho_M^{R,H} \tilde{U}_{\text{inc}}^{\dagger})$. Then by Schur's Theorem, Theorem \ref{thm:Schur}, we have that

\begin{equation}
\mathfrak{D} \left( \tilde{U}_{\text{inc}} \sigma_S \otimes \rho_M^{R,H} \tilde{U}_{\text{inc}}^{\dagger} \right) \prec \lambda \left(\sigma_S \otimes \rho_M^{R,H} \right) \prec \lambda \left( \sigma_S \otimes \rho_M \right).
\end{equation}

Now using Horn's Theorem, Theorem~\ref{thm:Horn}, there exists a unitary $\tilde{U}$ such that 

\begin{equation}
\mathfrak{D} \left( \tilde{U}_{\text{inc}} \sigma_S \otimes \rho_M^{R,H} \tilde{U}_{\text{inc}}^{\dagger} \right) = \mathfrak{D} \left( \tilde{U} \sigma_S \otimes \rho_M \tilde{U}^{\dagger} \right).
\end{equation}

And so
\begin{align}
\mathfrak{D} \left( \tilde{\Lambda}_{\text{inc}} (\sigma_S) \right) &= \mathfrak{D} \left[ \Tr_M \left( \tilde{U}_{inc} \sigma_S \otimes \rho_M^{R,H} \tilde{U}_{\text{inc}}^{\dagger} \right) \right]\\
&= \mathfrak{D} \left[ \Tr_M \left( \tilde{U} \sigma_S \otimes \rho_M \tilde{U}^{\dagger} \right) \right]\\
&= \mathfrak{D} \left( \tilde{\Lambda}_{\text{coh}} (\sigma_S) \right),
\end{align}
where we have chosen $\tilde{\Lambda}_{\text{coh}} (\cdot) = \Tr_M \left( \tilde{U} \cdot \otimes \rho_M \tilde{U}^{\dagger} \right)$. 
\end{proof}

What Lemma~\ref{lemma:singlecohpower} means is that, for a given machine $M$, we cannot cool a given system $S$ more with a single application of the incoherent scenario than with one application of the coherent scenario. There are two things one should point out at this stage. First of all, this result does not tell us that we can cool more within the coherent scenario than within the incoherent one. It might indeed well be that for a given machine, both scenarios, within their single application regime, allow reaching the same temperature. Whether a gap exists between both scenarios therefore remains an open question. Second of all, this result really only makes a statement about a single application of each scenario. One could imagine that the incoherent scenario is slow to start off, but that it catches up on the coherent one upon repeated applications and maybe even eventually allows for more cooling. Our next result gets rid of this possibility. It shows that for a given number $k \in \mathbb{N}$ of applications of each scenario, any temperature that $\Lambda^k_{\text{inc}}(\sigma_S)$ can reach can also be reached by $\Lambda^k_{\text{coh}}(\sigma_S)$. The statement reads as follows.

\begin{lemma} \label{lemma:kcohpower}
Let $S$ and $M$ be given. Let $k \in \mathbb{N}$. Let $\sigma_S$ be a state of $S$ diagonal in the energy eigenbasis. Then for all $v \in \mathbb{R}^{d_S}$ for which there exists a $\Lambda_{\text{inc}}^k$ such that $v= \mathfrak{D}(\Lambda_{\text{inc}}^k(\sigma_S))$, there also exists a $\Lambda^k_{\text{coh}}$ such that $v= \mathfrak{D}(\Lambda_{\text{coh}}^k(\sigma_S))$.
\end{lemma}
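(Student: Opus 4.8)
The plan is to prove Lemma~\ref{lemma:kcohpower} by induction on $k$, using Lemma~\ref{lemma:singlecohpower} as the single-step engine. The base case $k=1$ is exactly Lemma~\ref{lemma:singlecohpower}. The subtlety in the induction step is that Lemma~\ref{lemma:singlecohpower} requires its input state to be \emph{diagonal} in the energy eigenbasis, so I first need to arrange that the intermediate states stay diagonal. On the incoherent side this is automatic: by Lemma~\ref{lemma:staydiag} and the remark following it, $\Lambda_{\text{inc}}^{j}(\sigma_S)$ is diagonal in the energy eigenbasis (up to free local energy-conserving unitaries) for every $j$. On the coherent side, although a generic $\Lambda_{\text{coh}}$ can create coherences, I am free to \emph{choose} a coherent map whose output is diagonal --- indeed, since we do not track work cost, after each coherent step I may append a local unitary that diagonalizes the target in the energy eigenbasis, and this composition is again a legitimate $\Lambda_{\text{coh}}$ (local unitaries are a special case of the joint unitary $U$ acting trivially on $M$). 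So the invariant I carry through the induction is: the reached vector $v^{(j)}=\mathfrak{D}(\Lambda_{\text{inc}}^{(j)}\circ\dots\circ\Lambda_{\text{inc}}^{(1)}(\sigma_S))$ is matched by a coherent composition $\Lambda_{\text{coh}}^{(j)}\circ\dots\circ\Lambda_{\text{coh}}^{(1)}(\sigma_S)$ whose output is a \emph{diagonal} state with the same diagonal $v^{(j)}$.

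The induction step then goes as follows. Suppose $v=\mathfrak{D}(\Lambda_{\text{inc}}^k(\sigma_S))$, and write $\Lambda_{\text{inc}}^k=\Lambda_{\text{inc}}^{(k)}\circ\Lambda_{\text{inc}}^{(k-1)}\circ\dots\circ\Lambda_{\text{inc}}^{(1)}$. Let $\sigma_S^{(k-1)}=\Lambda_{\text{inc}}^{(k-1)}\circ\dots\circ\Lambda_{\text{inc}}^{(1)}(\sigma_S)$, which by Lemma~\ref{lemma:staydiag} we may take to be diagonal in the energy eigenbasis, with diagonal vector $w:=\mathfrak{D}(\sigma_S^{(k-1)})$. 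By the induction hypothesis there is a coherent composition of length $k-1$ producing a diagonal state $\tilde\sigma_S^{(k-1)}$ with $\mathfrak{D}(\tilde\sigma_S^{(k-1)})=w$; since both states are diagonal in the same basis, $\tilde\sigma_S^{(k-1)}=\sigma_S^{(k-1)}$. Now $v=\mathfrak{D}(\Lambda_{\text{inc}}^{(k)}(\sigma_S^{(k-1)}))$, and $\sigma_S^{(k-1)}$ is diagonal, so Lemma~\ref{lemma:singlecohpower} gives a single coherent map $\Lambda_{\text{coh}}^{(k)}$ with $\mathfrak{D}(\Lambda_{\text{coh}}^{(k)}(\sigma_S^{(k-1)}))=v$. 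Composing $\Lambda_{\text{coh}}^{(k)}$ after the coherent length-$(k-1)$ chain that reconstructs $\sigma_S^{(k-1)}$, and finally appending a diagonalizing local unitary to restore the diagonal invariant, yields a $\Lambda_{\text{coh}}^k$ with $\mathfrak{D}(\Lambda_{\text{coh}}^k(\sigma_S))=v$, completing the induction.

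The only genuinely delicate point --- and the one I would state carefully rather than wave at --- is the bookkeeping that the incoherent chain may \emph{choose a different partition} $M_H,M_R$ of the machine at each step, so one must make sure that whatever sequence of partitions the incoherent protocol uses at steps $1,\dots,k$, the coherent simulation at each step only needs the full thermal machine $\rho_M$ (which is fixed) and the diagonal input state coming from the previous step; Lemma~\ref{lemma:singlecohpower} is stated with exactly that generality (the partition is hidden inside ``there exists a $\Lambda_{\text{inc}}$''), so no extra work is required here beyond noting it. I do not expect any real obstacle; the proof is essentially a clean induction once the ``stay diagonal'' invariant on both sides is in place, and the heavy lifting (the majorization chain $\sigma_S\otimes\rho_M^{R,H}\prec\sigma_S\otimes\rho_M$ plus Schur and Horn) has already been done inside Lemma~\ref{lemma:singlecohpower}.

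\begin{proof}
We argue by induction on $k$. For $k=1$ the claim is precisely Lemma~\ref{lemma:singlecohpower}. Along the induction we maintain the stronger statement: for every $v\in\mathbb{R}^{d_S}$ attained as $v=\mathfrak{D}(\Lambda_{\text{inc}}^k(\sigma_S))$ for some $\Lambda_{\text{inc}}^k$, there is a $\Lambda_{\text{coh}}^k$ such that $\Lambda_{\text{coh}}^k(\sigma_S)$ is \emph{diagonal in the energy eigenbasis} with $\mathfrak{D}(\Lambda_{\text{coh}}^k(\sigma_S))=v$. This stronger form also holds for $k=1$: given the $\Lambda_{\text{coh}}$ furnished by Lemma~\ref{lemma:singlecohpower}, post-compose it with a local unitary on $S$ that diagonalizes the output in the energy eigenbasis; this does not change the diagonal entries and the composition is again of the form $\Tr_M(U'\,\cdot\otimes\rho_M\,U'^\dagger)$ with $U'$ a joint unitary, hence a legitimate $\Lambda_{\text{coh}}$.

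Assume the stronger statement for $k-1$ and let $v=\mathfrak{D}(\Lambda_{\text{inc}}^k(\sigma_S))$ with $\Lambda_{\text{inc}}^k=\Lambda_{\text{inc}}^{(k)}\circ\dots\circ\Lambda_{\text{inc}}^{(1)}$. Put $\sigma_S^{(k-1)}=\Lambda_{\text{inc}}^{(k-1)}\circ\dots\circ\Lambda_{\text{inc}}^{(1)}(\sigma_S)$; by Lemma~\ref{lemma:staydiag} and the discussion following it we may assume $\sigma_S^{(k-1)}$ is diagonal in the energy eigenbasis, and we set $w=\mathfrak{D}(\sigma_S^{(k-1)})$. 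Since $w$ is attained by a length-$(k-1)$ incoherent chain, the induction hypothesis yields a length-$(k-1)$ coherent chain whose output $\tilde\sigma_S^{(k-1)}$ is diagonal in the energy eigenbasis with $\mathfrak{D}(\tilde\sigma_S^{(k-1)})=w$; as two states diagonal in the same basis with equal diagonals coincide, $\tilde\sigma_S^{(k-1)}=\sigma_S^{(k-1)}$. Now $v=\mathfrak{D}\big(\Lambda_{\text{inc}}^{(k)}(\sigma_S^{(k-1)})\big)$ and $\sigma_S^{(k-1)}$ is diagonal, so Lemma~\ref{lemma:singlecohpower} provides a single coherent map $\Lambda_{\text{coh}}^{(k)}$ with $\mathfrak{D}\big(\Lambda_{\text{coh}}^{(k)}(\sigma_S^{(k-1)})\big)=v$. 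Composing $\Lambda_{\text{coh}}^{(k)}$ after the length-$(k-1)$ coherent chain (whose output is exactly $\sigma_S^{(k-1)}$) and then post-composing with a diagonalizing local unitary on $S$, we obtain a $\Lambda_{\text{coh}}^k$ with $\Lambda_{\text{coh}}^k(\sigma_S)$ diagonal in the energy eigenbasis and $\mathfrak{D}(\Lambda_{\text{coh}}^k(\sigma_S))=v$. This proves the stronger statement for $k$, and in particular the assertion of the lemma.
\end{proof}
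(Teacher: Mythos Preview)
Your induction skeleton is right, but the invariant you try to carry is too strong and the step you use to maintain it is false. You write: ``post-compose it with a local unitary on $S$ that diagonalizes the output in the energy eigenbasis; this does not change the diagonal entries.'' It does. If $\eta=\Lambda_{\text{coh}}(\sigma_S^{(k-1)})$ has $\mathfrak{D}(\eta)=v$, then any diagonalizing unitary $V$ satisfies $\mathfrak{D}(V\eta V^{\dagger})=\lambda(\eta)$, and by Schur's theorem $v=\mathfrak{D}(\eta)\prec\lambda(\eta)$, with the majorization generically strict. So after your ``diagonalize'' move the diagonal is $\lambda(\eta)$, not $v$, and your invariant ``the coherent output is diagonal with diagonal exactly equal to the incoherent diagonal'' is lost already at $k=1$. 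There is also no local unitary that makes $\eta$ diagonal \emph{and} keeps the diagonal equal to $v$ unless $\lambda(\eta)$ is a permutation of $v$, which you have no reason to expect.

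The paper avoids this by carrying a weaker invariant. After diagonalizing on both sides, the coherent intermediate state $\sigma_2$ (diagonal with entries $\lambda(\tilde\Lambda_{\text{coh}}^k(\sigma_S))$) only \emph{majorizes} the incoherent intermediate $\sigma_1$ (diagonal with entries $\mathfrak{D}(\tilde\Lambda_{\text{inc}}^k(\sigma_S))$), via Schur applied to the equality $\mathfrak{D}(\tilde\Lambda_{\text{coh}}^k(\sigma_S))=\mathfrak{D}(\tilde\Lambda_{\text{inc}}^k(\sigma_S))$ from the induction hypothesis. To close the step one then needs a slight extension of Lemma~\ref{lemma:singlecohpower}: if $\sigma_1\prec\sigma_2$ are both diagonal, then any $v=\mathfrak{D}(\Lambda_{\text{inc}}(\sigma_1))$ is also $\mathfrak{D}(\Lambda_{\text{coh}}(\sigma_2))$ for some coherent map. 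This is Lemma~\ref{lemma:singlecohpowerTechnical}, and its proof is the same majorization chain as in Lemma~\ref{lemma:singlecohpower} with one extra use of stability under tensor product to get $\sigma_1\otimes\rho_M\prec\sigma_2\otimes\rho_M$. With that in hand, the induction goes through cleanly. The missing idea in your attempt is precisely this relaxation from ``same diagonal state'' to ``majorized diagonal state'' at the intermediate stage.
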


Note that we write $\Lambda^k_{\text{coh}}$ and $\Lambda^k_{\text{inc}}$ to mean $k$ applications of each scenario but that we allow the specific maps chosen at each step to vary from one step to the other, i.e.,
\begin{equation}
\Lambda^k_{\text{coh}}= \Lambda^{(k)}_{\text{coh}} \circ\Lambda^{(k-1)}_{\text{coh}} \circ \dots \circ \Lambda^{(1)}_{\text{coh}},
\end{equation}

and $\Lambda^{(i)}_{\text{coh}}$ does not have to be equal to $\Lambda^{(j)}_{\text{coh}}$ for $i \neq j$. Similarly for $\Lambda^k_{\text{inc}}$.\\

The statement of Lemma~\ref{lemma:kcohpower} being a straightforward generalization of Lemma~\ref{lemma:singlecohpower}, one would expect the proof of it to straightforwardly harness the result of Lemma~\ref{lemma:singlecohpower}. Since the initial state $\sigma_S$ is in both cases demanded to be diagonal and that each scenario does not necessarily deliver diagonal state, one nevertheless have to deal with a few subtleties first. To that end we first state and prove the following technical Lemma.

\begin{lemma} \label{lemma:singlecohpowerTechnical}
Let $\sigma_1$ and $\sigma_2$ be diagonal system states such that $\sigma_1 \prec \sigma_2$. Then for all $v \in \mathbb{R}^{d_S}$ for which there exists a $\Lambda_{\text{inc}}$ such that $v= \mathfrak{D}( \Lambda_{\text{inc}} (\sigma_1))$, there exists a $\Lambda_{\text{coh}}$ such that $v= \mathfrak{D}(\Lambda_{\text{coh}}(\sigma_2))$.
\end{lemma}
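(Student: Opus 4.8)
The statement to prove is Lemma~\ref{lemma:singlecohpowerTechnical}: if $\sigma_1 \prec \sigma_2$ are diagonal system states and $v = \mathfrak{D}(\Lambda_{\text{inc}}(\sigma_1))$ for some incoherent map, then $v = \mathfrak{D}(\Lambda_{\text{coh}}(\sigma_2))$ for some coherent map. The plan is to chain together the majorization facts already assembled in the proof of Lemma~\ref{lemma:singlecohpower}, inserting the extra hypothesis $\sigma_1 \prec \sigma_2$ at the front of the chain. Concretely, let $\tilde U_{\text{inc}}$ be the energy-conserving unitary realizing the given $\Lambda_{\text{inc}}$, so that $v = \mathfrak{D}\bigl(\Tr_M(\tilde U_{\text{inc}} \, \sigma_1 \otimes \rho_M^{R,H} \, \tilde U_{\text{inc}}^{\dagger})\bigr)$.

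\textbf{Key steps.} First I would note that $\rho_M^{R,H} = \tau_{M_1}^R \otimes \tau_{M_2}^H \prec \tau_{M_1}^R \otimes \tau_{M_2}^R = \rho_M$, exactly as in the proof of Lemma~\ref{lemma:singlecohpower}, using that a hotter thermal state is majorized by a colder one together with stability of majorization under tensor products (Corollary~1.2 of~\cite{Bondar-2003}). Combining this with the hypothesis $\sigma_1 \prec \sigma_2$ and again invoking stability under tensor products gives
\begin{equation}
\sigma_1 \otimes \rho_M^{R,H} \prec \sigma_2 \otimes \rho_M,
\end{equation}
hence $\lambda(\sigma_1 \otimes \rho_M^{R,H}) \prec \lambda(\sigma_2 \otimes \rho_M)$. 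Next, by Schur's Theorem (Theorem~\ref{thm:Schur}) applied to $\tilde U_{\text{inc}} \, \sigma_1 \otimes \rho_M^{R,H} \, \tilde U_{\text{inc}}^{\dagger}$,
\begin{equation}
v = \mathfrak{D}\bigl(\tilde U_{\text{inc}} \, \sigma_1 \otimes \rho_M^{R,H} \, \tilde U_{\text{inc}}^{\dagger}\bigr) \prec \lambda\bigl(\sigma_1 \otimes \rho_M^{R,H}\bigr) \prec \lambda\bigl(\sigma_2 \otimes \rho_M\bigr).
\end{equation}
By Horn's Theorem (Theorem~\ref{thm:Horn}) — equivalently the Schur--Horn Theorem~\ref{thm:Schur-Horn} — there is then a unitary $\tilde U$ with $\mathfrak{D}(\tilde U \, \sigma_2 \otimes \rho_M \, \tilde U^{\dagger}) = v$. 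Finally, choosing $\Lambda_{\text{coh}}(\cdot) = \Tr_M(\tilde U \, \cdot \otimes \rho_M \, \tilde U^{\dagger})$, I would compute
\begin{align}
\mathfrak{D}\bigl(\Lambda_{\text{coh}}(\sigma_2)\bigr) &= \mathfrak{D}\bigl(\Tr_M(\tilde U \, \sigma_2 \otimes \rho_M \, \tilde U^{\dagger})\bigr)\\
&= \mathfrak{D}\bigl(\tilde U \, \sigma_2 \otimes \rho_M \, \tilde U^{\dagger}\bigr) \text{ restricted to } S = v,
\end{align}
where the reduction to $S$ of the diagonal is just the partial trace on the diagonal entries; this is exactly the manipulation performed at the end of the proof of Lemma~\ref{lemma:singlecohpower}.

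\textbf{Main obstacle.} There is no deep difficulty here — the lemma is essentially Lemma~\ref{lemma:singlecohpower} with the single additional link $\sigma_1 \prec \sigma_2$ prepended to the majorization chain. The one point needing care is the bookkeeping around the partial trace: one must check that $\mathfrak{D}$ of the reduced state on $S$ is obtained from $\mathfrak{D}$ of the full state on $SM$ by summing over the machine index, so that $v \in \mathbb{R}^{d_S}$ really matches on both sides; this is routine and identical to what was already done in Lemma~\ref{lemma:singlecohpower}. A secondary remark worth including is that the diagonality of $\sigma_1$ and $\sigma_2$ is used only so that $\lambda(\sigma_i) = \mathfrak{D}(\sigma_i)$ and so that the tensor-product majorization statements are literally about eigenvalue vectors; everything else goes through verbatim. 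This technical lemma then feeds directly into the induction proving Lemma~\ref{lemma:kcohpower}, where at step $i$ one has a (possibly non-diagonal) coherent output whose passive rearrangement majorizes the incoherent output, and one applies the present lemma with $\sigma_2$ that passive state.
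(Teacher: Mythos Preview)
Your proposal is correct and follows exactly the paper's approach: the paper's proof simply notes that $\sigma_1 \otimes \rho_M \prec \sigma_2 \otimes \rho_M$ by stability of majorization under tensor products, and then says that by transitivity one can replace every instance of $\sigma_S \otimes \rho_M$ in the proof of Lemma~\ref{lemma:singlecohpower} by $\sigma_2 \otimes \rho_M$. Your write-up spells this out in slightly more detail (combining the two tensor-product majorizations into the single chain $\sigma_1 \otimes \rho_M^{R,H} \prec \sigma_2 \otimes \rho_M$), and your flagged bookkeeping point about matching the $SM$-diagonal to the $S$-diagonal via the partial trace is exactly the step the paper performs at the end of Lemma~\ref{lemma:singlecohpower}.
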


\begin{proof}
Using the stability of majorization under tensor product, Corollary 1.2. of \cite{Bondar-2003}, we have that
\begin{equation}
\sigma_1 \otimes \rho_M \prec \sigma_2 \otimes \rho_M.
\end{equation}

Since majorization is transitive one can straightforwardly replace all the instances of $\sigma_1 \otimes \rho_M $ in the proof of Lemma~\ref{lemma:singlecohpower} by $\sigma_2 \otimes \rho_M $, which proves our result.
\end{proof}

Now we are ready to prove Lemma~\ref{lemma:kcohpower}.

\begin{proof}[Proof of Lemma~\ref{lemma:kcohpower}]

The proof is by induction over $k$. $k=1$ is the statement of Lemma~\ref{lemma:singlecohpower}. Now suppose that the statement is true for $k$, we will show that it holds for $k+1$. Let $v \in \mathbb{R}^{d_S}$ for which there exists a $\Lambda^{k+1}_{\text{inc}} = \Lambda^{(k+1)}_{\text{inc}} \circ \Lambda^{(k)}_{\text{inc}} \circ \dots \circ \Lambda^{(1)}_{\text{inc}}$, with associated unitaries $U^{(i)}_{\text{inc}}$, such that 
\begin{equation}
v= \mathfrak{D} (\Lambda_{\text{inc}}^{k+1} (\sigma_S)).
\end{equation}

We look at $\Lambda_{\text{inc}}^k= \Lambda^{(k)}_{\text{inc}} \circ \dots \circ \Lambda^{(1)}_{\text{inc}}$. $\Lambda^k_{\text{inc}}(\sigma_S)$ may not be diagonal but by Lemma~\ref{lemma:staydiag} we know that it is diagonalizable via a local energy conserving unitary, call it $\tilde{U}_{\text{inc}}$. Then
\begin{equation}
\tilde{U}_{\text{inc}} \Lambda_{\text{inc}}^k(\sigma_S) \tilde{U}_{\text{inc}}^{\dagger}
\end{equation}

is diagonal and

\begin{align}
\tilde{U}_{\text{inc}} \Lambda_{\text{inc}}^k(\sigma_S) \tilde{U}_{\text{inc}}^{\dagger} &= \tilde{U}_{\text{inc}} \Tr_M \left[ U_{\text{inc}}^{(k)}  \Lambda_{\text{inc}}^{k-1}(\sigma_S) \otimes \rho_M^{R,H} \left( U_{\text{inc}}^{(k)}\right)^{\dagger} \right] \tilde{U}_{\text{inc}}^{\dagger}\\
&=  \Tr_M \left[\left( \tilde{U}_{\text{inc}} \otimes \mathds{1}_M U_{\text{inc}}^{(k)} \right)  \Lambda_{\text{inc}}^{k-1}(\sigma_S) \otimes \rho_M^{R,H} \left(\tilde{U}_{\text{inc}} \otimes \mathds{1}_M U_{\text{inc}}^{(k)}\right)^{\dagger} \right] \\
&=: \tilde{\Lambda}_{\text{inc}} \circ \Lambda_{\text{inc}}^{k-1}(\sigma_S) =: \tilde{\Lambda}_{\text{inc}}^k(\sigma_S).
\end{align}

Using the induction hypothesis, we know that there exists a $\tilde{\Lambda}_{\text{coh}}^k = \tilde{\Lambda}^{(k)}_{\text{coh}} \circ  \dots \circ \tilde{\Lambda}^{(1)}_{\text{coh}}$ such that
\begin{equation}
\mathfrak{D} \left( \tilde{\Lambda}_{\text{coh}}^k(\sigma_S) \right) = \mathfrak{D} \left( \tilde{\Lambda}_{\text{inc}}^k(\sigma_S) \right).
\end{equation}

$\tilde{\Lambda}_{\text{coh}}^k(\sigma_S)$ may also not be diagonal. But since it is a state, there is a unitary, call it $\tilde{U}$, such that
\begin{equation}
\tilde{U} \tilde{\Lambda}_{\text{coh}}^k(\sigma_S) \tilde{U}^{\dagger}
\end{equation}
is diagonal. As before, there is a $\Lambda_{\text{coh}}^k$ such that
\begin{equation}
\Lambda_{\text{coh}}^k (\sigma_S)= \tilde{U} \tilde{\Lambda}_{\text{coh}}^k(\sigma_S) \tilde{U}^{\dagger}.
\end{equation}

Furthermore, by Schur's Theorem (Theorem~\ref{thm:Schur}), 
\begin{equation}
\mathfrak{D} \left( \tilde{\Lambda}_{\text{inc}}^k (\sigma_S) \right)=\mathfrak{D} \left( \tilde{\Lambda}_{\text{coh}}^k (\sigma_S) \right) \prec \lambda \left( \tilde{\Lambda}_{\text{coh}}^k (\sigma_S) \right)=\mathfrak{D} \left( \Lambda_{\text{coh}}^k (\sigma_S) \right).
\end{equation}

Now let $\sigma_1= \tilde{\Lambda}_{\text{inc}}^k(\sigma_S)$ and $\sigma_2= \Lambda_{\text{coh}}^k(\sigma_S)$. Then $\sigma_1$ and $\sigma_2$ are diagonal and $\sigma_1 \prec \sigma_2$. Furthermore, there exists a $\Lambda_{\text{inc}}$ such that $v=\mathfrak{D} \left(\Lambda_{\text{inc}}(\sigma_1) \right)$. Indeed, with
\begin{equation}
\Lambda_{\text{inc}} (\cdot)= \Tr_M \left[ \left( U^{(k+1)} \tilde{U}_{\text{inc}}^{\dagger} \otimes \mathds{1}_M \right) \, \cdot \otimes \rho_M^{R,H} \left(U^{(k+1)} \tilde{U}_{\text{inc}}^{\dagger} \otimes \mathds{1}_M \right)^{\dagger} \right],
\end{equation}
we have that 
\begin{align}
\Lambda_{\text{inc}}(\sigma_1)&=\Tr_M \left[ U^{(k+1)} \left( \tilde{U}_{\text{inc}}^{\dagger} 
\sigma_1 \tilde{U}_{\text{inc}} \right) \otimes \rho_M^{R,H} \left(U^{(k+1)}\right)^{\dagger} \right]\\
&=\Lambda^{(k+1)}_{\text{inc}} \left( \Lambda_{\text{inc}}^k(\sigma_S) \right)= \Lambda^{k+1}_{\text{inc}}(\sigma_S).
\end{align}

So by Lemma~\ref{lemma:singlecohpowerTechnical}, there exists a $\Lambda_{\text{coh}}$ such that
\begin{equation}
v=\mathfrak{D} \left( \Lambda_{\text{inc}}^{k+1} (\sigma_S) \right) = \mathfrak{D} \left( \Lambda_{\text{inc}} (\sigma_1) \right)= \mathfrak{D} \left( \Lambda_{\text{coh}} (\sigma_2) \right)= \mathfrak{D} \left( \Lambda_{\text{coh}}^{k+1} (\sigma_S) \right),
\end{equation}
where $\Lambda_{\text{coh}}^{k+1} = \Lambda_{\text{coh}} \circ \Lambda_{\text{coh}}^{k}$.
\end{proof}

With these preliminary results, we have gathered some good intuition about how both scenarios compare to one another in terms of state attainability and are ready to move to the main result of this section. The result states that in the limit of unbounded, as well as infinite if well-defined, applications of each scenario, all the states that one can ever reach are sumhotter than the following state

\begin{equation}
\sigma_S^*= \sum_{k=0}^{d_S-1} \frac{ \left( e^{-\beta_R \mathcal{E}_{\text{max}}} \right)^k}{\sum_{j=0}^{d_S-1} \left( e^{-\beta_R \mathcal{E}_{\text{max}}} \right)^j} \ket{k} \bra{k}_S,
\end{equation}
given that the initial state of the system, $\rho_S$, is sumhotter than $\sigma_S^*$. Since $\rho_S$ and $\sigma_S^*$ are diagonally ordered with respect to $\left( \ket{i} \right)_{i=0}^{d_S-1}$, this is the same as saying that one cannot cool more than $\sigma_S^*$ within both paradigms given that $\rho_S \prec \sigma_S^*$ holds. In formal terms, the statement reads as follows.

\begin{theorem}[Universal bound]\label{thm:universalbound}
Let $S$ be a system. Let $M$ be a machine with maximal energy gap $\mathcal{E}_{\text{max}}$, such that $\rho_S \prec \sigma_S^*$. Then for all $k \in \mathbb{N}$
\begin{align}
\mathfrak{D} \left( \Lambda_{\text{inc}}^{k}(\rho_S) \right) \prec \sigma_S^*, \quad \mathfrak{D} \left( \Lambda_{\text{coh}}^{k}(\rho_S) \right) &\prec \sigma_S^*,
\end{align}
and if the limits exist, i.e., if $k \rightarrow \infty$ makes sense, 
\begin{align}
\mathfrak{D} \left( \Lambda_{\text{inc}}^{\infty}(\rho_S) \right) \prec \sigma_S^*, \quad \mathfrak{D} \left( \Lambda_{\text{coh}}^{\infty}(\rho_S) \right) &\prec \sigma_S^*.
\end{align}

\end{theorem}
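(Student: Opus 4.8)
The plan is to prove the bound for a single application of the coherent scenario and then bootstrap to $k$ applications and the limit. By Lemma~\ref{lemma:kcohpower}, any diagonal vector reachable by $\Lambda_{\text{inc}}^k$ is also reachable by $\Lambda_{\text{coh}}^k$, so it suffices to establish the statement for the coherent scenario; the incoherent case then follows immediately. So I would focus on showing $\mathfrak{D}(\Lambda_{\text{coh}}^k(\rho_S)) \prec \sigma_S^*$ for all $k$.

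First I would prove the base case $k=1$. Since $\rho_S$ is diagonal in the energy eigenbasis, $\rho_{SM} = \rho_S \otimes \rho_M$ is diagonal too, and by Corollary~\ref{cor:Uincluded} (Schur's theorem) we have $\mathfrak{D}(U \rho_{SM} U^\dagger) \prec \lambda(\rho_{SM}) = \mathfrak{D}(\rho_{SM})$ for any unitary $U$. Taking the partial trace over $M$ can only further majorize-decrease (tracing out a subsystem is a doubly-stochastic-type operation on the diagonal, i.e.\ it is a partial sum followed by a coarse-graining, hence $\mathfrak{D}(\Tr_M(\cdot)) \prec \mathfrak{D}(\cdot)$ when the thing is diagonal; more carefully, $[\mathfrak{D}(\Tr_M \xi_{SM})]_i = \sum_j [\mathfrak{D}(\xi_{SM})]_{ij}$, and this grouping-into-blocks map is majorization-monotone). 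Hence $\mathfrak{D}(\Lambda_{\text{coh}}(\rho_S)) \prec \mathfrak{D}(\rho_S \otimes \rho_M)$. The crux is then the purely combinatorial claim: if $\rho_S \prec \sigma_S^*$ then $\mathfrak{D}(\rho_S \otimes \rho_M) \prec \sigma_S^*$ as vectors in $\mathbb{R}^{d_S}$ versus $\mathbb{R}^{d_{SM}}$ — note the vectors live in different dimensions, so I must interpret $\prec$ here via the diagonal-of-reduced-state, i.e.\ the $k$-th partial sum of $\mathfrak{D}^{\downarrow}(\rho_S\otimes\rho_M)$ grouped appropriately must be bounded by $\sum_{i=0}^{k-1}[\sigma_S^*]_i$. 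The key observation driving this is that $\sigma_S^*$ is exactly the thermal state one would get from a fictitious machine whose single relevant gap is $\mathcal{E}_{\text{max}}$: reordering the largest $d_S$ entries of $\mathfrak{D}(\rho_S\otimes\rho_M)$ into the ground space, the best one can do for the $l$-th partial sum is bounded because each machine population ratio $(\rho_M)_k/(\rho_M)_0 \geq e^{-\beta_R \mathcal{E}_{\text{max}}}$, which forces the optimally-cooled target populations to be majorized by a geometric sequence with ratio $e^{-\beta_R\mathcal{E}_{\text{max}}}$ — precisely $\sigma_S^*$.

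For the inductive step to general $k$, I would use Lemma~\ref{lemma:singlecohpowerTechnical}-style reasoning: assuming $\mathfrak{D}(\Lambda_{\text{coh}}^{k}(\rho_S)) \prec \sigma_S^*$, apply a local unitary to diagonalize $\Lambda_{\text{coh}}^{k}(\rho_S)$ (this does not change $\mathfrak{D}^{\downarrow}$ hence preserves the majorization relation with $\sigma_S^*$, which is diagonally ordered), call the resulting diagonal state $\sigma_1$. Then $\sigma_1 \prec \sigma_S^*$, and since majorization is stable under tensoring with $\rho_M$ (Corollary~1.2 of \cite{Bondar-2003}), one has $\sigma_1 \otimes \rho_M \prec \sigma_S^* \otimes \rho_M$. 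Re-running the $k=1$ argument with $\sigma_1$ in place of $\rho_S$ — Schur on the unitary, then partial trace, then the combinatorial geometric-sequence bound, which only used $\sigma_1 \prec \sigma_S^*$ and the gap structure of $\rho_M$ — yields $\mathfrak{D}(\Lambda_{\text{coh}}^{k+1}(\rho_S)) \prec \sigma_S^*$. The limit statement follows since the set $\{v : v \prec \sigma_S^*\}$ is closed, so if $\mathfrak{D}(\Lambda_{\text{coh}}^\infty(\rho_S))$ exists as a limit of points in this set it remains in it.

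The main obstacle I anticipate is the combinatorial heart of the $k=1$ case: carefully showing that maximizing partial sums $\sum_{i=0}^{l}[\mathfrak{D}^{\downarrow}(\sigma_1 \otimes \rho_M)]_i$ — which by the Schur--Horn logic is what the optimal coherent unitary achieves — is controlled by $\sum_{i=0}^{l}[\sigma_S^*]_i$. This requires a clean argument that among all ways of picking the top $d_S\cdot(l+1)$ entries out of the product distribution $\sigma_1 \otimes \rho_M$, the bound is saturated by the configuration corresponding to the machine's maximal gap, and that $\sigma_1 \prec \sigma_S^*$ then propagates through. One slick way: observe $\sigma_S^* \otimes \rho_M \prec \sigma_S^*$ (in the reduced-diagonal sense) can be checked directly since $\sigma_S^*\otimes\rho_M$ has the form of a thermal-like product whose optimal cooling returns $\sigma_S^*$ itself — this is the fixed-point property — and then combine with $\sigma_1 \otimes \rho_M \prec \sigma_S^* \otimes \rho_M$ and transitivity of the induced order. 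I would structure the proof around isolating and proving this fixed-point/extremality lemma for $\sigma_S^*$ first, after which everything else is bookkeeping with Schur, Horn, and tensor-stability of majorization.
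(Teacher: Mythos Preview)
Your proposal is essentially correct and follows the same architecture as the paper's proof: reduce to the coherent scenario via Lemma~\ref{lemma:kcohpower}, establish that a single coherent step preserves the relation $\prec \sigma_S^*$ (by focusing on the optimal reordering unitary $\Lambda_{\text{coh}}^*$ and noting that every other $\Lambda_{\text{coh}}$ is majorized by it), then iterate and pass to the limit.

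One point to clean up: your intermediate claim that ``taking the partial trace over $M$ can only further majorize-decrease, hence $\mathfrak{D}(\Tr_M(\cdot)) \prec \mathfrak{D}(\cdot)$'' is not well-posed---the two vectors live in different dimensions and this is not what you actually need. You yourself catch this and correctly pivot to the real statement: for any unitary $U$, the $l$-th partial sum of $\mathfrak{D}(\Tr_M(U\rho_{SM}U^\dagger))$ is a sum of $(l{+}1)d_M$ diagonal entries of $U\rho_{SM}U^\dagger$, hence by Schur bounded by the sum of the top $(l{+}1)d_M$ eigenvalues of $\rho_{SM}$. Drop the partial-trace-majorization phrasing entirely and go straight to this. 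Your fixed-point reformulation---verifying $\Lambda_{\text{coh}}^*(\sigma_S^*)=\sigma_S^*$ directly (which follows since $k\mathcal{E}_{\max}+\mathcal{E}_j \leq (k{+}1)\mathcal{E}_{\max}+\mathcal{E}_{j'}$ for all $j,j'$), together with the monotonicity of $\Lambda_{\text{coh}}^*$ under $\prec$ (immediate from its definition as block-summing the sorted spectrum)---is a clean way to organize the core lemma and is exactly in the spirit of the paper's argument.
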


\begin{proof}[Proof Idea]
The proof idea is the following. First of all, since the incoherent scenario cannot perform better than the coherent scenario, Lemma~\ref{lemma:kcohpower}, one really only needs to prove the result for the coherent scenario. For the coherent scenario one can show, see Section A of the Supplementary material of \cite{Clivaz-2019}, that if $\rho_S \prec \sigma_S^*$ then also $\Lambda_{\text{coh}}(\rho_S) \prec \sigma_S^*$ for any choice of $\Lambda_{\text{coh}}$. To prove this, one first only considers $\Lambda^*_{\text{coh}}$, the coherent operations cooling the most, i.e., the $\Lambda_{\text{coh}}$ that has the associated $U$ reordering $\rho_{SM}$ such that the greatest eigenvalues of $\rho_{SM}$ contribute to the ground state and so on. One shows $\Lambda_{\text{coh}}^*(\rho_S) \prec \sigma_S^*$ and then uses that for any other choice of $\Lambda_{\text{coh}}$, $\Lambda_{\text{coh}}(\rho_S) \prec \Lambda_{\text{coh}}^*(\rho_S)$. With this done, one can repeat the argument and one gets that
\begin{equation}
\Lambda_{\text{coh}}^k (\rho_S) \prec \sigma_S^*, \quad \text{for any } k \in \mathbb{N}.
\end{equation}

In particular $\left( \Lambda_{\text{coh}}^* \right)^k (\rho_S) \prec \sigma_S^*$ for all $k \in \mathbb{N}$. This means that the partial sums of $\left( \Lambda_{\text{coh}}^* \right)^k (\rho_S) \prec \sigma_S^*$ are each upper bounded by those of $\sigma_S^*$. Since they are monotonically increasing they converge and
\begin{equation}
\left( \Lambda_{\text{coh}}^* \right)^{\infty} (\rho_S) \prec \sigma_S^*.
\end{equation}
For any other sequence $ \Lambda_{\text{coh}}^k (\rho_S)$ for which all partial sums converge, the same argument holds and
\begin{equation}
\Lambda_{\text{coh}}^{\infty} (\rho_S) \prec \sigma_S^*.
\end{equation}
\end{proof}

For a qubit target system the bound can be expressed as a temperature and tells us that all the states that can be reached with both scenarios have a temperature higher or equal than
\begin{equation}
T^* = \frac{E_S}{\mathcal{E}_{\text{max}}} T_R,
\end{equation}
provided that the initial state has a temperature higher or equal than $T^*$, i.e., provided $T^* \leq T_R$. The qubit bound was first derived in \cite{Allahverdyan-2011} and also appears in \cite{Reeb-2014}.
To relate this result to the machines that we considered in chapter~\ref{chap:qubitsyst}, for the one qubit machine, $\mathcal{E}_{\text{max}}=\mathcal{E}_M$, and for the two qubit machine, $\mathcal{E}_{\text{max}}=\mathcal{E}_{M_1}+\mathcal{E}_{M_2}$. \\

Note also that the state $\sigma_S^*$ can be viewed as a thermal state at inverse temperature $\beta_R$ of the modified target Hamiltonian
\begin{equation}
\tilde{H}_S= \sum_{k=0}^{d_S-1} k \mathcal{E}_{\text{max}} \ket{k} \bra{k}_S.
\end{equation}

That is
\begin{equation}
\sigma_S^* = \frac{e^{-\beta_R \tilde{H}_S}}{\Tr \left( e^{- \beta_R \tilde{H}_S} \right)}.
\end{equation}

We will, however, try to avoid this notation since writing $\sigma_S^*$ as such might give the wrong impression that what we are doing is modifying the Hamiltonian of the system from $H_S$ to $\tilde{H}_S$ instead of actually cooling the system. Indeed, changing the Hamiltonian of a system is equivalent to completely changing it and might therefore be considered as cheating. As a comparison, if upon given a warm beer to cool one is given back a cold glass of water, most of us would not see this procedure as cooling a beer.\\

Finally, note that a particular feature of this bound is that it does not depend on all the intricacies of the machine or even on its dimension. It only cares about its maximal energy gap, $\mathcal{E}_{\text{max}}$. This simplifies a lot the analysis as one can very efficiently determine what the lowest achievable temperature on the target system is given some machine $M$. 

\section{Attainability of Bound}

Now that we have a bound, the question is if this bound is attainable. In particular, since the bound has such a simple form in that it only cares about a single parameter of the machine, one may wonder if it is not overseeing relevant parameters of the problem that would hinder its attainability. We will in the following show that the bound is attainable in the coherent scenario. This is Section~\ref{subsec:cohattain}. Concerning the incoherent scenario, note that nothing has been specified regarding the degeneracies of the joint system $SM$ so far. One therefore cannot expect to attain the bound in general. Indeed, given a target system $S$, if we choose a machine $M$ such that $H_{SM}$ has no degeneracies, we have that the system cannot be cooled incoherently. For all such machines the bound is therefore far from being reachable. What is more, even if degeneracies in the joint system exist, if the machine does not exhibit a tensor product structure, one can also perform no cooling at all incoherently. We then have two choices. We can choose to specify on the useful machines only by systematically disregarding the useless ones. The hope being that for the machines that are left, the bound might be attainable. For this, we would have to tackle the degeneracy problem we talked about in Section~\ref{sec:twoopenproblems} more systematically. The other avenue is to see how much we would potentially have to modify a given arbitrary machine to make it useful enough to reach the bound. We will investigate the second strategy in Section~\ref{subsec:incattain}. Finally, we will see in Section~\ref{subsec:autoattain} how to relate the incoherent scenario to another paradigm of cooling, namely autonomous machines.

\subsection{Coherent Scenario} \label{subsec:cohattain}

In the coherent scenario we find that the bound set by Theorem~\ref{thm:universalbound} is attainable. That is, that we can cool our system $S$ to $\sigma_S^*$, given that $\rho_S \prec \sigma_S^*$. To show that the bound is attainable, we need to find a protocol, i.e., a sequence of $\Lambda_{\text{coh}}^{(k)}$ maps, such that in the limit of $k \rightarrow \infty$,
\begin{equation}
\Lambda_{\text{coh}}^k (\rho_S) = \Lambda_{\text{coh}}^{(k)} \circ \dots \circ \Lambda_{\text{coh}}^{(1)} \rightarrow \sigma_S^*.
\end{equation}

The best candidate for this is to choose the map $\Lambda_{\text{coh}}^*$, i.e., the map that cools the most at each step. Formally this protocol can be defined as.

\begin{definition}[Optimal coherent protocol]

Given a joint state $\sigma_{SM}$ let $U_{\text{opt}}$ be the unitary that reorders the eigenvalues of $\sigma_{SM}$ as largest in the energy subspace $\ket{00} \bra{00}_{SM}$, second largest in $\ket{01} \bra{01}_{SM}$ and so on all the way up to $\ket{d_S-1, d_M-1} \bra{d_S-1 ,d_M-1}_{SM}$. That is 
\begin{equation}\label{eq:optcohunitary}
    U_{\text{opt}} \sigma_{SM} U_{\text{opt}}^{\dagger} = \sum_{\substack{i \in \{0, \dots, d_S-1\},\\j \in \{0, \dots, d_M-1\}}} \left[\lambda^{\downarrow} (\sigma_{SM}) \right]_{i \cdot d_M +j} \ket{ij} \bra{ij}_{SM}.
\end{equation}

The optimal coherent protocol is then defined as applying $A$ to the system state in each step, where
\begin{equation}
   \sigma_S \mapsto A(\sigma_S)= \Tr(U_{\text{opt}} \sigma_S \otimes \tau_M U_{\text{opt}}^{\dagger}).
\end{equation}

\end{definition}

Note that this protocol needs to harness all the information of the machine to be implemented, as one needs to study the order of $\sigma_{SM}$ at each step. However, the end result suggests that this information might not be needed. This motivates us to look at another protocol that makes use of only the $\mathcal{E}_{\text{max}}$ information of the machine. We call this protocol the max-swap protocol. What this protocol does is perform a swap between two consecutive levels of the system $(i-1,i)$ and the maximal energy gap of the machine. That is, it performs the following unitary

\begin{equation} \label{eq:Uidef}
U_i= \ket{i-1, d_M-1} \bra{i 0}_{SM} + \ket{i 0} \bra{i-1 ,d_M-1}_{SM} \oplus \mathds{1}_{\text{span}^c\{\ket{i-1, d_M-1}_{SM}, \ket{i 0}_{SM} \}}.
\end{equation}

In order to make sure to cool at each step, we pick $i$ such that $\Delta_i$, the population difference that the application of $U_i$ causes on energy level $\ket{i-1}$, is positive. If there is no such $i$, then the protocol does nothing. In order to be efficient, the protocol picks $i$ such that $\Delta_i$ is the greatest among all positive ones. Before and after applying $U_i$, we also reorder $\sigma_S$ to make it passive~\cite{Pusz-1978, Lenard-1978}. This makes the protocol yet a bit more efficient. The protocol can be formally defined as follows.

\begin{definition}[Coherent max-swap protocol]

Given a system state $\sigma_S$, let $\bar{k}$ be the index $i \in \{1,\dots,d_S-1\}$ for which $\Delta_i = \left[\mathfrak{D}(\sigma_S) \right]_{i} \left[\mathfrak{D}(\tau_M) \right]_{0}  - \left[\mathfrak{D}(\sigma_S) \right]_{i-1} \left[\mathfrak{D}(\tau_M) \right]_{d_M-1}$ is the greatest if there exists a positive $\Delta_i$, else let $\bar{k}=0$. That is
\begin{equation}
\bar{k}= \begin{cases}
\argmax\limits_{i=1,\dots,d_S-1} \Delta_i &, \text{if} \max_i \Delta_i > 0\\
0&, \text{else}.
    \end{cases}
\end{equation}
Let $U_0=\mathds{1}_{SM}$, and for $i=1,\dots, d_S-1$ let $U_i$ be given by Eq.~\ref{eq:Uidef}.

For a given system state $\sigma_S$, let $\mathcal{P}(\sigma_S)$ be its corresponding passive state, i.e., 
\begin{equation}
\mathcal{P}(\sigma_S) = \sum_k^{d_S-1} \left[\lambda^{\downarrow}(\sigma_S) \right]_k \ket{k} \bra{k}_S.
\end{equation}

The coherent max-swap protocol is then defined as applying $B$ to the system state in each step, where
\begin{equation}
    \sigma_S \mapsto B(\sigma_S) = \mathcal{P} \left( \Tr_M \left[ U_{\bar{k}} \mathcal{P}(\sigma_S) \otimes \tau_M U_{\bar{k}}^{\dagger} \right]\right).
\end{equation}

Note that the transformation $\rho \rightarrow \mathcal{P}(\rho)$ is a unitary, so the above map can be expressed via a single joint unitary on the system and machine, which is an allowed coherent operation.

\end{definition}

With both of these protocols defined, we can state the main result of this section, which is that both protocols converge to $\sigma_S^*$ in the limit of $k \rightarrow \infty$ applications, given that $\rho_S \prec \sigma_S^*$ holds.

\begin{theorem}[Coherent Attainability] \label{thm:cohattain}
Let $S$ be a system. Let $M$ be a machine such that $\rho_S \prec \sigma_S^*$. Then
\begin{equation}
A^{\infty} (\rho_S) = B^{\infty} (\rho_S) = \sigma_S^*.
\end{equation}
\end{theorem}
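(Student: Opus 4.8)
\textbf{Proof plan for Theorem~\ref{thm:cohattain}.}
The plan is to show separately that the two protocols, iterated indefinitely, converge to $\sigma_S^*$, and to exploit the universal bound (Theorem~\ref{thm:universalbound}) to pin down the limit from one side. First I would observe that by Theorem~\ref{thm:universalbound}, the hypothesis $\rho_S \prec \sigma_S^*$ guarantees $\mathfrak{D}(A^k(\rho_S)) \prec \sigma_S^*$ and $\mathfrak{D}(B^k(\rho_S)) \prec \sigma_S^*$ for every $k$. Since both $A$ and $B$ produce passive (diagonally ordered) states at each step --- $A$ because $U_{\text{opt}}$ reorders eigenvalues monotonically into the energy eigenbasis, $B$ by the explicit $\mathcal{P}(\cdot)$ in its definition --- the partial sums $\sum_{j=0}^{l}[\mathfrak{D}(A^k(\rho_S))]_j$ are monotonically nondecreasing in $k$ (each $\Lambda_{\text{coh}}$ step only cools, in the sumtemperature sense, and a partial swap with the largest machine gap never decreases any partial sum once the state is passive) and bounded above by $\sum_{j=0}^l [\sigma_S^*]_j$. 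Hence each partial sum converges, so $\mathfrak{D}(A^k(\rho_S))$ and $\mathfrak{D}(B^k(\rho_S))$ converge entrywise to some diagonally ordered limiting vectors $d_A^\infty$, $d_B^\infty$, each still majorized by $\sigma_S^*$.

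Next I would characterize these limits as fixed points. For the max-swap protocol $B$, the limit $d_B^\infty$ must satisfy $\Delta_i \le 0$ for all $i=1,\dots,d_S-1$, i.e. for all $i$,
\begin{equation}
[d_B^\infty]_{i}\,[\mathfrak{D}(\tau_M)]_0 \le [d_B^\infty]_{i-1}\,[\mathfrak{D}(\tau_M)]_{d_M-1},
\end{equation}
since otherwise a further application of $B$ would strictly increase a partial sum, contradicting convergence. Writing $[\mathfrak{D}(\tau_M)]_0/[\mathfrak{D}(\tau_M)]_{d_M-1} = e^{\beta_R \mathcal{E}_{\text{max}}}$, this rearranges to $[d_B^\infty]_i \le e^{-\beta_R \mathcal{E}_{\text{max}}}[d_B^\infty]_{i-1}$ for all $i$. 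Summing the consequences of this geometric-decay bound, together with normalization $\sum_i [d_B^\infty]_i = 1$ and the majorization constraint $d_B^\infty \prec \sigma_S^*$ (which forces $[d_B^\infty]_0 \le [\sigma_S^*]_0$), one concludes that the only vector consistent with all constraints is exactly $\sigma_S^* = \big((e^{-\beta_R \mathcal{E}_{\text{max}}})^k / \sum_j (e^{-\beta_R \mathcal{E}_{\text{max}}})^j\big)_{k}$: the partial-sum inequalities from $d_B^\infty \prec \sigma_S^*$ give $\ge$ componentwise after using the decay, while the decay itself gives $\le$, forcing equality. For the optimal protocol $A$, I would use Lemma~\ref{lemma:diagorderedlowerbound} and the structure of $U_{\text{opt}}$: a fixed point of $A$ is a passive $\sigma_S$ with $\mathfrak{D}(\sigma_S) = \mathfrak{D}^{\downarrow}\big(\text{the }d_S\text{ largest blocks of }\lambda(\sigma_S\otimes\tau_M)\big)$, and one checks directly that $\sigma_S^*$ satisfies this (its $k$-th population times $[\mathfrak{D}(\tau_M)]_0$ equals its $(k-1)$-th population times $[\mathfrak{D}(\tau_M)]_{d_M-1}$, which is precisely the "already sorted" condition); uniqueness of the fixed point among passive states majorized by $\sigma_S^*$ then gives $d_A^\infty = \sigma_S^*$. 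Alternatively, since $A(\sigma_S) \succ B(\sigma_S)$ whenever both act on the same passive state (the optimal cooling step dominates any particular swap), squeezing $A^k(\rho_S)$ between $B^k(\rho_S)$ and $\sigma_S^*$ in the majorization order also yields $A^\infty(\rho_S) = \sigma_S^*$ once the $B$ limit is established.

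The main obstacle I anticipate is the convergence/fixed-point step for $B$: one must verify carefully that the greedy choice $\bar k = \argmax_i \Delta_i$, combined with the re-passivization $\mathcal{P}(\cdot)$ before and after each swap, genuinely drives \emph{every} partial sum monotonically up and does not stall at a spurious fixed point before reaching $\sigma_S^*$ --- i.e. that whenever $\mathfrak{D}(\sigma_S) \ne \sigma_S^*$ and $\sigma_S$ is passive with $\mathfrak{D}(\sigma_S)\prec\sigma_S^*$, there exists $i$ with $\Delta_i > 0$. This requires arguing that if all $\Delta_i \le 0$ then $\mathfrak{D}(\sigma_S)$ already has the geometric profile of $\sigma_S^*$, which combined with $\mathfrak{D}(\sigma_S)\prec\sigma_S^*$ and normalization forces $\mathfrak{D}(\sigma_S)=\sigma_S^*$. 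Handling the degenerate cases (ties in $\Delta_i$, and the bookkeeping that the re-passivization step never undoes progress because it only permutes within equal-population or already-sorted blocks) is routine but needs care. For full details of both convergence arguments I would refer to Section~A of the Supplementary material of \cite{Clivaz-2019}.
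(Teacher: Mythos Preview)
Your proposal is essentially the paper's own route: establish convergence via monotone bounded partial sums, show the max-swap limit is a fixed point where all $\Delta_i\le 0$, deduce the geometric-decay inequality and hence $B^\infty(\rho_S)\succ\sigma_S^*$, then squeeze against the universal bound; for $A$, use that the optimal protocol dominates the max-swap protocol in the majorization order and squeeze again. The one point to tighten is the squeezing for $A$: from $A(\sigma)\succ B(\sigma)$ on the \emph{same} state one does not immediately get $A^k(\rho_S)\succ B^k(\rho_S)$, since after the first step the two protocols act on different states. The paper closes this by proving the stronger statement that $A^k(\sigma_S)\succ\Lambda_{\text{coh}}^k(\sigma_S)$ for \emph{every} sequence of coherent maps and every $k$ (i.e., optimality propagates through iterations), which is what you need here; your sketch should invoke that rather than single-step dominance.
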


\begin{proof}[Proof idea]
The fully detailed proof is given in the Supplementary Material B of \cite{Clivaz-2019}. We here give its road map. First of all, both protocols converge since the partial sums of $A^{k}(\rho_S)$ and $B^k(\rho_S)$ are monotonically increasing and bounded by 1. We then show that the point of convergence of the max-swap protocol majorizes $\sigma_S^*$, this is Lemma 5 of the supplementary material B of \cite{Clivaz-2019}. Using the result of Theorem~\ref{thm:universalbound}, we have that $\sigma_S^* \prec B^{\infty}(\rho_S) \prec \sigma_S^*$, from which we get $B^{\infty} (\rho_S)=\sigma_S^*$.

One can prove that the optimal protocol converges to $\sigma_S^*$ in the same way. We, however, choose a different route by showing that for all $k \in \mathbb{N}$, $A^k(\sigma_S) \succ \Lambda_{\text{coh}}^k(\sigma_S)$ for any choice of $\Lambda_{\text{coh}}^k= \Lambda_{\text{coh}}^{(k)} \circ \dots \circ \Lambda_{\text{coh}}^{(1)}$ and any system state $\sigma_S$. This also formally motivates the name of the protocol by making sure that it is not only optimal at each step but also over all $k$ steps. With this result we have that $A^{\infty} (\rho_S) \succ B^{\infty}(\rho_S) \succ \sigma_S^*$. Again from Theorem~\ref{thm:universalbound}, $A^{\infty}(\rho_S) \prec \sigma_S^*$, which ends the proof. 
\end{proof}

\subsection{Incoherent Scenario} \label{subsec:incattain}

While the bound is always attainable in the coherent scenario, we know  that this in not the case for the incoherent scenario. We can attain the bound incoherently, however, if we slightly modify the machine. Indeed, the coherent max-swap protocol only requires to perform swaps between the various levels of the system and the maximal energy gap of the machine. While these swaps are certainly not energy conserving, we can make them energy conserving by adding for each swap a qubit of the right energy gap to the machine. The original machine $M$ together with these newly added qubits constitute a new machine $\tilde{M}$ that we call the extended machine. In precise terms the extended machine is defined as follows.

\begin{definition}[Extended Machine]
Given the machine M with Hamiltonian $H_M= \sum_{i=0}^{d_M-1} \mathcal{E}_i \ket{\mathcal{E}_i} \bra{\mathcal{E}_i}$ and the system S with Hamiltonian $H_S= \sum_{i=0}^{d_S-1} E_i \ket{E_i} \bra{E_i}$ we define the extended machine $\tilde{M}$ by appending the qubits $Q_i, \, i=1, \dots, d_S-1$, of energy gap
\begin{equation}
    \mathcal{E}_{Q_i}=\mathcal{E}_{\text{max}}-(E_{i}-E_{i-1}), \quad i=1, \dots, d_S-1
\end{equation}
to the machine $M$. The Hamiltonian of the extended machine is therefore given by
\begin{equation}
    H_{\tilde{M}}= H_M \otimes \mathds{1}_{\tilde{M}} + \sum_{i=1}^{d_S-1} H_{Q_i} \otimes \mathds{1}_{\{Q_i\}^c},
\end{equation}
with $H_{Q_i} = \mathcal{E}_{Q_i} \ket{1} \bra{1}_{Q_i}$.
\end{definition}

Adding these qubits in essence bridges the relevant energy levels of the original joint system $SM$ and allows passing population between them via the newly created degenerate subspace of $S \tilde{M}$. Of course, for some population to be passed around meaningfully, one needs to heat up part of the extended machine. We opt for heating up the added qubits while leaving the entirety of the original machine at room temperature. At each step, the state of the extended machine before applying the energy conserving unitary is therefore given by
\begin{equation}
\rho_{ \tilde{M}}^{R,H}= \rho_M \otimes \tau_{Q_1}^H \otimes \cdots \otimes \tau_{Q_{d_S-1}}^H.
\end{equation}

 To show that this new machine allows us to cool the system at least to the bound of Theorem~\ref{thm:universalbound}, we design a protocol that converges does so. The protocol is based on the coherent max-swap protocol. Indeed, it performs a swap between two consecutive levels of the system $(i-1,i)$ and the maximal energy gap of the machine $M$. Only, to ensure that the swap is energy preserving, it does so simultaneously aided with a swap on the added qubit $Q_i$. The protocol is formally defined as follows.
 
 \begin{definition}[Incoherent max-swap protocol]

Given a system state $\sigma_S$ and the extended machine $\tilde{M}$, let $\bar{k}$ be the index $i \in \{1,\dots,d_S-1\}$ for which $\tilde{\Delta}_i = \left[\mathfrak{D}(\sigma_S) \right]_{i} \left[\mathfrak{D}(\tau_M) \right]_{0}  \left[\mathfrak{D}(\tau_{Q_i}) \right]_{1} - \left[\mathfrak{D}(\sigma_S) \right]_{i-1} \left[\mathfrak{D}(\tau_M) \right]_{d_M-1} \left[\mathfrak{D}(\tau_{Q_i}) \right]_{0}$ is the greatest if there exists a positive $\tilde{\Delta}_i$, else let $\bar{k}=0$. That is
\begin{equation}
\bar{k}= \begin{cases}
\argmax\limits_{i=1,\dots,d_S-1} \tilde{\Delta}_i &, \text{if } \max_i \tilde{\Delta}_i > 0\\
0&, \text{else}.
    \end{cases}
\end{equation}
Let $\tilde{U}_0=\mathds{1}_{S\tilde{M}}$, and for $i=1,\dots, d_S-1$ let $\tilde{U}_i$ be defined as follows.
\begin{equation}
\begin{aligned}
    \tilde{U}_i = \mathds{1}_{S\tilde{M}} &- \ket{i-1, d_M-1, 0 } \bra{i-1, d_M-1, 0}_{SMQ_i}\otimes \mathds{1}_{\{SMQ_i\}^c} \\
    &- \ket{i 0 1} \bra{i01}_{SMQ_i} \otimes \mathds{1}_{\{SMQ_i\}^c}\\
    &  + \ket{i-1, d_M-1, 0} \bra{i01}_{SMQ_i} \otimes \mathds{1}_{\{SMQ_i\}^c} \\
    &+ \ket{i01} \bra{i-1, d_M-1, 0}_{SMQ_i} \otimes \mathds{1}_{\{SMQ_i\}^c}.
    \end{aligned}
\end{equation}

The incoherent max-swap protocol is then defined as applying $\tilde{B}$ to the system state in each step, where
\begin{equation}
    \sigma_S \mapsto \tilde{B}(\rho_S) =  \Tr_{\tilde{M}} [ \tilde{U}_{\bar{k}} \sigma_S \otimes \rho_{ \tilde{M}}^{R,H} \tilde{U}_{\bar{k}}^{\dagger}].
\end{equation}

The above unitary corresponds to swapping the pair of levels $\{\bar{k}-1,\bar{k}\}$ of the target with the maximum energy gap of the machine and the particular qubit subspace in the extension that makes the unitary an energy preserving swap between degenerate states. Indeed
\begin{equation}
    E_{\bar{k}-1} + \mathcal{E}_{\text{max}}= E_{\bar{k}} + \mathcal{E}_{Q_{\bar{k}}}.
\end{equation}

\end{definition}

Note that in the incoherent version of the max-swap protocol, we do not render the state passive before and after the application of $\tilde{U}_{\bar{k}}$. This is because doing so is typically not doable via energy conserving unitaries and thereby not allowed within the incoherent scenario. This has as a consequence that we cannot ensure that the state that we get after application of the incoherent max-swap protocol is passive. Our sumtemperature notion of temperature is therefore not equivalent to the majorization relation, and we are to make cautious use of how they relate to one another to derive our result. That being said, one can show in a very similar manner than done for the coherent max-swap protocol that this protocol can cool the system state to at least the sumtemperature of $\sigma_S^*$, provided that $\rho_S \prec \sigma_S^*$. The precise statement reads as follows.

\begin{theorem}
Let $S$ be a system. Let $M$ be a machine such that $\rho_S \prec \sigma_S^*$. Then in the limit $T_H \rightarrow \infty$, $\tilde{B}^{\infty}(\rho_S)$ is sumcolder than $\sigma_S^*$.
\end{theorem}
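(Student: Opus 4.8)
The plan is to show that the incoherent max-swap protocol $\tilde B$, in the limit $T_H\to\infty$, drives $\rho_S$ to a state whose partial sums are at least those of $\sigma_S^*$. First I would record the key structural fact that the protocol is built so that each $\tilde U_i$ acts as a genuine swap between two \emph{degenerate} levels of $H_{S\tilde M}$, namely $\ket{i-1,d_M-1,0}_{SMQ_i}$ and $\ket{i01}_{SMQ_i}$, since $E_{i-1}+\mathcal{E}_{\text{max}}=E_i+\mathcal{E}_{Q_i}$; hence $\tilde B$ is indeed a legitimate incoherent operation, and by Lemma~\ref{lemma:staydiag} the target stays diagonal in the energy eigenbasis throughout, so $\tilde B^k(\rho_S)$ is characterized by its diagonal vector $\mathfrak{D}(\tilde B^k(\rho_S))$.

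Next I would analyze the effect of a single step on the diagonal. Writing $r_k=[\mathfrak{D}(\sigma_S)]_k$, and letting $q_i=[\mathfrak{D}(\tau_{Q_i}^H)]_1$ be the excited-state population of the hot auxiliary qubit (which tends to $1/2$ as $T_H\to\infty$), the swap moves population between levels $i-1$ and $i$ of $S$ while correlating with the untouched room-temperature machine $M$ and the hot qubit. Tracing out $\tilde M$ gives an update of the form $r_{i-1}\mapsto r_{i-1}+\tilde\Delta_i$, $r_i\mapsto r_i-\tilde\Delta_i$ with
\begin{equation}
\tilde\Delta_i = r_i\,[\mathfrak{D}(\tau_M)]_0\,q_i - r_{i-1}\,[\mathfrak{D}(\tau_M)]_{d_M-1}\,(1-q_i),
\end{equation}
and the index $\bar k$ is chosen to maximize $\tilde\Delta_i$ over positive values (doing nothing if none is positive). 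The fixed points of this dynamics are exactly the states for which $\tilde\Delta_i\le 0$ for all $i$, i.e.\ (in the $T_H\to\infty$ limit, $q_i\to 1/2$) those satisfying $r_i/r_{i-1}\le [\mathfrak{D}(\tau_M)]_{d_M-1}/[\mathfrak{D}(\tau_M)]_0 = e^{-\beta_R\mathcal{E}_{\text{max}}}$ for every $i$. I would then show convergence: the partial sums $\sum_{k=0}^{l}[\mathfrak{D}(\tilde B^k(\rho_S))]_k$ are non-decreasing in $k$ (each step only moves weight from a higher level to a lower one and, after rearranging arguments on why the greedy choice keeps all partial sums monotone, no partial sum can decrease) and bounded by $1$, hence converge; the limit state is a fixed point, so it satisfies the geometric-ratio inequality above, which is precisely the statement that its diagonal is sumcolder than the geometric vector $\mathfrak{D}(\sigma_S^*)$ (note $\sigma_S^*$ has ratios exactly $e^{-\beta_R\mathcal{E}_{\text{max}}}$, so $r_i\le r_{i-1}e^{-\beta_R\mathcal{E}_{\text{max}}}$ for all $i$ together with normalization forces each partial sum of the limit to dominate that of $\sigma_S^*$). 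Combined with the universal bound Theorem~\ref{thm:universalbound}, which says $\mathfrak{D}(\tilde B^\infty(\rho_S))\prec\sigma_S^*$, one in fact gets equality of the ordered versions, but the theorem only asks for sumcolder, so the fixed-point analysis suffices.

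The main obstacle I anticipate is the monotonicity/convergence bookkeeping in the many-level case: unlike the coherent max-swap protocol, here we are \emph{not} allowed to re-sort the target into a passive state between swaps, so $\tilde B^k(\rho_S)$ need not be ordered, and one must argue carefully that (i) every partial sum is still non-decreasing under the greedy step — this needs the observation that a swap between neighbouring levels $i-1,i$ with $\tilde\Delta_i>0$ only affects the single partial sum up to level $i-1$ and can only increase it — and (ii) the limit point genuinely satisfies $\tilde\Delta_i\le 0$ for \emph{all} $i$ simultaneously, not merely along the subsequence of chosen indices; this follows from continuity of each $\tilde\Delta_i$ in the diagonal vector and the fact that $\tilde\Delta_{\bar k}\to 0$ along the convergent sequence, forcing every $\tilde\Delta_i$ to be non-positive in the limit. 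Once these two points are nailed down, translating "all ratios $\le e^{-\beta_R\mathcal{E}_{\text{max}}}$" into "sumcolder than $\sigma_S^*$" is a short direct computation comparing partial sums of two sub-geometric sequences with the same total mass, analogous to Lemma~\ref{lemma:orderedsumcolder}. For the details of the single-step population update and the monotonicity estimates I would refer, as the rest of this Part does, to the supplementary material of \cite{Clivaz-2019}.
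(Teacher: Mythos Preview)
Your proposal is correct and follows essentially the same route as the paper: monotone partial sums give convergence, the limit is a fixed point so all $\tilde\Delta_i\le 0$, and in the $T_H\to\infty$ limit this yields the ratio condition $r_i/r_{i-1}\le e^{-\beta_R\mathcal{E}_{\max}}$, from which sumcolder than $\sigma_S^*$ follows by the partial-sum comparison. One small inaccuracy in your aside: Theorem~\ref{thm:universalbound} applied to $\tilde B$ would use the maximal gap of the \emph{extended} machine $\tilde M$, which is strictly larger than $\mathcal{E}_{\max}$, so it does not give $\mathfrak{D}(\tilde B^\infty(\rho_S))\prec\sigma_S^*$ for the original $\sigma_S^*$ and hence no equality of ordered versions --- but as you correctly note, this is not needed for the statement.
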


\begin{proof}[Proof idea]
Since for any $l=0,\dots, d_S-1$
\begin{equation}
\sum_{i=0}^{l} \left[ \mathfrak{D} \left( \tilde{B}^k(\rho_S) \right)\right]_i
\end{equation}
is monotonically increasing and bounded by 1 as a sequence of $k \in \mathbb{N}$, it converges. This implies that the protocol converges. One then shows completely analogously to how the result is proven for the coherent max-swap protocol, that the converging point is a fixed point of the protocol. This is Lemma 9 of the supplementary material C of \cite{Clivaz-2019}. Since the converging point is a fixed point, every $\tilde{\Delta}_i$ is smaller equal zero. In the limit $T_H \rightarrow \infty$, this means

\begin{equation}
\frac{\left[ \mathfrak{D} \left( \tilde{B}^{\infty}(\rho_S) \right)\right]_i} {\left[ \mathfrak{D} \left( \tilde{B}^{\infty}(\rho_S) \right)\right]_{i-1}} \leq \frac{ \left[ \mathfrak{D} \left( \tau_M \right)\right]_{d_M-1}} { \left[ \mathfrak{D} \left( \tau_M \right)\right]_{0}}= \frac{ \left[ \mathfrak{D} \left( \sigma_S^* \right)\right]_{i}} { \left[ \mathfrak{D} \left( \sigma_S^* \right)\right]_{i-1}}.
\end{equation}

From this, following the steps of the proof of Lemma 5 of the supplementary material B of \cite{Clivaz-2019} we get that for every $k= 1, \dots d_S$,

\begin{equation}
\sum_{i=0}^{k-1} \left[ \mathfrak{D} \left( \tilde{B}^{\infty}(\rho_S) \right)\right]_i \geq \sum_{i=0}^{k-1} \left[ \mathfrak{D} \left( \sigma_S^* \right)\right]_i,
\end{equation}
which proves that $\tilde{B}^{\infty}(\rho_S)$ is sumcolder than $\sigma_S^*$ as desired.
\end{proof}

\subsection{Autonomous Cooling} \label{subsec:autoattain}

The attainability of the bound for both our scenarios having been discussed, we would like to make a small detour in our analysis of the performance of both scenarios by drawing a parallel between the incoherent scenario and another paradigm of cooling, namely autonomous cooling. Within both of our scenarios we deal with stroke type machines in the sense that we clearly separate the rethermalization step and unitary operations in discrete time steps. However, in autonomous cooling, both processes of thermalization and unitary evolution happen simultaneously and continuously. In this paradigm, one typically turns on interaction terms between the various components of the machine the system and the environment and then looks at the steady state of the system that is typically achieved if one waits long enough. The machine is autonomous in the sense that once the right interactions have been turned on, which can be seen as setting up the machine, it cools the system by itself, without any external help. In particular, no external source of work or precise timing is required. As such, this paradigm of cooling requires the least amount of control out of all the paradigms discussed so far.

In the regime where the interactions are weak, the dynamics of the system is well approximated by a linear master equation and is analytically solvable for the simple case of a two qubit machine cooling a qubit system \cite{Linden-2010, Skrzypczyk-2011}. The interaction term between the two qubit machine and the qubit system is in that case generated by the following interaction Hamiltonian

\begin{equation}
H_{\text{int}}= g \left( \ket{010} \bra{101}_{SM_1 M_2} + \ket{101} \bra{010}_{S M_1 M_2} \right),
\end{equation}
where the strength of the interaction $g$ is much smaller than the energy gaps of $S$, $M_1$ and $M_2$. Note that this Hamiltonian generates the energy conserving unitary of the incoherent two qubit machine. The analogy between both paradigm goes further. Indeed, in the ideal case where the system is completely isolated from the room temperature environment, i.e., solely interacting with the machine, the steady state achieved on the system by the autonomous machine is the same as that achieved after infinite repetitions of the incoherent paradigm namely the thermal state at temperature

\begin{equation}
T^*= \frac{E_S}{\mathcal{E}_{\text{max}}} T_R.
\end{equation}

As the master equation governing the dynamics in the autonomous case is linear, the analogy between both paradigms also straight forwardly carries to any machines in the following sense. Given a qubit system and an arbitrary machine of maximal energy gap $\mathcal{E}_{\text{max}}$, one needs only add a single qubit of the right energy gap to the machine to autonomously cool the system to the bound of Theorem~\ref{thm:universalbound}. That is, we have the following.

\begin{theorem}
Let $S$ be a qubit at $T_R$ isolated from its environment. Let $M$ be a machine with $\mathcal{E}_{\text{max}} \geq E_S$. Then, one can cool the target to

\begin{equation}
T^*= \frac{E_S}{\mathcal{E}_{\text{max}}} T_R
\end{equation}
autonomously by adding a single qubit $Q$ of energy gap $\mathcal{E}_{\text{max}}-E_S$ to $M$.
\end{theorem}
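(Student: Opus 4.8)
The plan is to reduce the statement to the already-established machinery for two-qubit autonomous cooling together with the linearity of the weak-coupling master equation. First I would set up the extended machine $\tilde M = M \otimes Q$ where $Q$ has energy gap $\mathcal{E}_Q = \mathcal{E}_{\text{max}} - E_S$, so that the joint Hamiltonian $H_{S\tilde M}$ acquires a degenerate subspace: the level $\ket{e_S}\otimes\ket{g_Q}\otimes\ket{\text{level of }M\text{ at }0}$ and the level $\ket{g_S}\otimes\ket{e_Q}\otimes\ket{\text{level of }M\text{ at }\mathcal{E}_{\text{max}}}$ have equal energy because $E_S + 0 + 0 = 0 + \mathcal{E}_Q + \mathcal{E}_{\text{max}}$ — wait, more carefully, $E_S + \mathcal{E}_Q = \mathcal{E}_{\text{max}}$, so pairing the system excitation and the $Q$ excitation against the maximal gap of $M$ gives a resonant (energy-conserving) exchange. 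I would then turn on the interaction Hamiltonian $H_{\text{int}} = g\big(\ket{e_S, g_Q, 0_M}\bra{g_S, e_Q, (d_M-1)_M} + \text{h.c.}\big)$ with $g$ small compared to all relevant gaps, exactly mirroring the two-qubit interaction quoted in the excerpt, where now the role of ``$M_1$'' is played by the virtual two-level system of $M$ spanning its ground and top energy levels and the role of ``$M_2$'' is played by $Q$.

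Next I would invoke the weak-coupling (GKLS) analysis of Linden--Skrzypczyk--Popescu that the excerpt already cites: in the regime where $S$ is isolated from the room-temperature bath and only the machine components are thermalized, the steady state of the three-level resonant exchange is the thermal state of $S$ at the virtual temperature determined by the population ratio in the machine's virtual qubit. Here the machine's virtual qubit (ground level of $M$ at energy $0$ heated/kept, versus top level of $M$ at energy $\mathcal{E}_{\text{max}}$) carried over to $Q$ yields, in the limit $T_H \to \infty$ for the added qubit $Q$, a population ratio $e^{-\beta_R \mathcal{E}_{\text{max}}}$ on the effective gap $E_S$, giving precisely $\beta^* E_S = \beta_R \mathcal{E}_{\text{max}}$, i.e. $T^* = \frac{E_S}{\mathcal{E}_{\text{max}}} T_R$. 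This is the same fixed point identified in the repeated incoherent analysis, so I would phrase it as: the autonomous steady state coincides with $\Lambda_{\text{inc}}^{\infty}(\rho_S)$ for the extended machine, which equals $\sigma_S^*$ restricted to the qubit case. The hypothesis $\mathcal{E}_{\text{max}} \geq E_S$ guarantees $\mathcal{E}_Q \geq 0$ (a genuine qubit) and that $T^* \leq T_R$, so cooling (not heating) occurs.

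Finally I would note that linearity of the master equation is what lets this go through for an \emph{arbitrary} machine $M$, not just a qubit: only the two extreme energy levels of $M$ participate in the resonant term, all other levels of $M$ are spectators, and the linear dynamics restricted to the relevant degenerate subspace is governed by exactly the solvable three-body equation. So the computation of the steady state reduces verbatim to the two-qubit case with $\mathcal{E}_{M_1} \mapsto \mathcal{E}_{\text{max}}$, $\mathcal{E}_{M_2} \mapsto \mathcal{E}_{\text{max}} - E_S = \mathcal{E}_Q$, and the known formula $T_{\text{inc},\infty} = \frac{E_S}{\frac{\mathcal{E}_{M_1}}{T_R} - \frac{\mathcal{E}_{M_2}}{T_H}}$ yields, as $T_H \to \infty$, the claimed $T^*$. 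The main obstacle I anticipate is not any single calculation but making rigorous the claim that the spectator levels of $M$ truly decouple in the weak-coupling limit — one must argue that off-resonant terms are suppressed and that the reduced dynamics on $S$ genuinely has $\tau_S(T^*)$ as its unique attracting steady state; this requires either appealing carefully to the structure of the GKLS generator for this interaction or citing the relevant autonomous-cooling literature for the general-machine statement rather than re-deriving it. I would handle it by explicitly isolating the virtual qubit of $M$ and reducing to the three-level model, citing \cite{Linden-2010, Skrzypczyk-2011} for the steady-state solution.
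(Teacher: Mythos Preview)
Your approach is essentially identical to the paper's: couple $M$ to the room-temperature bath, couple the added qubit $Q$ (gap $\mathcal{E}_{\text{max}}-E_S$) to the hot bath at $T_H\to\infty$, turn on the resonant three-body interaction, and invoke the Linden--Popescu--Skrzypczyk steady-state result together with linearity of the master equation to reduce to the two-qubit formula $T_{\text{inc},\infty}=E_S/(\mathcal{E}_{M_1}/T_R-\mathcal{E}_{M_2}/T_H)$ with $\mathcal{E}_{M_1}\mapsto\mathcal{E}_{\text{max}}$, $\mathcal{E}_{M_2}\mapsto\mathcal{E}_Q$.

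One labeling slip to fix: the interaction you wrote, $\ket{e_S,g_Q,0_M}\bra{g_S,e_Q,(d_M{-}1)_M}+\text{h.c.}$, couples levels of energies $E_S$ and $2\mathcal{E}_{\text{max}}-E_S$, so it is not resonant. Your own check ``$E_S+\mathcal{E}_Q=\mathcal{E}_{\text{max}}$'' pinpoints the correct degenerate pair, which is $\ket{g_S,(d_M{-}1)_M,g_Q}$ and $\ket{e_S,0_M,e_Q}$ (exactly the paper's $\ket{0,d_M{-}1,0}\bra{101}_{SMQ}+\text{h.c.}$); just swap the $M$-labels (or equivalently the $Q$-labels) in your $H_{\text{int}}$ and everything goes through.
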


\begin{proof}
One needs only couple the machine $M$ to the room temperature bath and couple the added qubit to the bath at temperature $T_H \rightarrow \infty$. We then engineer the following interaction between the modified machine and the target system

\begin{equation}
H_{\text{int}}= g (\ket{0 ,d_M-1, 0} \bra{101}_{SMQ} + \ket{101} \bra{0,d_M-1, 0}_{SMQ})
\end{equation}

and harness the result of \cite{Linden-2010} using the linearity of the master equation governing the dynamics.
\end{proof}

\chapter{Conclusion and Outlook} \label{ref:concl}

We defined two cooling scenarios that we called coherent and incoherent in this Part. After having gained a lot of insight in their inner workings we could derive an attainable bound valid in both our scenarios that only depends on the maximal energy gap of the machine and is in particular independent of the target systems energy gaps. The bound is valid for any finite dimensional target system and a large class of machines. When valid, the bound is furthermore always attainable in the coherent scenario, and we exhibited two cooling protocols that converge to it. In the incoherent scenario, while the bound is not always attainable, a minimal modification of the machine allows reaching it. Interestingly, for a qubit target system, the bound is reachable already with the simplest machines consisting of one qubit of energy gap $\mathcal{E}_{\text{max}}$ in the coherent scenario and a two qubit machine of energy gap $\mathcal{E}_{\text{max}}$ and $\mathcal{E}_{\text{max}}-E_S$. The minimal incoherent machine also allows cooling the target to the desired bound autonomously, showcasing the fact that the results go beyond a particular approach and unify different operational approaches to quantum thermodynamics.\\

For a qubit target we furthermore made a thorough study of the one and two qubit machines. There we showed that it is impossible to incoherently cool a qubit with a single qubit machine and that given a target qubit, the only two qubit incoherent machine that allows cooling is that satisfying the degeneracy constraint $\mathcal{E}_{M_1}= \mathcal{E}_{M_2} + E_S$. This lead us to formulate the main remaining open problem of the incoherent scenario, namely that of characterizing the degenerate subspaces in which cooling can be performed. 

For the coherent scenario we could analytically find the best cooling strategy in terms of work cost expenditure for the one and two qubit machines as well as solve the end-point cooling of a single cycle for arbitrary machines. This lead us to formulate the main open problem of the coherent scenario, namely that of calculating the best cooling strategy for arbitrary machines in terms of work cost expenditure.\\

As we have seen, while we have gained significant insight in the workings of both of our scenarios, there are a number of open problems that remain. More generally, while the concept of a virtual qubit is suspected to play a significant role in unraveling the best coherent cooling strategy for qubit targets, it would be interesting to see if this is also the case for target qudits or if the right concept of a virtual qudit is more appropriate in that context. For qudit target system, the right optimization problem has, however, yet to be defined.

While our universal and attainable bound embodies one of the central conceptual pillars of statistical physics, namely that despite the potential complexity, thermodynamics tasks can be characterized by a few relevant parameters, the protocols attaining the bound are of course highly idealized and are not expected to perform perfectly in realistic many-body quantum systems scenarios. That makes the attainability by few qubit machines all the more interesting as they make the bound potentially attainable with state-of-the-art quantum technology. To render the protocols for general machines more realistic, investigating the reachability of the bound with limited unitary control or complexity would be of great interest. As already mentioned, the scenarios are inherently limited in the way they are defined by for example being markovian. Extending them to the non-markovian regime as done in~\cite{Taranto-2020} or more generally extending them beyond their original limitations is also a route for further investigations. One could for example extend the scenarios to other initial states such as non-thermal states or correlated states. Finally, while we focused solely on the task of refrigeration, other tasks such as that of work extraction are of interest in quantum thermodynamics.

\part{Creation of Correlations} 

\label{part:corre} 


\emph{This part is based on the following paper:}  \\

\begin{itemize}
\item F. Bakhshinezhad, \underline{F. Clivaz}, G. Vitagliano, P. Erker, A. Rezakhani, M. Huber, and N. Friis, "Thermodynamically optimal creation of correlations," \href{https://doi.org/10.1088/1751-8121/ab3932}{Journal of Physics A: Mathematical and Theoretical {\bf 52}, 465303 (2019)}, \href{https://arxiv.org/abs/1904.07942}{arXiv:1904.07942}.

\end{itemize}

\chapter{Introduction} \label{sec:corrintro}

While Part~\ref{part:refri} of this thesis concerned itself with a purely thermodynamic task, we will here turn our attention to studying the interplay of Thermodynamics with another fascinating field of research, namely that of Information theory. Both fields have already been influencing one another for quite some time, the starting point of which might arguably be Maxwell's demon \cite{Maxwell-1871, Bennett-1982, Leff-2003} in 1871, well before information theory had a name for itself. The treatment of information in a rigorous and abstract setting by Shannon~\cite{Shannon-1948} in 1948 allowed for many more later connections, one of which is Jaynes', who viewed the foundation of thermodynamics from a different perspective and formulated a principle of maximal entropy~\cite{Jaynes-1957}. In the meantime both theories have been extended to the quantum regime \cite{Nielsen-2010, Wilde-2013, Binder-2018} and the quantum information theory community has contributed to the renewed research interest in the now rapidly evolving field of quantum thermodynamics~\cite{Goold-2016, Vinjanampathy-2016, Millen-2016}.\\

In studying the interplay between both fields, it is of particular interest to understand how their respective resources are interchangeable. Energy is a well-established resource in thermodynamics. From a physical perspective, correlations can be seen as the resource in information theory, in that in order to acquire information from a system, one needs to correlate it with another system, sometimes called a pointer~\cite{Guryanova-2020}. As long as both systems, or system and pointer, are not already interacting, by means of some interaction Hamiltonian term turned on for example, establishing these correlations necessitate some investment of energy~\cite{Friis-2016}. Conversely, energy can be extracted from any kind of correlations~\cite{PerarnauLlobet-2015}. This settles the question of resource inter-convertibility from a qualitative point of view and highlights the importance of correlations in quantum thermodynamics \cite{Friis-2016, PerarnauLlobet-2015, Huber-2015, Bruschi-2015, Binder-2015, Alipour-2016, Bera-2017, Mueller-2018, Bera-2019, Sapienza-2019}.\\

Quantitative statements are however much harder to establish in that regard and for the most part remain elusive beyond existing bounds~\cite{Vitagliano-2018, Huber-2015, Bruschi-2015, Jevtic-2012,  Jevtic-2012b}. We will here partially fill this gap in addressing the question of how much correlations can be created for a given amount of energy. The precise problem we will be looking at has first been formulated in~\cite{Huber-2015}, see also~\cite{Vitagliano-2018}. The rest of this Part is structured as follows. In Chapter~\ref{chap:corrnotation} we will set some notation before formulating the general problem we are interested in Chapter~\ref{chap:genquest}. In Chapter~\ref{chap:pure} we will present the solution of our problem for initial pure states. We will then make some considerations in Chapter~\ref{chap:cons} that will allow us to appreciate the complexity of the general problem. In Chapter~\ref{chap:symmetric} we will explore in more details the case, where both systems are copies of one another. For this case we will construct a framework in Sec.~\ref{sec:framework} and Sec.~\ref{sec:remarks}, which we will make use of in 3 different ways in Sec.~\ref{sec:marginalapproach}, Sec.~\ref{sec:passingnormapproach}, and in Sec.~\ref{sec:geometric}. This will allow us to find the operations that create correlations at an optimal energy expenditure for all 3 dimensional as well as 4 dimensional symmetric systems.

\chapter{Notation} \label{chap:corrnotation}

We would here like to set some further notation that will be used throughout this Part. We remind the reader that since this chapter is intended for reference, it might simply be skipped upon a linear reading.

 We are here interested in creating correlations between two systems $A$ and $B$. Their Hamiltonians are denoted by
\begin{align}
H_A&=\sum_{i=0}^{d_A-1} E_i^A \ket{i} \bra{i}_A, \quad \text{with } E_i^A \leq E_{i+1}^A \forall i=0,\dots,d_A-1,\\
H_B&=\sum_{i=0}^{d_B-1} E_i^B \ket{i} \bra{i}_B, \quad \text{with } E_i^B \leq E_{i+1}^B \forall i=0,\dots,d_B-1.
\end{align}

Given the initial state $\rho_{AB}=\tau_A \otimes \tau_B$, we denote the set of all the unitaries that leave the marginals of $\rho_{AB}$ diagonal in the energy eigenbasis by $\mathcal{A}$, i.e.,
\begin{equation}
\begin{aligned}
\mathcal{A}= &\left\{ U: \mathcal{H}_{AB} \rightarrow \mathcal{H}_{AB} \mid \left[ \Tr_A \left(U \rho_{AB} U^{\dagger} \right) \right]_{ij}= \left[ \Tr_B \left(U \rho_{AB} U^{\dagger} \right)\right]_{ij}=0, \right. \\ 
& \quad \left. \forall i \neq j, \right\}.
\end{aligned}
\end{equation}

We denote by $\mathcal{B} \subset \mathcal{A}$ the set of all unitaries that do not alter the entries $\left[ \rho_{AB} \right]_{ik, jk}$ and $\left[ \rho_{AB} \right]_{ki, kj}$ if $i \neq j$, i.e.,
\begin{align}
\mathcal{B}= & \left\{ U: \mathcal{H}_{AB} \rightarrow \mathcal{H}_{AB} \mid \left[ U \rho_{AB} U^{\dagger} \right]_{ik, jk}=\left[ U \rho_{AB} U^{\dagger} \right]_{ki, kj}=0 \; \forall k \text{ and } i \neq j \right\}.
\end{align}

Also, for $H_A=H_B$ we have the following further notation

\begin{align}
\mathcal{H}_i &= \text{span}\{\ket{j, j+i} \}_{j=0}^{d-1}, \quad i=0,\dots,d-1\\
r_i &= \mathfrak{D} \left( \rho_{AB} \big|_{\mathcal{H}_i} \right), \\
U_i &: \mathcal{H}_i \rightarrow \mathcal{H}_i , \text{ unitary},\\
\tilde{r}_i &= \mathfrak{D} ( U_i \rho_{AB} \big|_{\mathcal{H}_i} U_i^{\dagger} ),\\
\mathbf{p}_A &=\mathfrak{D}(\rho_A),\\
\mathbf{p}_B &=\mathfrak{D}(\rho_B),\\
\tilde{\mathbf{p}}_A &= \mathfrak{D} \left( \Tr_B (U \rho_{AB} U^{\dagger}) \right),\\
\tilde{\mathbf{p}}_B &= \mathfrak{D} \left( \Tr_A (U \rho_{AB} U^{\dagger}) \right),\\
p_{ij}&= \bra{ij} \rho_{AB} \ket{ij}\\
\tilde{p}_{ij} &= \bra{ij} U \rho_{AB} U^{\dagger} \ket{ij}\\
p_{ij}(\beta')&= \bra{i} \tau(\beta') \ket{j} \bra{i} \tau(\beta') \ket{j}, \quad \forall \beta' \leq \beta_R\\
b_i &= (p_{0i}(\beta'), p_{1 (i+1)} (\beta'), \dots, p_{(d-1) (d-1+i)} (\beta')),\\
\delta_{i} &= E_{i+1}-E_i, \quad \forall i=0,\dots,d-2.
\end{align}

\chapter{The General Problem} \label{chap:genquest}

We are interested here in studying how well one can correlate two initially uncorrelated systems $A$ and $B$ with respective Hamiltonians $H_A= \sum_{i=0}^{d_A-1} E_i^A \ket{i} \bra{i}_A$ and $H_B= \sum_{i=0}^{d_B-1} E_i^B \ket{i} \bra{i}_B$. The general interest behind this stems from the fact that correlations are found to be useful, especially to perform information processing tasks. We might, however, not always have them at hand, from which the need to create them emerges.\\

Since we assume that correlations do not generate themselves or are already present, we will assume that the joint Hamiltonian is non-interacting, i.e.,
\begin{equation}
H_{AB} = H_A \otimes \mathds{1}_B + \mathds{1}_A \otimes H_B,
\end{equation}
and that the initial state of the system is uncorrelated, i.e.,
\begin{equation}
\rho_{AB}= \rho_A \otimes \rho_B.
\end{equation}

We also want to take a thermodynamic perspective on the problem, meaning that we will assume a background temperature $T_R$. We will allow to unitarily control the joint system $AB$ to create the desired correlation. This unitary control allows us to engineer any interaction within the $AB$ system, while keeping a closed system perspective by explicitly treating all the involved parties. It therefore appears to us as being the good middle ground of allowing enough control to be able to achieve interesting states on $AB$. And at the same time of not allowing too much and risk to completely loose track of what is done and implicitly allow for ``cheating'' by allowing an ancillary system to tacitly provide us with the desired state.\\

Correlations being a resource, we expect to have to invest some other resource to create it. Here we will be interested in how much energy, a ubiquitous resource in thermodynamics, will be needed to be invested to create the desired state. To make sure that we are not hiding any energy in the initial state, we will assume the initial state to be passive with respect to $H_{AB}$~\cite{Pusz-1978, Lenard-1978}. In fact, we will assume the initial state of $AB$ to be a special passive state, namely the thermal state of $H_{AB}$ at the background temperature $T_R$. This is, after all, the state that comes for free in a thermal background at temperature $T_R$.\\

We will be interested in all kinds of correlations and as such choose to measure correlations according to the mutual information of the system that we denote as $\mathcal{I}(\rho_{AB})$. As a reminder, for some state $\sigma_{AB}$ on $AB$, its mutual information is given by

\begin{equation}
\mathcal{I}(\sigma_{AB}) = S(\sigma_A) + S(\sigma_B) - S(\sigma_{AB}),
\end{equation}
where $S(\sigma)= -\Tr(\sigma \ln(\sigma))$ is the von Neumann entropy of $\sigma$.

We will measure the energy invested in the system as the average energy change of the system, where the average energy of a state $\sigma_{AB}$ is given by

\begin{equation}
\Tr(\sigma_{AB} H_{AB}).
\end{equation}

All in all, given some amount of energy $c$, we are interested in knowing the maximal amount of correlation that can be unitarily created in a non-interacting joint system $AB$ initially thermal at $T_R$. That is, we are interested in the following.

\begin{problem} [Physical Problem] \label{prob:optiproblem}
	Given $\rho_{AB}=\tau_A \otimes \tau_B$, solve
\begin{equation} 
 \max_U \mathcal{I}( U \rho_{AB} U^{\dagger}), \quad \text{s.t. } \Tr (U \rho_{AB} U^{\dagger} H_{AB}) \leq c.
 \end{equation}
\end{problem}

One might be surprised that we are considering the mutual information and average energy of the final state only, rather than their difference with that of the initial state. Both problems are however equivalent. Indeed, the initial value of the mutual information being a constant, does not affect the choice of $U$. The initial value of the average energy can be absorbed into $c$, since the latter can be of any numerical value. And so, solving one problem for any $c$ automatically solves the other for any $c$ as well.\\

Actually, looking at $\mathcal{I}(U \rho_{AB} U^{\dagger})$ more closely, we see that for every unitary
\begin{equation}
S(U \rho_{AB} U^{\dagger})= S(\rho_{AB}).
\end{equation}
The global entropy part of the mutual information therefore also does not affect the choice of $U$. Furthermore, as
\begin{equation}
H_{AB}= H_A \otimes \mathds{1}_B + \mathds{1}_A \otimes H_B,
\end{equation}

We have that for any joint state $\sigma_{AB}$
\begin{equation}
\Tr (\sigma_{AB} H_{AB}) = \Tr (\sigma_A H_A) + \Tr (\sigma_B H_B),
\end{equation}
where $\sigma_A = \Tr_B(\sigma_{AB})$ and $\sigma_B = \Tr_A (\sigma_{AB})$. This means that we can rewrite Problem.~\ref{prob:optiproblem} in terms of local quantities only as

\begin{problem} [Technical Problem] \label{prob:optilocalproblem} Given $\rho_{AB}= \tau_A \otimes \tau_B$, solve
\begin{equation}
\max_U S(\tilde{\rho}_A)+ S(\tilde{\rho}_B), \quad \text{s.t. } \Tr (\tilde{\rho}_A H_A) + \Tr (\tilde{\rho}_B H_B) \leq c,
\end{equation}
where $\tilde{\rho}_A= \Tr_B (U \rho_{AB} U^{\dagger})$ and $\tilde{\rho}_B= \Tr_A (U \rho_{AB} U^{\dagger})$.
\end{problem}

\chapter{Pure State Solution} \label{chap:pure}

The problem having been exposed, we would like to solve a special instance of it, namely the case where $T_R=0$. This corresponds to when the initial state of the joint system $AB$ is pure. Surprisingly, we will be able to solve the problem in full generality when $T_R=0$, that is for any local Hamiltonian $H_A$ and $H_B$. We will jump right into stating the result before explaining its derivation. The fully detailed proof of the result can be found in Appendix A.II of \cite{Bakhshinezhad-2019}.\\

Let $d= \min(d_A,d_B)$ be the minimum of both local dimensions, and let
\begin{align}
\tilde{H}_A &= \sum_{i=0}^{d-1} (E_i^A+E_i^B) \ket{i} \bra{i}_A\\
\tilde{H}_B &= \sum_{i=0}^{d-1} (E_i^A+E_i^B) \ket{i} \bra{i}_B.
\end{align}

Then the following holds.
\begin{theorem}\label{thm:purestate}
If $\rho_{AB}=\ket{00}\bra{00}_{AB}$, i.e., if $T_R=0$, we have that for all $c>0$ there exists a unique $\beta(c) < \beta_R$ such that
\begin{align}
\tilde{\rho}_{\text{opt},A}(c) &= \frac{  e^{- \beta(c) \tilde{H}_A} \Pi_A}{\Tr \left(e^{- \beta(c) \tilde{H}_A} \Pi_A \right)}\\
\tilde{\rho}_{\text{opt},B} (c)&= \frac{  e^{- \beta(c) \tilde{H}_B} \Pi_B}{\Tr \left(e^{- \beta(c) \tilde{H}_B} \Pi_B \right)},\label{eq:rhooptpure}
\end{align}
with $\Pi_A= \sum_{i=0}^{d-1} \ket{i} \bra{i}_A$ and $\Pi_B= \sum_{i=0}^{d-1} \ket{i} \bra{i}_B$, are solutions of Problem~\ref{prob:optilocalproblem}. Furthermore, $\beta(c)$ is uniquely determined by the equation
\begin{equation}
\Tr \left(\tilde{\rho}_{\text{opt},A}(c) \tilde{H}_A \right) =c
\end{equation}
when $c < \frac{1}{d} \sum_{i=0}^{d-1} E_i^A+E_i^B$. When $c \geq \frac{1}{d} \sum_{i=0}^{d-1} E_i^A+E_i^B$, $\beta(c)= 0$.
\end{theorem}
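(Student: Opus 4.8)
The plan is to first reduce Problem~\ref{prob:optilocalproblem} to a tractable form using the purity of the initial state, and then solve the resulting constrained optimization by a Lagrange-multiplier / Gibbs-variational argument. Since $\rho_{AB}=\ket{00}\bra{00}_{AB}$ is pure, any unitary image $U\rho_{AB}U^\dagger=\ket{\psi}\bra{\psi}_{AB}$ is also pure, so $S(\tilde\rho_A)=S(\tilde\rho_B)$ by the Schmidt decomposition, and moreover the Schmidt vector $\lambda=\lambda(\tilde\rho_A)=\lambda(\tilde\rho_B)$ can be \emph{any} probability vector of length $d=\min(d_A,d_B)$. Thus the objective $S(\tilde\rho_A)+S(\tilde\rho_B)=2H(\lambda)$ depends only on $\lambda$. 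The remaining freedom is the choice of the Schmidt \emph{bases} on $A$ and $B$; since $\Tr(\tilde\rho_A H_A)$ is minimized over fixed spectrum $\lambda$ by aligning the largest Schmidt coefficient with the ground state of $H_A$, etc. (passivity / rearrangement inequality, cf.~\cite{Pusz-1978, Lenard-1978}), and likewise for $B$, the energy constraint becomes $\sum_{i=0}^{d-1}\lambda_i^{\downarrow}(E_i^A+E_i^B)\le c$ where I choose $\tilde\rho_A$ diagonal in $(\ket{i}_A)$ with weights $\lambda^\downarrow$ and similarly $\tilde\rho_B$. In the notation of the theorem this is exactly $\Tr(\tilde\rho_A\tilde H_A)+\Tr(\tilde\rho_B\tilde H_B)\le c$ with both marginals supported on the $d$-dimensional cutoffs $\Pi_A,\Pi_B$. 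So the problem collapses to: maximize $2H(\lambda)$ over probability vectors $\lambda\in\mathbb{R}^d$ subject to $\lambda\cdot\mathbf{E}\le c$, where $\mathbf{E}_i=E_i^A+E_i^B$.

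Next I would solve this scalar problem. This is the standard maximum-entropy problem: the entropy $H$ is strictly concave, the feasible set is convex and compact, so there is a unique maximizer. If the unconstrained maximizer (the uniform distribution, with energy $\bar E=\frac1d\sum_i\mathbf{E}_i$) is feasible, i.e.\ $c\ge\bar E$, then $\lambda_i=1/d$ is optimal, which corresponds to $\beta(c)=0$ and $\tilde\rho_{\text{opt},A}(c)=\Pi_A/d$ --- matching the stated formula at $\beta=0$. If $c<\bar E$, the constraint is active and a Lagrange-multiplier computation ($\nabla(-\sum\lambda_i\ln\lambda_i)=\beta\mathbf{E}+\mu\mathbf 1$) gives $\lambda_i\propto e^{-\beta \mathbf{E}_i}$, i.e.\ the Gibbs form $\tilde\rho_{\text{opt},A}(c)=e^{-\beta(c)\tilde H_A}\Pi_A/\Tr(e^{-\beta(c)\tilde H_A}\Pi_A)$, with $\beta(c)$ fixed by the energy equation $\Tr(\tilde\rho_{\text{opt},A}(c)\tilde H_A)=c$. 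Uniqueness and the inequality $\beta(c)<\beta_R$ follow because $\beta\mapsto\Tr(e^{-\beta\tilde H_A}\Pi_A\tilde H_A)/\Tr(e^{-\beta\tilde H_A}\Pi_A)$ is strictly decreasing (its derivative is minus the energy variance) from $\bar E$ at $\beta=0$ down to $E_0^A+E_0^B$ as $\beta\to\infty$; since $E_0^A=E_0^B=0$ (ground-state energies normalized to $0$) and $c>0$, the solution $\beta(c)$ is positive, hence $\beta(c)<\beta_R$ is automatic once $\beta_R>0$, and for $T_R=0$ ($\beta_R=\infty$) it is trivially true. I should also remark that alternatively, and more cleanly, one can prove optimality via the Gibbs inequality $S(\sigma)\le -\Tr(\sigma\ln\omega)$ for the Gibbs state $\omega=\tilde\rho_{\text{opt},A}(c)$: for any feasible $\tilde\rho_A$ with support in $\Pi_A$, $S(\tilde\rho_A)\le \beta(c)\Tr(\tilde\rho_A\tilde H_A)+\ln Z\le\beta(c)c+\ln Z=S(\omega)$, using $\beta(c)\ge0$ and the constraint --- this avoids Lagrange bookkeeping entirely and handles the boundary case uniformly.

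The main obstacle, and the part deserving the most care, is the \emph{reduction} step: showing rigorously that restricting the Schmidt rank to $d$ and aligning both Schmidt bases with the energy eigenbases is without loss of optimality, and that the two local energy-minimization problems decouple. The subtlety is that $A$ and $B$ share the \emph{same} Schmidt vector, so one cannot independently optimize the two marginals' bases --- but since the optimal choice for each is ``put $\lambda^\downarrow$ against $\mathbf{E}^A{}^{\uparrow}$'' respectively ``against $\mathbf{E}^B{}^{\uparrow}$'', and these are compatible (both just require ordering), the decoupling goes through; I would spell this out via the rearrangement inequality applied to each marginal separately, noting both are driven by the common ordering of $\lambda$. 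One also has to check that when $d_A\ne d_B$ the Schmidt vector genuinely has length exactly $d=\min(d_A,d_B)$ and the ``leftover'' dimensions of the larger system carry zero population, so they contribute nothing to energy or entropy --- this is why the cutoff projectors $\Pi_A,\Pi_B$ appear. Once the reduction is in place the rest is the routine convex-analysis argument sketched above, so I would present the reduction carefully and keep the optimization step brief.
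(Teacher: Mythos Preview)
Your proposal is correct and the reduction step---using the Schmidt decomposition to equate the marginal entropies and collapse the Schmidt rank to $d=\min(d_A,d_B)$, then invoking passivity/rearrangement to align both Schmidt bases with the energy eigenbases---is exactly the route the paper takes. The paper phrases the basis optimization slightly differently, via the partial isometry $V=\sum_i\ket{\phi_i^B}\bra{\phi_i^A}$ (extended to a unitary $\tilde V$ on an enlarged $\tilde{\mathcal H}_A$) so that both local energy minimizations become literal instances of the passivity problem; your direct rearrangement argument on the common Schmidt vector is equivalent and arguably more transparent, and you are right that the only subtlety is that the two local minimizations are coupled through the \emph{ordering} of $\lambda$, which is harmless since both passive choices demand the same (decreasing) order.

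Where you genuinely diverge from the paper is in the final optimization. You solve the \emph{direct} problem $\max_\lambda H(\lambda)$ under $\lambda\cdot\mathbf E\le c$ via Jaynes' principle, i.e.\ the Gibbs variational inequality $S(\sigma)\le\beta(c)\Tr(\sigma\tilde H_A)+\ln Z$, which immediately yields the thermal form and handles the active/inactive constraint cases in one stroke. The paper instead passes to the \emph{converse} problem $\min_{\rho,V}\Tr[\rho(H_A+V^\dagger H_B V)]$ under $S(\rho)=\kappa$, solves it for fixed spectrum via passivity (their Proposition~1), and then inverts using strict monotonicity of the energy-entropy trade-off (Propositions~2 and~3). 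Your route is shorter and more standard; the paper's detour through the converse problem has the mild advantage that the constraint $S(\rho)=\kappa$ is basis-independent, so the spectrum can be fixed \emph{before} optimizing the bases, cleanly separating the two layers. Either way the answer is the projected Gibbs state on $\Pi_A$, so both arguments land in the same place.
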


For the following, we will w.l.o.g. assume $d=d_A$. There are two main ingredients that build the proof of Theorem~\ref{thm:purestate}. The first one allows us to have a better grasp at the unitarily achievable local spectra and is due to the well-known Schmidt decomposition for pure states \cite{Nielsen-2010}. This drastically reduces the complexity of the analysis because thanks to it we know that both marginal entropies are equal, i.e.,
\begin{equation}
S(\tilde{\rho}_A)= S(\tilde{\rho}_B).
\end{equation}

We therefore only have to keep track of one spectrum rather than two. The Schmidt decomposition brings in a second advantage as well, it clarifies which marginal spectra are reachable in the unitary orbit, namely all spectra. For our problem, this means that the only constraint on the achievable spectra comes from the amount of achievable energy c, and not from the unitary evolution itself. Carrying the above argument through, the problem can be reformulated as

\begin{equation} \label{eq:maxSrhoV}
\max_{\rho, V} S(\rho), \quad \text{s.t. } \Tr \left[ \rho \left( H_A+V^{\dagger} H_B V \right) \right] \leq c,
\end{equation}
where $\rho$ is a state on $A$ and
\begin{equation}
V=\sum_{i=0}^{d-1} \ket{\phi_i^B} \bra{\phi_i^A},
\end{equation}
with $(\ket{\phi_i^A})_i$ and $(\ket{\phi_i^B})_i$ being the Schmidt bases of $U \rho_{AB} U^{\dagger}$. The second main step of the proof is to realize that all the difficulties have now been shifted to the constraints and to consider the converse problem of Eq.\ref{eq:maxSrhoV} instead. That is, one instead looks at

\begin{equation}\label{eq:converseproblem}
\min_{\rho,V} \left( \underbrace{\Tr(\rho H_A)}_{(*)} + \underbrace{\Tr (\rho V^{\dagger} H_B V)}_{(\#)} \right), \quad \text{s.t. } S(\rho)=\kappa.
\end{equation}

Fixing the spectrum of $\rho$, which temporarily gets rid of the entropy constraint, one can then separately solve $(*)$ and $(\#)$. To solve $(*)$ note that $V: \mathcal{H}_A \rightarrow \mathcal{H}_B$ is not unitary, since 
\begin{equation}
V V^{\dagger}= \sum_{i=0}^{d-1} \ket{\phi_i^B} \bra{\phi_i^B} \neq \mathds{1}_{\mathcal{H}_B},
\end{equation}
but is easily unitarily extendable  by defining vectors $\ket{\phi_d^A}, \dots, \ket{\phi_{d_B-1}^A}$ such that
\begin{equation}
\braket{\phi_i^A}{\phi_j^A} = \delta_{ij}, \quad \forall i,j=0,\dots, d_B-1.
\end{equation}

One can then extend $\mathcal{H}_A$ as $\tilde{\mathcal{H}}_A = \text{span} \{ \ket{\phi_0^A}, \dots, \ket{\phi_{d_B-1}^A} \}$ and define
\begin{equation}
\tilde{V}: \tilde{\mathcal{H}}_A \rightarrow \mathcal{H}_B, \quad \tilde{V}= \sum_{i=0}^{d_B-1} \ket{\phi_i^B} \bra{\phi_i^A}.
\end{equation}

$\tilde{V}$ is by construction unitary and $\tilde{V} \big|_{\mathcal{H}_A} = V$. Similarly one extends $\rho$ to $\tilde{\mathcal{H}}_A$ as
\begin{equation}
\rho \oplus {\bf 0} \big|_{\tilde{\mathcal{H}}_A - \mathcal{H}_A}.
\end{equation}

With this , both $(*)$ and $(\#)$ become instances of the famous passivity problem~\cite{Pusz-1978, Lenard-1978} of which we know the solution. This allows to fully solve the problem of Eq.~\ref{eq:converseproblem}, see Proposition~1 of Appendix A.II of \cite{Bakhshinezhad-2019}. Using the fact that the solution is strictly monotonically parameterized by $\kappa$, see Proposition~2 of Appendix A.II of \cite{Bakhshinezhad-2019}, one then shows that it is also a solution of the original problem, see Proposition~3 of Appendix A.II of \cite{Bakhshinezhad-2019}.

\chapter{General Considerations} \label{chap:cons}

The fact that our problem could be solved in full generality for $T_R=0$ gives hope that it might also be the case for when $T_R >0$. However, for non-zero temperatures, the problem is much more complicated. Indeed, the initial state of $AB$ is for finite background temperatures not pure anymore. This has the dramatic consequence that the Schmidt decomposition technique, which is only valid for pure states, does not help to simplify the problem anymore. Instead, one has to truly deal with the potentially very different local spectra. What is more, even if one drops the energy constraint, figuring out the allowed local spectra within the global unitary orbit is already a challenging problem in itself \cite{Christandl-2006}. With no symmetry at hand to help us break the problem into a simpler one, it therefore seems that a frontal take on the problem might not be our best option. Instead, trying to set bounds on the marginal entropies and finding unitaries that achieve these bounds might be a more fruitful avenue. There is a bound of particular interest that is easily derivable as follows.\\

First of all, note that as 
\begin{equation}
S(\rho_1 \otimes \rho_2)= S(\rho_1)+S(\rho_2),
\end{equation}

and $\Tr(\rho_1 \otimes \rho_2 H_{AB})= \Tr (\rho_1 H_A) + \Tr (\rho_2 H_B)$, by defining 
\begin{equation}
\tilde{\sigma} = \tilde{\rho}_A \otimes \tilde{\rho}_B,
\end{equation}
where we remind the reader that $\tilde{\rho}_A= \Tr_B (U \rho_{AB} U^{\dagger})$ and $\tilde{\rho}_B= \Tr_A (U \rho_{AB} U^{\dagger})$, we can rewrite Problem~\ref{prob:optilocalproblem} as

\begin{align}
&\max_U S(\tilde{\sigma}), \quad \text{s.t. } \Tr (\tilde{\sigma} H_{AB}) \leq c \\
\leq & \max_{\rho} S(\rho), \quad \text{s.t. } \Tr (\rho H_{AB}) \leq c \\
=& S \left(\tau_A (\beta(c)) \right) + S \left(\tau_B(\beta(c)) \right), \label{eq:entropybound}
\end{align}

where for the first inequality we dropped the intricate dependence of $\tilde{\sigma}$ on $U$ and maximized over all allowed states $\rho$ on $AB$ instead. The last equality is Jaynes principle. What this tells us is that the sum of the local entropies of our problem is upper bounded by the sum of the entropies of two local thermal states at the same inverse temperature $\beta(c)$. Since the joint initial state is the thermal state at the inverse background temperature $\beta_R$, and since the entropy of the thermal state is a strictly increasing function of its temperature, it directly follows that $\beta(c) \leq \beta_R$ must hold. Furthermore, $\beta(c)$ is uniquely determined by
\begin{equation}
\Tr(\tau_A (\beta(c)) H_A) + \Tr(\tau_B(\beta(c)) H_B ) =c,
\end{equation} 
such that scanning through all the $c > 0$ is equivalent to scanning through all the inverse temperatures $0 \leq \beta(c) < \beta_R$.
The next natural question to ask is if this bound is unitarily attainable from $\rho_{AB} = \tau_A \otimes \tau_B$. An excellent candidate for saturating the bound is a state with marginals thermal at $\beta(c)$. The bad news is that in full generality such a state does not lie in the unitary orbit of $\rho_{AB} = \tau_A \otimes \tau_B$, meaning that there exist local Hamiltonians $H_A$ and $H_B$, background temperatures $\beta_R$, and $c \geq 0$ for which the marginals of $U \rho_{AB} U^{\dagger}$ cannot be thermal at the inverse temperature $\beta(c)$. Our solution for the case of $T_R=0$ of Chapter~\ref{chap:pure} already suggests that this might be the case. Indeed, if $d_A < d_B$ then $d=d_A$ and $\tilde{\rho}_{\text{opt},B}(c)$ of Eq.~\ref{eq:rhooptpure} is not full rank. It therefore cannot be thermal. However, note that, since we have not proven that $\tilde{\rho}_{\text{opt},A}(c)$ and $\tilde{\rho}_{\text{opt},B}(c)$ are the unique solutions of Problem~\ref{prob:optilocalproblem} for a given $c$ at $T_R=0$, this does not form a counter example. There might indeed be another $\beta < \beta_R$ fulfilling
\begin{equation}
S(\tau_A(\beta)) + S(\tau_B(\beta)= S(\tilde{\rho}_{\text{opt},A}(c)) + S(\tilde{\rho}_{\text{opt},B}(c)),
\end{equation}
and 
\begin{equation}
\Tr(\tau_A(\beta) H_A)+ \Tr(\tau_B(\beta) H_B) \leq c.
\end{equation}

To construct a proper counter example we will make use of (the left hand side of) the triangle inequality for the von Neumann entropy, which for an arbitrary state $\sigma_{AB}$ on $AB$ can be written as 
\begin{equation}
 \lvert S(\sigma_A) - S(\sigma_B) \rvert \leq S(\sigma_{AB}).
\end{equation}

Assuming that $\sigma_{AB}= U \rho_{AB} U^{\dagger}$ has thermal marginals at $\beta(c)$, i.e., that 
\begin{align}
\sigma_A &= \tau_A (\beta(c)),\\
\sigma_B &= \tau_B (\beta(c)),
\end{align}
and using that 
\begin{equation}
S(\sigma_{AB})=S(\rho_{AB})=S(\tau_A)+S(\tau_B),
\end{equation}

we have that

\begin{equation}\label{eq:diffentropy}
\lvert S(\tau_A(\beta(c))) - S(\tau_B(\beta(c))) \rvert \leq S(\tau_A) + S(\tau_B)
\end{equation}
must hold true.
But it is well possible to keep the right-hand side of Eq.~\ref{eq:diffentropy} small, by for example choosing a big $\beta_R$, i.e. a small room temperature, while forcing its left-hand side to be big no matter $\beta(c) \leq \beta_R$, by for example having very different local Hamiltonians, maybe even of different dimensions. See \cite[Section III]{Vitagliano-2018} for an explicit example.\\

This shows that given a $c \geq 0$ and some local Hamiltonians $H_A$ and $H_B$, one cannot hope to always be able to find a unitary $U$ such that $U \rho_{AB} U^{\dagger}$ has thermal marginal at $\beta(c)$. Note, however, that this does not imply that the bound on the sum of the local entropies of Eq.~\ref{eq:entropybound} is not reachable in general. It only shows that if reached, the state doing so does not have to have marginals thermal at the same temperature. \\

Introducing some symmetry back into the problem might nevertheless allow us to prove the attainability of the bound for a large class of systems. Inspired by this, we ask what happens if we restrict $H_A$ and $H_B$ such that $H_A =H_B$. In that case Eq.~\ref{eq:diffentropy} is trivially fulfilled since its left-hand side vanishes. One may therefore hope to solve the problem for that restricted class of Hamiltonians by looking for global states fulfilling the constraint that have equal thermal marginals. This brings us to the following conjecture.

\begin{conjecture} [General Conjecture] \label{conj:general}
Given a $c >0$ and local Hamiltonians $H_A=H_B$, the solution of Problem~\ref{prob:optilocalproblem} is given by
\begin{equation}
2 S(\tau_A(\beta(c)).
\end{equation}
\end{conjecture}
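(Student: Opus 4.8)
The plan is to establish the conjecture by proving both an upper bound and a matching attainability statement, using the symmetry $H_A = H_B$ crucially. The upper bound is already in hand: by the chain of inequalities leading to Eq.~\eqref{eq:entropybound}, for any unitary $U$ one has $S(\tilde\rho_A) + S(\tilde\rho_B) \leq S(\tau_A(\beta(c))) + S(\tau_B(\beta(c))) = 2S(\tau_A(\beta(c)))$, where the last equality uses $H_A = H_B$ and hence $\tau_A(\beta(c)) = \tau_B(\beta(c))$ (up to the trivial identification of the two isomorphic Hilbert spaces), and where $\beta(c) \leq \beta_R$ is the unique inverse temperature fixed by $\Tr(\tau_A(\beta(c))H_A) + \Tr(\tau_B(\beta(c))H_B) = c$. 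So the entire content of the conjecture is \emph{attainability}: one must exhibit (or prove the existence of) a unitary $U \in \mathcal{A}$ such that $\Tr(U\rho_{AB}U^\dagger H_{AB}) \leq c$ and such that both marginals of $U\rho_{AB}U^\dagger$ are thermal at $\beta(c)$.

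First I would reduce attainability to a purely spectral/majorization question. Since $H_A = H_B$ and $\rho_{AB} = \tau_A \otimes \tau_B$ is diagonal in the energy eigenbasis, I would use the Latin-square / Hilbert-space decomposition $\mathcal{H}_{AB} = \bigoplus_{i=0}^{d-1}\mathcal{H}_i$ with $\mathcal{H}_i = \operatorname{span}\{\ket{j,j+i}\}_{j}$ announced in the notation chapter (this is the framework the thesis builds in Sec.~\ref{sec:framework}). A unitary $U = \bigoplus_i U_i$ acting blockwise on the $\mathcal{H}_i$ preserves the property that both marginals stay diagonal (so $U \in \mathcal{A}$), and the marginal distributions $\tilde{\mathbf p}_A, \tilde{\mathbf p}_B$ are determined by how the block permutes/mixes populations within each $\mathcal{H}_i$. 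The target is to choose the $U_i$ so that $\tilde{\mathbf p}_A = \tilde{\mathbf p}_B = \mathfrak{D}(\tau(\beta(c)))$. The key structural fact to exploit is that for $H_A = H_B$ the target marginal is the \emph{same} vector on both sides, so one is looking for a doubly-stochastic-type redistribution of the joint diagonal that has this common vector as both row- and column-marginal; Birkhoff/Horn-type arguments (Theorem~\ref{thm:Horn}, Theorem~\ref{thm:Schur-Horn}) should convert the existence of such a redistribution into the existence of a unitary. One then needs to check the energy constraint is met with equality, which follows because a state with marginals $\tau(\beta(c))$ has average energy exactly $c$ by definition of $\beta(c)$.

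The main obstacle I expect is precisely the existence of a \emph{global} state with both marginals equal to $\tau(\beta(c))$ that additionally lies in the unitary orbit of $\tau_A\otimes\tau_B$ — i.e. that has the correct global spectrum $\lambda(\tau_A\otimes\tau_B)$. Equal desired marginals remove the triangle-inequality obstruction of Eq.~\eqref{eq:diffentropy}, but the full set of compatibility constraints between a bipartite spectrum and its two marginal spectra (the quantum marginal problem, cf. the reference to \cite{Christandl-2006}) is far more rigid than the entropy inequality alone. This is exactly why the statement is only a \emph{conjecture}: the thesis proves it for $d = 3$ and $d = 4$ by explicit constructions within the Latin-square framework (the marginal approach of Sec.~\ref{sec:marginalapproach}, the norm approach of Sec.~\ref{sec:passingnormapproach}, and the geometric approach of Sec.~\ref{sec:geometric}), and gives only sufficient conditions in general dimension. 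So my honest proposal is: prove the upper bound cleanly as above; reduce attainability to constructing a suitable $U = \bigoplus_i U_i$; verify the construction dimension-by-dimension for $d \leq 4$ using the majorization toolkit; and for general $d$ isolate the precise compatibility condition on $(\mathfrak{D}(\tau(\beta(c))), \lambda(\tau_A\otimes\tau_B))$ under which the block unitaries exist, leaving the conjecture open beyond the verified cases. The hard part, in one sentence, is that attainability is not a consequence of the entropy bound but of a delicate quantum-marginal compatibility that currently resists a uniform proof.
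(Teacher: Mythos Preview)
Your proposal is correct and mirrors the paper's own strategy: the upper bound via Jaynes' principle is exactly Eq.~\eqref{eq:entropybound}, and the attainability side is precisely what the paper reduces to the Technical Conjecture~\ref{conj:technical} (existence of STU$(\beta_R,\beta')$) and then attacks through the block decomposition $U=\bigoplus_i U_i$ on the subspaces $\mathcal{H}_i$, proving it only for $d\le 4$ while leaving general $d$ open. You have also correctly identified the genuine obstruction---the quantum-marginal compatibility beyond the entropy triangle inequality---which is why the statement remains a conjecture rather than a theorem in the paper.
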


Conjecture~\ref{conj:general} is already known to hold true for equally gapped Hamiltonians $H=H_A=H_B$ \cite{Huber-2015, Jevtic-2012}. This in particular solves the qubit case, i.e., when $d$, the local dimension, is $2$. We will here prove the conjecture for $d=3$ and $d=4$. We will also set some conditions for $d>4$ that if satisfied are sufficient to prove the conjecture for all $H=H_A=H_B$. All the above mentioned proofs are based on the same idea, namely proving the existence of unitaries $U$ that transform $\rho_{AB}$ into a state with equal thermal marginals. We will refer to these unitaries as symmetrically thermalizing unitaries (STU). Formally we define the following.

\begin{definition}[STU($\beta, \beta'$)]
Given inverse temperature $\beta, \beta' \in [-\infty, +\infty]$, a  unitary $U$ on $AB$ is called a symmetrically thermalizing unitary from $\beta$ to $\beta'$, written STU($\beta$, $\beta'$), if the following holds
\begin{align}
\Tr_B (U \tau_{AB} (\beta) U^{\dagger})&=\tau_A(\beta')\\
\Tr_A (U \tau_{AB} (\beta) U^{\dagger})&=\tau_B(\beta').
\end{align}
\end{definition} 

On a more technical level, what we really conjecture is that STU($\beta$,$\beta'$) exist between all pairs of inverse temperatures $\beta$ and $\beta'$ such that $\beta' \leq \beta$. That is,

\begin{conjecture}[Technical Conjecture] \label{conj:technical}
Given a pair of local Hamiltonians $H_A=H_B$ and an inverse background temperature $\beta_R$, there exists a STU($\beta_R$, $\beta'$) for all $\beta' \leq \beta_R$.
\end{conjecture}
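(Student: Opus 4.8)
The plan is to prove Conjecture~\ref{conj:technical} by explicitly constructing a symmetrically thermalizing unitary, using the Latin-square / block decomposition of $\mathcal{H}_{AB}$ introduced in the notation chapter. Recall that for $H_A = H_B$ of common dimension $d$ we have the direct sum decomposition $\mathcal{H}_{AB} = \bigoplus_{i=0}^{d-1} \mathcal{H}_i$ with $\mathcal{H}_i = \mathrm{span}\{\ket{j,j+i \bmod d}\}_{j=0}^{d-1}$. The crucial observation is that each $\mathcal{H}_i$ contains exactly one vector from each row $\ket{j}_A$ and exactly one vector from each column $\ket{k}_B$ (this is the Latin-square property), so a unitary $U = \bigoplus_{i=0}^{d-1} U_i$ that acts block-diagonally with respect to this decomposition automatically preserves the diagonal structure of both marginals. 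Specifically, for such a $U$ the diagonal of $\tilde{\rho}_A$ is $\tilde{\mathbf{p}}_A = \sum_{i=0}^{d-1} \tilde{r}_i$ (suitably indexed), and likewise for $\tilde{\mathbf{p}}_B$, where $\tilde r_i = \mathfrak D(U_i \rho_{AB}|_{\mathcal H_i} U_i^\dagger)$. The freedom in choosing each $U_i$ is, by the Schur--Horn theorem (Theorem~\ref{thm:Schur-Horn}), exactly the freedom to replace the sorted vector $r_i^{\downarrow}$ by any $\tilde r_i \prec r_i$.

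The key steps, in order, are: (i) fix the target inverse temperature $\beta' \le \beta_R$ and write down the desired marginal $\tau_A(\beta')$, with diagonal entries $q_k = p_{kk}(\beta')^{1/2}$-type quantities — more precisely the Gibbs weights $e^{-\beta' E_k}/Z(\beta')$; (ii) reduce the problem to a feasibility question: do there exist probability vectors $\tilde r_0,\dots,\tilde r_{d-1}$ with $\tilde r_i \prec r_i$ for every $i$ (the majorization is \emph{within} each block, a constraint coming from unitarity) and with the column-sums $\sum_i (\tilde r_i)_j$ and row-sums matching the Gibbs profile of $\tau_A(\beta')=\tau_B(\beta')$ in both indices simultaneously; (iii) exhibit an explicit ansatz for the $\tilde r_i$ (for instance built from the thermal two-point functions $b_i$ defined in the notation, exploiting that $\rho_{AB}|_{\mathcal H_i}$ is itself diagonal with entries $p_{j,j+i}(\beta_R)$), and verify the two families of constraints; (iv) invoke Schur--Horn to upgrade each feasible $\tilde r_i$ to an actual unitary $U_i$ on $\mathcal H_i$, and assemble $U = \bigoplus_i U_i$; (v) check that $U$ indeed sends $\tau_{AB}(\beta_R)$ to a state with both marginals equal to $\tau_A(\beta')$. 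Steps (i), (iv), (v) are routine given the machinery already set up; the content is in (ii)--(iii).

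The main obstacle will be step (iii): producing, for arbitrary $d$ and arbitrary (non-equally-gapped) spectrum $E_0 < E_1 < \dots < E_{d-1}$, an explicit assignment of the within-block vectors $\tilde r_i$ that simultaneously satisfies the per-block majorization constraints $\tilde r_i \prec r_i$ \emph{and} reproduces the target Gibbs marginal in both the $A$-index and the $B$-index. The symmetry $H_A=H_B$ makes the two marginal constraints coincide as sets of numbers but not as index-by-index conditions, because the Latin-square labelling permutes indices differently in the two factors; reconciling this is exactly why the conjecture is only a conjecture in general and why we can only prove it for $d=3,4$. For those small cases the plan is to parametrize the $U_i$ by a small number of angles (e.g.\ T-transform parameters, as in the refrigeration part), write the marginal-matching conditions as a system of equations in those angles, and solve it directly, checking along the way that the required majorizations hold throughout the relevant range of $\beta'$; the "set of conditions" announced for $d>4$ will be precisely the solvability of the analogous system. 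Continuity in $\beta'$ (the endpoint $\beta'=\beta_R$ being trivially realized by $U=\mathds 1$, and $\beta'\to$ the infinite-temperature / maximally mixed limit being realized by a global permutation) together with an intermediate-value argument may give an alternative softer route for the existence half of the statement, and I would keep that in reserve should the explicit construction prove unwieldy.
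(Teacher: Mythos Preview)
Your framework is essentially the paper's: the block decomposition $\mathcal H_{AB}=\bigoplus_i\mathcal H_i$, the observation that block-diagonal unitaries keep both marginals diagonal, and the Schur--Horn reduction of each $U_i$ to a unistochastic (hence doubly stochastic) matrix acting on $r_i$. You also correctly flag that the feasibility step (iii) is the obstruction and that only $d=3,4$ are currently within reach. So at the level of architecture you are aligned with the paper.

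There is, however, one concrete structural move you are missing that the paper makes and that sharpens the problem considerably. You note that the $A$- and $B$-marginals impose \emph{two} sets of index-by-index conditions because the Latin-square labelling shifts columns, and you leave it at that. The paper resolves this upfront: writing $\tilde{\mathbf p}_A=\sum_i M_i r_i$ and $\tilde{\mathbf p}_B=\sum_i \Pi^i M_i r_i$, and using $r_{d-i}=\Pi^i r_i$ (which follows from $p_{ij}=p_{ji}$), one sees that imposing $M_{d-i}=\Pi^i M_i\Pi^{-i}$ forces $\tilde{\mathbf p}_A=\tilde{\mathbf p}_B$ identically. This collapses the two marginal constraints into the single equation
\[
\tilde{\mathbf p}=M_0 r_0+\sum_{i=1}^{k}(\lfloor\tfrac{2i}{d}\rfloor+1)^{-1}(\mathds 1+\Pi^i)M_i r_i,
\]
with only $k+1\approx d/2$ free doubly stochastic matrices. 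Without this reduction your step (iii) is substantially harder than it needs to be.

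On tactics for the small cases: you propose to parametrize the $U_i$ by angles and solve the marginal-matching equations directly, with a continuity / intermediate-value argument in reserve. The paper instead runs three more structural approaches on the reduced problem above: a commutation-type lemma $M(\mathds 1+\Pi)=(\mathds 1+\Pi)\tilde M$ (works for $d=3$, fails for $d\ge4$); a ``passing on the norm'' argument based on majorization relations like $r_0/\lVert r_0\rVert\succ(\mathds 1+\Pi^i)b_i/(2\lVert b_i\rVert)$ (works for $d=3$ and for $d=4$ under a gap condition); and a geometric/polytope argument showing the thermal curve lies in the convex hull of finitely many vertices (works for $d=3,4$). Your direct approach would likely succeed for $d=3,4$ as well, but the paper's routes are what produce the ``set of conditions'' for general $d$ that you allude to; brute-force angle-solving does not obviously yield that.
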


Note that as scanning through all the $\beta' \leq \beta_R$ of Conjecture~\ref{conj:technical} effectively scans through all the $c \geq 0$ of Conjecture~\ref{conj:general}, proving Conjecture~\ref{conj:technical} proves Conjecture~\ref{conj:general}. However, disproving Conjecture~\ref{conj:technical} does not disprove Conjecture~\ref{conj:general} since it might well be that states with different marginals nevertheless achieve the bound of Conjecture~\ref{conj:general}.

\chapter{Symmetric Mixed State Framework} \label{chap:symmetric}

\section{General Framework} \label{sec:framework}

Now that we have set our technical problem, we would like to define a framework valid in all dimensions that will allow us to tackle it. In Sections~\ref{sec:marginalapproach}, Section~\ref{sec:passingnormapproach} and Section~\ref{sec:geometric} we will use this framework to prove the existence of STUs as in Conjecture~\ref{conj:technical} for all local Hamiltonians $H_A=H_B$ of dimension $d=3$ and $d=4$. The framework will consist in characterizing a class of unitary transformations that have the two following  properties.

\begin{enumerate}
\item Leave the marginals diagonal in the energy eigenbasis, \label{enum:diagprop}
\item Transform the marginals equally. \label{enum:equaltrafo}
\end{enumerate}

Since our initial state is diagonal, its marginals also are. We furthermore aim at producing thermal marginals, which per definition are diagonal in the energy eigenbasis. This explains why property~\ref{enum:diagprop} is desirable. Working with a class of unitaries that have this property will furthermore allow us to solely focus on the diagonal entries of the marginals without having to worry about the off-diagonal entries. This reduces the complexity of the problem and renders it much more tractable. Note, however, that given our initial state $\rho_{AB}=\tau_A \otimes \tau_B$, the entire set of unitaries fulfilling property~\ref{enum:diagprop}, i.e.,

\begin{equation}
\mathcal{A}= \left\{ U: \mathcal{H}_{AB} \rightarrow \mathcal{H}_{AB} \mid \left[ \Tr_A \left(U \rho_{AB} U^{\dagger} \right) \right]_{ij}= \left[ \Tr_B \left(U \rho_{AB} U^{\dagger} \right)\right]_{ij}=0, \quad \forall i \neq j\right\},
\end{equation}

is in general quite complex to characterize. The reason is that given a state $\sigma_{AB}$, 

\begin{equation}
\left[ \Tr_B \left( \sigma_{AB} \right)\right]_{ij}= \sum_{k} \left[ \sigma_{AB} \right]_{ik, jk}=0
\end{equation}
does not imply that each complex entry $\left[ \sigma_{AB} \right]_{ik, jk}$ vanishes. And keeping track of the unitarily attainable entries $\left[ \sigma_{AB} \right]_{ik, jk}$ such that $\sum_{k} \left[ \sigma_{AB} \right]_{ik, jk}=0$ is no easy task. The same of course also holds for $\Tr_A$. This is why we decide to trade generality for tractability and restrict ourselves on a subset of $\mathcal{A}$, call it $\mathcal{B}$, that keeps every $\left[ \rho_{AB} \right]_{ik, jk}=\left[ \rho_{AB} \right]_{ki, kj}=0$ if $i \neq j$. Once property~\ref{enum:diagprop} is ensured, property~\ref{enum:equaltrafo} will make sure that both marginals are equal. This will then give us a set of allowed marginal transformations and our goal will be to prove that $\tau(\beta)$ is within this set for all $\beta \leq \beta_R$.\\

Let us now define our set more precisely. Let $d$ be the dimension of our local Hilbert space, i.e., $d=\dim(\mathcal{H}_A) = \dim(\mathcal{H}_B)$. For $i \in \{ 0, \dots, d-1\}$ let

\begin{equation} \label{eq:Hspacepartition}
\mathcal{H}_i = \text{span} \{ \ket{j, j+i}\}_{j=0}^{d-1}.
\end{equation}

The unitaries that we will be interested in will be unitaries $U_{\mathcal{B}}$ of the form

\begin{equation} \label{eq:oplusunitary}
U_{\mathcal{B}} = \oplus_{i=0}^{d-1} U_i, \quad \text{s.t. } U_i: \mathcal{H}_i \rightarrow \mathcal{H}_i.
\end{equation}

Note that theses unitaries do not create off diagonal entries in the marginals when applied to a diagonal state. Indeed

\begin{lemma} \label{lemma:diagmarginals}
If $\sigma_{AB}$ is diagonal in the energy eigenbasis, then so are
\begin{equation}
\Tr_{B/A} \left(U_{\mathcal{B}} \sigma_{AB} U_{\mathcal{B}}^{\dagger} \right)
\end{equation}
for all $U_{\mathcal{B}}$ as in Eq.~\ref{eq:oplusunitary}.
\end{lemma}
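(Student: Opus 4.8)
\textbf{Proof plan for Lemma~\ref{lemma:diagmarginals}.}

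The plan is to compute the partial traces of $U_{\mathcal{B}} \sigma_{AB} U_{\mathcal{B}}^{\dagger}$ directly in the energy eigenbasis and check that all off-diagonal matrix elements vanish. The key structural fact I will exploit is that each block $\mathcal{H}_i = \text{span}\{\ket{j,j+i}\}_{j=0}^{d-1}$ is spanned by product vectors whose two tensor factors differ by a fixed offset $i$ (indices understood mod $d$, matching the decomposition of $\mathcal{H}_{AB}$ into the $\mathcal{H}_i$). Since $U_{\mathcal{B}} = \oplus_{i=0}^{d-1} U_i$ with $U_i : \mathcal{H}_i \rightarrow \mathcal{H}_i$, applying $U_{\mathcal{B}}$ to a basis vector $\ket{j, j+i}$ produces a linear combination only of vectors $\ket{j', j'+i}$ lying in the same block $\mathcal{H}_i$.

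First I would write $\sigma_{AB} = \sum_{i=0}^{d-1} \sum_{j=0}^{d-1} c_{ij} \ket{j,j+i}\bra{j,j+i}$ using that $\sigma_{AB}$ is diagonal, and then expand
\begin{equation}
U_{\mathcal{B}} \sigma_{AB} U_{\mathcal{B}}^{\dagger} = \sum_{i=0}^{d-1} U_i \left( \sum_{j=0}^{d-1} c_{ij} \ket{j,j+i}\bra{j,j+i} \right) U_i^{\dagger},
\end{equation}
noting that $U_{\mathcal{B}} \sigma_{AB} U_{\mathcal{B}}^{\dagger}$ is block diagonal with respect to the $\oplus_i \mathcal{H}_i$ decomposition because each $U_i$ acts trivially outside $\mathcal{H}_i$. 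Writing $U_i \ket{j,j+i} = \sum_{k=0}^{d-1} [U_i]_{kj} \ket{k,k+i}$, the block contributes terms of the form $[U_i]_{kj} \overline{[U_i]_{lj}} \, c_{ij} \ket{k,k+i}\bra{l,l+i}$. Taking $\Tr_B$ of such a term picks out the component where the $B$-indices agree, i.e.\ $k+i = l+i$, hence $k=l$; thus every surviving term is of the form $\ket{k}\bra{k}_A$, which is diagonal. The same argument with the roles of the factors swapped handles $\Tr_A$: there one needs $k = l$ from the $A$-indices directly, again leaving only $\ket{k+i}\bra{k+i}_B$ terms. This gives diagonality of both marginals.

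The argument is essentially bookkeeping, so there is no real obstacle; the one point to be careful about is the index arithmetic mod $d$ in the definition of $\mathcal{H}_i$ (so that $\ket{j,j+i}$ makes sense for all $j$) and the corresponding convention making $\{\mathcal{H}_i\}_{i=0}^{d-1}$ an orthogonal direct-sum decomposition of $\mathcal{H}_{AB}$ — this is what guarantees $U_{\mathcal{B}}$ as defined in Eq.~\ref{eq:oplusunitary} is genuinely a unitary on $\mathcal{H}_{AB}$ and that the blocks do not overlap. Once that is fixed, the vanishing of the cross terms in $\Tr_{B/A}$ is immediate from orthogonality of the computational basis. I would present the $\Tr_B$ case in full and remark that $\Tr_A$ is identical by symmetry.
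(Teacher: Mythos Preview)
Your proposal is correct and follows essentially the same approach as the paper: both arguments track that $U_{\mathcal{B}}\sigma_{AB}U_{\mathcal{B}}^{\dagger}$ only has nonzero entries of the form $\ket{k,k+i}\bra{l,l+i}$ (same offset $i$), and then observe that partial tracing forces $k=l$. The paper presents this via explicit delta-function bookkeeping on generic matrix elements, while you invoke the block structure $U_{\mathcal{B}}=\oplus_i U_i$ up front, but the content is identical.
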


\begin{proof}
Note that $U_{\mathcal{B}}$ only has non-zero elements of the form of $\ket{j, j+i} \bra{k, k+i}$ for $k,j,i=0,\dots,d-1$. $\sigma_{AB}$ being diagonal has non-zero element of the form $\ket{mn} \bra{mn}$, $m,n= 0, \dots, d-1$ only. And so $U_{\mathcal{B}} \sigma_{AB} U_{\mathcal{B}}^{\dagger}$ has elements of the form

\begin{align}
&\ket{ j ,j+i} \bra{k, k+i} \ket{mn} \bra{mn} \ket{p, p+l} \bra{q, q+l} \\
&= \ket{j ,j+i} \delta_{km} \bra{q, q+l} \delta_{k+i, n} \delta_{mp} \delta_{n, p+l} \\
&= \ket{j, j+i}\bra{q ,q+l} \delta_{mk} \delta_{n, k+i} \delta_{pk} \underbrace{\delta_{k+i ,k+l}}_{=\delta_{il}} \\
&=\ket{j ,j+i}\bra{q ,q+i} \delta_{mk} \delta_{n ,k+i} \delta_{pk}.
\end{align}
And so as desired

\begin{align}
\Tr_B (\ket{j ,j+i} \bra{q ,q+i})&= \ket{j} \bra{j}\\
\Tr_A (\ket{j ,j+i} \bra{q ,q+i})&= \ket{j+i} \bra{j+i},
\end{align}
meaning that the marginals of $U_{\mathcal{B}} \sigma_{AB} U_{\mathcal{B}}^{\dagger}$ only have diagonal non-vanishing entries.
\end{proof}

For visual purposes, it is practical to represent $\rho_{AB}$ in the reordered energy eigenbasis

\begin{align}
\mathfrak{B}=  ( &\ket{00}, \ket{11}, \dots, \ket{d-1 ,d-1}, \\
&   \ket{01} ,\ket{12}, \dots, \ket{d-1, 0},\\
&  \dots  \\
&  \ket{0, d-1} ,\ket{10}, \dots, \ket{d-1, d-2} ).
\end{align}

Doing so, we get for $d=3$

\begin{equation}
\left[\rho_{AB} \right]_{\mathfrak{B}}=\begin{pmatrix}
  \begin{matrix}
  p_{00} & &\\
   & p_{11}&\\
  & & p_{22}
  \end{matrix}
  & \rvline & & & \\
\cline{1-4}
  & \rvline & \begin{matrix}
  p_{01} & &\\
   & p_{12}&\\
  & & p_{20}
  \end{matrix} & \rvline & \\
\cline{2-5}
  & &  & \rvline &\begin{matrix}
  p_{02} & &\\
   & p_{10}&\\
  & & p_{21}
  \end{matrix} \\
  \end{pmatrix},
\end{equation}

where $p_{ij} = \bra{ij} \rho_{AB} \ket{ij}$. With this representation we directly see the action of $U_{\mathcal{B}} = U_0 \oplus U_1 \oplus U_3$ on $\rho_{AB}$. Indeed, $U_0$ acts on the top left block, $U_1$ on the middle block and $U_2$ on the bottom right one. In general, as $\rho_{AB}$ is diagonal in the energy eigenbasis, 
\begin{equation}
\rho_{AB}= \oplus_{i=0}^{d-1} \rho_{AB} \big|_{\mathcal{H}_i},
\end{equation}
 and
\begin{equation}
U \rho_{AB} U^{\dagger} = \oplus_{i=0}^{d-1} U_i \rho_{AB} \big|_{\mathcal{H}_i} U_i^{\dagger}.
\end{equation}

We now focus our attention on the diagonal of the $i^{\text{th}}$ block

\begin{equation}
r_i = \mathfrak{D} (\rho_{AB} \big|_{\mathcal{H}_i}) = \begin{pmatrix}
p_{0i}\\
p_{1 i+1}\\
\vdots\\
p_{d-1 i+d-1}
\end{pmatrix},
\end{equation}

where the indices are to be understood as modulo $d$. Doing so, we see that $\tilde{r}_i$, the transformed diagonal under the action of $U_i$, can be written as 

\begin{equation} \label{eq:rtrafo}
\tilde{r}_i = \mathfrak{D} ( U_i \rho_{AB} \big|_{\mathcal{H}_i} U_i^{\dagger} ) = M_i r_i,
\end{equation}
   
with $M_i= \left( [U_i]_{kl}\right)_{k,l=0}^{d-1}$. Remember that the off diagonal elements of each block do not contribute to the reduced states and that as such the $\tilde{r}_i$ contain all the information about our transformation. More precisely, with this notation we can very concisely keep track of the transformations that the unitaries of $\mathcal{B}$ induce on the reduced states. Indeed, from Lemma~\ref{lemma:diagmarginals}, the transformed marginal of system $A$ is diagonal and is therefore fully described by

\begin{align}
\tilde{\mathbf{p}}_A &= \begin{pmatrix}
\tilde{p}_{00} + \tilde{p}_{01} + \dots + \tilde{p}_{0 d-1}\\
\tilde{p}_{11} + \tilde{p}_{12} + \dots + \tilde{p}_{10}\\
\vdots\\
\tilde{p}_{d-1 d-1} + \tilde{p}_{d-1 0} + \dots + \tilde{p}_{d-1 d-2}
\end{pmatrix}\\
&= \begin{pmatrix} \tilde{p}_{00}\\
\tilde{p}_{11}\\
\vdots\\
\tilde{p}_{d-1 d-1}
\end{pmatrix}
+
\begin{pmatrix} \tilde{p}_{01}\\
\tilde{p}_{12}\\
\vdots\\
\tilde{p}_{d-1 0}
\end{pmatrix}
+ \dots+
\begin{pmatrix} \tilde{p}_{0d-1}\\
\tilde{p}_{10}\\
\vdots\\
\tilde{p}_{d-1 d-2}
\end{pmatrix} = \sum_{i=0}^{d-1} \tilde{r}_i,
\end{align}
where $\tilde{p}_{ij} = \bra{ij} U_{\mathcal{B}} \rho_{AB} U^{\dagger}_{\mathcal{B}} \ket{ij}$. Similarly

\begin{align}
\tilde{\mathbf{p}}_B &= \begin{pmatrix}
\tilde{p}_{00} + \tilde{p}_{d-1 0 } + \dots + \tilde{p}_{1 0}\\
\tilde{p}_{11} + \tilde{p}_{01} + \dots + \tilde{p}_{21}\\
\vdots\\
\tilde{p}_{d-1 d-1} + \tilde{p}_{d-2 d-1 } + \dots + \tilde{p}_{0 d-1}
\end{pmatrix}\\
&= \begin{pmatrix} \tilde{p}_{00}\\
\tilde{p}_{11}\\
\vdots\\
\tilde{p}_{d-1 d-1}
\end{pmatrix}
+
\begin{pmatrix} \tilde{p}_{d-10}\\
\tilde{p}_{01}\\
\vdots\\
\tilde{p}_{d-2 d-1}
\end{pmatrix}
+ \dots+
\begin{pmatrix} \tilde{p}_{1 0}\\
\tilde{p}_{21}\\
\vdots\\
\tilde{p}_{0 d-1}
\end{pmatrix} = \sum_{i=0}^{d-1} \Pi^i \tilde{r}_i,
\end{align}

where $\Pi$ is the cyclic permutation matrix that ``pushes down'' every element of a vector once. That is,
\begin{equation}
\Pi = (\Pi_{ij}), \quad \Pi_{ij} = \delta_{i j+1 \, \text{mod}\, d}.
\end{equation}

And so with Eq.~\ref{eq:rtrafo} we have

\begin{align}
\tilde{\mathbf{p}}_A &= \sum_{i=0}^{d-1} M_i r_i,\\
\tilde{\mathbf{p}}_B &= \sum_{i=0}^{d-1} \Pi^i M_i r_i.
\end{align}

Now that we have ensured that our marginals transform in a way that keeps them diagonal with respect to the energy eigenbasis, we turn our attention to the second property we would like our transformation to have, property~\ref{enum:equaltrafo}. That is, we will see how to make sure that $\tilde{\mathbf{p}}_A= \tilde{\mathbf{p}}_B$. For this, it is convenient to separate the first term of $\sum_{i=0}^{d-1}$ and split the rest in roughly two equal parts, depending on if $d$ is even of odd. Let $k= \frac{d-1}{2}$ if $d$ is odd and $k=\frac{d}{2}$ if $d$ is even. Then,

\begin{align}
    \tilde{\mathbf{p}}_A  &=
    \begin{cases}
        M_{0} r_0 + \sum\limits_{i=1}^{k} (M_{i} r_{i} + M_{d-i} r_{d-i})
        & d\ \text{odd}\\[2mm]
        M_0 r_0 + \sum\limits_{i=1}^{k-1} (M_i r_{i} + M_{d-i} r_{d-i})+M_{d/2} r_{d/2}
        & d\ \text{even}
    \end{cases}.
    \label{gen evol marginal A}
\end{align}

This can be written compactly as:

\begin{equation}
\tilde{\mathbf{p}}_A  = M_{0} r_0 + \sum_{i=1}^{k} ( \lfloor \tfrac{2 i}{d} \rfloor +1)^{-1} (M_{i} r_{i} + M_{d-i} r_{d-i}),
\end{equation}
where $\lfloor x \rfloor$ denotes the floor function of $x$ and the prefactor $ ( \lfloor \tfrac{2 i}{d} \rfloor +1)^{-1}$ is equal to 1 unless $d$ is even and $i=k$, in which case it is $\frac{1}{2}$. Similarly, using $\Pi^{d-i} = \Pi^{-i}$, we get
\begin{equation}
\tilde{\mathbf{p}}_B  = M_{0} r_0 + \sum_{i=1}^{k} ( \lfloor \tfrac{2 i}{d} \rfloor +1)^{-1} ( \Pi^i M_{i} r_{i} + \Pi^{-i} M_{d-i} r_{d-i}).
\end{equation}

Now using the symmetry $p_{ij} = p_{ji}$ of the initial state, we have

\begin{align}
r_{d-i}= \begin{pmatrix}
p_{0 d-i}\\
p_{1 d-i+1}\\
\vdots\\
p_{d-1 d-i+1}
\end{pmatrix}
=\begin{pmatrix}
p_{d-i 0}\\
p_{d-i+1 1}\\
\vdots\\
p_{d-i+1 d-1}
\end{pmatrix}
=
\begin{pmatrix}
p_{-i 0}\\
p_{-i+1 1}\\
\vdots\\
p_{-i+1 d-1}
\end{pmatrix} = \Pi^i \begin{pmatrix}
p_{0 i}\\
p_{1 i+1}\\
\vdots\\
p_{d-1 d-1+i}
\end{pmatrix} = \Pi^i r_i.
\end{align}

This gives us

\begin{align}
\tilde{\mathbf{p}}_A  &= M_{0} r_0 + \sum_{i=1}^{k} ( \lfloor \tfrac{2 i}{d} \rfloor +1)^{-1} (M_{i} + M_{d-i} \Pi^i) r_{i},\\
\tilde{\mathbf{p}}_B  &= M_{0} r_0 + \sum_{i=1}^{k} ( \lfloor \tfrac{2 i}{d} \rfloor +1)^{-1} ( \Pi^i M_{i} + \Pi^{-i} M_{d-i} \Pi^i) r_{i}.
\end{align}

So one way to ensure that $\tilde{\mathbf{p}}_A = \tilde{\mathbf{p}}_B$ is to demand
\begin{equation}
M_{d-i} = \Pi^i M_i \Pi^{-i}, \quad \text{for } i=1,\dots,k.
\end{equation}
Indeed, that way
\begin{align}
\Pi^i M_i &= M_{d-i} \Pi^i,\\
\Pi^{-i} M_{d-1} \Pi^i &=M_i.
\end{align}
This fixes the matrices $M_{k+1}, \dots, M_{d-1}$, while keeping $M_0, \dots, M_k$ as free variables. The end result is
\begin{equation} \label{eq:trafo}
\tilde{\mathbf{p}}=\tilde{\mathbf{p}}_A=\tilde{\mathbf{p}}_B  = M_{0} r_0 + \sum_{i=1}^{k} ( \lfloor \tfrac{2 i}{d} \rfloor +1)^{-1} ( \mathds{1} + \Pi^i) M_i r_{i}.
\end{equation}

Note that since $\tilde{\mathbf{p}}_A = \tilde{\mathbf{p}}_B$, we will for convenience drop the subscripts and from now on talk about the marginal transformation $\tilde{\mathbf{p}}$. To prove Conjecture~\ref{conj:technical}, we therefore must show that for any $\beta' < \beta_R$, there exist unistochastic matrices $M_0, \dots, M_k$ such that $\tilde{\mathbf{p}}= \mathfrak{D}(\tau(\beta'))$. We have therefore reduced the problem of Conjecture~\ref{conj:technical} to the following

\begin{problem} \label{prob:findbetaprime}
Given $\tau(\beta_R)$, the thermal state of some Hamiltonian $H= \sum_{i=0}^{d-1} E_i \ket{i} \bra{i}$ at inverse background temperature $\beta_R$, does there exist for all $\beta' < \beta_R$ unistochastic matrices $M_0,\dots,M_k$ such that
\begin{equation}\label{eq:findbetaprime}
\mathfrak{D}(\tau(\beta'))=M_{0} r_0 + \sum_{i=1}^{k} ( \lfloor \tfrac{2 i}{d} \rfloor +1)^{-1} ( \mathds{1} + \Pi^i) M_i r_{i},
\end{equation}
where $r_i=(p_{0i}, \dots, p_{(d-1)(d-1+i)})$, with $p_{ij}=\bra{ij} \tau(\beta_R) \otimes \tau(\beta_R) \ket{ij}$, and where $k= \frac{d-1}{2}$ if $d$ is odd and $k=\frac{d}{2}$ if $d$ is even.
\end{problem}

\section{Remarks on the General Framework} \label{sec:remarks}

\subsection{Properties of the $M_i$'s}
The $M_i$ matrices of Eq.~\ref{eq:trafo}, that dictate the transformation of our marginals, are members of a special class of matrices called unistochastic. A unistochastic matrix is a matrix that can be written in the form $ \left( \lvert u_{ij} \rvert^2 \right)$ for some unitary matrix $\left( u_{ij} \right)$. That is,
\begin{definition}[Unistochastic matrix]
A matrix $M = (m_{ij}) \in \mathbb{R}^{d \times d}$ is called unistochastic if there exists a unitary matrix $U= (u_{ij}) \in \mathbb{C}^{d \times d}$ such that
\begin{equation}
m_{ij}= \lvert u_{ij} \rvert^2, \quad \forall i,j=0,\dots,d-1.
\end{equation}
\end{definition}

Unistochastic matrices are a subclass of another special kind of matrices called doubly stochastic. Doubly stochastic matrices are matrices with positive real entries such that each row and column sums to 1. That is,

\begin{definition}[Doubly stochastic matrix]
A matrix $M = (m_{ij}) \in \mathbb{R}^{d \times d}$ is called doubly stochastic if
\begin{enumerate}
\item $m_{ij} \geq 0, \quad \forall i,j =0,\dots,d-1$,
\item $\sum_{i=0}^{d-1} m_{ij} =1, \quad \forall j=0, \dots, d-1$,
\item $\sum_{j=0}^{d-1} m_{ij} =1, \quad \forall i=0, \dots, d-1$.
\end{enumerate}
\end{definition}

One immediately sees that a unistochastic matrix is doubly stochastic. Indeed, each of its entry is positive and
\begin{align}
\sum_{i=0}^{d-1} \lvert u_{ij} \rvert^2 &= \left( U U^{\dagger} \right)_{ii} =1, \quad \forall j=0, \dots, d-1,\\
\sum_{j=0}^{d-1} \lvert u_{ij} \rvert^2 &= \left( U^{\dagger} U \right)_{jj} =1, \quad \forall i=0, \dots, d-1.
\end{align}

However, not every doubly stochastic matrix is unistochastic, see \cite[Chapter 2.G]{Marshall-2011} and \cite{Bengtsson-2005}. Quite peculiarly though, the action of every doubly stochastic matrix on a vector is the same as the action of every unistochastic matrix on a vector. This is due to a combination of two results of majorization theory. The first was obtained by Hardy, Littlewood and P\'olya in 1929 \cite[Chapter~2.B]{Marshall-2011} and reads

\begin{theorem} [Hardy, Littlewood and P\'olya] \label{thm:Hardy}
Let $x,y \in \mathbb{R}^d$. Then $x \prec y$ iff there exists a doubly stochastic matrix $M$ such that
\begin{equation}
x = M y.
\end{equation}
\end{theorem}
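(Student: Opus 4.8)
The plan is to establish the two implications of the equivalence separately, treating the direction ``doubly stochastic $M$ with $x = My$ $\implies$ $x \prec y$'' as the routine one and ``$x \prec y$ $\implies$ existence of $M$'' via T-transforms. For the easy direction I would first note that equality of totals, $\sum_i x_i = \sum_j (\sum_i m_{ij}) y_j = \sum_j y_j$, follows from column-stochasticity. For the partial-sum inequalities I would use the variational identity $\sum_{i=0}^{l} x_i^{\downarrow} = \max_{|S|=l+1} \sum_{i\in S} x_i$: for a fixed $S$, $\sum_{i\in S} x_i = \sum_j c_j y_j$ where $c_j = \sum_{i\in S} m_{ij}$ satisfies $0\le c_j\le 1$ and $\sum_j c_j = l+1$ by double stochasticity, and the linear functional $c\mapsto\sum_j c_j y_j$ on this polytope is maximised when $c$ is the indicator of the $l+1$ largest entries of $y$, giving $\sum_{i\in S} x_i \le \sum_{i=0}^{l} y_i^{\downarrow}$. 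Together with the equality of totals this is exactly $x \prec y$. (Equivalently one could invoke Birkhoff's theorem to write $M$ as a convex combination of permutation matrices and use convexity of the partial-sum functionals; I would keep whichever is shorter in context.)

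For the converse I would argue by reduction to a single T-transform step. Sorting both vectors decreasingly, if $x = y$ we are done with $M = \mathds{1}$; otherwise let $j$ be the first coordinate with $x_j \ne y_j$ --- then $x_j < y_j$, since the partial sums of $x$ never exceed those of $y$ and agree below $j$ --- and let $k>j$ be the first coordinate with $x_k > y_k$, which must exist because the totals coincide. Setting $\epsilon = \min(y_j - x_j,\, x_k - y_k) > 0$ and $t = \epsilon/(y_j - y_k)$, one checks $y_j > y_k$ and $t \in (0,1]$, so $T = (1-t)\mathds{1} + t Q$, with $Q$ the transposition exchanging coordinates $j$ and $k$, is a valid T-transform. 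The key lemma is then that $Ty$ is still decreasingly sorted, still satisfies $x \prec Ty \prec y$, and disagrees with $x$ in strictly fewer coordinates than $y$ does. Iterating finitely many (at most $d-1$) times yields T-transforms with $T_m \cdots T_1 y = x$; since doubly stochastic matrices are closed under products --- and the re-sorting permutations inserted along the way are themselves doubly stochastic --- the matrix $M := T_m \cdots T_1$ is the one sought.

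I expect the main obstacle to be the bookkeeping inside the converse: one must verify that the single choice of indices $j, k$ and step size $\epsilon$ makes $Ty$ simultaneously remain decreasingly sorted, remain majorised by $y$, continue to majorise $x$, and strictly reduce the disagreement count with $x$, so that the induction terminates. Each of these is elementary once $j, k, \epsilon$ are fixed, but they must all hold at once, which is the delicate point. As the surrounding text already remarks, this converse direction may alternatively simply be cited from \cite[Chapter~2.B.1]{Marshall-2011}, in which case only the easy direction need be written out in full.
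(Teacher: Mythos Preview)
Your proof is correct and follows the standard textbook route (variational characterisation for one direction, T-transforms for the other). Note, however, that the paper does not actually prove this theorem: it is stated as a classical result with a citation to \cite[Chapter~2.B]{Marshall-2011} and used as a black box. Your write-up is essentially a sketch of the very argument found in that reference, so there is nothing to compare --- you have supplied a proof where the paper deliberately omits one.
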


The second result is Horn's Theorem, Theorem~\ref{thm:Horn}, that was presented in Section~\ref{sec:remarkscoh}. Combining both results we directly get

\begin{corollary}
Let $y \in \mathbb{R}^d$ and let $M \in \mathbb{R}^{d \times d}$ be doubly stochastic. Then there exists a unistochastic matrix $M_y$ such that
\begin{equation}
M_y y = M y.
\end{equation}
\end{corollary}

\begin{proof}
Let $x= M y$. Then from Theorem~\ref{thm:Hardy} $x \prec y$. From Horn's Theorem, Theorem~\ref{thm:Horn}, there exists a unitary matrix $U=(u_{ij})$ such that
\begin{equation}
x_i = \sum_{j=0}^{d-1} \lvert u_{ij} \rvert^2 y_j.
\end{equation}
Let $M_y = \left( \lvert u_{ij} \rvert^2 \right)_{ij}$. Then $x= M_y y$ as desired.
\end{proof}

For us this means that we can relax the condition on our $M_i$ matrices by demanding that they are doubly stochastic matrices only. That is,
\begin{lemma}
It suffices to find doubly stochastic matrices $M_0, \dots, M_k$ fulfilling Eq.~\ref{eq:findbetaprime} to solve Problem~\ref{prob:findbetaprime}.
\end{lemma}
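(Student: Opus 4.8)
The plan is to obtain the Lemma directly from the Corollary established above out of the Hardy--Littlewood--P\'olya Theorem (Theorem~\ref{thm:Hardy}) and Horn's Theorem (Theorem~\ref{thm:Horn}), which asserts that for any fixed real vector $y$ and any doubly stochastic $M$ there is a \emph{unistochastic} $M_y$ with $M_y\,y = M\,y$. The first point to make explicit is that the right-hand side of Eq.~\ref{eq:findbetaprime}, namely
\begin{equation}
M_{0} r_0 + \sum_{i=1}^{k} \bigl( \lfloor \tfrac{2 i}{d} \rfloor +1\bigr)^{-1} ( \mathds{1} + \Pi^i) M_i r_{i},
\end{equation}
depends on the matrices $M_0,\dots,M_k$ \emph{only through the vectors} $M_0 r_0,\dots,M_k r_k$: the scalar prefactors $(\lfloor 2i/d\rfloor+1)^{-1}$ and the operators $(\mathds{1}+\Pi^i)$ act \emph{after} the product $M_i r_i$ has been formed, and likewise the symmetry relation $M_{d-i}=\Pi^i M_i \Pi^{-i}$ used to eliminate $M_{k+1},\dots,M_{d-1}$ imposes no extra constraint on the free matrices beyond their action on the $r_i$. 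Hence replacing each $M_i$ by any matrix acting identically on $r_i$ leaves Eq.~\ref{eq:findbetaprime} untouched.

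With this observation the argument is immediate. Fix $\beta' < \beta_R$ and suppose doubly stochastic matrices $M_0,\dots,M_k$ satisfy Eq.~\ref{eq:findbetaprime}. For each $i\in\{0,\dots,k\}$ apply the above Corollary with $y=r_i$ and $M=M_i$, obtaining a unistochastic $\widehat{M}_i$ with $\widehat{M}_i r_i = M_i r_i$. Substituting $\widehat{M}_i$ for $M_i$ term by term gives
\begin{equation}
\mathfrak{D}(\tau(\beta')) = \widehat{M}_{0} r_0 + \sum_{i=1}^{k} \bigl( \lfloor \tfrac{2 i}{d} \rfloor +1\bigr)^{-1} ( \mathds{1} + \Pi^i) \widehat{M}_i r_{i},
\end{equation}
so $\widehat{M}_0,\dots,\widehat{M}_k$ are unistochastic matrices solving Eq.~\ref{eq:findbetaprime}, which is exactly what Problem~\ref{prob:findbetaprime} requires; since Problem~\ref{prob:findbetaprime} was already reduced from Conjecture~\ref{conj:technical}, this closes the reduction. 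In particular the $\widehat{M}_i$ define, via the construction of Section~\ref{sec:framework}, a bona fide unitary $U_{\mathcal B}=\oplus_{i=0}^{d-1}U_i\in\mathcal{B}$: unistochasticity of $\widehat{M}_i$ yields a unitary block $U_i$ on $\mathcal{H}_i$ with $(\widehat{M}_i)_{kl}=\lvert [U_i]_{kl}\rvert^2$, and the dependent blocks defined through $M_{d-i}=\Pi^i\widehat{M}_i\Pi^{-i}$ are again unistochastic because conjugation by the permutation matrix $\Pi^i$ merely permutes rows and columns of the underlying unitary; Lemma~\ref{lemma:diagmarginals} and Eq.~\ref{eq:trafo} then guarantee properties \ref{enum:diagprop} and \ref{enum:equaltrafo}.

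I do not anticipate a genuine obstacle: the Lemma is essentially a bookkeeping corollary of the Hardy--Littlewood--P\'olya/Horn circle of results. The only step that deserves to be written out carefully is the first one above --- that Eq.~\ref{eq:findbetaprime}, in the form it is stated after eliminating $M_{k+1},\dots,M_{d-1}$, constrains the free matrices solely through the products $M_i r_i$ --- together with the short remark that permutation conjugation preserves unistochasticity, so that the dependent blocks $M_{d-i}$ stay admissible. Everything else, including the verification that the resulting transformation is a STU$(\beta_R,\beta')$, was already dispatched when Problem~\ref{prob:findbetaprime} was extracted from Conjecture~\ref{conj:technical}.
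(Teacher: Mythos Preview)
Your proof is correct and follows exactly the route the paper intends: the paper treats this Lemma as an immediate consequence of the Corollary combining Theorem~\ref{thm:Hardy} and Theorem~\ref{thm:Horn}, and you have simply spelled out the one-line argument in full (observing that Eq.~\ref{eq:findbetaprime} depends on each $M_i$ only through $M_i r_i$, then replacing each doubly stochastic $M_i$ by a unistochastic $\widehat{M}_i$ with the same action on $r_i$). Your additional remark that the dependent blocks $M_{d-i}=\Pi^i\widehat{M}_i\Pi^{-i}$ remain unistochastic under permutation conjugation is a useful sanity check that goes slightly beyond what the Lemma itself asserts, but is correct and relevant for the broader reduction.
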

Lastly, we would like to point out a result derived by Birkhoff in 1946 \cite[Chapter~2.A]{Marshall-2011} that sheds some light on the geometry of doubly stochastic matrices. The result reads as follows.

\begin{theorem}[Birkhoff] \label{thm:Birkhoff}
The permutation matrices are the extremal points of the set of doubly stochastic matrices.
Moreover, the set of doubly stochastic matrices is the convex hull of
the permutation matrices.
\end{theorem}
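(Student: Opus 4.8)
The plan is to prove the two assertions of Theorem~\ref{thm:Birkhoff} in turn: first that the extreme points of the set $\mathcal{D}_d$ of $d \times d$ doubly stochastic matrices are exactly the permutation matrices, and then that $\mathcal{D}_d$ is the convex hull of those extreme points. The easy half of the first assertion is a direct check that a permutation matrix $P$ is extreme: if $P = \tfrac12(A+B)$ with $A,B \in \mathcal{D}_d$, then at every position where $P_{ij}=0$ nonnegativity forces $A_{ij}=B_{ij}=0$, and at every position where $P_{ij}=1$ the $i$-th row of $A$ sums to one with all its other entries already zero, forcing $A_{ij}=1$; hence $A=B=P$.

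The substantive half is the converse: a doubly stochastic $M$ that is not a permutation matrix cannot be extreme. Here I would form the bipartite graph on vertex set $\{\text{rows}\} \cup \{\text{columns}\}$ with an edge for every entry $M_{ij} \in (0,1)$. If $M$ is not a permutation matrix this graph is nonempty (a $0/1$ doubly stochastic matrix must be a permutation matrix), and every row or column that meets it must in fact carry \emph{at least two} fractional entries, since a line summing to $1$ whose other entries lie in $\{0,1\}$ cannot contain exactly one entry strictly between $0$ and $1$. Thus every vertex of this subgraph has degree $0$ or $\geq 2$, so the subgraph contains a cycle, and being bipartite the cycle has even length and alternates between rows and columns. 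Let $E$ be the signed matrix equal to $+1$ on alternate edges of the cycle and $-1$ on the others (and $0$ elsewhere); the sign alternation makes every row sum and column sum of $E$ vanish, so $M \pm \varepsilon E$ has the same line sums as $M$, and for $\varepsilon>0$ small the perturbed fractional entries stay in $[0,1]$ while the rest are untouched. Then $M = \tfrac12\big((M+\varepsilon E)+(M-\varepsilon E)\big)$ with $E \neq 0$ shows $M$ is not extreme.

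For the second assertion I would argue directly, by induction on the number $N$ of strictly positive entries of $M \in \mathcal{D}_d$, the base $N=d$ forcing $M$ to be a permutation matrix. In the inductive step, consider the bipartite graph with an edge $(i,j)$ whenever $M_{ij}>0$. Any $k$ rows carry total mass $k$, while each column absorbs mass at most $1$, so the positive entries of those $k$ rows occupy at least $k$ columns; this is precisely Hall's condition, so there is a permutation $\sigma$ with $M_{i\sigma(i)}>0$ for all $i$. Set $c = \min_i M_{i\sigma(i)} \in (0,1]$. If $c=1$ then $M = P_\sigma$; otherwise $M' := (1-c)^{-1}\big(M - c\,P_\sigma\big)$ is doubly stochastic with at least one more zero entry than $M$ and no new positive entries, so by induction $M' = \sum_k \lambda_k P_k$, giving $M = c\,P_\sigma + (1-c)\sum_k \lambda_k P_k$. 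Combined with the first part this identifies $\mathcal{D}_d$ with the convex hull of the permutation matrices; alternatively, once the extreme points are known one may simply invoke that a compact convex polytope is the convex hull of its extreme points.

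I expect the combinatorial core to be the only real obstacle: producing the alternating cycle in the first part and verifying Hall's condition in the second. Both reduce to the same elementary mass-counting fact — rows and columns each carry total mass one and entries are bounded by one — so once that observation is isolated, the remainder is bookkeeping (choosing $\varepsilon$, checking the perturbed matrices stay doubly stochastic, and tracking the strict decrease of $N$).
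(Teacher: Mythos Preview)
Your argument is correct and follows the classical route: the bipartite-cycle perturbation to show non-permutation matrices are not extreme, and the Hall-matching peeling argument for the convex-hull statement. Both halves are cleanly executed; the mass-counting verification of Hall's condition and the degree-$\geq 2$ observation are exactly the right ingredients.

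The paper, however, does not prove this theorem at all: it merely states Birkhoff's result and cites \cite[Chapter~2.A]{Marshall-2011} for the proof. So there is no ``paper's own proof'' to compare against---you have supplied a complete proof where the paper only quotes the literature.
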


\subsection{Choice of $\mathcal{H}_i$}

We would here like to briefly comment on our choice of partitioning our Hilbert space $\mathcal{H}_{AB}$ as in Eq.~\ref{eq:Hspacepartition}, i.e., as
\begin{equation}
\mathcal{H}_i = \text{span} \{ \ket{j ,j+i}\}_{j=0}^{d-1}.
\end{equation}

What makes this decomposition special is that it ensures that the off diagonal elements of the marginals are untouched, provided that our state is block diagonal in $\mathfrak{B}$ to start with. If one view the indices $i$ and $j$ in a square matrix as

\begin{equation}
\begin{pmatrix}
00 & 01 & \cdots & 0 (d-1)\\
10 & 11 & \cdots & 1 (d-1)\\
\vdots & \vdots & & \vdots\\
(d-1) 0 & (d-1) 1 & \cdots & (d-1) (d-1)
\end{pmatrix},
\end{equation}

the choice of $\mathcal{H}_i$ that we made corresponds to picking the diagonal elements for $\mathcal{H}_0$, the superdiagonal elements for $\mathcal{H}_1$, that is the elements right above the diagonal, and so on. Note that this is done is a cyclic manner so that $(d-1) 0$ is for us also part of the superdiagonal. What makes this picking of indices particular is that each $\mathcal{H}_i$ picks exactly one index per row and column. Such a picking of index corresponds to a choice of Latin square \cite{Wallis-2017}. The correspondence is seen by writing $i$ in place of the elements of $\mathcal{H}_i$ in the above matrix representation. For dimension 3 doing this delivers
\begin{equation}
\begin{pmatrix}
0 & 1 & 2\\
2 & 0 & 1\\
1 & 2  & 0
\end{pmatrix}.
\end{equation}

In dimension 3 there are a total of 12 Latin squares. But since relabeling corresponds for us to the same choice, there really is one other possible choice in dimension 3, namely

\begin{equation}
\begin{pmatrix}
0 & 1 & 2\\
1 & 0 & 2\\
2 & 0  & 1
\end{pmatrix},
\end{equation}

corresponding to choosing the following separation of our Hilbert space
\begin{align}
\tilde{\mathcal{H}}_0 &= \text{span} \{ \ket{00}, \ket{12}, \ket{21} \},\\
\tilde{\mathcal{H}}_1 &= \text{span} \{ \ket{01}, \ket{10}, \ket{22} \},\\
\tilde{\mathcal{H}}_2 &= \text{span} \{ \ket{02}, \ket{11}, \ket{20} \}.
\end{align}

The two choices correspond to either picking the diagonal or anti-diagonal in the above matrix representation as being part of one space. For bigger dimensions the number of Latin squares is drastically bigger. We built our framework on a particular Latin square but one could adapt it to any Latin square straightforwardly since the essence of our framework relies on the fact that our picking is of a Latin square type, rather than on the specific pick itself.
\section{Majorized Marginal Approach} \label{sec:marginalapproach}
We want here to use the framework exposed in Section~\ref{sec:framework} to prove Conjecture~\ref{conj:technical} for $d=3$. This approach is based on the following observation. Looking at the transformation of our marginal as dictated by Eq.~\ref{eq:trafo}, if one somehow manages to invert the order of $(\mathds{1}+\Pi^i)$ and $M_i$, then by choosing $M_i=M$ for all $i=0,\dots, k$, we can reach any $\tilde{\mathbf{p}}$ such that

\begin{equation} \label{eq:Mreach}
\tilde{\mathbf{p}}= M \left( r_0 + \sum_{i=1}^{k} ( \lfloor \tfrac{2 i}{d} \rfloor +1)^{-1} ( \mathds{1} + \Pi^i) r_{i} \right) = M \mathbf{p}.
\end{equation}
Then using Theorem~\ref{thm:Hardy}, the above means that we can reach any $\tilde{\mathbf{p}}$ such that
\begin{equation}
\tilde{\mathbf{p}} \prec \mathbf{p} = \mathfrak{D}(\tau(\beta_R)).
\end{equation}
In particular, $\mathfrak{D}(\tau(\beta'))$ is reachable for all $\beta' \leq \beta_R$ since 
\begin{equation}
\mathfrak{D}(\tau(\beta')) \prec \mathfrak{D}(\tau(\beta_R)), \quad  \text{for all } \beta' \leq \beta_R.
\end{equation}

The most straightforward way of inverting $(\mathds{1}+\Pi^i)$ and $M$, is to assume that they commute. This is however a big restriction that not all doubly stochastic matrices $M$ fulfill. And so taking this avenue, one cannot make use of Theorem~\ref{thm:Hardy} in Eq.~\ref{eq:Mreach} and the above argument breaks down. In fact, only circulant doubly stochastic matrices fulfill the commutation restriction and those do not suffice to reach any $\tilde{\mathbf{p}}=\mathfrak{D}(\tau(\beta'))$, as was noted in the original formulation of the problem already \cite{Huber-2015}. One can nevertheless salvage the original idea in dimension 3 by relaxing the commuting condition in a way that still allows for the desired inversion. This is the content of the following.

\begin{lemma} \label{lemma:commute}
Let $M \in \mathbb{R}^{3 \times 3}$ be a doubly stochastic matrix. Then there exists a doubly stochastic matrix $\tilde{M} \in \mathbb{R}^{3 \times 3}$ such that
\begin{equation}
M (\mathds{1}+\Pi) = (\mathds{1} + \Pi) \tilde{M}.
\end{equation}
\end{lemma}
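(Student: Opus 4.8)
Let me set up the problem. We have $d = 3$, so $\Pi$ is the $3\times 3$ cyclic permutation matrix with $\Pi_{ij} = \delta_{i, j+1 \bmod 3}$. Given an arbitrary doubly stochastic $M \in \mathbb{R}^{3\times 3}$, I need to find a doubly stochastic $\tilde M$ with $M(\mathds{1}+\Pi) = (\mathds{1}+\Pi)\tilde M$. Note $\mathds{1}+\Pi$ is an invertible matrix (its eigenvalues are $1+1=2$ and $1+\omega$, $1+\omega^2$ where $\omega = e^{2\pi i/3}$, all nonzero), so $\tilde M = (\mathds{1}+\Pi)^{-1} M (\mathds{1}+\Pi)$ is the unique candidate. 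The entire content is to show this $\tilde M$ is again doubly stochastic. Since conjugation by $\mathds{1}+\Pi$ preserves the all-ones eigenvector (both $M$ and $\mathds{1}+\Pi$ have row and column sums fixed), $\tilde M$ automatically has row sums and column sums equal to $1$ — this is the easy part. The real obstacle is nonnegativity of the entries of $\tilde M$.

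**Approach via Birkhoff and linearity.** The map $M \mapsto \tilde M = (\mathds{1}+\Pi)^{-1} M (\mathds{1}+\Pi)$ is linear in $M$. By Birkhoff's theorem (Theorem~\ref{thm:Birkhoff}), the set of doubly stochastic $3\times 3$ matrices is the convex hull of the six permutation matrices $\{P_\sigma : \sigma \in S_3\}$. So it suffices to exhibit, for each permutation matrix $P$, a \emph{doubly stochastic} matrix $\tilde M_P$ with $P(\mathds{1}+\Pi) = (\mathds{1}+\Pi)\tilde M_P$; then for general $M = \sum_\sigma \lambda_\sigma P_\sigma$ (with $\lambda_\sigma \ge 0$, $\sum \lambda_\sigma = 1$) one takes $\tilde M = \sum_\sigma \lambda_\sigma \tilde M_{P_\sigma}$, which is a convex combination of doubly stochastic matrices, hence doubly stochastic, and by linearity satisfies the intertwining relation. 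The plan is therefore: (i) observe the reduction to permutation matrices via Birkhoff and linearity; (ii) for each of the $6$ permutations $P$, compute $\tilde M_P := (\mathds{1}+\Pi)^{-1} P (\mathds{1}+\Pi)$ and check by hand that it has nonnegative entries (equivalently, is doubly stochastic since row/column sums are already $1$). For the three cyclic permutations $\mathds{1}, \Pi, \Pi^2$, conjugation is trivial ($\tilde M_P = P$ since they commute with $\mathds{1}+\Pi$), so only the three transposition-type permutations (the odd permutations in $S_3$) require genuine computation.

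**The computation for the odd permutations.** The three remaining permutation matrices are $Q_{01}, Q_{02}, Q_{12}$ (transpositions). I would compute $(\mathds{1}+\Pi)^{-1}$ explicitly: writing $\mathds{1}+\Pi = \begin{pmatrix} 1 & 0 & 1 \\ 1 & 1 & 0 \\ 0 & 1 & 1 \end{pmatrix}$ (with the convention $\Pi_{ij} = \delta_{i, j+1 \bmod 3}$), its inverse is $\tfrac{1}{2}\begin{pmatrix} 1 & 1 & -1 \\ -1 & 1 & 1 \\ 1 & -1 & 1\end{pmatrix}$. Then for each transposition $Q$ I form $\tilde M_Q = (\mathds{1}+\Pi)^{-1} Q (\mathds{1}+\Pi)$, a $3\times 3$ matrix with entries in $\{0, \tfrac12, 1, -\tfrac12, \dots\}$, and verify all entries are in $[0,1]$. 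I expect each $\tilde M_Q$ to come out as a doubly stochastic matrix with entries $0$ and $\tfrac12$ — in fact I anticipate each is itself a convex combination $\tfrac12(P + P')$ of two permutation matrices. (Heuristically: the centralizer structure of $S_3$ forces conjugates of a transposition by $\mathds{1}+\Pi$ to average over a coset.) The main obstacle, and the only place where something could in principle go wrong, is precisely this sign check on the three conjugated transpositions; everything else is formal. If one of them had a negative entry the lemma would be false, so verifying these three $3\times 3$ products is the crux.

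**Alternative, more conceptual route.** If I wanted to avoid the case check entirely, I would instead argue as follows: for any doubly stochastic $M$, $M(\mathds{1}+\Pi)\mathbf{1} = 2M\mathbf{1} = 2\mathbf{1} = (\mathds{1}+\Pi)\mathbf{1}$, and similarly on the left with $\mathbf{1}^T$, so $\tilde M := (\mathds{1}+\Pi)^{-1}M(\mathds{1}+\Pi)$ has all row and column sums equal to $1$; it remains to control nonnegativity. Here one can use the structure of $\mathds{1}+\Pi$ in dimension $3$: note $(\mathds{1}+\Pi)(\mathds{1}+\Pi^2) = \mathds{1}+\Pi+\Pi^2 + \Pi^3 = \mathds{1} + (\mathds{1}+\Pi+\Pi^2) = \mathds{1} + J$ where $J$ is the all-ones matrix (using $\Pi^3 = \mathds{1}$), so $(\mathds{1}+\Pi)^{-1} = (\mathds{1}+J)^{-1}(\mathds{1}+\Pi^2)$, and $(\mathds{1}+J)^{-1} = \mathds{1} - \tfrac14 J$. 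Thus $\tilde M = (\mathds{1}-\tfrac14 J)(\mathds{1}+\Pi^2) M (\mathds{1}+\Pi)$, and since $JM = J = MJ$ for doubly stochastic $M$ one can expand and simplify; this still ultimately reduces to checking nonnegativity of a concrete expression, so I regard the Birkhoff-plus-case-check route as the cleanest and would present that one, flagging the three transposition computations as the substantive step.
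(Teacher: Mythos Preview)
Your proposal is correct and follows essentially the same approach as the paper: reduce to permutation matrices via Birkhoff's theorem, note that the three circulant permutations commute with $\mathds{1}+\Pi$ so $\tilde M = P$ works trivially, and handle the three transpositions by direct calculation. The paper defers that last calculation to an appendix of the cited article, whereas you sketch it explicitly; otherwise the arguments are identical.
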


\begin{proof}
Using Birkhoff's Theorem, Theorem~\ref{thm:Birkhoff},
\begin{equation}
M = \sum_{i=1}^{6} \alpha_i P_i,
\end{equation}

where $P_1, \dots, P_6$ denote the six permutation matrices in dimension 3 and $\alpha_i \in [0,1]$. With the above it suffices to prove that the statement is true for $M$ being a permutation matrix. For the 3 circulant permutation matrices the statement is direct. For the 3 remainder ones, this is done by direct calculation, see Appendix~A.IV. of~\cite{Bakhshinezhad-2019}.
\end{proof}

As in dimension 3
\begin{equation}
\tilde{\mathbf{p}}= M_0 r_0 + (\mathds{1}+\Pi) M_1 r_1,
\end{equation}

by choosing $M=M_0$, Lemma~\ref{lemma:commute} tells us that there exists a doubly stochastic matrix $M_1$ such that
\begin{equation}
M_0 (\mathds{1}+\Pi) = (\mathds{1} + \Pi) M_1.
\end{equation}

Therefore
\begin{equation}
\tilde{\mathbf{p}}= M_0 \mathbf{p},
\end{equation}
and as now $M_0$ may be any doubly stochastic matrix, one can make use of Theorem~\ref{thm:Hardy} and the original argument carries through, proving Conjecture~\ref{conj:technical} for $d=3$. We hence proved the following.

\begin{theorem}
For dimension 3 the \enquote{majorized marginal} approach allows to reach any marginal $\tilde{\mathbf{p}}$ thermal at $\beta' < \beta_R$ and therefore proves Conjecture~\ref{conj:technical} for $d=3$.
\end{theorem}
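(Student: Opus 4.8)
The plan is to follow the "majorized marginal" strategy already outlined in Section~\ref{sec:marginalapproach}, whose only missing ingredient is the commutation-up-to-replacement fact for $3 \times 3$ doubly stochastic matrices, i.e.\ Lemma~\ref{lemma:commute}. Everything else is already assembled: for $d=3$ the marginal transformation reads $\tilde{\mathbf{p}} = M_0 r_0 + (\mathds{1}+\Pi) M_1 r_1$, and if we can pull $M_0$ out to the left of $(\mathds{1}+\Pi)$ at the cost of replacing $M_1$ by some other doubly stochastic matrix, then choosing the free parameter $M = M_0$ and using $r_0 + (\mathds{1}+\Pi) r_1 = \mathbf{p} = \mathfrak{D}(\tau(\beta_R))$ gives $\tilde{\mathbf{p}} = M_0 \mathbf{p}$ with $M_0$ an arbitrary doubly stochastic matrix; Theorem~\ref{thm:Hardy} then yields exactly the set $\{\tilde{\mathbf p} : \tilde{\mathbf p} \prec \mathbf{p}\}$, which contains $\mathfrak{D}(\tau(\beta'))$ for every $\beta' \le \beta_R$.

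So concretely I would structure the proof of the final theorem as follows. First I would recall the $d=3$ instance of Eq.~\ref{eq:trafo}, namely $\tilde{\mathbf{p}} = M_0 r_0 + (\mathds{1}+\Pi) M_1 r_1$, and observe that we are free to pick $M_1$ as any doubly stochastic matrix (only $M_2$ is then fixed by the symmetry constraint $M_2 = \Pi M_1 \Pi^{-1}$, which plays no further role). Second, I would invoke Lemma~\ref{lemma:commute} with $M = M_0$ to produce a doubly stochastic $M_1$ with $M_0(\mathds{1}+\Pi) = (\mathds{1}+\Pi) M_1$, so that
\begin{equation}
\tilde{\mathbf{p}} = M_0 r_0 + M_0 (\mathds{1}+\Pi) r_1 = M_0 \bigl( r_0 + (\mathds{1}+\Pi) r_1 \bigr) = M_0 \,\mathbf{p}.
\end{equation}
Third, since $M_0$ now ranges over all doubly stochastic matrices (the relaxation from unistochastic to doubly stochastic being justified by the Corollary following Theorem~\ref{thm:Hardy}), Theorem~\ref{thm:Hardy} shows that the reachable marginals are precisely those $\tilde{\mathbf p}$ with $\tilde{\mathbf p} \prec \mathbf{p} = \mathfrak{D}(\tau(\beta_R))$. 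Fourth, I would note that $\mathfrak{D}(\tau(\beta')) \prec \mathfrak{D}(\tau(\beta_R))$ for all $\beta' \le \beta_R$ (cooling a thermal state flattens its spectrum, hence majorizes downward), so in particular every thermal marginal at $\beta' < \beta_R$ is reachable; this is exactly Problem~\ref{prob:findbetaprime} for $d=3$, and through the chain of reductions it establishes Conjecture~\ref{conj:technical}, hence Conjecture~\ref{conj:general}, for $d=3$.

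The main obstacle in this argument is Lemma~\ref{lemma:commute} itself, and its proof is where the real work lies: by Birkhoff's Theorem (Theorem~\ref{thm:Birkhoff}) it suffices to treat $M$ a permutation matrix, and for the three circulant permutations $\{\mathds{1}, \Pi, \Pi^2\}$ the identity is immediate because circulants commute and $\tilde M = M$ works. The genuinely non-trivial cases are the three non-circulant (transposition) permutations of $S_3$, where one must exhibit by hand a doubly stochastic $\tilde M$ satisfying $M(\mathds{1}+\Pi) = (\mathds{1}+\Pi)\tilde M$ — this is the explicit finite computation deferred to Appendix~A.IV of~\cite{Bakhshinezhad-2019}, and it is special to $d=3$ (the analogous relaxation fails in higher dimensions, which is why Sections~\ref{sec:passingnormapproach} and~\ref{sec:geometric} need entirely different techniques for $d=4$). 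Beyond that, the only point requiring a line of care is confirming that the symmetry-imposed choice $M_2 = \Pi M_1 \Pi^{-1}$ is consistent with $M_1$ being whatever Lemma~\ref{lemma:commute} hands us — but since conjugation by a permutation preserves double stochasticity, this is automatic.
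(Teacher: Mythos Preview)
Your proposal is correct and follows essentially the same approach as the paper: reduce to $\tilde{\mathbf{p}} = M_0\mathbf{p}$ via Lemma~\ref{lemma:commute}, then apply Theorem~\ref{thm:Hardy} and the majorization $\mathfrak{D}(\tau(\beta')) \prec \mathfrak{D}(\tau(\beta_R))$. Your additional remark that $M_2 = \Pi M_1 \Pi^{-1}$ is automatically doubly stochastic is a nice consistency check the paper leaves implicit.
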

 Since nothing in the argument crucially used that the original state was thermal and since local unitaries are allowed, we can actually prove a more general statement, see~\cite{Bakhshinezhad-2019} for a detailed proof.

\begin{theorem}
Let $\mathcal{H}$ be a 3 dimensional Hilbert space and let $\sigma_1$ and $\sigma_2$ be two states on that Hilbert space such that $\sigma_1 \prec \sigma_2$. Then there exists a unitary $U$ on $\mathcal{H} \otimes \mathcal{H}$ such that
\begin{equation}
\sigma_1= \Tr_B(U \sigma_2 \otimes \sigma_2 U^{\dagger})=\Tr_A (U \sigma_2 \otimes \sigma_2 U^{\dagger}).
\end{equation}
\end{theorem}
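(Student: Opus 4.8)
Looking at the final theorem, I need to prove: for a 3-dimensional Hilbert space $\mathcal{H}$ and states $\sigma_1 \prec \sigma_2$ on it, there exists a unitary $U$ on $\mathcal{H} \otimes \mathcal{H}$ with $\sigma_1 = \Tr_B(U \sigma_2 \otimes \sigma_2 U^\dagger) = \Tr_A(U \sigma_2 \otimes \sigma_2 U^\dagger)$.

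The key is to leverage the "majorized marginal" machinery just developed. The plan is as follows.

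Let me think about the structure. The previous theorem established the result for thermal states with a non-interacting Hamiltonian $H = H_A = H_B$. The generalization replaces "thermal at $\beta_R$" by an arbitrary state $\sigma_2$ and "thermal at $\beta'$" by an arbitrary $\sigma_1 \prec \sigma_2$. The argument needs two observations: (i) local unitaries are free, so we can diagonalize $\sigma_2$ first; (ii) the framework of Section~\ref{sec:framework} only used that the initial state was diagonal and symmetric under $p_{ij} = p_{ji}$ — it never used thermality.

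First I would reduce to the diagonal case. Since local unitaries are allowed, pick a unitary $V$ on $\mathcal{H}$ diagonalizing $\sigma_2$, so $V \sigma_2 V^\dagger = \sum_i \lambda_i \ket{i}\bra{i}$ with $\lambda = \lambda(\sigma_2)$; replacing $\sigma_2$ by $V\sigma_2 V^\dagger \otimes V \sigma_2 V^\dagger$ and conjugating $U$ by $V^\dagger \otimes V^\dagger$ at the end, we may assume $\sigma_2$ is diagonal. Similarly, at the output, if we produce a marginal whose vector of eigenvalues is a permutation of $\mathfrak{D}(\sigma_1)$, a final local unitary brings it to $\sigma_1$ itself; so it suffices to realize $\mathfrak{D}(\sigma_1)$ (in some order) as the common marginal. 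Write $\rho_{AB} = \sigma_2 \otimes \sigma_2$, which is diagonal in the energy (here just computational) basis, and note the crucial symmetry $p_{ij} = \lambda_i \lambda_j = \lambda_j \lambda_i = p_{ji}$. This is exactly the hypothesis the framework of Section~\ref{sec:framework} needs, so Eq.~\ref{eq:trafo} applies verbatim: with $d = 3$, $k = 1$, the achievable common marginals via unitaries in $\mathcal{B}$ are precisely
\begin{equation}
\tilde{\mathbf{p}} = M_0 r_0 + (\mathds{1} + \Pi) M_1 r_1,
\end{equation}
for $M_0, M_1$ doubly stochastic (using the relaxation from doubly stochastic to unistochastic justified by Horn's Theorem and Theorem~\ref{thm:Hardy}), where $r_0, r_1$ are the diagonal blocks of $\rho_{AB}$ in the reordered basis $\mathfrak{B}$.

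Next I would invoke Lemma~\ref{lemma:commute}: for $M_0$ an arbitrary $3\times 3$ doubly stochastic matrix there is a doubly stochastic $M_1$ with $M_0(\mathds{1}+\Pi) = (\mathds{1}+\Pi)M_1$. Choosing that $M_1$ and unpacking $\mathbf{p} = r_0 + (\mathds{1}+\Pi) r_1 = \mathfrak{D}(\sigma_2)$, one obtains
\begin{equation}
\tilde{\mathbf{p}} = M_0\big(r_0 + (\mathds{1}+\Pi) r_1\big) = M_0\, \mathfrak{D}(\sigma_2).
\end{equation}
Since $M_0$ ranges over all doubly stochastic matrices, Theorem~\ref{thm:Hardy} says the set of achievable $\tilde{\mathbf{p}}$ is exactly $\{\mathbf{x} : \mathbf{x} \prec \mathfrak{D}(\sigma_2)\}$. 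Now $\sigma_1 \prec \sigma_2$ means $\lambda(\sigma_1) \prec \lambda(\sigma_2)$, equivalently $\mathfrak{D}(\sigma_1)^{\downarrow} \prec \mathfrak{D}(\sigma_2)$ after diagonalizing; hence $\mathfrak{D}(\sigma_1)$ (in any ordering) is majorized by $\mathfrak{D}(\sigma_2)$ and is therefore achievable. Concretely, pick $M_0$ with $M_0 \mathfrak{D}(\sigma_2) = \mathfrak{D}(\sigma_1)$, take $M_1$ from Lemma~\ref{lemma:commute}, replace each $M_i$ by a unistochastic matrix with the same action (via Horn), build the corresponding unitaries $U_i$ on $\mathcal{H}_i$, set $U_{\mathcal{B}} = U_0 \oplus U_1 \oplus U_2$ with $M_2$ fixed by $M_2 = \Pi M_1 \Pi^{-1}$ as in the framework, and finally conjugate by the local unitaries $V^\dagger \otimes V^\dagger$ and compose with the output-side local rotation. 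By Lemma~\ref{lemma:diagmarginals} both marginals stay diagonal and by construction of $\mathcal{B}$ they are equal to $\sigma_1$.

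The main obstacle — and the only place real work beyond citation is needed — is making sure the reduction to the diagonal case is airtight: one must check that conjugating $\rho_{AB} = \sigma_2 \otimes \sigma_2$ by $V \otimes V$ both diagonalizes the state and preserves the product/symmetric structure (it does, trivially), and that a final local unitary on $A$ and the mirrored one on $B$ can simultaneously convert a diagonal marginal with eigenvalue vector $\mathfrak{D}(\sigma_1)$ into $\sigma_1$ on both sides — this works because the two marginals are \emph{equal} as operators, so the same unitary diagonalizing $\sigma_1$ does the job on both. Everything else is a direct transcription of the $d=3$ argument of Section~\ref{sec:marginalapproach}, with "thermal" replaced by "arbitrary" wherever it appears, which is legitimate precisely because that argument only used diagonality, the symmetry $p_{ij}=p_{ji}$, and majorization — none of which is special to Gibbs states.
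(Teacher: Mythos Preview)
Your proposal is correct and follows precisely the approach the paper indicates: the paper's own proof is the one-line remark that nothing in the $d=3$ majorized-marginal argument used thermality and that local unitaries handle the rest, deferring details to~\cite{Bakhshinezhad-2019}. You have supplied exactly those details --- diagonalizing $\sigma_2$ by a local $V\otimes V$, noting that $p_{ij}=\lambda_i\lambda_j=p_{ji}$ so the framework applies verbatim, invoking Lemma~\ref{lemma:commute} to obtain $\tilde{\mathbf{p}}=M_0\,\mathfrak{D}(\sigma_2)$, and closing with Hardy--Littlewood--P\'olya, Horn, and a final local $W\otimes W$ --- so your argument and the paper's coincide.
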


The result of Lemma~\ref{lemma:commute} breaks down in dimensions greater than 3 as the following counter example shows. 
\begin{counterexample} \label{counterexample:commute}
For the doubly stochastic matrix
\begin{equation}
M=\begin{pmatrix}
1&0&0&0\\
0&0&0&1\\
0&1&0&0\\
0&0&1&0
\end{pmatrix},
\end{equation}
there exists no doubly stochastic matrix $\tilde{M} \in \mathbb{R}^{4 \times 4}$ such that
\begin{equation}
M (\mathds{1}+\Pi)=(\mathds{1}+\Pi) \tilde{M}.
\end{equation}
\end{counterexample}
The proof of the statement of Counterexample~\ref{counterexample:commute} is by contradiction and can be found in Appendix~V of \cite{Bakhshinezhad-2019}. The counterexample straightforwardly generalizes to higher dimensions showing that Lemma~\ref{lemma:commute} is tight to dimension 3. To assess Conjecture~\ref{conj:technical} for higher dimensions, we therefore turn to other ways of making use of our framework of Section~\ref{sec:framework}.

\section{Passing on the norm approach} \label{sec:passingnormapproach}

Here we would like to look at a different way of exploring our framework of Section~\ref{sec:framework}. This will enable us to find an alternative proof of Conjecture~\ref{conj:technical} for dimension 3. This proof will have the advantage to partially generalize to dimension 4. This approach is based on the following observation.

Since we are only interested in reaching marginals $\tilde{\mathbf{p}}$ that are thermal at inverse temperature $\beta' < \beta_R$, what we really are interested in is to reach marginals of the following form
\begin{equation}
\tilde{\mathbf{p}}= b_0 + \sum_{i=1}^k ( \lfloor \tfrac{2 i}{d} \rfloor +1)^{-1} (\mathds{1}+\Pi^i) b_i,
\end{equation}
where $b_i= (p_{0i}(\beta'), p_{1 (i+1)}(\beta'), \dots, p_{(d-1) (d-1+i)}(\beta'))$ with
\begin{equation}
p_{ij} (\beta')= \bra{i} \tau(\beta') \ket{i} \bra{j} \tau(\beta') \ket{j}.
\end{equation}

One way to transform $\mathbf{p}$ into $\tilde{\mathbf{p}}$ is to transform each $r_i$ into its corresponding $b_i$. However, since doubly stochastic transformations conserve the norm and since in general $ \lVert r_i \rVert \neq \lVert b_i \rVert$, where $\lVert \cdot \lVert$ denotes the one norm, i.e.,
\begin{equation}
\lVert v \rVert = \sum_{i=0}^{d-1} \lvert v_i \rvert,
\end{equation}

one cannot perform this transformation directly with $M_i$. What one can do with $M_i$ instead is to transform the normalized $r_i$ into the normalized $b_i$, as indeed the following holds.

\begin{lemma} \label{lemma:ribi}
For any dimension $d$ and Hamiltonian

\begin{equation} 
\frac{r_i}{\lVert r_i \rVert} \succ \frac{b_i}{\lVert b_i \rVert}, \quad \forall i=0,\dots k.
\end{equation}
\end{lemma}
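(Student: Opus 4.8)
\textbf{Proof proposal for Lemma~\ref{lemma:ribi}.}

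The plan is to unpack both sides explicitly and reduce the claimed majorization to an elementary inequality about products of thermal populations. Write $q_j = \bra{j}\tau(\beta_R)\ket{j} = e^{-\beta_R E_j}/Z_R$ and $q_j' = \bra{j}\tau(\beta')\ket{j} = e^{-\beta' E_j}/Z'$. Then the $j$-th component of $r_i$ is $q_j q_{j+i}$ (indices mod $d$), while the $j$-th component of $b_i$ is $q_j' q_{j+i}'$. The norms are $\lVert r_i\rVert = \sum_j q_j q_{j+i}$ and $\lVert b_i\rVert = \sum_j q_j' q_{j+i}'$, so both normalized vectors are genuine probability vectors of length $d$. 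The claim $\frac{r_i}{\lVert r_i\rVert} \succ \frac{b_i}{\lVert b_i\rVert}$ is therefore a statement comparing two probability distributions on $d$ points indexed by $j$, where the first is proportional to $e^{-\beta_R(E_j + E_{j+i})}$ and the second to $e^{-\beta'(E_j+E_{j+i})}$, with $\beta' \le \beta_R$.

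The key observation is that both distributions are of \emph{Gibbs form} for the \emph{same} effective energy levels $\tilde E_j^{(i)} := E_j + E_{j+i}$, merely at two different inverse temperatures $\beta_R$ and $\beta'$ with $\beta' \le \beta_R$. Since a colder Gibbs distribution (larger $\beta$) majorizes a hotter one built from the same energy levels — this is precisely the fact, already invoked in the proof of Lemma~\ref{lemma:singlecohpower}, that $\tau^H \prec \tau$ for a hotter thermal state — we conclude $\frac{r_i}{\lVert r_i\rVert} \succ \frac{b_i}{\lVert b_i\rVert}$. Concretely, first I would sort the levels $\tilde E_j^{(i)}$ in increasing order via a permutation $\pi$; this sorts both normalized vectors into decreasing order simultaneously (the map $\beta \mapsto e^{-\beta \tilde E}$ is order-reversing in $\tilde E$ for every $\beta \ge 0$). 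Then for each prefix length $l$ one must show $\sum_{j \text{ among } l \text{ largest}} e^{-\beta_R \tilde E_j}/\big(\sum_j e^{-\beta_R \tilde E_j}\big) \ge \sum_{j \text{ among } l \text{ largest}} e^{-\beta' \tilde E_j}/\big(\sum_j e^{-\beta' \tilde E_j}\big)$, which is the standard monotonicity of Gibbs-tail-weights in $\beta$, provable by differentiating in $\beta$ and recognizing the derivative as (minus) a covariance/variance expression with the right sign, or by the two-point cross-multiplication argument combined with a telescoping/chaining over adjacent $\beta$ values.

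The one subtlety I would be careful about is the degenerate case where $\lVert b_i\rVert$ or $\lVert r_i\rVert$ could behave badly, but since all $q_j, q_j' > 0$ these norms are strictly positive, so the normalization is harmless; and the case $\beta' = \beta_R$ gives equality, which is consistent with majorization being reflexive. I do not expect a genuine obstacle here — the main content is simply correctly identifying $E_j + E_{j+i}$ as the shared effective Hamiltonian so that the comparison becomes ``same spectrum, lower temperature,'' after which the result is the monotonicity of thermal states under majorization already used earlier in the thesis. The only mild bookkeeping is handling the cyclic index $j+i \bmod d$ and confirming that the reordering permutation is indeed common to both vectors, which follows because it depends only on the ordering of the $\tilde E_j^{(i)}$, not on $\beta$.
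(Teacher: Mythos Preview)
Your proposal is correct and follows essentially the same route as the paper: you recognize that $\frac{r_i}{\lVert r_i\rVert}$ and $\frac{b_i}{\lVert b_i\rVert}$ are both Gibbs distributions of the common effective Hamiltonian $\tilde H_i=\sum_j (E_j+E_{j+i})\ket{j}\bra{j}$ at inverse temperatures $\beta_R$ and $\beta'\le\beta_R$, and then invoke the standard fact that a colder thermal state majorizes a hotter one. The paper's proof does precisely this, only more tersely; your additional remarks on the common sorting permutation and the positivity of the norms are correct bookkeeping that the paper leaves implicit.
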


\begin{proof}
The proof is quite simple. We have
\begin{equation}
(r_i)_j= p_{j (j+i)} = \frac{e^{-\beta_R (E_j+E_{j+i})}}{\left(\sum_{k=0}^{d-1} e^{-\beta_R E_k} \right)^2},
\end{equation}
and so

\begin{equation}
\left( \frac{r_i}{\lVert r_i \rVert} \right)_j= \frac{e^{-\beta_R (E_j+E_{j+i})}}{\left(\sum_{k=0}^{d-1} e^{-\beta_R ( E_k+E_{k+i})} \right)^2}.
\end{equation}
This means that $\frac{r_i}{\lVert r_i \rVert} = \mathfrak{D}(\tau_{\tilde{H}_i}(\beta))$, where $\tau_{\tilde{H}_i}(\beta)$ is the thermal state at inverse temperature $\beta$ of the Hamiltonian
\begin{equation}
\tilde{H}_i = \sum_{j=0}^{d-1} (E_j+E_{j+i}) \ket{j} \bra{j}.
\end{equation}
Finally, as
\begin{equation}
 \mathfrak{D}(\tau_{\tilde{H}_i}(\beta)) \succ \mathfrak{D}(\tau_{\tilde{H}_i}(\beta'))= \frac{b_i}{\lVert b_i \rVert},
\end{equation}
our result is proven.
\end{proof}

What Lemma~\ref{lemma:ribi} tells us is that we are able to transform the $r_i$'s into the $b_i$'s, just not in the right amount. When $\lVert r_i \rVert > \lVert b_i \rVert$, we will end up with too much $b_i$ if one fully transforms $r_i$ into $b_i$ as prescribed by Lemma~\ref{lemma:ribi}. And so, one can instead use the excessive $r_i$ to try and create some $b_j$ for which Lemma~\ref{lemma:ribi} does not allow to create enough of. One succeeds if the norm can adequately be passed across the different subspaces as such.\\

One $r_i$ that always has excessive norm is $r_0$ as indeed

\begin{lemma} \label{lemma:bigr0}
For all Hamiltonians, inverse temperature $\beta$ and dimension d
\begin{equation}
\lVert r_0 \rVert \geq \lVert b_0 \rVert, \quad \forall \beta' \leq \beta.
\end{equation}
\end{lemma}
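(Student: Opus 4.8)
\textbf{Proof plan for Lemma~\ref{lemma:bigr0}.}

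The plan is to compute both norms explicitly and reduce the inequality $\lVert r_0 \rVert \geq \lVert b_0 \rVert$ to a statement comparing two probability-type sums at different inverse temperatures. Recall that $r_0 = (p_{00}(\beta_R), p_{11}(\beta_R), \dots, p_{(d-1)(d-1)}(\beta_R))$ with $p_{jj}(\beta) = \frac{e^{-2\beta E_j}}{(\sum_{k=0}^{d-1} e^{-\beta E_k})^2}$, and similarly $b_0$ has entries $p_{jj}(\beta')$. Since all entries are non-negative, the one-norm is just the sum of the entries, so
\begin{equation}
\lVert r_0 \rVert = \frac{\sum_{j=0}^{d-1} e^{-2\beta_R E_j}}{\left( \sum_{k=0}^{d-1} e^{-\beta_R E_k} \right)^2}, \qquad \lVert b_0 \rVert = \frac{\sum_{j=0}^{d-1} e^{-2\beta' E_j}}{\left( \sum_{k=0}^{d-1} e^{-\beta' E_k} \right)^2}.
\end{equation}
So the claim is equivalent to showing that the function
\begin{equation}
g(\beta) = \frac{\sum_{j} e^{-2\beta E_j}}{\left( \sum_{k} e^{-\beta E_k} \right)^2}
\end{equation}
is monotonically non-increasing in $\beta$ on $[0,\infty)$ (or at least that $g(\beta_R) \geq g(\beta')$ whenever $\beta' \leq \beta_R$).

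The key observation is that $g(\beta)$ has a clean probabilistic interpretation: if $X$ and $Y$ are i.i.d.\ random variables each taking value $E_j$ with Gibbs probability $q_j(\beta) = e^{-\beta E_j}/\sum_k e^{-\beta E_k}$, then $g(\beta) = \sum_j q_j(\beta)^2 = \Pr[X = Y]$, the collision probability (equivalently $e^{-H_2(q(\beta))}$ where $H_2$ is the Rényi-2 entropy). So the statement to prove is that the Gibbs distribution's collision probability is non-increasing as the temperature rises, i.e.\ as $\beta$ decreases. I would prove $\frac{d}{d\beta} g(\beta) \geq 0$ by direct differentiation: writing $Z(\beta) = \sum_k e^{-\beta E_k}$ and $S_2(\beta) = \sum_j e^{-2\beta E_j}$, one gets
\begin{equation}
g'(\beta) = \frac{S_2'(\beta) Z(\beta)^2 - S_2(\beta)\, 2 Z(\beta) Z'(\beta)}{Z(\beta)^4} = \frac{S_2'(\beta) Z(\beta) - 2 S_2(\beta) Z'(\beta)}{Z(\beta)^3},
\end{equation}
and since $S_2'(\beta) = -2\sum_j E_j e^{-2\beta E_j}$ and $Z'(\beta) = -\sum_k E_k e^{-\beta E_k}$, the numerator becomes $2\left[ -\left(\sum_j E_j e^{-2\beta E_j}\right) Z(\beta) + S_2(\beta) \sum_k E_k e^{-\beta E_k} \right]$. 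Dividing through by $Z(\beta)$ and recognizing Gibbs averages, this is proportional to $\langle E \rangle_q \langle q \rangle_q - \langle E q \rangle_q$ type expression; more precisely, after normalizing, $g'(\beta) \geq 0$ reduces to a covariance inequality: $\mathrm{Cov}_q(E_j, q_j) \leq 0$, which holds because $E_j$ is non-decreasing in $j$ while $q_j(\beta) = e^{-\beta E_j}/Z$ is non-increasing in $E_j$ (for $\beta \geq 0$), so the two sequences are oppositely ordered and their covariance under any distribution is non-positive (Chebyshev's sum inequality / FKG-type argument).

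The main obstacle — really the only non-routine point — is packaging the differentiation cleanly so that the final inequality visibly reduces to "two oppositely monotone sequences have non-positive covariance." One should be careful about the weighting distribution in that covariance step: the natural choice is to weight index $j$ by $q_j(\beta)^2 / S_2(\beta)$ (the "collision-conditioned" distribution) or by $q_j(\beta)$ itself, and it takes a line of algebra to see which grouping of terms yields a genuine covariance. Once that is set up, Chebyshev's sum inequality finishes it immediately, and the boundary case $\beta' = \beta_R$ gives equality. An alternative, if one prefers to avoid calculus, is to prove $g(\beta_R) \geq g(\beta')$ directly from $\beta' \leq \beta_R$ by noting $\mathfrak{D}(\tau_{\tilde H_0}(\beta_R)) \succ \mathfrak{D}(\tau_{\tilde H_0}(\beta'))$ (as already used in the proof of Lemma~\ref{lemma:ribi}, with $\tilde H_0 = \sum_j 2E_j \ket{j}\bra{j}$) together with the fact that $v \mapsto \sum_i v_i^2$ is Schur-convex, hence majorization-monotone; this gives $\lVert r_0 \rVert / \lVert r_0 \rVert^2 \cdot (\dots)$ — more carefully, it gives $\sum_j (r_0)_j^2 / \lVert r_0\rVert^2 \geq \sum_j (b_0)_j^2/\lVert b_0\rVert^2$, and combined with $\lVert r_0 \rVert, \lVert b_0\rVert$ being normalization-related one recovers the claim after a short manipulation. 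I would present the Schur-convexity route as it is shortest and reuses machinery already in the paper.
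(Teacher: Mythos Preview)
Your differentiation route is exactly the paper's proof (its proof idea is literally ``verify that $\partial_\beta \lVert r_0 \rVert \geq 0$''); note the slip where you wrote ``non-increasing in $\beta$''---you mean non-\emph{decreasing}, as your parenthetical and everything afterward correctly have it.

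Your preferred Schur-convexity route is genuinely different and slicker, but the description takes an unnecessary detour through $\tilde H_0$ and the normalized vectors $r_0/\lVert r_0\rVert$, $b_0/\lVert b_0\rVert$, and the promised ``short manipulation'' to get from $\sum_j (r_0)_j^2/\lVert r_0\rVert^2 \geq \sum_j (b_0)_j^2/\lVert b_0\rVert^2$ back to $\lVert r_0\rVert \geq \lVert b_0\rVert$ is not actually there. The clean version is one line: with $q(\beta)=\mathfrak{D}(\tau_H(\beta))$ one has $\lVert r_0\rVert = \sum_j q_j(\beta_R)^2$ and $\lVert b_0\rVert = \sum_j q_j(\beta')^2$; since $q(\beta_R)\succ q(\beta')$ for $\beta'\leq\beta_R$ and $v\mapsto\sum_i v_i^2$ is Schur-convex, the inequality is immediate. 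This bypasses the calculus of the paper's proof entirely.
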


\begin{proof}[Proof idea]
The proof consists in verifying that $ \partial_{\beta} \lVert r_0 \rVert \geq 0$. See Appendix VII of \cite{Bakhshinezhad-2019} for more details.
\end{proof}

In dimension 3, this excessive norm can also adequately be passed to the other $b_i$ as the following holds.

\begin{lemma} \label{lemma:r0b1}
For $d=3$, any Hamiltonian and inverse background temperature $\beta$.
\begin{equation}
\frac{r_0}{\lVert r_0 \rVert} \succ \frac{(\mathds{1}+\Pi) b_1}{2 \lVert b_1 \rVert}, \quad \forall \beta' < \beta.
\end{equation}
\end{lemma}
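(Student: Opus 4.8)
The plan is to reduce the claimed majorization to two scalar inequalities and then prove each by a short expansion arranged so that non‑negativity is visible term by term. Since thermal states are invariant under the shift $H \mapsto H - E_0 \mathds{1}$, I would assume without loss of generality that $E_0 = 0$ and set $b := e^{-\beta E_1}$, $c := e^{-\beta E_2}$, so that $1 \ge b \ge c > 0$; writing $t := \beta'/\beta$ (with $t \in [0,1)$ in the relevant range of temperatures) one has $e^{-\beta' E_1} = b^{t}$ and $e^{-\beta' E_2} = c^{t}$. A direct computation then gives
\[
\frac{r_0}{\lVert r_0 \rVert} = \frac{(1,\, b^2,\, c^2)}{1+b^2+c^2}, \qquad \frac{(\mathds{1}+\Pi)\,b_1}{2\lVert b_1 \rVert} = \frac{\bigl(b^{t} + c^{t},\ b^{t} + b^{t} c^{t},\ c^{t} + b^{t} c^{t}\bigr)}{2\,(b^{t} + b^{t} c^{t} + c^{t})} .
\]
Both are probability vectors in $\mathbb{R}^3$. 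Using $1 \ge b \ge c > 0$ and $t \ge 0$, the entries of the first vector are already in decreasing order, and so are those of the second (indeed $b^{t} + c^{t} \ge b^{t} + b^{t} c^{t} \ge c^{t} + b^{t} c^{t}$), so I can read off their largest and smallest components directly.

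For probability vectors $v, w \in \mathbb{R}^3$, the relation $v \succ w$ is equivalent to the two inequalities $\max_i v_i \ge \max_i w_i$ and $\min_i v_i \le \min_i w_i$, the only nontrivial partial sums being the largest entry and --- by normalization --- the sum of all but the smallest. Hence the lemma reduces to
\[
\frac{1}{1+b^2+c^2} \ge \frac{b^{t} + c^{t}}{2(b^{t} + b^{t} c^{t} + c^{t})} \qquad\text{and}\qquad \frac{c^2}{1+b^2+c^2} \le \frac{c^{t} + b^{t} c^{t}}{2(b^{t} + b^{t} c^{t} + c^{t})}.
\]
Clearing denominators in the first inequality, the difference of the two sides equals
\[
2(b^{t} + b^{t} c^{t} + c^{t}) - (b^{t}+c^{t})(1+b^2+c^2) = b^{t}(1-b^2) + c^{t}(1-c^2) + b^{t}(c^{t} - c^2) + c^{t}(b^{t} - b^2),
\]
which is a sum of products of non‑negative factors, since $0 < c \le b \le 1$ forces $1-b^2, 1-c^2 \ge 0$ and $t \le 2$ forces $b^{t} \ge b^2$, $c^{t} \ge c^2$. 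For the second inequality, clearing denominators and dividing by $c^{t} > 0$, the difference of the two sides equals
\[
(1+b^{t})(1+b^2+c^2) - 2c^{2-t}(b^{t} + b^{t} c^{t} + c^{t}) = (1 - b^{t} c^{2-t}) + b^{t}(1 - c^{2-t}) + (b^2 - c^2) + b^{t}(b^2 - c^2),
\]
again a sum of non‑negative terms, since $b,c \in (0,1]$ and $2-t > 0$ give $c^{2-t} \le 1$ and $b^{t} c^{2-t} \le 1$, while $b \ge c$ gives $b^2 - c^2 \ge 0$. This establishes both inequalities and hence the lemma.

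I expect the only genuine subtlety to be conceptual rather than computational: one cannot simply drop the averaging operator $\tfrac{1}{2}(\mathds{1}+\Pi)$, because the ``unsmoothed'' statement $\tfrac{r_0}{\lVert r_0\rVert} \succ \tfrac{b_1}{\lVert b_1\rVert}$ is false in general (for widely spaced energy levels $\tfrac{r_0}{\lVert r_0\rVert}$ is strictly more peaked than $\tfrac{b_1}{\lVert b_1\rVert}$, so its largest entry can be the smaller of the two). It is precisely the flattening induced by $\mathds{1}+\Pi$ that makes the majorization go through. The remaining effort is just to find the groupings displayed above; once they are in hand the verification is immediate, so I would present the proof in this compact form rather than expanding the products in full.
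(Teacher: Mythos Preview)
Your argument is correct: the two displayed regroupings check out line by line, and together with the observation that both vectors are already sorted decreasingly they give the two nontrivial majorization inequalities in $d=3$. The paper, however, takes a structurally different route. Rather than comparing $r_0/\lVert r_0\rVert$ directly with $(\mathds{1}+\Pi)b_1/(2\lVert b_1\rVert)$, it interposes the vector $(\mathds{1}+\Pi)r_1/(2\lVert r_1\rVert)$ and proves the two majorizations
\[
\frac{r_0}{\lVert r_0\rVert}\;\succ\;\frac{(\mathds{1}+\Pi)r_1}{2\lVert r_1\rVert}\;\succ\;\frac{(\mathds{1}+\Pi)b_1}{2\lVert b_1\rVert}
\]
separately, the first being a statement purely at the background temperature and the second a ``monotonicity in $\beta$'' statement. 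Your direct computation is shorter and entirely self-contained; the paper's two-step decomposition is more modular, and it is precisely this modular form that the paper reuses to push the argument (partially) to $d=4$, where the analogous direct regroupings would be considerably less transparent.

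One small slip in your closing remark: in the counterexample you allude to (e.g.\ $E_0=E_1$ and $E_2\to\infty$), it is $b_1/\lVert b_1\rVert$ that is the more peaked of the two, not $r_0/\lVert r_0\rVert$; the largest entry of $r_0/\lVert r_0\rVert$ tends to $\tfrac12$ while that of $b_1/\lVert b_1\rVert$ tends to $1$, so your parenthetical has the direction reversed. The conclusion --- that $(\mathds{1}+\Pi)$ is essential --- is of course right and is exactly the point the paper also stresses.
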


\begin{proof}[Proof idea]
The proof consists of two steps. First, one proves that 
\begin{equation}
\frac{r_0}{\lVert r_0 \rVert} \succ \frac{(\mathds{1}+\Pi) r_1}{2 \lVert r_1 \rVert}.
\end{equation}
One then proceeds to prove that
\begin{equation} \label{eq:pir1pib1}
 \frac{(\mathds{1}+\Pi) r_1}{2 \lVert r_1 \rVert} \succ  \frac{(\mathds{1}+\Pi) b_1}{2 \lVert b_1 \rVert}.
\end{equation}

Using the transitivity of majorization, we get the desired result. The detailed proof of each inequality is found in Appendix IX of \cite{Bakhshinezhad-2019}. Note in particular that \ref{eq:pir1pib1} does not merely follow from
\begin{equation}
\frac{r_1}{\lVert r_1 \rVert} \succ \frac{b_1}{\lVert b_1 \rVert}.
\end{equation}
\end{proof}

Lemma~\ref{lemma:ribi}, Lemma~\ref{lemma:bigr0} and Lemma~\ref{lemma:r0b1} together prove Conjecture~\ref{conj:technical} for dimension 3. Indeed, according to Lemma~\ref{lemma:ribi} and Lemma~\ref{lemma:r0b1}, there exist doubly stochastic matrices $M_{r_0 \rightarrow b_0}$, $M_{r_1 \rightarrow b_1}$, and $M_{r_0 \rightarrow b_1}$ such that
\begin{align}
M_{r_0 \rightarrow b_0} r_0 &= \frac{\lVert r_0 \rVert}{\lVert b_0 \rVert} b_0,\\
M_{r_1 \rightarrow b_1} r_1 &= \frac{\lVert r_1 \rVert}{\lVert b_1 \rVert} b_1,\\
M_{r_0 \rightarrow b_1} r_0 &= \frac{\lVert r_0 \rVert}{2 \lVert b_1 \rVert} (\mathds{1}+\Pi) b_1.
\end{align}

Now let
\begin{align}
M_0 &= \frac{\lVert  b_0 \rVert}{\lVert r_0 \rVert} M_{r_0 \rightarrow b_0} + \left( 1- \frac{\lVert b_0 \rVert}{\lVert r_0 \rVert} \right) M_{r_0 \rightarrow b_1},\\
M_1 &= M_{r_1 \rightarrow b_1}.
\end{align}
$M_0$ is doubly stochastic since according to Lemma~\ref{lemma:bigr0} $ 0 \leq \frac{\lVert b_0 \rVert}{\lVert r_0 \rVert} \leq 1$. Then using that $1=\lVert r_0 \rVert + 2 \lVert r_1 \rVert = \lVert b_0 \rVert + 2 \lVert b_1 \rVert$ we get

\begin{align}
\tilde{\mathbf{p}}&= M_0 r_0 + (\mathds{1} + \Pi)  M_1 r_1\\
&=\frac{\lVert  b_0 \rVert}{\lVert r_0 \rVert} M_{r_0 \rightarrow b_0} r_0 + \left( 1- \frac{\lVert b_0 \rVert}{\lVert r_0 \rVert} \right) M_{r_0 \rightarrow b_1} r_0 + (\mathds{1} + \Pi) M_{r_1 \rightarrow b_1} r_1\\
&= b_0 + \frac{\lVert r_0 \rVert- \lVert b_0 \rVert + 2 \lVert r_1 \rVert}{ 2 \lVert b_1 \rVert} (\mathds{1}+\Pi) b_1\\
&= b_0 + (\mathds{1} + \Pi) b_1,
\end{align}
as desired. We have therefore proven the following.

\begin{theorem}
For dimension 3 the \enquote{passing on the norm} approach allows to reach any marginal $\tilde{\mathbf{p}}$ thermal at $\beta' < \beta_R$ and therefore proves Conjecture~\ref{conj:technical} for $d=3$.
\end{theorem}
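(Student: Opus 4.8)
The plan is to invoke the framework of Section~\ref{sec:framework}. For $d=3$ one has $k=1$, so producing a symmetrically thermalizing unitary amounts, by Problem~\ref{prob:findbetaprime}, to exhibiting doubly stochastic matrices $M_0$ and $M_1$ with $\tilde{\mathbf{p}} = M_0 r_0 + (\mathds{1}+\Pi) M_1 r_1 = b_0 + (\mathds{1}+\Pi) b_1 = \mathfrak{D}(\tau(\beta'))$ for every $\beta' < \beta_R$, where $r_i = \mathfrak{D}(\rho_{AB}|_{\mathcal{H}_i})$ and $b_i$ collects the target Boltzmann products at $\beta'$. The guiding idea is to route the available one-norm through the subspaces: the diagonal block $\mathcal{H}_0$ always carries strictly more mass than is needed to feed $b_0$, so the surplus of $r_0$ is spent on helping to build $(\mathds{1}+\Pi)b_1$, while the mass of $r_1$ is transformed directly into $b_1$.

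First I would record the three majorization relations that power the construction. From Lemma~\ref{lemma:ribi} (for $i=0,1$) the normalized vectors satisfy $r_0/\lVert r_0\rVert \succ b_0/\lVert b_0\rVert$ and $r_1/\lVert r_1\rVert \succ b_1/\lVert b_1\rVert$, and from Lemma~\ref{lemma:r0b1} one has $r_0/\lVert r_0\rVert \succ (\mathds{1}+\Pi)b_1/(2\lVert b_1\rVert)$. Applying the Hardy--Littlewood--P\'olya theorem (Theorem~\ref{thm:Hardy}) to each of these yields doubly stochastic matrices $M_{r_0\to b_0}$, $M_{r_1\to b_1}$, $M_{r_0\to b_1}$ with
\begin{align}
M_{r_0\to b_0}\, r_0 &= \tfrac{\lVert r_0\rVert}{\lVert b_0\rVert}\, b_0,\\
M_{r_1\to b_1}\, r_1 &= \tfrac{\lVert r_1\rVert}{\lVert b_1\rVert}\, b_1,\\
M_{r_0\to b_1}\, r_0 &= \tfrac{\lVert r_0\rVert}{2\lVert b_1\rVert}\,(\mathds{1}+\Pi)\, b_1.
\end{align}

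Next I would assemble the transformation by setting
\begin{align}
M_0 &= \tfrac{\lVert b_0\rVert}{\lVert r_0\rVert}\, M_{r_0\to b_0} + \Bigl(1-\tfrac{\lVert b_0\rVert}{\lVert r_0\rVert}\Bigr) M_{r_0\to b_1},\\
M_1 &= M_{r_1\to b_1}.
\end{align}
By Lemma~\ref{lemma:bigr0}, $0\le \lVert b_0\rVert/\lVert r_0\rVert \le 1$, so $M_0$ is a convex combination of doubly stochastic matrices and hence doubly stochastic; $M_1$ is doubly stochastic by construction. It then remains only a short bookkeeping step: using the normalization identities $1 = \lVert r_0\rVert + 2\lVert r_1\rVert = \lVert b_0\rVert + 2\lVert b_1\rVert$, which hold because both $\rho_{AB}$ and $\tau(\beta')\otimes\tau(\beta')$ are normalized and the Latin-square decomposition partitions the diagonal, one computes $M_0 r_0 + (\mathds{1}+\Pi)M_1 r_1 = b_0 + (\mathds{1}+\Pi)b_1$. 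This exhibits the desired STU$(\beta_R,\beta')$ for all $\beta' < \beta_R$, which is precisely Conjecture~\ref{conj:technical} in dimension $3$ and therefore Conjecture~\ref{conj:general} for $d=3$.

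I expect the real obstacle to sit entirely upstream of this assembly, namely in establishing Lemma~\ref{lemma:r0b1}: the relation $r_0/\lVert r_0\rVert \succ (\mathds{1}+\Pi)b_1/(2\lVert b_1\rVert)$. The natural two-step route --- first $r_0/\lVert r_0\rVert \succ (\mathds{1}+\Pi)r_1/(2\lVert r_1\rVert)$, then $(\mathds{1}+\Pi)r_1/(2\lVert r_1\rVert) \succ (\mathds{1}+\Pi)b_1/(2\lVert b_1\rVert)$ --- is delicate because the second inequality does \emph{not} follow formally from $r_1/\lVert r_1\rVert \succ b_1/\lVert b_1\rVert$; multiplying by $(\mathds{1}+\Pi)$ mixes coordinates and can break a majorization. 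One must therefore verify the partial-sum inequalities by hand for $d=3$, exploiting the explicit three-term structure of $r_1$ and $b_1$ and the monotonicity of the Boltzmann weights in the inverse temperature. Everything else --- the convex-combination argument for $M_0$, the norm accounting via Lemma~\ref{lemma:bigr0}, and the reduction to Problem~\ref{prob:findbetaprime} --- is routine once those majorizations are in place.
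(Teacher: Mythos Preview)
Your proposal is correct and matches the paper's argument essentially line for line: the same three majorization inputs (Lemmas~\ref{lemma:ribi}, \ref{lemma:bigr0}, \ref{lemma:r0b1}), the same convex combination defining $M_0$, and the same norm bookkeeping via $\lVert r_0\rVert+2\lVert r_1\rVert=\lVert b_0\rVert+2\lVert b_1\rVert$. Your closing remark that the nontrivial content sits in Lemma~\ref{lemma:r0b1}, and specifically that $(\mathds{1}+\Pi)r_1/(2\lVert r_1\rVert)\succ(\mathds{1}+\Pi)b_1/(2\lVert b_1\rVert)$ does not follow formally from $r_1/\lVert r_1\rVert\succ b_1/\lVert b_1\rVert$, is exactly the caveat the paper flags.
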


There are various ways that this approach can unfold to higher dimensions. The fact that Lemma~\ref{lemma:ribi} as well as Lemma~\ref{lemma:bigr0} hold for any dimension already gives a good basis to start from. However, the result of Lemma~\ref{lemma:r0b1} proves more difficult to generalize. To this end, note that we choose to shift the excessive norm of $r_0$ to $(\mathds{1}+\Pi) b_1$ and not to $b_1$ and $\Pi b_1$ separately. This is simply because shifting the norm separately is not always possible to do. Indeed, already in dimension 3
\begin{equation}
\frac{r_0}{\lVert r_0 \rVert} \succ \frac{b_1}{\lVert b_1 \rVert}
\end{equation}
does not always hold. For example for $E_2 \rightarrow \infty, E_0=E_1=0$, the greatest component of $r_0 / \lVert r_0 \rVert$ is $\frac{1}{2}$ while that of $b_1 / \lVert b_1 \rVert$ is $1$. This tells us that shifting the norm to a group of terms is advantageous. The fact that for any dimension
\begin{equation} \label{equ:r0majs}
\frac{r_0}{\lVert r_0 \rVert} \succ \frac{s}{\lVert s \rVert} , \quad \text{where } s= \sum_{i=1}^{d-1} r_i,
\end{equation}
holds, see the proof below, suggests that a fruitful generalization of Lemma~\ref{lemma:r0b1} for higher dimension might be to group as many terms as possible together, namely all the $r_i \neq r_0$. However, shifting the norm from $r_0$ to all the other $r_i$ uniformly oversees the fact that different $b_i$ need to have their norm compensated by a different amount. With this given, there are 3 potential avenues we see to generalize the approach to higher dimensions.
\begin{enumerate}
\item Investigate the possibility of shifting some norm between the different $r_i, i \neq 0$ as well as from some $r_i, i \neq 0$ to $r_0$ such that a subsequent uniform shift from $r_0$ to all the $r_i$'s delivers the desired state. In doing so, one has to keep in mind that $(\mathds{1} + \Pi^i)$ mixes $r_i$ before its norm can be shifted, which limits the shifting possibilities.

\item Shift norm from $r_0$ to all the $r_i$ but not uniformly, such that the shifting results in the desired state. This amounts to checking whether
\begin{equation}
r_0 \succ b_0 +\sum_{i=1}^{k} ( \lfloor \tfrac{2 i}{d} \rfloor +1)^{-1} \left( 1- \frac{\lVert r_i \rVert}{\lVert b_i \rVert} \right) (\mathds{1}+\Pi^i) b_i,
\end{equation}
holds.

\item Shift norm from $r_0$ to the simplest non-trivial grouping of $r_i$, namely 
\begin{equation}
(\mathds{1}+\Pi^i) r_i.
\end{equation}
\end{enumerate}

We found option 3 the most tractable alternative and therefore opted for that avenue. Doing so, we were able to prove the following.

\begin{lemma} \label{lemma:dim4passing}
In dimension 4, for any Hamiltonian $H= \sum_{i=0}^3 E_i \ket{i} \bra{i}$ such that $\delta_{i+1} \leq \delta_i$, where $\delta_i = E_{i_1} - E_i$, the following holds for $i=1,2$.
\begin{align}
\frac{r_0}{\lVert r_0 \rVert} &\succ \frac{(\mathds{1}+\Pi^i) b_i}{2 \lVert b_i \rVert},\\
\lVert r_i \rVert &\leq \lVert b_i \rVert.
\end{align}
\end{lemma}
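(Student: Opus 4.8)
The plan is to prove the two claimed majorization/norm inequalities for $d=4$ by reducing them, as in the $d=3$ case, to statements about thermal states of effective Hamiltonians, and then exploiting the ordering assumption $\delta_{i+1}\leq\delta_i$ on the gaps. Let me fix notation: recall from Lemma~\ref{lemma:ribi}'s proof that $r_i/\lVert r_i\rVert = \mathfrak{D}(\tau_{\tilde{H}_i}(\beta_R))$ with $\tilde{H}_i=\sum_{j=0}^{d-1}(E_j+E_{j+i})\ket{j}\bra{j}$, and likewise $b_i/\lVert b_i\rVert=\mathfrak{D}(\tau_{\tilde{H}_i}(\beta'))$ with $\beta'<\beta_R$.

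\paragraph{Step 1: the norm inequality $\lVert r_i\rVert\leq\lVert b_i\rVert$ for $i=1,2$.} First I would write $\lVert r_i\rVert = Z_{\tilde H_i}(\beta_R)/Z_H(\beta_R)^2$ where $Z$ denotes the partition function, and similarly for $b_i$ at $\beta'$. The inequality $\lVert r_i\rVert\leq\lVert b_i\rVert$ is thus equivalent to monotonicity in $\beta$ of $\beta\mapsto \log Z_{\tilde H_i}(\beta)-2\log Z_H(\beta)$, i.e. one checks the sign of its $\beta$-derivative, which is $-\langle E_j+E_{j+i}\rangle_{\tilde H_i,\beta}+2\langle E_j\rangle_{H,\beta}$. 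This is precisely the same computation carried out for $i=0$ in Lemma~\ref{lemma:bigr0} (Appendix VII of \cite{Bakhshinezhad-2019}); for $i=1,2$ the gap-ordering hypothesis $\delta_{i+1}\leq\delta_i$ is what makes the effective spectrum $(E_j+E_{j+i})_j$ sufficiently ``compressed'' relative to $(E_j)_j$ that the derivative has the right sign. Concretely I would expand both averages explicitly in the four Boltzmann weights and factor the difference; the ordering $\delta_1\geq\delta_2\geq\delta_3$ should allow termwise sign control. This part I expect to be routine but slightly tedious.

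\paragraph{Step 2: the majorization $r_0/\lVert r_0\rVert \succ (\mathds{1}+\Pi^i)b_i/(2\lVert b_i\rVert)$.} Following the two-step strategy of Lemma~\ref{lemma:r0b1}, I would first prove the temperature-independent statement
\begin{equation}
\frac{r_0}{\lVert r_0\rVert}\succ\frac{(\mathds{1}+\Pi^i)r_i}{2\lVert r_i\rVert},\qquad i=1,2,
\end{equation}
and then the ``cooling'' statement
\begin{equation}
\frac{(\mathds{1}+\Pi^i)r_i}{2\lVert r_i\rVert}\succ\frac{(\mathds{1}+\Pi^i)b_i}{2\lVert b_i\rVert},
\end{equation}
and combine them by transitivity of majorization. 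For the first, note $r_0/\lVert r_0\rVert=\mathfrak{D}(\tau_{\tilde H_0}(\beta_R))$ is diagonally ordered (its entries $\propto e^{-2\beta_R E_j}$ decrease in $j$), so I only need to compare partial sums of its sorted entries against sorted partial sums of $\tfrac12(\mathds{1}+\Pi^i)r_i$; the vector $\tfrac12(\mathds{1}+\Pi^i)r_i$ is an average of a probability vector with a cyclic shift of itself, hence already ``more mixed'' than $r_i/\lVert r_i\rVert$, which in turn is $\mathfrak{D}(\tau_{\tilde H_i}(\beta_R))$. Here the hypothesis $\delta_{i+1}\leq\delta_i$ enters to guarantee that $\tilde H_0$ has the largest spread among the $\tilde H_i$, so that $\tau_{\tilde H_0}(\beta_R)$ majorizes $\tau_{\tilde H_i}(\beta_R)$ and a fortiori its symmetrized version. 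For the second inequality, the averaging-then-cooling statement does \emph{not} follow from $r_i/\lVert r_i\rVert\succ b_i/\lVert b_i\rVert$ alone (as the excerpt explicitly warns), so I would instead argue directly on partial sums, using that $\tfrac12(\mathds{1}+\Pi^i)$ applied to a decreasingly-ordered vector produces a vector whose largest partial sums are controlled by consecutive pairs, and that passing from $\beta_R$ to $\beta'<\beta_R$ flattens the underlying thermal vector monotonically in each partial sum. In $d=4$ with $i=1$ this is a $4$-component check; for $i=2$ the shift $\Pi^2$ has order $2$ so $\tfrac12(\mathds{1}+\Pi^2)r_2$ consists of two repeated pairs, making the majorization check essentially two-dimensional and easier.

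\paragraph{Main obstacle.} The hard part will be the second majorization in Step 2 --- the interplay between the symmetrization operator $\tfrac12(\mathds{1}+\Pi^i)$ and the temperature change. Because symmetrization and the doubly stochastic ``cooling'' map do not commute and majorization is not preserved under post-composition in general, one genuinely has to track the sorted partial sums of $\tfrac12(\mathds{1}+\Pi^i)b_i$ as functions of $\beta'$ and show each is dominated by the corresponding partial sum of $r_0/\lVert r_0\rVert$; ruling out a crossing for some intermediate $\beta'$ is the delicate point, and this is exactly where the gap-ordering hypothesis must be used in full strength. I would handle it by showing the relevant partial-sum differences are monotone in $\beta'$ (reducing to a derivative sign computation analogous to Step 1) so that it suffices to check the two endpoints $\beta'=\beta_R$ (where it reduces to the first inequality of Step 2) and $\beta'\to0$ (the maximally mixed limit, where it is immediate). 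The detailed endpoint-$\beta'\to0$ and monotonicity computations I would relegate to an appendix, citing the analogous arguments in \cite{Bakhshinezhad-2019}.
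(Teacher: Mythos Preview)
Your proposal is correct and follows essentially the same route as the paper: the two-step majorization argument (first $r_0/\lVert r_0\rVert \succ (\mathds{1}+\Pi^i)r_i/(2\lVert r_i\rVert)$, then $(\mathds{1}+\Pi^i)r_i/(2\lVert r_i\rVert)\succ(\mathds{1}+\Pi^i)b_i/(2\lVert b_i\rVert)$, combined by transitivity) and the norm inequality via $\partial_\beta\lVert r_i\rVert\leq 0$ are exactly what the paper does, deferring the details to Appendix~IX of \cite{Bakhshinezhad-2019}. Your identification of the second majorization as the delicate step, not following from $r_i/\lVert r_i\rVert\succ b_i/\lVert b_i\rVert$ alone, and your proposed monotonicity-in-$\beta'$ endpoint argument are appropriate elaborations of the same strategy.
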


\begin{proof}[Proof idea]
The idea of the proof is the same as that of Lemma~\ref{lemma:r0b1} and Lemma~\ref{lemma:bigr0}. We prove the majorization relation in two steps, namely
\begin{equation}
\frac{r_0}{\lVert r_0 \rVert} \succ \frac{(\mathds{1}+\Pi^i) r_i}{2 \lVert r_i \rVert}
\end{equation}
and
\begin{equation}
\frac{(\mathds{1}+\Pi^i) r_i}{2 \lVert r_i \rVert} \succ \frac{(\mathds{1}+\Pi^i) r_i}{2 \lVert r_i \rVert}.
\end{equation}
The inequality is proven by showing that $\partial_{\beta} \lVert r_i \rVert \leq 0$. For the details of the proof we refer to Appendix IX of \cite{Bakhshinezhad-2019}.
\end{proof}

Lemma~\ref{lemma:dim4passing} fails to be true in the regime $\delta_{i+1} < \delta_i$ as one may find a counter example to the relation
\begin{equation}
\frac{r_0}{\lVert r_0 \rVert} \succ \frac{(\mathds{1} + \Pi) r_1}{2 \lVert r_1 \rVert}
\end{equation}
in that regime. See Appendix IX of~\cite{Bakhshinezhad-2019} for more details. With the result of Lemma~\ref{lemma:dim4passing} we are able to prove the following for dimension 4.

\begin{theorem}
For dimension 4 and Hamiltonians $H= \sum_{i=0}^3 E_i \ket{i} \bra{i}$ such that $\delta_{i+1} \leq \delta_i$,  the \enquote{passing on the norm} approach allows to reach any marginal $\tilde{\mathbf{p}}$ thermal at $\beta' < \beta_R$ and therefore proves Conjecture~\ref{conj:technical} in the restricted case of $\delta_{i+1} \leq \delta_i$ for $d=4$.
\end{theorem}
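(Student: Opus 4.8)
The plan is to deduce this theorem directly from Lemma~\ref{lemma:dim4passing}, combined with the norm-bookkeeping identities that already worked in dimension~3, using the general framework of Section~\ref{sec:framework}. Recall that for $d=4$ we have $k=2$, and the marginal transformation of Eq.~\ref{eq:trafo} reads $\tilde{\mathbf{p}} = M_0 r_0 + \tfrac12(\mathds{1}+\Pi)M_1 r_1 + (\mathds{1}+\Pi^2)M_2 r_2$ (the prefactor $(\lfloor 2i/d\rfloor+1)^{-1}$ is $1$ for $i=1$ and $\tfrac12$ for $i=2$). By Problem~\ref{prob:findbetaprime} and the remark that doubly stochastic suffices (the Lemma following Corollary on unistochastic matrices in Section~\ref{sec:remarks}), it is enough to exhibit doubly stochastic $M_0,M_1,M_2$ with $\tilde{\mathbf{p}}=\mathfrak{D}(\tau(\beta'))= b_0 + \tfrac12(\mathds{1}+\Pi)b_1 + (\mathds{1}+\Pi^2)b_2$ for every $\beta'<\beta_R$, under the standing hypothesis $\delta_{i+1}\le\delta_i$.

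First I would collect the norm data. As in dimension~3 one has the normalization identity $1 = \lVert r_0\rVert + 2\lVert r_1\rVert + 2\lVert r_2\rVert = \lVert b_0\rVert + 2\lVert b_1\rVert + 2\lVert b_2\rVert$ (each $\mathcal H_i$ contributes $d=4$ terms, and the factors $2$ reflect that $r_1,r_2$ each get multiplied by $(\mathds 1+\Pi^i)$ which doubles total mass). Lemma~\ref{lemma:dim4passing} gives, for $i=1,2$, both $\lVert r_i\rVert \le \lVert b_i\rVert$ and the majorization $\frac{r_0}{\lVert r_0\rVert} \succ \frac{(\mathds 1+\Pi^i)b_i}{2\lVert b_i\rVert}$; Lemma~\ref{lemma:ribi} gives $\frac{r_i}{\lVert r_i\rVert}\succ\frac{b_i}{\lVert b_i\rVert}$ for all $i$; and Lemma~\ref{lemma:bigr0} gives $\lVert r_0\rVert \ge \lVert b_0\rVert$. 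By Theorem~\ref{thm:Hardy} (Hardy--Littlewood--P\'olya) these majorizations produce doubly stochastic matrices: $M_{r_0\to b_0}r_0 = \tfrac{\lVert r_0\rVert}{\lVert b_0\rVert}b_0$, $M_{r_i\to b_i}r_i = \tfrac{\lVert r_i\rVert}{\lVert b_i\rVert}b_i$ for $i=1,2$, and $M_{r_0\to b_i}r_0 = \tfrac{\lVert r_0\rVert}{2\lVert b_i\rVert}(\mathds 1+\Pi^i)b_i$ for $i=1,2$ (here I apply Theorem~\ref{thm:Hardy} to the vector $(\mathds1+\Pi^i)b_i/2$, whose norm is $\lVert b_i\rVert$, using that it is majorized by $r_0/\lVert r_0\rVert$ scaled).

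Then I would assemble the pieces. Set $M_1 = M_{r_1\to b_1}$ and $M_2 = M_{r_2\to b_2}$, so that $\tfrac12(\mathds1+\Pi)M_1 r_1 = \tfrac{\lVert r_1\rVert}{2\lVert b_1\rVert}(\mathds1+\Pi)b_1$ and $(\mathds1+\Pi^2)M_2 r_2 = \tfrac{\lVert r_2\rVert}{\lVert b_2\rVert}(\mathds1+\Pi^2)b_2$. It remains to choose $M_0$, a convex combination of $M_{r_0\to b_0}$, $M_{r_0\to b_1}$, $M_{r_0\to b_2}$, so that $M_0 r_0$ supplies exactly the deficits: the coefficient of $b_0$ must come out to $\lVert b_0\rVert$, the coefficient of $\tfrac12(\mathds1+\Pi)b_1$ must come out to $\lVert b_1\rVert - \lVert r_1\rVert \ge 0$, and the coefficient of $(\mathds1+\Pi^2)b_2$ must come out to $\lVert b_2\rVert - \lVert r_2\rVert \ge 0$. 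Concretely, take $M_0 = \lambda_0 M_{r_0\to b_0} + \lambda_1 M_{r_0\to b_1} + \lambda_2 M_{r_0\to b_2}$ with $\lambda_0 = \lVert b_0\rVert/\lVert r_0\rVert$, $\lambda_i = 2(\lVert b_i\rVert-\lVert r_i\rVert)/\lVert r_0\rVert$ for $i=1,2$. Each $\lambda_i\ge 0$ by Lemmas~\ref{lemma:bigr0} and~\ref{lemma:dim4passing}, and $\lambda_0+\lambda_1+\lambda_2 = \bigl(\lVert b_0\rVert + 2\lVert b_1\rVert + 2\lVert b_2\rVert - 2\lVert r_1\rVert - 2\lVert r_2\rVert\bigr)/\lVert r_0\rVert = \lVert r_0\rVert/\lVert r_0\rVert = 1$ by the normalization identity, so $M_0$ is a genuine convex combination of doubly stochastic matrices, hence doubly stochastic by Theorem~\ref{thm:Birkhoff}. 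Plugging in, $M_0 r_0 = \lVert b_0\rVert \cdot \tfrac{b_0}{\lVert b_0\rVert} + 2(\lVert b_1\rVert-\lVert r_1\rVert)\tfrac{(\mathds1+\Pi)b_1}{2\lVert b_1\rVert} + 2(\lVert b_2\rVert-\lVert r_2\rVert)\tfrac{(\mathds1+\Pi^2)b_2}{2\lVert b_2\rVert} = b_0 + (\lVert b_1\rVert-\lVert r_1\rVert)\tfrac{(\mathds1+\Pi)b_1}{\lVert b_1\rVert} + (\lVert b_2\rVert-\lVert r_2\rVert)\tfrac{(\mathds1+\Pi^2)b_2}{\lVert b_2\rVert}$, and adding the $M_1,M_2$ contributions gives $\tilde{\mathbf p} = b_0 + \tfrac12(\mathds1+\Pi)b_1 + (\mathds1+\Pi^2)b_2 = \mathfrak D(\tau(\beta'))$, as required; the existence of the STU then follows from the framework, proving Conjecture~\ref{conj:technical}, hence the theorem.

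The main obstacle is entirely upstream, namely Lemma~\ref{lemma:dim4passing} itself, and in particular the hypothesis $\delta_{i+1}\le\delta_i$ on which it crucially depends --- the excerpt already flags that the lemma fails once $\delta_{i+1}<\delta_i$ is violated, via an explicit counterexample to $\frac{r_0}{\lVert r_0\rVert}\succ\frac{(\mathds1+\Pi)r_1}{2\lVert r_1\rVert}$. Given that lemma, the argument above is pure bookkeeping: checking that the coefficients are nonnegative and sum to one, which is where the normalization identity and the two norm inequalities $\lVert r_0\rVert\ge\lVert b_0\rVert$, $\lVert r_i\rVert\le\lVert b_i\rVert$ do all the work. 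One subtlety worth double-checking is the precise normalization bookkeeping with the $(\lfloor 2i/d\rfloor+1)^{-1}$ prefactors for even $d$, since the $i=2=d/2$ block is the "self-paired" one; I would verify that $\lVert r_2\rVert$ and $\lVert b_2\rVert$ enter the total-mass identity with the correct factor (here $2$, matching the general pattern, because $(\mathds1+\Pi^2)$ still doubles the mass of $b_2$ even though the prefactor is $\tfrac12$) before finalizing.
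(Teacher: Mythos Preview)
Your overall strategy is exactly the paper's: use Lemma~\ref{lemma:ribi} and Lemma~\ref{lemma:dim4passing} to produce doubly stochastic $M_{r_0\to b_0}$, $M_{r_i\to b_i}$, $M_{r_0\to b_i}$, set $M_1,M_2$ to be the straightforward ones, and take $M_0$ as a convex combination whose weights are forced by the norm deficits. The only issue is that you have the $(\lfloor 2i/d\rfloor+1)^{-1}$ prefactors on the wrong terms, and this error propagates through the whole bookkeeping.

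For $d=4$ one has $\lfloor 2\cdot 1/4\rfloor=0$ and $\lfloor 2\cdot 2/4\rfloor=1$, so the prefactor is $1$ for $i=1$ and $\tfrac12$ for $i=2$, i.e.
\[
\tilde{\mathbf p}=M_0 r_0+(\mathds1+\Pi)M_1 r_1+\tfrac12(\mathds1+\Pi^2)M_2 r_2,
\]
not the version you wrote. Correspondingly the target is $b_0+(\mathds1+\Pi)b_1+\tfrac12(\mathds1+\Pi^2)b_2$, and since $r_3=\Pi r_1$ while $r_2=\Pi^2 r_2$ (the $i=d/2$ block is self-paired), the normalization identity is
\[
\lVert r_0\rVert+2\lVert r_1\rVert+\lVert r_2\rVert=1=\lVert b_0\rVert+2\lVert b_1\rVert+\lVert b_2\rVert,
\]
with coefficient $1$, not $2$, on $\lVert r_2\rVert$. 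With your swapped prefactors, the final sum does not close: your $(\mathds1+\Pi)b_1$ coefficient comes out to $\tfrac{2\lVert b_1\rVert-\lVert r_1\rVert}{2\lVert b_1\rVert}$ rather than the required value. Once you correct the prefactors, the right weight on $M_{r_0\to b_2}$ is $\alpha_3=(\lVert b_2\rVert-\lVert r_2\rVert)/\lVert r_0\rVert$ (no factor $2$), the $\alpha$'s sum to~$1$ by the corrected normalization identity, and the computation closes exactly as in the paper. You were right to flag this very point as the subtlety to double-check.
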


\begin{proof}
The results of Lemma~\ref{lemma:ribi} and Lemma~\ref{lemma:dim4passing} ensure the existence of doubly stochastic matrices $M_{r_0 \rightarrow b_0}$, $M_1$, $M_2$, $M_{r_0 \rightarrow b_1}$ and $M_{r_0 \rightarrow b_2}$ such that
\begin{align}
M_{r_0 \rightarrow b_0} r_0 &= \frac{\lVert r_0 \rVert}{\lVert b_0 \rVert} b_0,\\
M_1 r_1 &= \frac{\lVert r_1 \rVert}{\lVert b_1 \rVert} b_1,\\
M_2 r_2 &= \frac{\lVert r_2 \rVert}{\lVert b_2 \rVert} b_2,\\
M_{r_0 \rightarrow b_1} r_0 &= \frac{\lVert r_0 \rVert}{2 \lVert b_1 \rVert} (\mathds{1} + \Pi) b_1,\\
M_{r_0 \rightarrow b_2} r_0 &= \frac{\lVert r_0 \rVert}{2 \lVert b_2 \rVert} (\mathds{1} + \Pi^2) b_2.
\end{align}

Now let
\begin{equation}
M_0 = \alpha_1 M_{r_0 \rightarrow b_0} + \alpha_2 M_{r_0 \rightarrow b1} + \alpha_3 M_{r_0 \rightarrow b_2},
\end{equation}
with
\begin{align}
\alpha_1 &= \frac{\lVert b_0 \rVert}{\lVert r_0 \rVert},\\
\alpha_2 &= 2\frac{\lVert b_1 \rVert-\lVert r_1 \rVert}{\lVert r_0 \rVert},\\
\alpha_3 &= \frac{\lVert b_2 \rVert-\lVert r_2 \rVert}{\lVert r_0 \rVert}.
\end{align}

Lemma~\ref{lemma:dim4passing} ensures that $\alpha_2 \geq 0 $ and $\alpha_3 \geq 0$. Furthermore, using that $\lVert b_0 \rVert + 2 \lVert b_1 \rVert+\lVert b_2 \rVert= \lVert r_0 \rVert+ 2 \lVert r_1 \rVert+\lVert r_2 \rVert = 1$ we have

\begin{equation}
\alpha_1 + \alpha_2 + \alpha_3=1.
\end{equation}
So $M_0$ is doubly stochastic as required and
\begin{align}
\tilde{\mathbf{p}} &= M_0 r_0 + (\mathds{1} + \Pi) M_1 r_1 + \frac{1}{2} ( \mathds{1} + \Pi^2) M_2 r_2\\
&= b_0 + \frac{\alpha_2 \lVert r_0 \rVert + 2 \lVert r_1 \rVert}{ 2 \lVert b_1 \rVert} (\mathds{1} + \Pi) b_1 + \frac{ \alpha_3 \lVert r_0 \rVert + \lVert r_2 \rVert}{\lVert b_2 \rVert} \frac{\mathds{1} + \Pi^2}{2} b_2\\
&= b_0 + (\mathds{1} + \Pi) b_1 + \frac{\mathds{1} + \Pi^2}{2} b_2,
\end{align}
as desired.
\end{proof}

For completeness, as well as for the sake of potential future investigations, before moving on to the next approach we prove the result claimed in Eq.~\ref{equ:r0majs}, namely that

\begin{lemma}
\begin{equation}
\frac{r_0}{\lVert r_0 \rVert} \succ \frac{s}{\lVert s \rVert} , \quad \text{where } s= \sum_{i=1}^{d-1} r_i,
\end{equation}
\end{lemma}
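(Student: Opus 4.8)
The plan is to exploit the product form of the initial diagonal. Since $\rho_{AB}=\tau_A\otimes\tau_B$ with $H_A=H_B=H$, writing $t_j:=\bra{j}\tau(\beta_R)\ket{j}=e^{-\beta_R E_j}/Z$ with $Z=\sum_k e^{-\beta_R E_k}$ we have $p_{jk}=t_jt_k$, so that $\sum_j t_j=1$ and $t_0\ge t_1\ge\dots\ge t_{d-1}>0$. Then $(r_0)_j=t_j^2$, while $s_j=\sum_{i=1}^{d-1}t_j\,t_{(j+i)\bmod d}=t_j\sum_{k\neq j}t_k=t_j(1-t_j)$, because as $i$ runs over $1,\dots,d-1$ the index $(j+i)\bmod d$ runs over every residue except $j$. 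Consequently $\lVert r_0\rVert=A$ and $\lVert s\rVert=1-A$, where $A:=\sum_k t_k^2\in(0,1)$ (the upper bound $A<1$ uses $T_R>0$, i.e. that at least two $t_k$ are nonzero).

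Next I would observe that both normalized vectors $u:=r_0/\lVert r_0\rVert$ and $v:=s/\lVert s\rVert$ are already arranged in decreasing order, which reduces the majorization claim to comparing ordinary partial sums. For $u$ this is clear since $t_j^2$ is decreasing. For $v$ it follows from $s_j-s_k=(t_j-t_k)(1-t_j-t_k)$, which is nonnegative whenever $j\le k$: indeed then $t_j\ge t_k$, and $t_j+t_k\le\sum_i t_i=1$.

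I would then compute the entrywise difference $u_j-v_j=\dfrac{t_j^2}{A}-\dfrac{t_j(1-t_j)}{1-A}=\dfrac{t_j(t_j-A)}{A(1-A)}$, so that for each $l$ the partial sum is $\sum_{j=0}^{l}(u_j-v_j)=\dfrac{1}{A(1-A)}\bigl(Q_l-A\,\sigma_l\bigr)$ with $Q_l:=\sum_{j=0}^{l}t_j^2$ and $\sigma_l:=\sum_{j=0}^{l}t_j$. The core of the argument is the inequality $Q_l\ge A\,\sigma_l$ for all $l$, equivalently $Q_l/\sigma_l\ge Q_{d-1}/\sigma_{d-1}=A$. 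This follows from monotonicity of $l\mapsto Q_l/\sigma_l$: passing from $l$ to $l+1$ preserves the inequality $Q_l/\sigma_l\ge Q_{l+1}/\sigma_{l+1}$ precisely because $t_{l+1}\,\sigma_l\le Q_l$, which holds since $t_j\ge t_{l+1}$ for every $j\le l$. Since $Q_{d-1}-A\,\sigma_{d-1}=A-A=0$, all the partial sums $\sum_{j=0}^{l}(u_j-v_j)$ are nonnegative and the last one vanishes; as $u$ and $v$ are both sorted, this is exactly $u\succ v$, i.e. $\tfrac{r_0}{\lVert r_0\rVert}\succ\tfrac{s}{\lVert s\rVert}$.

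I do not expect a real obstacle: once the product form is used the rest is elementary partial-sum bookkeeping. The one step that genuinely needs care is establishing that $v$ is itself in decreasing order, since it is only this fact that lets the inequality $\sum_{j=0}^{l}(u_j-v_j)\ge 0$ be read off as majorization rather than having to bound $v$ against its worst-ordered rearrangement; the only degenerate case to exclude, $\beta_R=\infty$ (pure initial state, $\lVert s\rVert=0$), does not arise because the background temperature is assumed strictly positive, which simultaneously guarantees $A(1-A)>0$.
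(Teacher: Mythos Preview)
Your proof is correct and follows essentially the same approach as the paper: identify $(r_0)_j=t_j^2$ and $s_j=t_j(1-t_j)$, verify that both normalized vectors are already in decreasing order, and then establish the partial-sum inequalities. The only cosmetic difference is in the last algebraic step: the paper cross-multiplies the partial-sum inequality and arrives at the manifestly nonnegative double sum $\sum_{i\le k}\sum_{j>k} p_ip_j(p_i-p_j)\ge 0$, whereas you phrase the same content as monotonicity of $Q_l/\sigma_l$; both routes are equivalent and equally elementary.
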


\begin{proof}
	Let
	\begin{align}
	x&=\frac{r_0}{\lVert r_0 \rVert},\\
	y&= \frac{s}{\lVert s \rVert}.
	\end{align}
	Note that $x_i=p_{i}^2/ \lVert r_0 \rVert$, $y_i = p_i (1-p_i) / \lVert s_0 \rVert$ and that $\lVert r_0 \rVert= \sum_{i=0}^{d-1} p_i^2$, $\lVert s_0 \rVert= 1- \lVert r_0 \rVert$. In order to prove our assertion, we need to prove $x \succ y$, i.e.,
	
	\begin{align}
	\sum_{i=0}^{k} x_i^{\downarrow} &\geq \sum_{i=0}^k y_i^{\downarrow}, \; \forall k=0, \dots, d-2\\
	\sum_{i=0}^{d-1} x_i^{\downarrow} &= \sum_{i=0}^{d-1} y_i^{\downarrow}
	\end{align}
	
	The equality condition is ensured since both $x$ and $y$ are normalised. Alternatively, it is also easily directly verifiable. To verify the inequality conditions, note that since the $p_i$'s form a probability distribution, i.e. $ 0 \leq p_i \leq 1$ and $\sum_{i=0}^{d-1}p_i =1$, and that $p_i \geq p_{i+1}$, we have $ p_i^2 \geq p_{i+1}^2$ and
	\begin{align}
	x_i &\geq x_{i+1},\\
	y_i &\geq y_{i+1}.
	\end{align}
	This implies $x_{i}^{\downarrow}=x_i$ and $y_{i}^{\downarrow}=y_i$. To conclude the proof, we calculate given $k \in {0, \dots, d-2}$,
	
	\begin{align}
	&\sum_{i=0}^{k} x_i \geq \sum_{i=0}^k y_i\\
	\Leftrightarrow & \frac{\sum_{i=0}^k p_i^2}{\sum_{j=0}^{d-1} p_j^2} \geq \frac{\sum_{i=0}^k p_i (1-p_i)}{\sum_{j=0}^{d-1} p_j (1-p_j)}\\
	\Leftrightarrow & \frac{1}{1+\frac{\sum_{j=k+1}^{d-1} p_j^2}{\sum_{i=0}^{k} p_i^2}} \geq \frac{1}{1+\frac{\sum_{j=k+1}^{d-1} p_j (1-p_j)}{\sum_{i=0}^{k} p_i (1-p_i)}}\\
	\Leftrightarrow & \frac{\sum_{j=k+1}^{d-1} p_j (1-p_j)}{\sum_{i=0}^{k} p_i (1-p_i)} \geq \frac{\sum_{j=k+1}^{d-1} p_j^2}{\sum_{i=0}^{k} p_i^2} \\
	\Leftrightarrow & \sum_{i=0}^k \sum_{j=k+1}^{d-1} [ p_i^2 p_j (1-p_j) - p_j^2 p_i (1-p_i)] \geq 0\\
	\Leftrightarrow & \sum_{i=0}^k \sum_{j=k+1}^{d-1} p_i p_j\underbrace{[ p_i (1-p_j) - p_j (1-p_i)]}_{=p_i-p_j} \geq 0.
	\end{align}
	The last relation holds true for any $k \in {0,\dots, d-2}$ since $p_i - p_j \geq 0$ for all $i=0, \dots, k$ and $j=k+1, \dots d-1$.
\end{proof}

\section{Geometric approach} \label{sec:geometric}
We now would like to turn out attention to yet another way of tackling Conjecture~\ref{conj:technical} that makes use of our framework of Section~\ref{sec:framework}. This will allow us to prove Conjecture~\ref{conj:technical} for dimension 3 and 4. This approach is based on the geometry of doubly stochastic matrices, that is on the fact that, according to Theorem~\ref{thm:Birkhoff}, they build a polytope, the vertices of which are the permutation matrices. Applying this to our transformation $\tilde{\mathbf{p}}$, we get that the reachable $\tilde{\mathbf{p}}$ are contained within the polytope that has as vertices the following points

\begin{align}
 \mathbf{\tilde{p}}^{(i_{0},i_{1},\ldots,i_{k})}
    &= P^{(i_{0})} r_0 +
    \sum_{n=1}^{k}
    (\lfloor\tfrac{2n}{d}\rfloor+1)^{-1}
    (\mathds{1}+ \Pi^{n})\,P^{(i_{n})} r_{n},
\label{eq:EqualEvoMarginal}
\end{align}

where $i_j \in \{0,\dots,d!-1\}$ for all $j = 0, \dots, k$, and where $P^{(0)}, \dots, P^{(d!-1)}$ denote the $d!$ permutation matrices in dimension $d$. To solve Conjecture~\ref{conj:technical}, one needs to show that the curve $\mathfrak{D}(\tau(\beta'))$ for $\beta' \leq \beta_R$ is within this polytope, that is we must show
\begin{equation}
\{\mathfrak{D}(\tau(\beta')) \mid \beta' \leq \beta_R\} \subset \text{conv} \{\mathbf{\tilde{p}}^{(i_{0},i_{1},\ldots,i_{k})} \mid i_0,\dots,i_k = 0 , \dots, d!-1 \},
\end{equation}

where $\text{conv} \{ a_0, \dots, a_n\}$ denotes the set of all convex combinations of the points $a_0, \dots, a_n$. Note that the number of vertices of the polytope rapidly grows with the dimension of the problem as there are $(d!)^{k+1}$ vertices in dimension $d$. For dimension 3 the polytope therefore has $(3!)^2 = 36$ vertices and for dimension 4, $(4!)^3=13'824$. To nevertheless keep the problem tractable it is therefore crucial to select the relevant vertices or points of the polytope adequately. To this end, visualizing the thermal curve as well the polytope is of great help to build intuition, see Figure~\ref{fig:polytope} for a depiction in dimension 3.

\begin{figure*}
(a)\includegraphics[width=0.5\textwidth,trim={0cm 0mm 0cm 0mm}]{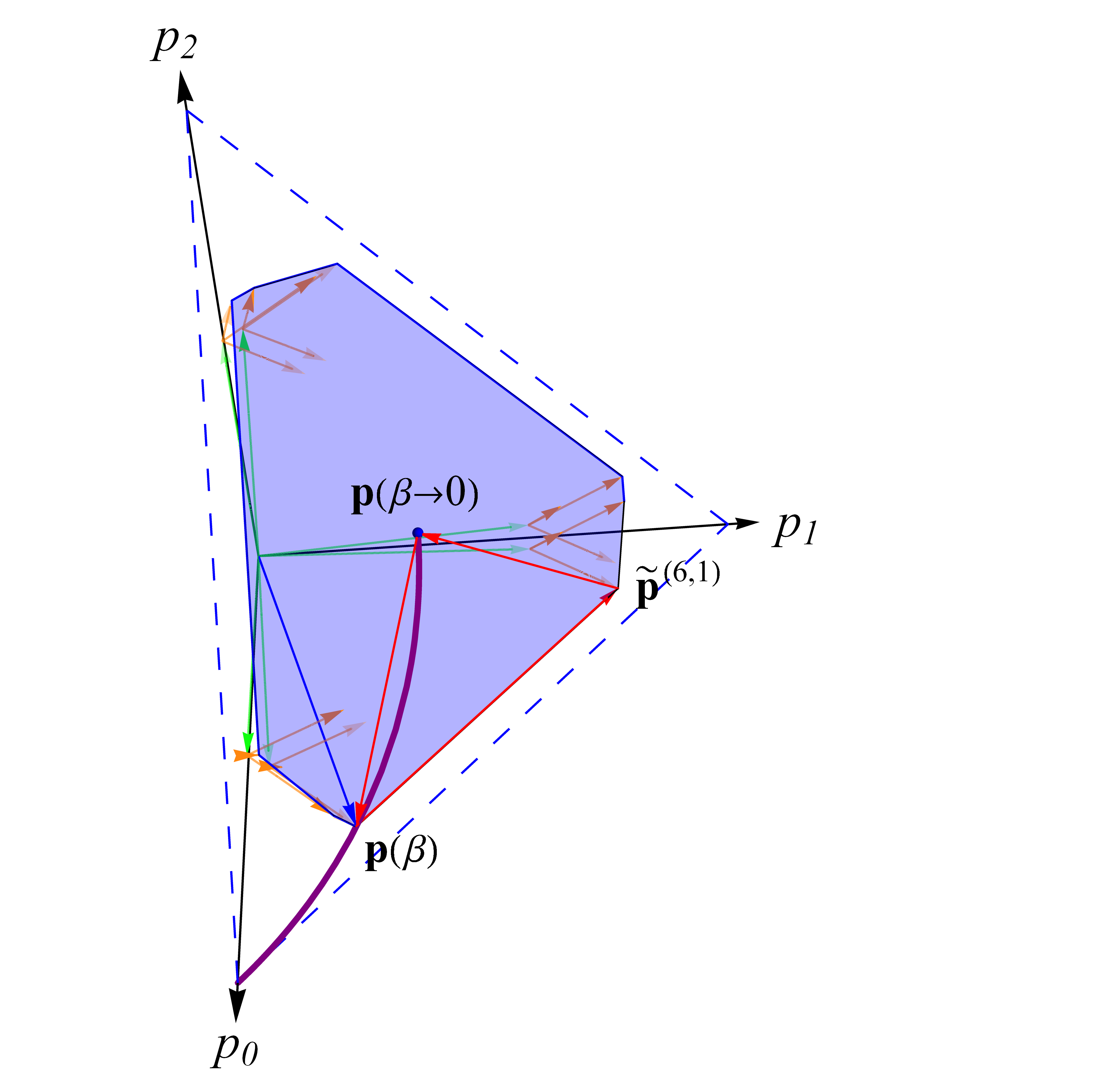}
(b)\ \includegraphics[width=0.35\textwidth,trim={0cm 0mm 0cm 0mm}]{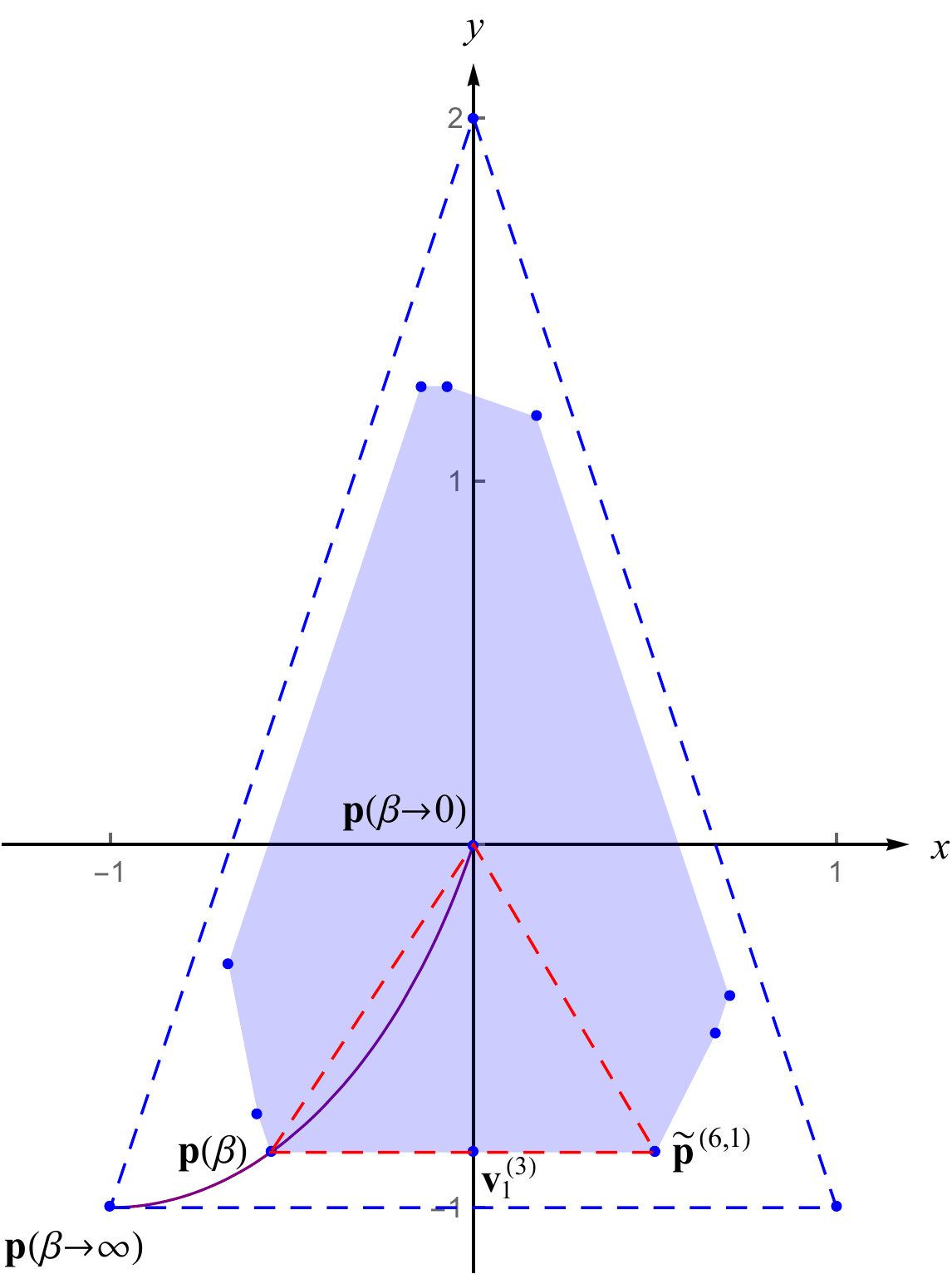}
\caption{\textbf{Polytope of reachable marginals}. (a) The axes show the components $p_{i}$ (with $i=0,1,2$) in the standard basis of $\mathds{R}^3$, $\{e_{i}\}_{i=0,1,2}$ and the standard simplex is indicated by the dashed blue triangle. The parameter values chosen for the illustration are $\beta_R=1.35 E_{1}$ and $E_{2}=2 E_{1}$. The thermal states are shown as a solid purple curve from $\mathbf{p}(\beta\rightarrow\infty)=(1,0,0)^{T}$ to $\mathbf{p}(\beta\rightarrow0)=(\tfrac{1}{3},\tfrac{1}{3},\tfrac{1}{3})^{T}$. The shaded blue area shows the polytope of reachable $\tilde{\mathbf{p}}$ from the point $\mathbf{p}(\beta)$. We have chosen to restrict to cyclic permutations on $r_1$ here to illustrate that this is enough for $d=3$, whereas this is no longer the case when $d=4$. (b) The polytope is shown in terms of the new basis $x\equiv q_{0}$ and $y\equiv q_{1}$.}
\label{fig:polytope}
\end{figure*}

To help such a visualization note that as the points
\begin{equation}
\tilde{\mathbf{p}}=\begin{pmatrix}
\tilde{p}_0\\
\vdots\\
\tilde{p}_{d-1}
\end{pmatrix}
\end{equation}
are normalized, i.e., $\sum_{i=0}^{d-1} \tilde{p}_i=1$, they geometrically lie on the standard simplex in $\mathds{R}^d$, and so does our polytope. To ease the analysis it is convenient to get rid of the normalization condition by picking coordinates that parameterize the simplex. We do this by working in the coordinates $\{x_i\}_{i=0,\dots,d-1}$ with
\begin{align}
x_i&= (i+1) \tilde{p}_{i+1} - \sum_{j=0}^i \tilde{p}_j, \quad i=0,\dots, d-2,\\
x_{d-1} &= - \sum_{i=0}^{d-1} \tilde{p}_i.
\end{align}

This way, $x_0, \dots, x_{d-2}$ parameterize the simplex and $x_{d-1}=-1$ for all the points of the simplex and can as such be disregarded. Making this change of coordinates corresponds to performing a change of basis from the standard basis of $\mathds{R}^d$, $(e_i)_{i=0}^{d-1}$ with $(e_i)_j=\delta_{ij}$, to the following basis
\begin{align}
q_i &= \frac{1}{i+2} \left( \frac{-1}{i+1} \sum_{j=0}^i e_j + e_{i+1}\right), \quad i=0,\dots,d-2,\\
q_{d-1}& = - \frac{1}{d} \sum_{j=0}^{d-1} e_j.
\end{align}

This basis is orthogonal but not orthonormal. The basis vectors are normalized such that the coordinate change only deals with integer multiples of the original coordinates. The $d-1$ first basis vectors, $q_0, \dots q_{d-2}$, constitute a basis of the simplex and $q_{d-1}$ is orthogonal to it. One furthermore verifies that indeed

\begin{equation}
\begin{pmatrix}
x_0\\
\vdots\\
x_{d-1}
\end{pmatrix} = B \begin{pmatrix} \tilde{p}_0\\
\vdots\\
\tilde{p}_{d-1} \end{pmatrix},
\end{equation}
where the matrix $B$ is such that $B^{-1}$ is the matrix of which columns are the $q_i$'s, i.e., $(B^{-1})_{ij} = (q_j)_i, \quad \forall i,j = 0,\dots,d-1$. Note that from the definition of the coordinates $(x_i)_{i=0}^{d-1}$,
\begin{equation}
B= \begin{pmatrix}
-1 &1 & \multicolumn{2}{c}{\text{\kern0.5em\smash{\raisebox{-1ex}{\Large 0}}}}\\
& \ddots & \ddots & \\
\multicolumn{2}{c}{\text{\kern0.5em\smash{\raisebox{-1ex}{\Large -1}}}} & -1 & d-1\\
 & & &-1
\end{pmatrix}.
\end{equation}

Remembering that $x_{d-1}=-1$ on the simplex, we work solely with the $d-1$ first coordinates $x_0, \dots, x_{d-2}$. In these new coordinates, the infinite temperature point is the origin, i.e.,
\begin{equation}
\lim_{\beta \rightarrow 0} \mathfrak{D}(\tau(\beta)) = (0,\dots,0),
\end{equation}

and the zero temperature point is
\begin{equation}
\lim_{\beta \rightarrow + \infty} \mathfrak{D}(\tau(\beta)) = (-1, \dots, -1).
\end{equation}

The thermal states furthermore lie on a curve connecting these points that is strictly confined to $x_i \leq 0 , \, i=0,\dots d-2$.\\

To try and reduce the complexity of the problem and not have to deal with all $(d!)^k$ vertices of the polytope, we break down the proof that the thermal curve is contained within the polytope into two steps.
\begin{itemize}
\item Identifying $d$ points $v_0, \dots, v_{d-1}$ of the simplex, the convex hull of which contains the thermal curve for all $\beta' \leq \beta_R$, i.e., identifying $v_0, \dots,v_{d-1}$ such that
\begin{equation}
\{ \mathfrak{D} ( \tau(\beta')) \mid \beta' \leq \beta_R\} \subset \text{conv} \{ v_0, \dots, v_{d-1} \}.
\end{equation} 

\item Proving that these $d$ points are not only within the simplex but also within the polytope of allowed transformations, i.e., proving
\begin{equation}
v_i \in \text{conv} \{ \tilde{\mathbf{p}}^{(i_0,\dots, i_{k})} \mid i_0, \dots, i_k = 0, \dots, d!-1 \}.
\end{equation}
\end{itemize}

We suggest to look at the following $d$ points, which in the $x_i$, $i=0,\dots,d-2$, coordinates read as
\begin{equation}
\begin{aligned}
v_0 &= (x_0(\beta_R), \dots, x_{d-2}(\beta_R))\\
v_1 &= (0,x_0(\beta_R), \dots, x_{d-2}(\beta_R))\\
&\vdots\\
v_{d-2} &= (0, \dots,0, x_{d-2}(\beta_R))\\
v_{d-1} &= (0,\dots,0)
\end{aligned},
\end{equation}
where $x_i(\beta_R)= (i+1) p_{i+1} - \sum_{j=0}^i p_j$. Note that $v_0 = \mathfrak{D}(\tau(\beta_R))$ and $v_{d-1}= \mathfrak{D}(\tau(0))$. That  for all $\beta' \leq \beta_R $, $\mathfrak{D}(\tau(\beta'))$ can be written as a convex combination of the above $v_i$'s can be proven in all dimensions ans so we have

\begin{lemma}
For all dimension $d$ and inverse background temperature $\beta_R$,
\begin{equation}
\{\mathfrak{D}(\tau(\beta')) \mid \beta' \leq \beta_R \} \subset \text{conv} \{v_0,\dots,v_{d-1}\}.
\end{equation}
\end{lemma}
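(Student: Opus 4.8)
The plan is to show that for every $\beta' \leq \beta_R$ the point $\mathfrak{D}(\tau(\beta'))$, written in the coordinates $(x_0,\dots,x_{d-2})$, lies in the convex hull of $v_0,\dots,v_{d-1}$. Since $v_{d-1}$ is the origin and each $v_j$ (for $j=0,\dots,d-2$) is obtained from $v_0 = (x_0(\beta_R),\dots,x_{d-2}(\beta_R))$ by right-shifting the entries and padding with zeros on the left, the set $\{v_0,\dots,v_{d-1}\}$ spans a nested ``staircase'' of points. Concretely, I would write $v_j = (\underbrace{0,\dots,0}_{j},x_0(\beta_R),\dots,x_{d-2-j}(\beta_R))$ for $j=0,\dots,d-2$ and $v_{d-1}=(0,\dots,0)$, and then look for coefficients $\mu_0(\beta'),\dots,\mu_{d-1}(\beta') \geq 0$ summing to $1$ with $\sum_j \mu_j(\beta') v_j = (x_0(\beta'),\dots,x_{d-2}(\beta'))$.

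First I would exploit the staircase structure to solve the linear system explicitly. Reading off the coordinates componentwise, the $0$-th coordinate gives $\mu_0(\beta')\,x_0(\beta_R) = x_0(\beta')$, hence $\mu_0(\beta') = x_0(\beta')/x_0(\beta_R)$; the $1$-st coordinate gives $\mu_0(\beta')\,x_1(\beta_R) + \mu_1(\beta')\,x_0(\beta_R) = x_1(\beta')$, which determines $\mu_1(\beta')$; and inductively each coordinate $i$ determines $\mu_i(\beta')$ in terms of the previously found $\mu_0,\dots,\mu_{i-1}$ and the known quantities $x_j(\beta_R)$, $x_j(\beta')$. Finally $\mu_{d-1}(\beta')$ is fixed by the normalization $\sum_j \mu_j(\beta') = 1$. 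So the existence and uniqueness of a candidate affine combination is automatic; the real content is showing all $\mu_j(\beta') \geq 0$.

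The main obstacle is therefore the positivity of the coefficients, i.e.\ showing $\mu_j(\beta') \in [0,1]$ for all $j$ and all $\beta' \leq \beta_R$. For this I would use the monotonicity properties of the thermal curve already recorded in the excerpt: the curve $\beta' \mapsto \mathfrak{D}(\tau(\beta'))$ lies strictly in the region $x_i \leq 0$, runs from the origin (at $\beta'=0$) to $(-1,\dots,-1)$ (at $\beta'=+\infty$), and $\mathfrak{D}(\tau(\beta')) \prec \mathfrak{D}(\tau(\beta_R))$ whenever $\beta' \leq \beta_R$. A clean way to organize the argument is to note that the map $\beta' \mapsto x_i(\beta')$ is monotone in $\beta'$ for each fixed $i$ (this follows from $x_i(\beta') = (i+1)p_{i+1}(\beta') - \sum_{j\le i} p_j(\beta')$ together with the fact that hotter thermal distributions are majorized by colder ones, so the partial sums $\sum_{j\le i} p_j(\beta')$ are monotone), and that all the $x_i$ scale ``coherently'' between $\beta'=\beta_R$ and $\beta'=0$. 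Using this, the ratio $x_0(\beta')/x_0(\beta_R)$ lies in $[0,1]$, and an induction on $i$ — at each stage bounding the newly solved coefficient using the induction hypothesis and the sign/monotonicity of the $x$'s — shows $\mu_i(\beta') \geq 0$; the normalization then forces $\mu_{d-1}(\beta') \geq 0$ as well, since the remaining $\mu_j$ are each at most $1$ and the total is $1$.

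Assembling these pieces yields the lemma: for every $\beta' \leq \beta_R$ we exhibit an explicit convex combination of $v_0,\dots,v_{d-1}$ equal to $\mathfrak{D}(\tau(\beta'))$, proving
\begin{equation}
\{\mathfrak{D}(\tau(\beta')) \mid \beta' \leq \beta_R \} \subset \mathrm{conv}\{v_0,\dots,v_{d-1}\}.
\end{equation}
I expect that the monotonicity input is exactly where one must be careful — in particular checking that the same ordering of the $x_i$ persists along the whole curve rather than only near the endpoints — but since the claim is purely about the one-parameter thermal family and not about the polytope of reachable marginals, no majorization gymnastics beyond ``hotter is majorized by colder'' should be needed.
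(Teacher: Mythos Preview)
There is a genuine gap: you have misread the points $v_j$. You take $v_j = (\underbrace{0,\dots,0}_{j}, x_0(\beta_R), \dots, x_{d-2-j}(\beta_R))$, i.e., a right-shift of $v_0$. But the paper's $v_j$ is obtained by \emph{zeroing out} the first $j$ coordinates of $v_0$, not by shifting: $v_j = (\underbrace{0,\dots,0}_{j}, x_j(\beta_R), \dots, x_{d-2}(\beta_R))$. The disambiguating entry is $v_{d-2} = (0,\dots,0,x_{d-2}(\beta_R))$; under your reading it would be $(0,\dots,0,x_0(\beta_R))$.

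This is not cosmetic: with your shifted $v_j$'s the claim is actually false. Take $d=3$, $H=\mathrm{diag}(0,\,0.1,\,10)$, $\beta_R=1$, $\beta'=0.1$. One finds $(x_0(\beta_R),x_1(\beta_R))\approx(-0.050,-1.000)$ and $(x_0(\beta'),x_1(\beta'))\approx(-0.004,-0.532)$. Your triangle has vertices $(-0.050,-1.000)$, $(0,-0.050)$, $(0,0)$; solving your triangular system gives $\mu_0\approx 0.08$ and then $\mu_1\approx 9$, hence $\mu_2<0$. The thermal point at $\beta'$ lies well outside your simplex. (With the correct $v_1=(0,-1.000)$ one gets $\mu_0\approx 0.08$, $\mu_1\approx 0.45$, $\mu_2\approx 0.47$, all positive.)

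With the correct $v_j$'s your triangular-solve strategy does work, and it collapses to the clean relation $\sum_{j=0}^{i}\mu_j = x_i(\beta')/x_i(\beta_R)$, so that $\mu_i = x_i(\beta')/x_i(\beta_R) - x_{i-1}(\beta')/x_{i-1}(\beta_R)$ for $1\le i\le d-2$ and $\mu_{d-1}=1-x_{d-2}(\beta')/x_{d-2}(\beta_R)$. Positivity thus reduces to the single statement that $i\mapsto x_i(\beta')/x_i(\beta_R)$ is nondecreasing and bounded above by $1$. That is the actual content of the lemma, and it does \emph{not} follow from ``hotter is majorized by colder'' alone: majorization controls the partial sums $S_i=\sum_{j\le i}p_j$, whereas $-x_i=(i+2)S_i-(i+1)S_{i+1}$ mixes two consecutive partial sums with opposite signs. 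One needs a direct argument using the explicit Gibbs form $p_j\propto e^{-\beta E_j}$, which your inductive sketch does not supply.
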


The difficulty arises in trying to prove that the points $v_0, \dots , v_{d-1}$ lie inside the polytope. One can prove for any dimensions that $v_0$ and $v_{d-1}$ are within the polytope. For dimension 3 we can furthermore construct appropriate matrices $M_0$ and $M_1$ such that 
\begin{equation}
v_1 = M_0 r_0 + (\mathds{1}+\Pi) M_1 r_1.
\end{equation}

This proves Conjecture~\ref{conj:technical} for dimension 3. For dimension 4, the point $v_1$ can be reached with the equivalent matrices than that in dimension 3 and so we find $M_0, M_1$ and $M_2$ such that

\begin{equation}
v_1 = M_0  r_0 + (\mathds{1} + 'Pi) M_1 r_1 + \frac{\mathds{1}+\Pi^2}{2} M_2 r_2.
\end{equation}

To prove that $v_2$ is within the polytope is harder. One can nevertheless prove it by showing that it is the combination of at most 5 vertices of the polytope. For details we refer to~\cite{Bakhshinezhad-2019}. All in all, this delivers the following result.

\begin{theorem}
For dimension 3 and 4 the \enquote{geometric} approach allows to reach any marginal $\tilde{\mathbf{p}}$ thermal at $\beta' \leq \beta_R$ and therefore proves Conjecture~\ref{conj:technical} for $d=3$ and $d=4$.
\end{theorem}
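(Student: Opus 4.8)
The plan is to follow the two-step decomposition already set up in Section~\ref{sec:geometric}: first reduce the thermal curve to a convex hull of the $d$ explicit points $v_0,\dots,v_{d-1}$, then show each $v_i$ lies inside the Birkhoff-type polytope of reachable marginals $\tilde{\mathbf{p}}^{(i_0,\dots,i_k)}$. For $d=3$ the polytope has $36$ vertices and for $d=4$ it has $13{,}824$, so the whole point is to avoid enumerating them and instead exhibit, for each $v_i$, an explicit small convex combination of vertices (equivalently, explicit doubly stochastic matrices $M_0,\dots,M_k$) that produces it. I would work throughout in the simplex coordinates $x_0,\dots,x_{d-2}$ introduced via the matrix $B$, since there the infinite-temperature state is the origin, the zero-temperature state is $(-1,\dots,-1)$, and the thermal curve is confined to the negative orthant $x_i\le 0$; this makes the convex-hull containment checks into elementary inequalities.

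First I would record the easy parts that hold in every dimension. The containment $\{\mathfrak{D}(\tau(\beta'))\mid \beta'\le\beta_R\}\subset\mathrm{conv}\{v_0,\dots,v_{d-1}\}$ is the Lemma already stated, proved by writing $\mathfrak{D}(\tau(\beta'))$ as a convex combination of the $v_i$ using monotonicity of the coordinates $x_i(\beta')$ along the thermal curve. Next, $v_0=\mathfrak{D}(\tau(\beta_R))$ is reached trivially (take all $M_i=\mathds{1}$, which gives $\tilde{\mathbf{p}}=\mathbf{p}$ by Eq.~\ref{eq:trafo}), and $v_{d-1}=\mathfrak{D}(\tau(0))$, the maximally mixed point, is reached by taking all $M_i$ to be the flat doubly stochastic matrix $\frac1d J$ (which is unistochastic, realized by a discrete Fourier transform), since averaging any $r_i$ to uniform and then applying $(\mathds{1}+\Pi^i)$ still yields a multiple of the uniform vector. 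So the real content is the intermediate points $v_1,\dots,v_{d-2}$.

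For $d=3$ only $v_1=(0,x_0(\beta_R))$ remains. The plan is to construct $M_0,M_1$ explicitly so that $M_0 r_0+(\mathds{1}+\Pi)M_1 r_1=v_1$; concretely one can reuse the ``passing on the norm'' machinery — Lemma~\ref{lemma:ribi}, Lemma~\ref{lemma:bigr0}, Lemma~\ref{lemma:r0b1} — to express $v_1$ (which corresponds to a partially-thermalized marginal) as $b_0'+(\mathds{1}+\Pi)b_1'$ for an appropriate intermediate inverse temperature, hence already inside the polytope. Alternatively, and this is what Figure~\ref{fig:polytope} suggests, restricting $M_1$ to cyclic permutations suffices in $d=3$, so one only needs a handful of vertices. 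Either route closes $d=3$.

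For $d=4$ one has $v_1$ and $v_2$. The point $v_1$ is reached by the same construction as in $d=3$ applied blockwise: find $M_0,M_1,M_2$ with $v_1=M_0 r_0+(\mathds{1}+\Pi)M_1 r_1+\tfrac12(\mathds{1}+\Pi^2)M_2 r_2$, using Lemma~\ref{lemma:dim4passing} in the regime $\delta_{i+1}\le\delta_i$ and a direct doubly-stochastic construction otherwise. The genuinely hard step — and the one I expect to be the main obstacle — is showing $v_2$ lies in the polytope: unlike $v_0,v_1,v_{d-1}$ it is not obtained from a single ``uniform shift'' and requires writing it as a convex combination of up to $5$ carefully chosen vertices $\tilde{\mathbf{p}}^{(i_0,i_1,i_2)}$, where the non-cyclic permutations genuinely matter (this is exactly where Lemma~\ref{lemma:commute} and the $d=3$ tricks fail, cf.\ Counterexample~\ref{counterexample:commute}). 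The strategy there is to compute the relevant vertices in the $x$-coordinates as functions of the $p_{ij}$, identify five of them whose convex hull is a simplex containing $v_2$, and verify the barycentric coordinates of $v_2$ are nonnegative for all admissible $\beta_R$ and energy gaps; this is a finite but delicate case analysis, and I would lean on the detailed computation in Appendix of \cite{Bakhshinezhad-2019} rather than reproduce it. Combining all points gives $\{\mathfrak{D}(\tau(\beta'))\mid\beta'\le\beta_R\}\subset\mathrm{conv}\{\tilde{\mathbf{p}}^{(i_0,\dots,i_k)}\}$ in dimensions $3$ and $4$, which by the reduction of Problem~\ref{prob:findbetaprime} establishes Conjecture~\ref{conj:technical}, hence Conjecture~\ref{conj:general}, for $d=3$ and $d=4$.
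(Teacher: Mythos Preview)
Your proposal follows essentially the same two-step decomposition as the paper: reduce to the points $v_0,\dots,v_{d-1}$, handle $v_0$ and $v_{d-1}$ generically, construct explicit $M_i$ for $v_1$, and identify $v_2$ in $d=4$ as the hard case requiring a combination of up to five vertices, with details deferred to \cite{Bakhshinezhad-2019}.

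One correction, though: your first suggested route to $v_1$ via the ``passing on the norm'' lemmas does not work. The point $v_1$ has $x_0=0$, i.e.\ $\tilde{p}_0=\tilde{p}_1$, which for a nondegenerate Hamiltonian is not a thermal marginal at any finite inverse temperature (and at $\beta'=0$ all $x_i$ vanish, not just $x_0$). So $v_1$ cannot be written as $b_0'+(\mathds{1}+\Pi)b_1'$ for an intermediate $\beta'$, and Lemmas~\ref{lemma:ribi}--\ref{lemma:r0b1} do not directly deliver it. Your alternative---an explicit construction of $M_0,M_1$, with cyclic permutations on $r_1$ sufficing as Figure~\ref{fig:polytope} indicates---is precisely what the paper does. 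Similarly, for $v_1$ in $d=4$ there is no need to invoke Lemma~\ref{lemma:dim4passing} or split into the regimes $\delta_{i+1}\lessgtr\delta_i$; the paper states that the same explicit matrices as in $d=3$ carry over without that restriction, and the geometric approach is meant to avoid that case distinction entirely.
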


\chapter{Conclusion and Outlook} \label{chap:corrconcl}

In this Part, we have investigated the question of how much correlations can be generated in two initially uncorrelated thermal systems for a given amount of energy. While for vanishing background temperature the question can be answered in full generality, the problem becomes much more complicated for finite temperatures and not much is known beyond the existence of a bound based on Jaynes' principle. In particular, due to the existence of counter examples, one cannot expect the marginals of the maximally correlated state to be symmetric in that regime. For systems with equal Hamiltonians, $H_A=H_B=H$, however, enough symmetry is put back into the problem such that the counter examples dissolve and our intuition seems to be restored. This lead us to conjecture the form of the maximally correlating unitaries in that regime, namely unitaries that transform thermal sates into states with equal thermal marginals at higher temperature, or in short STU($\beta$, $\beta'$) with $\beta' \leq \beta$. We then provided a framework valid for any $d \times d$ dimensional systems that enabled us to explore this question beyond previous partial results for equally gapped Hamiltonian~\cite{Huber-2015}.

To showcase the versatility of our framework as well as to provide further guidance for a proof (or disproof) of our conjecture, we have discussed 3 different approaches that made use of our framework. The \enquote{majorised marginal} approach allowed us to prove our conjecture for $d=3$ but could not be generalized to higher dimensions. The \enquote{passing on the norm} approach allowed us to prove our conjecture for $d=3$ and generalized to $d=4$ for Hamiltonians satisfying a specific constraint on their energy gaps, namely $\delta_{i+1} \leq \delta_i$. The \enquote{geometric} approach allowed us to prove our conjecture in $d=3$ and $d=4$. For both the \enquote{passing on the norm} and \enquote{geometric} approach we furthermore formulated a set of conditions to extend the proofs to arbitrary dimensions.

This Part addressed the fundamental question of optimal resource inter-convertibility in quantum thermodynamics. The problem at hand is a particular case of the quantum marginal problem~\cite{Christandl-2006} in that it asks the question of which marginals can unitarily be reached from, or are compatible with, a particular global state. The framework we put forward for the symmetric case has been showcased to be highly versatile and can in fact also be extended to cases beyond symmetric marginal transformations. As such, we believe that it might deliver relevant insights for related questions such as that of the catalytic entropy conjecture of~\cite{Boes-2019}. Finally, refining the question of the creation of correlations to that of generating entanglement remains a significant challenge in that context~\cite{Huber-2015, Bruschi-2015, Piccione-2019, Guha-2019}.


\part{Concluding remarks} 

In this thesis, we have treated two tasks of quantum thermodynamics, namely refrigeration, in Part~\ref{part:refri}, and the creation of correlations, in Part~\ref{part:corre}. For each task we were able to build a framework enabling us to derive insightful results that we review in the respective conclusions, Chapter~\ref{ref:concl} and Chapter~\ref{chap:corrconcl}. The main result of Part~\ref{part:refri} consists of a bound on cooling that is both universally valid in the paradigms considered and attainable. The main result of Part~\ref{part:corre} is the proof that, for a fixed amount of energy, it is possible to create the intuitive maximal amount of correlations between two identical systems of dimension 3 and 4. The process of getting these results naturally raised new questions. Some, such as that of the form of the energetically most efficient cooling operations, c.f. Section~\ref{sec:twoopenproblems}, were stringent enough to be formulated in precise terms. Others were convincing enough to be conjectured. This was the case of the reachability of the correlation bound for arbitrary dimension, c.f. Chapter~\ref{chap:cons}. However, most of them had to wait until the respective conclusion and outlook chapters, Chapter~\ref{ref:concl} and Chapter~\ref{chap:corrconcl}, to be formulated.\\

Beyond the naturally raised questions that are more of less close to the specifics of the respectively considered problems, working on these questions also triggered train of thoughts that are less tied to these problems than the field of quantum thermodynamics or research in general. We would next like to take the risk to elaborate on some of them.\\

What our bound on refrigeration hints at, is that pure states do not exist in nature. Indeed, we only really find thermal states around us and unless we have a machine of infinite size, our paradigm does not enable to cool a state to absolute zero. This impossibility to generate pure states is actually found to be valid across many other paradigms and can even be formulated as a general principle, dubbed the 3$^{\text{rd}}$ law, within the resource theory of thermodynamics~\cite{Masanes-2017, Wilming-2017, Scharlau-2018}. This comes in pair with the nonexistence of perfect, i.e., projective, measurements. Indeed, if projective measurements were to naturally exist, then one could simply use these measurements as machines to generate pure states, contradicting the 3$^{\text{rd}}$ law (as well as our result). This nonexistence of projective measurement can actually be proven on formal ground~\cite{Guryanova-2020}. While this does not refute the use of pure states and perfect measurements altogether, it serves to remind ourselves that they are idealizations, and that one can never truly expect to have access to them. For most theoretical investigations and applications, where one is ready to invest amounts of energies that look infinitely large from the system's perspective, i.e., amounts that are orders of magnitude greater than the system's energy, and where approximate pure states are all what is needed, this does not have big implications. However, this does issue a word of caution to be kept in mind for when one is interested in the energetics of quantum systems.\\

Throughout this thesis we have also repeatedly mentioned the fact that the relevant resources for quantum thermodynamics are not as clearly identifiable as those of classical thermodynamics.  This, for us, hints at the broader fact that the right notions treating quantum thermodynamics fully for what it is, as opposed to trying to desperately translate concepts from classical thermodynamics, have yet to be developed. The gap between theoretical approaches that allow/demand to manipulate arbitrarily complex quantum systems and the experimental reality of being able to control at most a few quantum systems, is a speaking example of the discrepancy between our current theoretical model and reality. A potential avenue in that endeavor is to explicitly include the notion of complexity in our description of quantum thermodynamic tasks and of how we bound their performances. Another fruitful avenue in that direction could be to make a better effort at considering correlated states in our protocols. While it is often easier to consider initially uncorrelated states, as is done for example throughout this thesis, it is also true that only correlated states can exhibit quantum phenomena.\\

While investigating both of the tasks considered in this thesis, we wound up having to closely look at the unitary orbit of a given state. That unitaries play such a central role in our quest is of no surprise and is simply due to the fact that time evolution in quantum mechanics is unitary. We found that a great mathematical tool that was of tremendous help to better understand the specifications of this unitary orbits was the theory of majorization. Since unitaries are so ubiquitous in quantum mechanics, we expect the theory of majorization to further be of great help to future research in the field. We also expect that further investigations in the field of quantum thermodynamics and quantum mechanics will foster developing the already well rounded theory of majorization, thereby nourishing the fruitful dialog between mathematics and physics. The number of quantum mechanical problems that majorization theory can prove itself useful to is, however, bound to be limited. Indeed, while quantum mechanics is inherently a theory of matrices or operators, majorization theory is originally a theory of vectors. As such it only cares about the diagonal elements of a matrix if a basis is fixed or of its spectrum if one allows for the basis  to vary.\\

We have also talked at length about the energy expenditure of performing unitary operations in this thesis. While it is a question of theoretical interest in itself and is found to be of practical importance to help us identify the operations that are in principle already out of reach, it is also a fact that to perfectly implement a unitary operation, a perfect measuring device, such as an autonomous clock~\cite{Erker-2017, Mitchison-2019}, is needed. Such a device has, however, an infinite thermodynamic cost associated to it. If one were to attempt to actually account for the total cost of such a machine, then, for the machine to have a chance to theoretically be implementable, one would have to consider an imperfect measuring device allowing us to only perform approximate unitaries. The fundamental question that arises is if the following both conditions can simultaneously be satisfied.

\begin{enumerate}
	\item The total energy needed to operate the machine is not obviously out of reach.
	\item The machine is able to yield a useful operation.
\end{enumerate}

In particular, it would be interesting to see if this restricts the type of operations that a quantum machine can in principle perform.

What this also hints at is the fact that time in quantum mechanics is ultimately this abstract underlying parameter. Quantum theory does not intrinsically allow for time to be measured and this special status of time is well-known to originate great tensions with general relativity. A consistently operational treatment of time in quantum mechanics could help gain more understanding in that regard.\\

Last but not least, while going through the necessary and useful ritual of projecting the future of the field and framing its big open questions, we find it instructive to bear in mind that the technology, i.e., the knowledge about techniques, we produce, influences the society we live in and its social organization~\cite[Chapter~13]{Eriksen-2010}, \cite{Pfaffenberger-1988}. This calls for more awareness of potential consequences of the research directions we explore as well as how we explore them. This for example expresses itself in terms of ecology, but not only, and ultimately calls upon our moral duty towards society when conducting research. Thinking as such also leads to very concrete questions such as: is it moral and acceptable that the military finances fundamental (independent?) research? See for example~\cite{Figl-2019} for a glimpse at the situation in Austria.

But society and its social processes also influence the production of technology. These processes impact what we study, how we study it, and ultimately, how we see things. This suggests a demystification of the objective nature so far apart from any human interaction. It forces us to admit that our understanding of nature is much more socially intertwined and dependent on our particular social context --- and in that sense subjective --- than we would like to admit. This calls for a rupture with the traditional belief that natural sciences are opposed to social sciences. We consider it rather more fruitful to think of research and critical thinking as a whole --- which on its own motivates this excurse. This way of thinking is, in our view, especially pertinent in our present time, when information --- as opposed to disinformation --- has become such a crucial and fragile resource.


\appendix 




\printbibliography[heading=bibintoc]

@article{Alhambra-2019,
   title={Heat-Bath Algorithmic Cooling with optimal thermalization strategies},
   volume={3},
   ISSN={2521-327X},
   url={https://dx.doi.org/10.22331/q-2019-09-23-188},
   journal={Quantum},
   publisher={Verein zur Forderung des Open Access Publizierens in den Quantenwissenschaften},
   author={Alhambra, Álvaro M. and Lostaglio, Matteo and Perry, Christopher},
   year={2019},
   month={Sep},
   pages={188},
  eprint={1807.07974},
    archivePrefix={arXiv},
    primaryClass={quant-ph},
}

@article {Boykin-2002,
	author = {Boykin, P. Oscar and Mor, Tal and Roychowdhury, Vwani and Vatan, Farrokh and Vrijen, Rutger},
	title = {Algorithmic cooling and scalable NMR quantum computers},
	volume = {99},
	number = {6},
	pages = {3388--3393},
	year = {2002},
	url = {https://dx.doi.org/10.1073/pnas.241641898},
	issn = {0027-8424},
	eprint={quant-ph/0106093},
    archivePrefix={arXiv},
    primaryClass={quant-ph},
	journal = {Proceedings of the National Academy of Sciences}
}

@article{Brandao-2013,
  title = {Resource Theory of Quantum States Out of Thermal Equilibrium},
author = "Brand{\~a}o, Fernando G. S. L. and Horodecki, Micha{\l} and Oppenheim, Jonathan and Renes, Joseph M. and Spekkens, Robert W.",
  journal = {Phys. Rev. Lett.},
  volume = {111},
  issue = {25},
  pages = {250404},
  numpages = {5},
  year = {2013},
  month = {Dec},
  publisher = {American Physical Society},
  url = {https://dx.doi.org/10.1103/PhysRevLett.111.250404},
  eprint={1111.3882},
    archivePrefix={arXiv},
    primaryClass={quant-ph}
}

@article{Brandao-2015,
   title={The second laws of quantum thermodynamics},
   volume={112},
   ISSN={1091-6490},
   url={https://dx.doi.org/10.1073/pnas.1411728112},
   number={11},
   journal={Proceedings of the National Academy of Sciences},
   publisher={Proceedings of the National Academy of Sciences},
   author={Brandão, Fernando and Horodecki, Michał and Ng, Nelly and Oppenheim, Jonathan and Wehner, Stephanie},
   year={2015},
   month={Feb},
   pages={3275–3279},
 eprint={1305.5278},
    archivePrefix={arXiv},
    primaryClass={quant-ph}
}

@article{Gour-2015,
title = "The resource theory of informational nonequilibrium in thermodynamics",
journal = "Physics Reports",
volume = "583",
pages = "1 - 58",
year = "2015",
issn = "0370-1573",
url = "https://dx.doi.org/10.1016/j.physrep.2015.04.003",
author = "Gilad Gour and Markus P. M{\"u}ller and Varun Narasimhachar and Robert W. Spekkens and Nicole Yunger Halpern",
    eprint={1309.6586},
    archivePrefix={arXiv},
    primaryClass={quant-ph}
}

@article{Lostaglio-2019,
	year = "2019",
	month = {oct},
	publisher = "{IOP} Publishing",
	volume = {82},
	number = {11},
	pages = {114001},
	author = {Matteo Lostaglio},
	title = {An introductory review of the resource theory approach to thermodynamics},
	journal = {Reports on Progress in Physics},
url={https://dx.doi.org/10.1088/1361-6633/ab46e5},
eprint={1807.11549},
    archivePrefix={arXiv},
    primaryClass={quant-ph}
}

@article{Horodecki-2013,
   title={Fundamental limitations for quantum and nanoscale thermodynamics},
   volume={4},
   ISSN={2041-1723},
   url={https://dx.doi.org/10.1038/ncomms3059},
   number={1},
   journal={Nature Communications},
   publisher={Springer Science and Business Media LLC},
   author={Horodecki, Michał and Oppenheim, Jonathan},
   year={2013},
   month={Jun},
    eprint={1111.3834},
    archivePrefix={arXiv},
    primaryClass={quant-ph}
}

@article{Lostaglio-2015,
   title={Quantum Coherence, Time-Translation Symmetry, and Thermodynamics},
   volume={5},
   ISSN={2160-3308},
   url={https://dx.doi.org/10.1103/PhysRevX.5.021001},
   number={2},
   journal={Physical Review X},
   publisher={American Physical Society (APS)},
   author={Lostaglio, Matteo and Korzekwa, Kamil and Jennings, David and Rudolph, Terry},
   year={2015},
   month={Apr},
    eprint={1410.4572},
    archivePrefix={arXiv},
    primaryClass={quant-ph}
}

@article{Lostaglio-2015b,
   title={Description of quantum coherence in thermodynamic processes requires constraints beyond free energy},
   volume={6},
   ISSN={2041-1723},
   url={https://dx.doi.org/10.1038/ncomms7383},
   number={1},
   journal={Nature Communications},
   publisher={Springer Science and Business Media LLC},
   author={Lostaglio, Matteo and Jennings, David and Rudolph, Terry},
   year={2015},
   month={Mar},
 eprint={1405.2188},
    archivePrefix={arXiv},
    primaryClass={quant-ph}
}

@article{Cwiklinski-2015,
   title={Limitations on the Evolution of Quantum Coherences: Towards Fully Quantum Second Laws of Thermodynamics},
   volume={115},
   ISSN={1079-7114},
   url={https://dx.doi.org/10.1103/PhysRevLett.115.210403},
   number={21},
   journal={Physical Review Letters},
   publisher={American Physical Society (APS)},
   author={Ćwikliński, Piotr and Studziński, Michał and Horodecki, Michał and Oppenheim, Jonathan},
   year={2015},
   month={Nov},
   eprint={1405.5029},
    archivePrefix={arXiv},
    primaryClass={quant-ph}
}

@article{Skrzypczyk-2014,
   title={Work extraction and thermodynamics for individual quantum systems},
   volume={5},
   ISSN={2041-1723},
   url={https://dx.doi.org/10.1038/ncomms5185},
   number={1},
   journal={Nature Communications},
   publisher={Springer Science and Business Media LLC},
   author={Skrzypczyk, Paul and Short, Anthony J. and Popescu, Sandu},
   year={2014},
   month={Jun},
  eprint={1307.1558},
    archivePrefix={arXiv},
    primaryClass={quant-ph}
}

@article{Guryanova-2016,
   title={Thermodynamics of quantum systems with multiple conserved quantities},
   volume={7},
   ISSN={2041-1723},
   url={https://dx.doi.org/10.1038/ncomms12049},
   number={1},
   journal={Nature Communications},
   publisher={Springer Science and Business Media LLC},
   author={Guryanova, Yelena and Popescu, Sandu and Short, Anthony J. and Silva, Ralph and Skrzypczyk, Paul},
   year={2016},
   month={Jul},
eprint={1512.01190},
    archivePrefix={arXiv},
    primaryClass={quant-ph}
}

@article{Masanes-2017,
   title={A general derivation and quantification of the third law of thermodynamics},
   volume={8},
   ISSN={2041-1723},
   url={https://dx.doi.org/10.1038/ncomms14538},
   number={1},
   journal={Nature Communications},
   publisher={Springer Science and Business Media LLC},
   author={Masanes, Lluís and Oppenheim, Jonathan},
   year={2017},
   month={Mar},
 eprint={1412.3828},
    archivePrefix={arXiv},
    primaryClass={quant-ph}
}

@article{Wilming-2017,
   title={Third Law of Thermodynamics as a Single Inequality},
   volume={7},
   ISSN={2160-3308},
   url={https://dx.doi.org/10.1103/PhysRevX.7.041033},
   number={4},
   journal={Physical Review X},
   publisher={American Physical Society (APS)},
   author={Wilming, Henrik and Gallego, Rodrigo},
   year={2017},
   month={Nov},
eprint={1701.07478},
    archivePrefix={arXiv},
    primaryClass={quant-ph}
}

@article{Scharlau-2018,
   title={Quantum Horn’s lemma, finite heat baths, and the third law of thermodynamics},
   volume={2},
   ISSN={2521-327X},
   url={https://dx.doi.org/10.22331/q-2018-02-22-54},
   journal={Quantum},
   publisher={Verein zur Forderung des Open Access Publizierens in den Quantenwissenschaften},
   author={Scharlau, Jakob and Mueller, Markus P.},
   year={2018},
   month={Feb},
   pages={54},
    eprint={1605.06092},
    archivePrefix={arXiv},
    primaryClass={quant-ph}
}

@book{Nielsen-2010, 
place={Cambridge}, 
title={Quantum Computation and Quantum Information: 10th Anniversary Edition}, 
url={https://dx.doi.org/10.1017/CBO9780511976667}, 
publisher={Cambridge University Press}, 
author={Nielsen, Michael A. and Chuang, Isaac L.}, 
year={2010}}

@Inbook{Park-2016,
author="Park, Daniel K.
and Rodriguez-Briones, Nayeli A.
and Feng, Guanru
and Rahimi, Robabeh
and Baugh, Jonathan
and Laflamme, Raymond",
editor="Takui, Takeji
and Berliner, Lawrence
and Hanson, Graeme",
title="Heat Bath Algorithmic Cooling with Spins: Review and Prospects",
bookTitle="Electron Spin Resonance (ESR) Based Quantum Computing",
year="2016",
publisher="Springer New York",
address="New York, NY",
pages="227--255",
isbn="978-1-4939-3658-8",
url="https://dx.doi.org/10.1007/978-1-4939-3658-8_8",
eprint={1501.00952},
    archivePrefix={arXiv},
    primaryClass={quant-ph}
}

@article{Raeisi-2015,
  title = {Asymptotic Bound for Heat-Bath Algorithmic Cooling},
  author = {Raeisi, Sadegh and Mosca, Michele},
  journal = {Phys. Rev. Lett.},
  volume = {114},
  issue = {10},
  pages = {100404},
  numpages = {5},
  year = {2015},
  month = {Mar},
  publisher = {American Physical Society},
  url = {https://dx.doi.org/10.1103/PhysRevLett.114.100404},
eprint={1407.3232},
    archivePrefix={arXiv},
    primaryClass={quant-ph}
}

@article{Rodriguez-Briones-2017,
  title = {Correlation-Enhanced Algorithmic Cooling},
  author = {Rodr\'{\i}guez-Briones, Nayeli A. and Mart\'{\i}n-Mart\'{\i}nez, Eduardo and Kempf, Achim and Laflamme, Raymond},
  journal = {Phys. Rev. Lett.},
  volume = {119},
  issue = {5},
  pages = {050502},
  numpages = {5},
  year = {2017},
  month = {Aug},
  publisher = {American Physical Society},
  url = {https://dx.doi.org/10.1103/PhysRevLett.119.050502},
 eprint={1703.03816},
    archivePrefix={arXiv},
    primaryClass={quant-ph}
}

@article{Rodriguez-Briones-2016,
  title = {Achievable Polarization for Heat-Bath Algorithmic Cooling},
  author = {Rodr\'{\i}guez-Briones, Nayeli Azucena and Laflamme, Raymond},
  journal = {Phys. Rev. Lett.},
  volume = {116},
  issue = {17},
  pages = {170501},
  numpages = {5},
  year = {2016},
  month = {Apr},
  publisher = {American Physical Society},
  url = {https://dx.doi.org/10.1103/PhysRevLett.116.170501},
    eprint={1412.6637},
    archivePrefix={arXiv},
    primaryClass={quant-ph}
}

@inproceedings{Schulman-1999,
author = {Schulman, Leonard J. and Vazirani, Umesh V.},
title = {Molecular Scale Heat Engines and Scalable Quantum Computation},
year = {1999},
isbn = {1581130678},
publisher = {Association for Computing Machinery},
address = {New York, NY, USA},
url = {https://dx.doi.org/10.1145/301250.301332},
booktitle = {Proceedings of the Thirty-First Annual ACM Symposium on Theory of Computing},
pages = {322–329},
numpages = {8},
location = {Atlanta, Georgia, USA},
series = {STOC ’99},
eprint={quant-ph/9804060},
    archivePrefix={arXiv},
    primaryClass={quant-ph}
}

@book{Marshall-2011,
  author = {Marshall, Albert W. and Olkin, Ingram and Arnold, Barry C.},
  url = {https://dx.doi.org/10.1007/978-0-387-68276-1},
  edition = {Second},
  publisher = {Springer},
  title = {Inequalities: Theory of Majorization and its Applications},
  volume = {143},
  year = {2011}
}

@article{Gallego-2016,
	url = {https://dx.doi.org/10.1088/1367-2630/18/10/103017},
	year = 2016,
	month = {oct},
	publisher = {{IOP} Publishing},
	volume = {18},
	number = {10},
	pages = {103017},
	author = {R Gallego and J Eisert and H Wilming},
	title = {Thermodynamic work from operational principles},
	journal = {New Journal of Physics},
 eprint={1504.05056},
    archivePrefix={arXiv},
    primaryClass={quant-ph}
	}

@article{Muller-2018,
   title={Correlating Thermal Machines and the Second Law at the Nanoscale},
   volume={8},
   ISSN={2160-3308},
   url={https://dx.doi.org/10.1103/PhysRevX.8.041051},
   number={4},
   journal={Physical Review X},
   publisher={American Physical Society (APS)},
   author={Müller, Markus P.},
   year={2018},
   month={Dec},
 eprint={1707.03451},
    archivePrefix={arXiv},
    primaryClass={quant-ph}
}

@article{Bakhshinezhad-2019,
   title={Thermodynamically optimal creation of correlations},
   volume={52},
   ISSN={1751-8121},
   url={https://dx.doi.org/10.1088/1751-8121/ab3932},
   number={46},
   journal={Journal of Physics A: Mathematical and Theoretical},
   publisher={IOP Publishing},
   author={Bakhshinezhad, Faraj and Clivaz, Fabien and Vitagliano, Giuseppe and Erker, Paul and Rezakhani, Ali and Huber, Marcus and Friis, Nicolai},
   year={2019},
   month={Oct},
   pages={465303},
  eprint={1904.07942},
    archivePrefix={arXiv},
    primaryClass={quant-ph}
}

@article{Clivaz-2019,
   title={Unifying Paradigms of Quantum Refrigeration: A Universal and Attainable Bound on Cooling},
   volume={123},
   ISSN={1079-7114},
   url={https://dx.doi.org/10.1103/PhysRevLett.123.170605},
   number={17},
   journal={Physical Review Letters},
   publisher={American Physical Society (APS)},
   author={Clivaz, Fabien and Silva, Ralph and Haack, Géraldine and Brask, Jonatan Bohr and Brunner, Nicolas and Huber, Marcus},
   year={2019},
   month={Oct},
   eprint={1903.04970},
    archivePrefix={arXiv},
    primaryClass={quant-ph}
}

@article{Clivaz-2019bis,
   title={Unifying paradigms of quantum refrigeration: Fundamental limits of cooling and associated work costs},
   volume={100},
   ISSN={2470-0053},
   url={https://dx.doi.org/10.1103/PhysRevE.100.042130},
   number={4},
   journal={Physical Review E},
   publisher={American Physical Society (APS)},
   author={Clivaz, Fabien and Silva, Ralph and Haack, Géraldine and Brask, Jonatan Bohr and Brunner, Nicolas and Huber, Marcus},
   year={2019},
   month={Oct},
eprint={1710.11624},
    archivePrefix={arXiv},
    primaryClass={quant-ph}
}

@article{Pusz-1978,
author = "Pusz, W. and Woronowicz, S. L.",
url="https://dx.doi.org/10.1007/BF01614224",
journal = "Comm. Math. Phys.",
number = "3",
pages = "273--290",
publisher = "Springer",
title = "Passive states and KMS states for general quantum systems",
eprint = "https://projecteuclid.org:443/euclid.cmp/1103901491",
volume = "58",
year = "1978"
}

@article{Lenard-1978,
  url = {https://dx.doi.org/10.1007/bf01011769},
  year = {1978},
  month = dec,
  publisher = {Springer Science and Business Media {LLC}},
  volume = {19},
  number = {6},
  pages = {575--586},
  author = {A. Lenard},
  title = {Thermodynamical proof of the Gibbs formula for elementary quantum systems},
  journal = {Journal of Statistical Physics}
}

@article{Alicki-2013,
   title={Entanglement boost for extractable work from ensembles of quantum batteries},
   volume={87},
   ISSN={1550-2376},
   url={https://dx.doi.org/10.1103/PhysRevE.87.042123},
   number={4},
   journal={Physical Review E},
   publisher={American Physical Society (APS)},
   author={Alicki, Robert and Fannes, Mark},
   year={2013},
   month={Apr},
 eprint={1211.1209},
    archivePrefix={arXiv},
    primaryClass={quant-ph}
}

@article{Hovhannisyan-2013,
   title={Entanglement Generation is Not Necessary for Optimal Work Extraction},
   volume={111},
   ISSN={1079-7114},
   url={https://dx.doi.org/10.1103/PhysRevLett.111.240401},
   number={24},
   journal={Physical Review Letters},
   publisher={American Physical Society (APS)},
   author={Hovhannisyan, Karen V. and Perarnau-Llobet, Martí and Huber, Marcus and Acín, Antonio},
   year={2013},
   month={Dec},
    eprint={1303.4686},
    archivePrefix={arXiv},
    primaryClass={quant-ph}
}

@article{Skrzypczyk-2015,
   title={Passivity, complete passivity, and virtual temperatures},
   volume={91},
   ISSN={1550-2376},
   url={https://dx.doi.org/10.1103/PhysRevE.91.052133},
   number={5},
   journal={Physical Review E},
   publisher={American Physical Society (APS)},
   author={Skrzypczyk, Paul and Silva, Ralph and Brunner, Nicolas},
   year={2015},
   month={May},
    eprint={1412.5485},
    archivePrefix={arXiv},
    primaryClass={quant-ph}
}

@article{PerarnauLlobet-2015b,
   title={Most energetic passive states},
   volume={92},
   ISSN={1550-2376},
   url={https://dx.doi.org/10.1103/PhysRevE.92.042147},
   number={4},
   journal={Physical Review E},
   publisher={American Physical Society (APS)},
   author={Perarnau-Llobet, Martí and Hovhannisyan, Karen V. and Huber, Marcus and Skrzypczyk, Paul and Tura, Jordi and Acín, Antonio},
   year={2015},
   month={Oct},
    eprint={1502.07311},
    archivePrefix={arXiv},
    primaryClass={quant-ph}
}

@article{Vitagliano-2018,
   title={Trade-Off Between Work and Correlations in Quantum Thermodynamics},
   ISBN={9783319990460},
   ISSN={2365-6425},
   url={https://dx.doi.org/10.1007/978-3-319-99046-0_30},
   journal={Thermodynamics in the Quantum Regime},
   publisher={Springer International Publishing},
   author={Vitagliano, Giuseppe and Klöckl, Claude and Huber, Marcus and Friis, Nicolai},
   year={2018},
   pages={731–750},
    eprint={1803.06884},
    archivePrefix={arXiv},
    primaryClass={quant-ph}
}

@misc{Christandl-2006,
    title={The Structure of Bipartite Quantum States - Insights from Group Theory and Cryptography},
    author={Matthias Christandl},
    year={2006},
    eprint={quant-ph/0604183},
    archivePrefix={arXiv},
    primaryClass={quant-ph}
}

@article{Jevtic-2012,
   title={Maximally and Minimally Correlated States Attainable within a Closed Evolving System},
   volume={108},
   ISSN={1079-7114},
   url={https://dx.doi.org/10.1103/PhysRevLett.108.110403},
   number={11},
   journal={Physical Review Letters},
   publisher={American Physical Society (APS)},
   author={Jevtic, Sania and Jennings, David and Rudolph, Terry},
   year={2012},
   month={Mar},
 eprint={1110.2371},
     archivePrefix={arXiv},
    primaryClass={quant-ph}
}

@article{aberg_2014,
   title={Catalytic Coherence},
   volume={113},
   ISSN={1079-7114},
   url={http://dx.doi.org/10.1103/PhysRevLett.113.150402},
   number={15},
   journal={Physical Review Letters},
   publisher={American Physical Society (APS)},
   author={Åberg, Johan},
   year={2014},
   month={Oct},
    eprint={1304.1060},
     archivePrefix={arXiv},
    primaryClass={quant-ph}
}

@article{Bondar-2003,
title = "Schur majorization inequalities for symmetrized sums with applications to tensor products",
journal = "Linear Algebra and its Applications",
volume = "360",
pages = "1 - 13",
year = "2003",
issn = "0024-3795",
url = "https://dx.doi.org/10.1016/S0024-3795(02)00461-5",
author = "James V. Bondar",
}

@article{Allahverdyan-2011,
   title={Thermodynamic limits of dynamic cooling},
   volume={84},
   ISSN={1550-2376},
   url={https://dx.doi.org/10.1103/PhysRevE.84.041109},
   number={4},
   journal={Physical Review E},
   publisher={American Physical Society (APS)},
   author={Allahverdyan, Armen E. and Hovhannisyan, Karen V. and Janzing, Dominik and Mahler, Guenter},
   year={2011},
   month={Oct},
eprint={1107.1044},
    archivePrefix={arXiv},
    primaryClass={cond-mat.stat-mech}
}

@article{Reeb-2014,
   title={An improved Landauer principle with finite-size corrections},
   volume={16},
   ISSN={1367-2630},
   url={https://dx.doi.org/10.1088/1367-2630/16/10/103011},
   number={10},
   journal={New Journal of Physics},
   publisher={IOP Publishing},
   author={Reeb, David and Wolf, Michael M},
   year={2014},
   month={Oct},
   pages={103011},
eprint={1306.4352},
    archivePrefix={arXiv},
    primaryClass={quant-ph}
}

@article{Linden-2010,
   title={How Small Can Thermal Machines Be? The Smallest Possible Refrigerator},
   volume={105},
   ISSN={1079-7114},
   url={https://dx.doi.org/10.1103/PhysRevLett.105.130401},
   number={13},
   journal={Physical Review Letters},
   publisher={American Physical Society (APS)},
   author={Linden, Noah and Popescu, Sandu and Skrzypczyk, Paul},
   year={2010},
   month={Sep},
    eprint={0908.2076},
    archivePrefix={arXiv},
    primaryClass={quant-ph}
}

@article{Skrzypczyk-2011,
   title={The smallest refrigerators can reach maximal efficiency},
   volume={44},
   ISSN={1751-8121},
   url={https://dx.doi.org/10.1088/1751-8113/44/49/492002},
   number={49},
   journal={Journal of Physics A: Mathematical and Theoretical},
   publisher={IOP Publishing},
   author={Skrzypczyk, Paul and Brunner, Nicolas and Linden, Noah and Popescu, Sandu},
   year={2011},
   month={Nov},
   pages={492002},
    eprint={1009.0865},
    archivePrefix={arXiv},
    primaryClass={quant-ph}
}

@article{Brunner-2012,
   title={Virtual qubits, virtual temperatures, and the foundations of thermodynamics},
   volume={85},
   ISSN={1550-2376},
   url={https://dx.doi.org/10.1103/PhysRevE.85.051117},
   number={5},
   journal={Physical Review E},
   publisher={American Physical Society (APS)},
   author={Brunner, Nicolas and Linden, Noah and Popescu, Sandu and Skrzypczyk, Paul},
   year={2012},
   month={May},
    eprint={1106.2138},
    archivePrefix={arXiv},
    primaryClass={quant-ph}
}

@article{Venturelli-2013,
   title={Minimal Self-Contained Quantum Refrigeration Machine Based on Four Quantum Dots},
   volume={110},
   ISSN={1079-7114},
   url={https://dx.doi.org/10.1103/PhysRevLett.110.256801},
   number={25},
   journal={Physical Review Letters},
   publisher={American Physical Society (APS)},
   author={Venturelli, Davide and Fazio, Rosario and Giovannetti, Vittorio},
   year={2013},
   month={Jun},
  eprint={1210.3649},
    archivePrefix={arXiv},
    primaryClass={cond-mat.mes-hall}
}

@article{Mitchison-2016,
   title={Realising a quantum absorption refrigerator with an atom-cavity system},
   volume={1},
   ISSN={2058-9565},
   url={https://dx.doi.org/10.1088/2058-9565/1/1/015001},
   number={1},
   journal={Quantum Science and Technology},
   publisher={IOP Publishing},
   author={Mitchison, Mark T and Huber, Marcus and Prior, Javier and Woods, Mischa P and Plenio, Martin B},
   year={2016},
   month={Mar},
   pages={015001},
  eprint={1603.02082},
    archivePrefix={arXiv},
    primaryClass={quant-ph}
}

@article{Hofer-2016,
   title={Quantum heat engine based on photon-assisted Cooper pair tunneling},
   volume={93},
   ISSN={2469-9969},
   url={https://dx.doi.org/10.1103/PhysRevB.93.041418},
   number={4},
   journal={Physical Review B},
   publisher={American Physical Society (APS)},
   author={Hofer, Patrick P. and Souquet, J.-R. and Clerk, A. A.},
   year={2016},
   month={Jan},
  eprint={1512.02165},
    archivePrefix={arXiv},
    primaryClass={cond-mat.mes-hall}
}

@article{Hofer-2016b,
   title={Autonomous quantum refrigerator in a circuit QED architecture based on a Josephson junction},
   volume={94},
   ISSN={2469-9969},
   url={https://dx.doi.org/10.1103/PhysRevB.94.235420},
   number={23},
   journal={Physical Review B},
   publisher={American Physical Society (APS)},
   author={Hofer, Patrick P. and Perarnau-Llobet, Martí and Brask, Jonatan Bohr and Silva, Ralph and Huber, Marcus and Brunner, Nicolas},
   year={2016},
   month={Dec},
   eprint={1607.05218},
    archivePrefix={arXiv},
    primaryClass={quant-ph}
}

@article{Maslennikov-2019,
   title={Quantum absorption refrigerator with trapped ions},
   volume={10},
   ISSN={2041-1723},
   url={https://dx.doi.org/10.1038/s41467-018-08090-0},
   number={1},
   journal={Nature Communications},
   publisher={Springer Science and Business Media LLC},
   author={Maslennikov, Gleb and Ding, Shiqian and Hablützel, Roland and Gan, Jaren and Roulet, Alexandre and Nimmrichter, Stefan and Dai, Jibo and Scarani, Valerio and Matsukevich, Dzmitry},
   year={2019},
   month={Jan},
eprint={1702.08672},
    archivePrefix={arXiv},
    primaryClass={quant-ph}
}

@article{Levy-2012,
   title={Quantum Absorption Refrigerator},
   volume={108},
   ISSN={1079-7114},
   url={https://dx.doi.org/10.1103/PhysRevLett.108.070604},
   number={7},
   journal={Physical Review Letters},
   publisher={American Physical Society (APS)},
   author={Levy, Amikam and Kosloff, Ronnie},
   year={2012},
   month={Feb},
   eprint={1109.0728},
    archivePrefix={arXiv},
    primaryClass={quant-ph}
}

@article{Bengtsson-2005,
    author = {Bengtsson, Ingemar and Ericsson, {\AA}sa and Ku{\'s}, Marek and Tadej, Wojciech and {\.Z}yczkowski, Karol},
    title = {{Birkhoff's Polytope and Unistochastic Matrices, $N = 3$ and $N = 4$}},
    journal = {Commun. Math. Phys.},
    volume = {259},
    pages = {307--324},
    year = {2005},
    url = {https://dx.doi.org/10.1007/s00220-005-1392-8},
  eprint={math/0402325},
    archivePrefix={arXiv},
    primaryClass={math.CO}
}

@book{Wallis-2017,  
title={Introduction to Combinatorics}, 
isbn={9781498777605}, 
publisher={Chapman and Hall/CRC Press}, 
author={Wallis, Walter D. and  George, John C.}, 
year={2017},
url={https://www.worldcat.org/title/introduction-to-combinatorics-second-edition/oclc/1047943507&referer=brief_results}
}

@article{Bennett-1982,
    author = {Bennett, Charles H.},
    title = {The thermodynamics of computation -- a review},
    journal = {Int. J. Theor. Phys.},
    volume = {21},
    pages = {905--940},
    year = {1982},
    url = {https://dx.doi.org/10.1007/BF02084158}
}

@book{Leff-2003,
    title = {Maxwell Demon 2: Entropy, Classical and Quantum Information, Computing},
    editor = {Leff, Harvey and Rex, Andrew F.},
    publisher = {Institute of Physics},
    address = {Bristol},
    year = {2003},
isbn={9780750307598},
url={https://www.worldcat.org/title/maxwells-demon-2-entropy-classical-and-quantum-information-computing/oclc/50940631}
}

@book{Maxwell-1871,
author = {Maxwell, J. C.},
title = "Theory of Heat",
address={London},
publisher = {Longmans, Green and Co.},
year = {1871},
edition={2d},
url={https://hdl.handle.net/2027/hvd.32044009528449}
}

@ARTICLE{Shannon-1948,
 author={C. E. {Shannon}}, 
journal={The Bell System Technical Journal}, 
title={A mathematical theory of communication}, 
year={1948}, 
volume={27}, 
number={3}, 
pages={379-423},
url={https://dx.doi.org/10.1002/j.1538-7305.1948.tb01338.x}
}

@article{Jaynes-1957,
  title = {Information Theory and Statistical Mechanics},
  author = {Jaynes, E. T.},
  journal = {Phys. Rev.},
  volume = {106},
  issue = {4},
  pages = {620--630},
  numpages = {0},
  year = {1957},
  month = {May},
  publisher = {American Physical Society},
  url = {https://dx.doi.org/10.1103/PhysRev.106.620},
}

@book{Wilde-2013,
 place={Cambridge}, 
title={Quantum Information Theory}, 
url={https://dx.doi.org/10.1017/CBO9781139525343}, 
publisher={Cambridge University Press}, 
author={Wilde, Mark M.}, 
year={2013},
    eprint={1106.1445},
    archivePrefix={arXiv},
    primaryClass={quant-ph}}

@book{Binder-2018,
  editor    = {Felix Binder and Luis A. Correa and Christian Gogolin and Janet Anders and Gerardo Adesso}, 
  title     = {Thermodynamics in the Quantum Regime},
  publisher = {Springer International Publishing},
  year      = {2018},
  volume    = {195},
  series    = {0168-1222},
  edition   = {1},
  isbn      = {978-3-319-99046-0},
url="https://dx.doi.org/10.1007/978-3-319-99046-0"
}

@article{Goold-2016,
    author = {Goold, John and Huber, Marcus and Riera, Arnau and del~Rio, Lidia and Skrzypczyk, Paul},
    title = {The role of quantum information in thermodynamics \textemdash\ a topical review},
    journal = {J. Phys. A: Math. Theor.},
    volume = {49},
    pages = {143001},
    year = {2016},
    url = {https://dx.doi.org/10.1088/1751-8113/49/14/143001},
    eprint={1505.07835},
    archivePrefix={arXiv},
    primaryClass={quant-ph}
}

@article{Vinjanampathy-2016,
    author = {Vinjanampathy, Sai and Anders, Janet},
    title = {{Quantum Thermodynamics}},
    journal = {Contemp. Phys.},
    volume = {57},
    pages = {545--579},
    year = {2016},
    url = {https://dx.doi.org/10.1080/00107514.2016.1201896},
  eprint={1508.06099},
    archivePrefix={arXiv},
    primaryClass={quant-ph}
}

@article{Millen-2016,
    author = {Millen, James and Xuereb, Andr{\'e}},
    title = {Perspective on quantum thermodynamics},
    journal = {New J. Phys.},
    volume = {18},
    pages = {011002},
    year = {2016},
    url = {https://dx.doi.org/10.1088/1367-2630/18/1/011002},
    eprint={1509.01086},
    archivePrefix={arXiv},
    primaryClass={quant-ph}
}

@article{Guryanova-2020,
   title={Ideal Projective Measurements Have Infinite Resource Costs},
   volume={4},
   ISSN={2521-327X},
   url={https://dx.doi.org/10.22331/q-2020-01-13-222},
   journal={Quantum},
   publisher={Verein zur Forderung des Open Access Publizierens in den Quantenwissenschaften},
   author={Guryanova, Yelena and Friis, Nicolai and Huber, Marcus},
   year={2020},
   month={Jan},
   pages={222},
    eprint={1805.11899},
    archivePrefix={arXiv},
    primaryClass={quant-ph}
}

@article{Friis-2016,
    author = {Friis, Nicolai and Huber, Marcus and Perarnau-Llobet, Mart{\'i}},
    title = {Energetics of correlations in interacting systems},
    journal = {Phys. Rev. E},
    volume = {93},
    pages = {042135},
    year = {2016},
    url = {https://dx.doi.org/10.1103/PhysRevE.93.042135},
    eprint={1511.08654},
    archivePrefix={arXiv},
    primaryClass={quant-ph}
}

@article{PerarnauLlobet-2015,
    author = {Perarnau-Llobet, Mart{\'i} and Hovhannisyan, Karen V. and Huber, Marcus and Skrzypczyk, Paul and Brunner, Nicolas and Ac$\acute{\i}$n, Antonio},
    title = {Extractable work from correlations},
    journal = {Phys. Rev. X},
    volume = {5},
    pages = {041011},
    year = {2015},
    url = {https://dx.doi.org/10.1103/PhysRevX.5.041011},
   eprint={1407.7765},
    archivePrefix={arXiv},
    primaryClass={quant-ph}
}

@article{Huber-2015,
    author = {Huber, Marcus and Perarnau-Llobet, Mart{\'i} and Hovhannisyan, K. V. and Skrzypczyk, Paul and Kl{\"o}ckl, Claude and Brunner, Nicolas and Ac$\acute{\i}$n, Antonio},
    title = {Thermodynamic cost of creating correlations},
    journal = {New J. Phys.},
    volume = {17},
    pages = {065008},
    year = {2015},
    url = {https://dx.doi.org/10.1088/1367-2630/17/6/065008},
  eprint={1404.2169},
    archivePrefix={arXiv},
    primaryClass={quant-ph}
}

@article{Bruschi-2015,
    author = {Bruschi, David Edward and Perarnau-Llobet, Mart{\'i} and Friis, Nicolai and Hovhannisyan, Karen V. and Huber, Marcus},
    title = {The thermodynamics of creating correlations: Limitations and optimal protocols},
    journal = {Phys. Rev. E},
    volume = {91},
    pages = {032118},
    year = {2015},
    url = {https://dx.doi.org/10.1103/PhysRevE.91.032118},
    eprint={1409.4647},
    archivePrefix={arXiv},
    primaryClass={quant-ph}
}

@article{Binder-2015,
    author = {Binder, Felix C. and Vinjanampathy, Sai and Modi, Kavan and Goold, John},
    title = {{Quantacell: Powerful charging of quantum batteries}},
    journal = {New J. Phys.},
    volume = {17},
    pages = {075015},
    year = {2015},
    url = {https://dx.doi.org/10.1088/1367-2630/17/7/075015},
    eprint={1503.07005},
    archivePrefix={arXiv},
    primaryClass={quant-ph}
}

@article{Alipour-2016,
    author = {Alipour, Sahar and Benatti, Fabio and Bakhshinezhad, Faraj and Afsary, Maryam and Marcantoni, Stefano and Rezakhani, Ali T.},
    title = {{Correlations in quantum thermodynamics: Heat, work, and entropy production}},
    journal = {Sci. Rep.},
    volume = {6},
    pages = {35568},
    year = {2016},
    url = {https://dx.doi.org/10.1038/srep35568},
    eprint={1606.08869},
    archivePrefix={arXiv},
    primaryClass={quant-ph}
}

@article{Bera-2017,
    author = {Bera, Manabendra Nath and Riera, Arnau and Lewenstein, Maciej and Winter, Andreas},
    title = {{Generalized Laws of Thermodynamics in the Presence of Correlations}},
    journal = {Nat. Commun.},
    volume = {8},
    pages = {2180},
    year = {2017},
    url = {https://dx.doi.org/10.1038/s41467-017-02370-x},
    eprint={1612.04779},
    archivePrefix={arXiv},
    primaryClass={quant-ph}
}

@article{Mueller-2018,
    author = {M{\"u}ller, Markus P.},
    title = {{Correlating thermal machines and the second law at the nanoscale}},
    journal = {Phys. Rev. X},
    volume = {8},
    pages = {041051},
    year = {2018},
    url = {https://dx.doi.org/10.1103/PhysRevX.8.041051},
  eprint={1707.03451},
    archivePrefix={arXiv},
    primaryClass={quant-ph}
}

@article{Bera-2019,
    author = {Bera, Manabendra Nath and Riera, Arnau and Lewenstein, Maciej and Baghali Khanian, Zahra and Winter, Andreas},
    title = {{Thermodynamics as a Consequence of Information Conservation}},
    journal = {Quantum},
    volume = {3},
    pages = {121},
    year = {2019},
    url = {https://dx.doi.org/10.22331/q-2019-02-14-121},
    eprint={1707.01750},
    archivePrefix={arXiv},
    primaryClass={quant-ph}
}

@article{Sapienza-2019,
    author = {Sapienza, Facundo and Cerisola, Federico and Roncaglia, Augusto J.},
    title = {Correlations as a resource in quantum thermodynamics},
    journal = {Nat. Commun.},
    volume = {10},
    pages = {2492},
    year = {2019},
    url = {https://dx.doi.org/10.1038/s41467-019-10572-8},
  eprint={1810.01215},
    archivePrefix={arXiv},
    primaryClass={quant-ph}
}

@article{Jevtic-2012b,
   title={Quantum mutual information along unitary orbits},
   volume={85},
   ISSN={1094-1622},
   url={http://dx.doi.org/10.1103/PhysRevA.85.052121},
   number={5},
   journal={Physical Review A},
   publisher={American Physical Society (APS)},
   author={Jevtic, Sania and Jennings, David and Rudolph, Terry},
   year={2012},
   month={May},
    eprint={1112.3372},
    archivePrefix={arXiv},
    primaryClass={quant-ph}
}

@article{Boes-2019,
   title={Von Neumann Entropy from Unitarity},
   volume={122},
   ISSN={1079-7114},
   url={https://dx.doi.org/10.1103/PhysRevLett.122.210402},
   number={21},
   journal={Physical Review Letters},
   publisher={American Physical Society (APS)},
   author={Boes, Paul and Eisert, Jens and Gallego, Rodrigo and Müller, Markus P. and Wilming, Henrik},
   year={2019},
   month={May},
   eprint={1807.08773},
    archivePrefix={arXiv},
    primaryClass={quant-ph}
}

@misc{Piccione-2019,
    title={Energy bounds for entangled states},
    author={Nicolò Piccione and Benedetto Militello and Anna Napoli and Bruno Bellomo},
    year={2019},
    eprint={1904.02778},
    archivePrefix={arXiv},
    primaryClass={quant-ph}
}

@article{Guha-2019,
   title={Allowed and forbidden bipartite correlations from thermal states},
   volume={100},
   ISSN={2470-0053},
   url={https://dx.doi.org/10.1103/PhysRevE.100.012147},
   number={1},
   journal={Physical Review E},
   publisher={American Physical Society (APS)},
   author={Guha, Tamal and Alimuddin, Mir and Parashar, Preeti},
   year={2019},
   month={Jul},
    eprint={1904.07643},
    archivePrefix={arXiv},
    primaryClass={quant-ph}
}

@article{Abah-2012,
   title={Single-Ion Heat Engine at Maximum Power},
   volume={109},
   ISSN={1079-7114},
   url={https://dx.doi.org/10.1103/PhysRevLett.109.203006},
   number={20},
   journal={Physical Review Letters},
   publisher={American Physical Society (APS)},
   author={Abah, O. and Roßnagel, J. and Jacob, G. and Deffner, S. and Schmidt-Kaler, F. and Singer, K. and Lutz, E.},
   year={2012},
   month={Nov},
    eprint={1205.1362},
    archivePrefix={arXiv},
    primaryClass={quant-ph}
}

@article{Rossnagel-2016,
   title={A single-atom heat engine},
   volume={352},
   ISSN={1095-9203},
   url={https://dx.doi.org/10.1126/science.aad6320},
   number={6283},
   journal={Science},
   publisher={American Association for the Advancement of Science (AAAS)},
   author={Rossnagel, J. and Dawkins, S. T. and Tolazzi, K. N. and Abah, O. and Lutz, E. and Schmidt-Kaler, F. and Singer, K.},
   year={2016},
   month={Apr},
   pages={325–329},
    eprint={1510.03681},
    archivePrefix={arXiv},
    primaryClass={cond-mat.stat-mech}
}

@article{Niedenzu-2016,
   title={On the operation of machines powered by quantum non-thermal baths},
   volume={18},
   ISSN={1367-2630},
   url={https://dx.doi.org/10.1088/1367-2630/18/8/083012},
   number={8},
   journal={New Journal of Physics},
   publisher={IOP Publishing},
   author={Niedenzu, Wolfgang and Gelbwaser-Klimovsky, David and Kofman, Abraham G and Kurizki, Gershon},
   year={2016},
   month={Aug},
   pages={083012},
    eprint={1508.06519},
    archivePrefix={arXiv},
    primaryClass={quant-ph}
}

@article{Niedenzu-2018,
   title={Quantum engine efficiency bound beyond the second law of thermodynamics},
   volume={9},
   ISSN={2041-1723},
   url={https://dx.doi.org/10.1038/s41467-017-01991-6},
   number={1},
   journal={Nature Communications},
   publisher={Springer Science and Business Media LLC},
   author={Niedenzu, Wolfgang and Mukherjee, Victor and Ghosh, Arnab and Kofman, Abraham G. and Kurizki, Gershon},
   year={2018},
   month={Jan},
    eprint={1703.02911},
    archivePrefix={arXiv},
    primaryClass={quant-ph}
}

@article{Mitchison-2015,
   title={Coherence-assisted single-shot cooling by quantum absorption refrigerators},
   volume={17},
   ISSN={1367-2630},
   url={https://dx.doi.org/10.1088/1367-2630/17/11/115013},
   number={11},
   journal={New Journal of Physics},
   publisher={IOP Publishing},
   author={Mitchison, Mark T and Woods, Mischa P and Prior, Javier and Huber, Marcus},
   year={2015},
   month={Nov},
   pages={115013},
    eprint={1504.01593},
    archivePrefix={arXiv},
    primaryClass={quant-ph}
}

@article{Brask-2015,
   title={Small quantum absorption refrigerator in the transient regime: Time scales, enhanced cooling, and entanglement},
   volume={92},
   ISSN={1550-2376},
   url={https://dx.doi.org/10.1103/PhysRevE.92.062101},
   number={6},
   journal={Physical Review E},
   publisher={American Physical Society (APS)},
   author={Brask, Jonatan Bohr and Brunner, Nicolas},
   year={2015},
   month={Dec},
    eprint={1508.02025},
    archivePrefix={arXiv},
    primaryClass={quant-ph}
}

@article{Pozas-Kerstjens-2018,
   title={A quantum Otto engine with finite heat baths: energy, correlations, and degradation},
   volume={20},
   ISSN={1367-2630},
   url={https://dx.doi.org/10.1088/1367-2630/aaba02},
   number={4},
   journal={New Journal of Physics},
   publisher={IOP Publishing},
   author={Pozas-Kerstjens, Alejandro and Brown, Eric G and Hovhannisyan, Karen V},
   year={2018},
   month={Apr},
   pages={043034},
   eprint={1708.06363},
    archivePrefix={arXiv},
    primaryClass={quant-ph}
}

@article{Torrontegui-2017,
   title={Energy consumption for shortcuts to adiabaticity},
   volume={96},
   ISSN={2469-9934},
   url={https://dx.doi.org/10.1103/PhysRevA.96.022133},
   number={2},
   journal={Physical Review A},
   publisher={American Physical Society (APS)},
   author={Torrontegui, E. and Lizuain, I. and González-Resines, S. and Tobalina, A. and Ruschhaupt, A. and Kosloff, R. and Muga, J. G.},
   year={2017},
   month={Aug},
    eprint={1704.06704},
    archivePrefix={arXiv},
    primaryClass={quant-ph}
}

@article{Chubb-2018,
   title={Beyond the thermodynamic limit: finite-size corrections to state interconversion rates},
   volume={2},
   ISSN={2521-327X},
   url={https://dx.doi.org/10.22331/q-2018-11-27-108},
   journal={Quantum},
   publisher={Verein zur Forderung des Open Access Publizierens in den Quantenwissenschaften},
   author={Chubb, Christopher T. and Tomamichel, Marco and Korzekwa, Kamil},
   year={2018},
   month={Nov},
   pages={108},
 eprint={1711.01193},
    archivePrefix={arXiv},
    primaryClass={quant-ph}
}

@article{Brown-2016,
   title={Passivity and practical work extraction using Gaussian operations},
   volume={18},
   ISSN={1367-2630},
   url={https://dx.doi.org/10.1088/1367-2630/18/11/113028},
   number={11},
   journal={New Journal of Physics},
   publisher={IOP Publishing},
   author={Brown, Eric G and Friis, Nicolai and Huber, Marcus},
   year={2016},
   month={Nov},
   pages={113028},
    eprint={1608.04977},
    archivePrefix={arXiv},
    primaryClass={quant-ph}
}

@article{Perry-2018,
   title={A Sufficient Set of Experimentally Implementable Thermal Operations for Small Systems},
   volume={8},
   ISSN={2160-3308},
   url={https://dx.doi.org/10.1103/PhysRevX.8.041049},
   number={4},
   journal={Physical Review X},
   publisher={American Physical Society (APS)},
   author={Perry, Christopher and Ćwikliński, Piotr and Anders, Janet and Horodecki, Michał and Oppenheim, Jonathan},
   year={2018},
   month={Dec},
    eprint={1511.06553},
    archivePrefix={arXiv},
    primaryClass={quant-ph}
}

@article{Sparaciari-2017,
   title={Energetic instability of passive states in thermodynamics},
   volume={8},
   ISSN={2041-1723},
   url={https://dx.doi.org/10.1038/s41467-017-01505-4},
   number={1},
   journal={Nature Communications},
   publisher={Springer Science and Business Media LLC},
   author={Sparaciari, Carlo and Jennings, David and Oppenheim, Jonathan},
   year={2017},
   month={Dec},
    eprint={1701.01703},
    archivePrefix={arXiv},
    primaryClass={quant-ph}
}

@article{Erker-2017,
   title={Autonomous Quantum Clocks: Does Thermodynamics Limit Our Ability to Measure Time?},
   volume={7},
   ISSN={2160-3308},
   url={https://dx.doi.org/10.1103/PhysRevX.7.031022},
   number={3},
   journal={Physical Review X},
   publisher={American Physical Society (APS)},
   author={Erker, Paul and Mitchison, Mark T. and Silva, Ralph and Woods, Mischa P. and Brunner, Nicolas and Huber, Marcus},
   year={2017},
   month={Aug},
eprint={1609.06704},
archivePrefix={arXiv},
primaryClass={quant-ph}
}

@misc{Taranto-2020,
    title={Exponential improvement for quantum cooling through finite memory effects},
    author={Philip Taranto and Faraj Bakhshinezhad and Philipp Schüttelkopf and Marcus Huber},
    year={2020},
    eprint={2004.00323},
    archivePrefix={arXiv},
    primaryClass={quant-ph}
}

@article{Einstein-1905,
author = {Einstein, A.},
title = "{\"U}ber einen die Erzeugung und Verwandlung des Lichtes betreffenden heuristischen Gesichtspunkt",
journal = {Annalen der Physik},
volume = {322},
number = {6},
pages = {132-148},
url = {https://dx.doi.org/10.1002/andp.19053220607},
year = {1905}
}

@article{Scovil-1959,
  title = {Three-Level Masers as Heat Engines},
  author = {Scovil, H. E. D. and Schulz-DuBois, E. O.},
  journal = {Phys. Rev. Lett.},
  volume = {2},
  issue = {6},
  pages = {262--263},
  year = {1959},
  month = {Mar},
  publisher = {American Physical Society},
  url = {https://dx.doi.org/10.1103/PhysRevLett.2.262}
}

@article{Carnot-1872,
     author = {Carnot, S.},
     title = "R{\'e}flexions sur la puissance motrice du feu et sur les machines propres {\`a} d{\'e}velopper cette puissance",
     journal = "Annales scientifiques de l'{\'E}cole Normale Sup{\'e}rieure",
     publisher = {Elsevier},
     volume = "2e s{\'e}rie, 1",
     year = {1872},
     pages = {393-457},
     url = {https://dx.doi.org/10.24033/asens.88}
}

@article{Gogolin-2016,
   title={Equilibration, thermalisation, and the emergence of statistical mechanics in closed quantum systems},
   volume={79},
   ISSN={1361-6633},
   url={http://dx.doi.org/10.1088/0034-4885/79/5/056001},
   number={5},
   journal={Reports on Progress in Physics},
   publisher={IOP Publishing},
   author={Gogolin, Christian and Eisert, Jens},
   year={2016},
   month={Apr},
   pages={056001},
eprint={1503.07538},
archivePrefix={arXiv},
primaryClass={quant-ph},
}

@Article{Neumann-1929,
author="Neumann, J. v.",
title="Beweis des Ergodensatzes und des H-Theorems in der neuen Mechanik",
journal="Zeitschrift f{\"u}r Physik",
year="1929",
month="Jan",
volume="57",
number="1",
pages="30--70",
issn="0044-3328",
url="https://dx.doi.org/10.1007/BF01339852"
}

@article{Schroedinger-1927,
author = {Schr{\"o}dinger, E.},
title = {Energieaustausch nach der Wellenmechanik},
journal = {Annalen der Physik},
volume = {388},
number = {15},
pages = {956-968},
url = {https://dx.doi.org/10.1002/andp.19273881504},
year = {1927}
}

@Article{Neumann-2010,
author="von Neumann, J.",
title="Proof of the ergodic theorem and the H-theorem in quantum mechanics",
journal="The European Physical Journal H",
year="2010",
month="Nov",
volume="35",
number="2",
pages="201--237",
issn="2102-6467",
url="https://dx.doi.org/10.1140/epjh/e2010-00008-5",
    eprint={1003.2133},
    archivePrefix={arXiv},
    primaryClass={physics.hist-ph}
}

@article{Goldstein-2010,
author = {Goldstein, Sheldon  and Lebowitz, Joel L.  and Mastrodonato, Christian  and Tumulka, Roderich  and Zangh{\`i}, Nino },
title = {Normal typicality and von Neumann's quantum ergodic theorem},
journal = {Proceedings of the Royal Society A: Mathematical, Physical and Engineering Sciences},
volume = {466},
number = {2123},
pages = {3203-3224},
year = {2010},
URL = {https://dx.doi.org/10.1098/rspa.2009.0635},
    eprint={0907.0108},
    archivePrefix={arXiv},
    primaryClass={quant-ph}
}

@article{Srednicki-1994,
  title = {Chaos and quantum thermalization},
  author = {Srednicki, Mark},
  journal = {Phys. Rev. E},
  volume = {50},
  pages = {888--901},
  numpages = {0},
  year = {1994},
  month = {Aug},
  publisher = {American Physical Society},
  url = {https://dx.doi.org/10.1103/PhysRevE.50.888},
    eprint={cond-mat/9403051},
    archivePrefix={arXiv},
    primaryClass={cond-mat}
}

@article{Deutsch-1991,
  title = {Quantum statistical mechanics in a closed system},
  author = {Deutsch, J. M.},
  journal = {Phys. Rev. A},
  volume = {43},
  pages = {2046--2049},
  numpages = {0},
  year = {1991},
  month = {Feb},
  publisher = {American Physical Society},
  url = {https://dx.doi.org/10.1103/PhysRevA.43.2046}
}

@book{Campbell-2019,
author = {Deffner, Sebastian and Campbell, Steve},
title = {Quantum Thermodynamics},
publisher = {Morgan \& Claypool Publishers},
year = {2019},
series = {2053-2571},
isbn = {978-1-64327-658-8},
url = {https://dx.doi.org/10.1088/2053-2571/ab21c6},
    eprint={1907.01596},
    archivePrefix={arXiv},
    primaryClass={quant-ph}
}

@article{Jarzynski-1997,
  title = {Nonequilibrium Equality for Free Energy Differences},
  author = {Jarzynski, C.},
  journal = {Phys. Rev. Lett.},
  volume = {78},
  issue = {14},
  pages = {2690--2693},
  numpages = {0},
  year = {1997},
  month = {Apr},
  publisher = {American Physical Society},
  url = {https://dx.doi.org/10.1103/PhysRevLett.78.2690},
    eprint={cond-mat/9610209},
    archivePrefix={arXiv},
    primaryClass={cond-mat.stat-mech}
}

@article{Crooks-1999,
  title = {Entropy production fluctuation theorem and the nonequilibrium work relation for free energy differences},
  author = {Crooks, Gavin E.},
  journal = {Phys. Rev. E},
  volume = {60},
  pages = {2721--2726},
  numpages = {0},
  year = {1999},
  month = {Sep},
  publisher = {American Physical Society},
  url = {http://dx.doi.org/10.1103/PhysRevE.60.2721},
    eprint={cond-mat/9901352},
    archivePrefix={arXiv},
    primaryClass={cond-mat.stat-mech}
}

@Article{Kosloff-2013,
AUTHOR = {Kosloff, Ronnie},
TITLE = {Quantum Thermodynamics: A Dynamical Viewpoint},
JOURNAL = {Entropy},
VOLUME = {15},
YEAR = {2013},
NUMBER = {6},
PAGES = {2100--2128},
ISSN = {1099-4300},
url = {http://dx.doi.org/10.3390/e15062100}
}

@misc{Potts-2019,
    title={Introduction to Quantum Thermodynamics (Lecture Notes)},
    author={Patrick P. Potts},
    year={2019},
    eprint={1906.07439},
    archivePrefix={arXiv},
    primaryClass={quant-ph}
}

@article{Gorini-1975,
title = "Completely positive dynamical semigroups of N-level systems",
author = "Vittorio Gorini and Andrzej Kossakowski and George Sudarshan",
year = "1975",
month = "12",
volume = "17",
pages = "821--825",
journal = "Journal of Mathematical Physics",
issn = "0022-2488",
publisher = "American Institute of Physics Publising LLC",
number = "5",
url={http://dx.doi.org/10.1063/1.522979}
}

@Article{Lindblad-1976,
author="Lindblad, G.",
title="On the generators of quantum dynamical semigroups",
journal="Communications in Mathematical Physics",
year="1976",
month="Jun",
day="01",
volume="48",
number="2",
pages="119--130",
issn="1432-0916",
url="http://dx.doi.org/10.1007/BF01608499"
}

@article{Mitchison-2019,
author = {Mark T. Mitchison},
title = {Quantum thermal absorption machines: refrigerators, engines and clocks},
journal = {Contemporary Physics},
volume = {60},
number = {2},
pages = {164-187},
year  = {2019},
publisher = {Taylor & Francis},
URL = {http://dx.doi.org/10.1080/00107514.2019.1631555},
    eprint={1902.02672},
    archivePrefix={arXiv},
    primaryClass={quant-ph}
}

@article{Pekola-2015,
author = {Pekola, Jukka P},
title = {Towards quantum thermodynamics in electronic circuits},
journal = {Nature Physics},
volume = {11},
pages = {118},
year  = {2015},
publisher = {Nature Publishing Group},
url = {http://dx.doi.org/10.1038/nphys3169},
}

@article{Klatzow-2019,
  title = {Experimental Demonstration of Quantum Effects in the Operation of Microscopic Heat Engines},
  author = {Klatzow, James and Becker, Jonas N. and Ledingham, Patrick M. and Weinzetl, Christian and Kaczmarek, Krzysztof T. and Saunders, Dylan J. and Nunn, Joshua and Walmsley, Ian A. and Uzdin, Raam and Poem, Eilon},
  journal = {Phys. Rev. Lett.},
  volume = {122},
  issue = {11},
  pages = {110601},
  numpages = {6},
  year = {2019},
  month = {Mar},
  publisher = {American Physical Society},
  url = {http://dx.doi.org/10.1103/PhysRevLett.122.110601},
    eprint={1710.08716},
    archivePrefix={arXiv},
    primaryClass={quant-ph}
}

@article {Langen-2015,
	author = {Langen, Tim and Erne, Sebastian and Geiger, Remi and Rauer, Bernhard and Schweigler, Thomas and Kuhnert, Maximilian and Rohringer, Wolfgang and Mazets, Igor E. and Gasenzer, Thomas and Schmiedmayer, J{\"o}rg},
	title = {Experimental observation of a generalized Gibbs ensemble},
	volume = {348},
	number = {6231},
	pages = {207--211},
	year = {2015},
	publisher = {American Association for the Advancement of Science},
	issn = {0036-8075},
	url = {http://dx.doi.org/10.1126/science.1257026},
	journal = {Science},
    eprint={1411.7185},
    archivePrefix={arXiv},
    primaryClass={cond-mat.quant-gas}
}

@Inbook{Schmiedmayer-2018,
author="Schmiedmayer, J{\"o}rg",
editor="Binder, Felix
and Correa, Luis A.
and Gogolin, Christian
and Anders, Janet
and Adesso, Gerardo",
title="One-Dimensional Atomic Superfluids as a Model System for Quantum Thermodynamics",
bookTitle="Thermodynamics in the Quantum Regime: Fundamental Aspects and New Directions",
year="2018",
publisher="Springer International Publishing",
address="Cham",
pages="823--851",
isbn="978-3-319-99046-0",
url="http://dx.doi.org/10.1007/978-3-319-99046-0_34",
    eprint={1805.11539},
    archivePrefix={arXiv},
    primaryClass={quant-ph}
}

@Inbook{Bar-Gill-2018,
author="Bar-Gill, Nir",
editor="Binder, Felix
and Correa, Luis A.
and Gogolin, Christian
and Anders, Janet
and Adesso, Gerardo",
title="NV Color Centers in Diamond as a Platform for Quantum Thermodynamics",
bookTitle="Thermodynamics in the Quantum Regime: Fundamental Aspects and New Directions",
year="2018",
publisher="Springer International Publishing",
address="Cham",
pages="983--998",
isbn="978-3-319-99046-0",
url="http://dx.doi.org/10.1007/978-3-319-99046-0_41"
}

@Inbook{Dawkins-2018,
author="Dawkins, Samuel T.
and Abah, Obinna
and Singer, Kilian
and Deffner, Sebastian",
editor="Binder, Felix
and Correa, Luis A.
and Gogolin, Christian
and Anders, Janet
and Adesso, Gerardo",
title="Single Atom Heat Engine in a Tapered Ion Trap",
bookTitle="Thermodynamics in the Quantum Regime: Fundamental Aspects and New Directions",
year="2018",
publisher="Springer International Publishing",
address="Cham",
pages="887--896",
isbn="978-3-319-99046-0",
url="http://dx.doi.org/10.1007/978-3-319-99046-0_36"
}

@article{An-2014,
   title={Experimental test of the quantum Jarzynski equality with a trapped-ion system},
   volume={11},
   ISSN={1745-2481},
   url={http://dx.doi.org/10.1038/nphys3197},
   number={2},
   journal={Nature Physics},
   publisher={Springer Science and Business Media LLC},
   author={An, Shuoming and Zhang, Jing-Ning and Um, Mark and Lv, Dingshun and Lu, Yao and Zhang, Junhua and Yin, Zhang-Qi and Quan, H. T. and Kim, Kihwan},
   year={2014},
   month={Dec},
   pages={193–199},
    eprint={1409.4485},
    archivePrefix={arXiv},
    primaryClass={quant-ph}
}

@Inbook{Lu-2018,
author="Lu, Yao
and An, Shuoming
and Zhang, Jing-Ning
and Kim, Kihwan",
editor="Binder, Felix
and Correa, Luis A.
and Gogolin, Christian
and Anders, Janet
and Adesso, Gerardo",
title="Probing Quantum Fluctuations of Work with a Trapped Ion",
bookTitle="Thermodynamics in the Quantum Regime: Fundamental Aspects and New Directions",
year="2018",
publisher="Springer International Publishing",
address="Cham",
pages="917--938",
isbn="978-3-319-99046-0",
url="http://dx.doi.org/10.1007/978-3-319-99046-0_38",
    eprint={1902.00206},
    archivePrefix={arXiv},
    primaryClass={quant-ph}
}

@article{Popescu-2006,
   title={Entanglement and the foundations of statistical mechanics},
   volume={2},
   ISSN={1745-2481},
   url={http://dx.doi.org/10.1038/nphys444},
   number={11},
   journal={Nature Physics},
   publisher={Springer Science and Business Media LLC},
   author={Popescu, Sandu and Short, Anthony J. and Winter, Andreas},
   year={2006},
   month={Oct},
   pages={754–758},
    eprint={quant-ph/0511225},
    archivePrefix={arXiv},
    primaryClass={quant-ph}
}

@book{Gemmer-2010,
  url = {https://dx.doi.org/10.1007/978-3-540-70510-9},
  year = {2010},
  publisher = {Springer Berlin Heidelberg},
  author = {Jochen Gemmer and M. Michel and G\"{u}nter Mahler},
  title = {Quantum Thermodynamics}
}

@book{Eriksen-2010,
publisher = {Pluto Press},
isbn = {9780745330495},
title = {Small Places, Large Issues - Third Edition : An Introduction to Social and Cultural Anthropology},
address = {345 Archway Road, London N6 5AA},
author = {Eriksen, Thomas Hylland},
year = {2010}
}

@article{Pfaffenberger-1988,
  url = {https://dx.doi.org/10.2307/2802804},
  year = {1988},
  month = jun,
  publisher = {{JSTOR}},
  volume = {23},
  number = {2},
  pages = {236},
  author = {Bryan Pfaffenberger},
  title = {Fetishised Objects and Humanised Nature: Towards an Anthropology of Technology},
  journal = {Man}
}

@article{Figl-2019,
 author  = {Figl, Bettina},
 year    = {10.12.2019},
 title   = {Forschung, finanziert vom US-Militär},
 journal = {Wiener Zeitung},
 url     = {https://www.wienerzeitung.at/nachrichten/chronik/oesterreich/2041937-Forschung-finanziert-vom-US-Militaer.html?em_no_split=1},
Note={Accessed on 16.05.2020}
}


\end{document}